\def\stoc{1}
\def\inappendix{1}
\newcommand{\Arikan}{Ar\i kan}
\newcommand{\cC}{\mathcal{C}}
\newcommand{\Mat}[1]{M^{(#1)}}
\newcommand{\df}{\bar{\delta}}
\newcommand{\Dec}{\mathrm{Dec}}
\renewcommand{\H}{\bar{H}}
\newcommand{\usef}{\sqsubset_u}
\newcommand{\dom}{\preceq}
\newcommand{\HadMat}{\left(\begin{smallmatrix} 1 & 0 \\ \alpha & 1 \end{smallmatrix}\right)}
\newcommand{\wt}{\mathop\mathrm{wt}}
\algrenewcommand\algorithmicrequire{\textbf{Input:}}
\algrenewcommand\algorithmicensure{\textbf{Output:}}
\algrenewcommand\algorithmicwhile{\textbf{While}}
\algrenewcommand\algorithmicfor{\textbf{For}}
\algrenewcommand\algorithmicreturn{\textbf{Return}}
\algrenewcommand\algorithmicif{\textbf{If}}
\algnewcommand\algorithmicconst{\textbf{Constants:}}
\algnewcommand\Const{\item[\algorithmicconst]}
\renewcommand{\d}{\,\mathrm{d}}
\newcommand{\two}{}
\title{\textbf{General Strong Polarization}\thanks{This paper combines results presented in preliminary form at STOC 2018~\cite{BGNRS18-conf} and RANDOM 2018~\cite{BGS18-conf}.}}
\author{%
Jaros\l aw B\l asiok\thanks{ Department of Computer Science, Columbia University, 500 West 120th Street,
New York, NY 10027, USA. Email: {\tt jb4451@columbia.edu}. This work was done when the author was a PhD student at Harvard University.}%, supported by ONR grant N00014-15-1-2388.}
\and Venkatesan Guruswami\thanks{Computer Science Department, Carnegie Mellon University, Pittsburgh, PA 15213, USA. Portions of this work were done during visits by the author to the School of Physical and Mathematical Sciences, Nanyang Technological University, Singapore, and the Center for Mathematical Sciences and Applications, Harvard University. {\tt venkatg@cs.cmu.edu}. Research supported in part by NSF grants CCF-1422045, CCF-1563742, and CCF-1814603, and a Simons Investigator award.}
\and Preetum Nakkiran\thanks{
Halicioğlu Data Science Institute,
University of California San Diego,
10100 Hopkins Dr, La Jolla, CA 92093, USA.
Email: {\tt preetum@ucsd.edu}.
Work supported in part by a Simons Investigator Award, NSF Awards CCF 1565641
and CCF 1715187, the NSF Graduate Research Fellowship Grant No. DGE1144152, a Google PhD Fellowship, and the NSF/Simons Collaboration on the Theoretical Foundations of Deep Learning.} $\qquad$ 
\and Atri Rudra\thanks{Computer Science and Engineering Department, University at Buffalo. {\tt atri@buffalo.edu}. Research supported in part by NSF grant CCF-1717134.}
\and Madhu Sudan\thanks{Harvard John A. Paulson School of Engineering and Applied Sciences, Harvard University, 33 Oxford Street,
Cambridge, MA 02138, USA. Email: {\tt madhu@cs.harvard.edu}. Work supported in part by a Simons Investigator Award and NSF Awards CCF 1565641 and CCF 1715187.}
}
\date{}
\begin{document}

\maketitle

\thispagestyle{empty}

\vspace{-0.5cm}
\begin{abstract}

\Arikan's exciting discovery of polar codes has provided an altogether new way to efficiently achieve Shannon capacity. Given a (constant-sized) invertible matrix $M$, a family of polar codes can be associated with this matrix and its ability to approach capacity follows from the {\em polarization} of an associated $[0,1]$-bounded martingale, namely its convergence in the limit to either $0$ or $1$ with probability $1$. \Arikan\ showed appropriate polarization of the martingale associated with the matrix $G_2 = \left( \begin{smallmatrix} 1 & 0 \\ 1 & 1 \end{smallmatrix} \right)$ to get capacity achieving codes. His analysis was later extended to all matrices $M$ that satisfy an obvious necessary condition for polarization.

\smallskip
While \Arikan's theorem does not guarantee that the codes achieve capacity at small blocklengths (specifically in length which is a polynomial in $1/\epsilon$ where $\epsilon$ is the difference between the capacity of a channel and the rate of the code), it turns out that a ``strong'' analysis of the polarization of the underlying martingale would lead to such constructions. Indeed for the martingale associated with $G_2$ such a strong polarization was shown in two independent works ([Guruswami and Xia, IEEE IT '15] and [Hassani et al., IEEE IT '14]), thereby resolving a major theoretical challenge associated with the efficient attainment of Shannon capacity.

\smallskip
In this work we extend the result above to cover martingales associated with all matrices that satisfy the necessary condition for (weak) polarization.
In addition to being vastly more general, our proofs of strong polarization are (in our view) also much simpler and modular. Key to our proof is a notion of {\em local polarization} that only depends on the evolution of the martingale in a single time step. We show that local polarization always implies strong polarization. We then apply relatively simple reasoning about conditional entropies to prove local polarization in very general settings.
Specifically, our result shows strong polarization over all prime fields and leads to efficient capacity-achieving source codes for compressing arbitrary i.i.d. sources, and capacity-achieving channel codes for arbitrary symmetric memoryless channels. We show how to use our analyses to achieve exponentially small error probabilities at lengths inverse polynomial in the gap to capacity. Indeed we show that we can essentially match any error probability while maintaining lengths that are only inverse polynomial in the gap to capacity.

\end{abstract}

\newpage
{\footnotesize 
\tableofcontents}
\thispagestyle{empty}

\thispagestyle{empty}

\newpage

\def\inappendix{1}

\parskip=0.5ex

\section{Introduction}

Polar codes, proposed in \Arikan's remarkable
work~\cite{arikan-polar}, gave a fresh information-theoretic approach
to construct linear codes that achieve the Shannon capacity of
symmetric channels, together with efficient encoding and decoding
algorithms. About a decade after their discovery, there is now a vast
and extensive body of work on polar coding spanning hundreds of
papers.  The underlying concept of polarizing transforms has emerged
as a versatile tool to successfully attack a diverse collection of
information-theoretic problems beyond the original channel and source
coding applications, including wiretap channels~\cite{MV11}, the
Slepian-Wolf, Wyner-Ziv, and Gelfand-Pinsker problems~\cite{korada},
broadcast channels~\cite{GAG13}, multiple access
channels~\cite{STY13,AT12}, and interference networks~\cite{WS14}. We
recommend the survey by \c{S}a\c{s}o\u{g}lu~\cite{sasoglu-book} for a
nice treatment of the early work on polar codes. On the practical
side, polar codes show impressive coding gains when a list
decoding variant of the decoder is applied~\cite{TV15}, and have been adopted for the enhanced mobile broadband control channels for the 5G NR (New Radio) interface.

\Arikan's original analysis was asymptotic and  established that capacity can be achieved in the limit of large block lengths but did not quantify the speed of convergence to capacity. Effective finite-length convergence bounds were provided several years later in \cite{GX15,HAU14,GV15} establishing that the polar coding approach leads to a family of codes of rate $C-\epsilon$ for transmission over a channel of (Shannon) capacity $C$, where the block length of the code and the decoding time grow only polynomially in $1/\epsilon$.
In contrast, for all previous constructions of codes, the decoding
algorithms required time exponential in $1/\epsilon$.  Getting a
polynomial running time in $1/\epsilon$ was one of the central
theoretical challenges in the field of algorithmic coding theory, and
polar codes were the first to overcome this challenge. Follow-up works
have also investigated concrete bounds on the \emph{scaling exponent}
$\mu$, i.e., the finite exponent $\mu$ for which the block length of
the code can be bounded by $(1/\epsilon)^\mu$~\cite{MHU16,GoldinB14}, culminating in recent works which achieved $\mu \to 2$ which is the optimal value, first for the erasure channel~\cite{PfisterU19,FHMV21}, and later for all channels~\cite{GRY20,WD21} using variants of polar codes.

The analyses of polar codes turn into questions about {\em polarizations} of certain {\em martingales} (which we refer to as Ar\i kan martingales in this work). The vast class of polar codes alluded to in the previous paragraph all build on polarizing martingales, and the results of  \cite{GX15,HAU14,GV15} show that for one of the families of polar codes, the underlying martingale polarizes ``extremely fast''---a notion we refer to as {\em strong polarization} and will define shortly.

The primary goal of this work is to understand the process of
polarization of martingales, and in particular to understand when a
martingale polarizes strongly. In attempting to study this question,
we come up with a local notion of polarization and show that this
local notion is sufficient to imply strong polarization. Applying this
improved understanding to the martingales arising in the study of
polar codes we show that a simple necessary condition for weak
polarization of such martingales is actually sufficient for strong
polarization. This allows us to extend the previous results on strong
polarization, which only applied to a specific class of codes, to a
broad class of codes and show essentially that all polarizing codes
lead to polynomial convergence to capacity. We further show that this can be achieved while maintaining the same exponentially falling error probability achieved in the original asymptotic analyses that did not give any quantitative bounds on the convergence to capacity. Below we formally describe
the notion of polarization of martingales and our results concerning them, along with their implications for quantitatively strong convergence to capacity of polar codes when applied to the associated Ar\i kan martingales.
Figure~\ref{fig:overview} gives a detailed roadmap of this paper with different columns indicating different categories of results and each column describing a hierarchy of results.

%\documentclass[11pt]{article}

%\usepackage[dvipsnames]{xcolor}
%\usepackage{tikz,tikz-qtree,tikz-qtree-compat,tikz-3dplot}
%\usetikzlibrary{arrows.meta}
%\usetikzlibrary{patterns, snakes}

%\usepackage{amsmath,amssymb}
%\usepackage{amsfonts}

%\begin{document}

\begin{figure}
\begin{center}
\begin{tikzpicture}[scale=1.25]

%%% Macros

\def\arrowwidth{6} %defines the width of the arrows
\def\arrowlen{11} % Define the length of the arrowhead

%%%%%% Codes

%Just the boxes
\draw [fill=Cyan, draw=gray, opacity = 0.2] (-1,0) rectangle (2,11);
\draw [fill=Cyan, draw=gray, opacity = 0.2] (-0.7,1) rectangle (1.7,9.5);
\draw [fill=blue, draw=gray, opacity = 0.2] (-0.5,2) rectangle (1.5,7.5);
\draw [fill=blue, draw=gray, opacity = 0.2] (-0.3,2.5) rectangle (1.3,5.5);

%The text
\node [right, text width = 3cm] at (-1, 10.5) {\small{Capacity achieving  codes}};
\node [right, text width = 3cm] at (-0.7, 9) {\small{$\exists n$, $n^{-c}$ decoding error}};
\node [right, text width = 2.5cm] at (-0.5, 6.9) {\small{$n=\mathrm{poly}\left(1/\epsilon\right)$, $n^{-c}$ decoding error}};
\node [right, text width = 2cm] at (-0.3, 4.5) {\small{$n=\mathrm{poly}\left(1/\epsilon\right)$, $\exp\left(n^{-\beta}\right)$ decoding error}};
%%%%%%%%%%%%%%%%%%%%%%%%%%%

%%%%%% Polarization

%Just the boxes
\draw [fill=green, draw=gray, opacity = 0.2] (3,0) rectangle (6,11);
\draw [fill=green, draw=gray, opacity = 0.2] (3.3,1) rectangle (5.7,9.5);
\draw [fill=Green, draw=gray, opacity = 0.2] (3.5,2) rectangle (5.5,7.5);
\draw [fill=Green, draw=gray, opacity = 0.2] (3.7,2.5) rectangle (5.3,5.5);

% The arrows
\hypersetup{linkcolor=white};
%Thm 1.18 arrow
\draw [-{Latex[length=\arrowlen mm]}, line width=\arrowwidth mm, color=Green] (3.7,4) -- (1.1, 4);
\node[right] at (2,4) {\textcolor{white}{\footnotesize{Thm. \ref{thm:thm1}}}};
%Thm 1.17 arrow
\draw [-{Latex[length=\arrowlen mm]}, line width=\arrowwidth mm, color=Green] (3.5,6.5) -- (1.3, 6.5);
\node[right] at (2,6.5) {\textcolor{white}{\footnotesize{Thm. \ref{thm:combo}}}};
%Thm 1.10 arrow
\draw [-{Latex[length=\arrowlen mm]}, line width=\arrowwidth mm, color=Green] (3.3,8.5) -- (1.5, 8.5);
\node[right] at (1.9,8.5) {\textcolor{white}{\footnotesize{Thm. \ref{thm:exp-code}}}};
\hypersetup{linkcolor=red};

%The text
\node [right, text width = 3cm] at (3, 10.5) {\small{Weak polarization} \textcolor{gray}{\footnotesize{(Def. \ref{def:weakpolarization})}}};
\node [right, text width = 3cm] at (3.3, 9) {\small{Regular polarization} \textcolor{gray}{\footnotesize{(Def. \ref{def:regpolarization})}} };
\node [right, text width = 2.5cm] at (3.5, 6.9) {\small{Strong polarization} \textcolor{gray}{\footnotesize{(Def. \ref{def:strongpolarization})}}  };
\node [right, text width = 2cm] at (3.7, 4.5) {\small{Exp. strong polarization} \textcolor{gray}{\footnotesize{(Def. \ref{def:exp-strong-polarization})}}  };
%%%%%%%%%%%%%%%%%%%%%%%%%%%

%%%%%%% Local polarization 

%The boxes
\draw [fill=orange, draw=gray, opacity = 0.2] (7,2) rectangle (9,7.5);
\draw [fill=orange, draw=gray, opacity = 0.2] (7.2,2.5) rectangle (8.8,5.5);

% The arrows
\hypersetup{linkcolor=white};
%Thm 1.9 arrow
\draw [-{Latex[length=\arrowlen mm]}, line width=\arrowwidth mm, color=orange] (7.2,4) -- (5.1, 4);
\node[right] at (5.7,4) {\textcolor{white}{\footnotesize{Thm. \ref{thm:local-to-global-exp}}}};
%Thm 1.7 arrow
\draw [-{Latex[length=\arrowlen mm]}, line width=\arrowwidth mm, color=orange] (7,6.5) -- (5.3, 6.5);
\node[right] at (5.7,6.5) {\textcolor{white}{\footnotesize{Thm. \ref{thm:local-global}}}};
\hypersetup{linkcolor=red};

%The text
\node [right, text width = 2.5cm] at (7, 6.9) {\small{Local polarization} \textcolor{gray}{\footnotesize{(Def. \ref{defn:polar-local})}} };
\node [right, text width = 2cm] at (7.2, 4.5) {\small{Exp. local polarization} \textcolor{gray}{\footnotesize{(Def. \ref{defn:polar-local-exp})}} };

%%%%%%%%%%%%%%%%%%%%%%%%%%%

%%%%%%% Matrix polarization 
\draw [fill=red, draw=gray, opacity = 0.2] (10,2) rectangle (12,7.5);
\draw [fill=red, draw=gray, opacity = 0.2] (10.2,2.5) rectangle (11.8,5.5);

% The arrows
\hypersetup{linkcolor=white};
%Thm 4.4 arrow
\draw [-{Latex[length=\arrowlen mm]}, line width=\arrowwidth mm, color=red] (10.2,4) -- (8.5, 4);
\node[right] at (8.75,4) {\textcolor{white}{\footnotesize{Thm. \ref{lem:polarizing-matrix-implies-martingale}}}};
%Thm 4.4 arrow
\draw [-{Latex[length=\arrowlen mm]}, line width=\arrowwidth mm, color=red] (10,6.5) -- (8.5, 6.5);
\node[right] at (8.75,6.5) {\textcolor{white}{\footnotesize{Thm. \ref{lem:polarizing-matrix-implies-martingale}}}};
\hypersetup{linkcolor=red};
%The text
\node [right, text width = 2.5cm] at (10, 6.9) {\small{Matrix polarization} \textcolor{gray}{\footnotesize{(Def. \ref{def:matrix-polarization})}} };
\node [right, text width = 2cm] at (10.2, 4.5) {\small{Exp. matrix polarization} \textcolor{gray}{\footnotesize{(Def. \ref{def:matrix-polarization})}}  };

%%%%%%%%%%%%%%%%%%%%%%%%%%%

%Mizing matrix

\draw [fill=Plum, draw=gray, opacity=0.4] (8.5, 9) rectangle (10.5, 10);
\draw [fill=Plum, draw=gray, opacity=0.4] (8.5, 0) rectangle (10.5, 1);

% The arrows
\hypersetup{linkcolor=white};
%Thm 1.15 arrow
\draw [-{Latex[length=\arrowlen mm]}, line width=\arrowwidth mm, color=Magenta] (8.5, 9.5) -- (8, 9.5) -- (8,7.5);
\node at (8,9) {\textcolor{white}{\rotatebox{-90}{\footnotesize{Thm. \ref{thm:triangle-local}}}}};
%Lemma 5.5 arrow
\draw [-{Latex[length=\arrowlen mm]}, line width=\arrowwidth mm, color=Plum] (10.5, 9.5) -- (11, 9.5) -- (11,7.5);
\node at (11,9) {\textcolor{white}{\rotatebox{-90}{\footnotesize{Lem. \ref{lem:k-by-k-polarize}}}} };
%Thm 1.16 arrow
\draw [-{Latex[length=\arrowlen mm]}, line width=\arrowwidth mm, color=Magenta] (8.5, 0.5) -- (8, 0.5) -- (8,2.5);
\node at (8,1) {\textcolor{white}{\rotatebox{90}{\footnotesize{Thm. \ref{thm:triangle-local-exp}}}} };
%Lemma 7.1 arrow
\draw [-{Latex[length=\arrowlen mm]}, line width=\arrowwidth mm, color=Plum] (10.5, 0.5) -- (11, 0.5) -- (11,2.5);
\node at (11,1) {\textcolor{white}{\rotatebox{90}{\footnotesize{Lem. \ref{lem:every-matrix-works}}}} };
\hypersetup{linkcolor=red};

%The text
\node [right, text width = 2.5cm] at (8.5, 9.5) {\small{Mixing matrix $M$} \textcolor{gray}{\footnotesize{(Def. \ref{def:mixing-matrix})}}   };
\node [right, text width = 2.5cm] at (8.5, 0.5) {\small{$M^{\otimes 2}$, mixing matrix $M$}};

%%%%%%%%%%%%%%

%Exp mising matrix

% The arrows

%The text

%%%%%%%%%%%%%%%%%%

\end{tikzpicture}
\end{center}
\caption{Overview of our results (excluding those in \cref{sec:extra-results}). The blue boxes (on the extreme left) represent the various coding results ($n$ is the code block length, $c$ and $\beta<1$ are absolute constants). The green boxes (middle left) are the various notations of polarizations that we study in the paper. The orange boxes (middle right) are the two notions of local polarization and the red boxes (extreme right) are the two notions of matrix polarizations we use. Purple boxes (top and bottom on right) show the notions of mixing matrices that we use. All the arrows denote the various results we prove (except for \cref{thm:exp-code}, which is implicit in \Arikan~\cite{arikan-polar}) in this paper.}
\label{fig:overview}
\end{figure}

%\end{document}

\subsection{Polarization of $[0,1]$-martingales}

Our interest is mainly in the (rate of) polarization of a specific family of martingales that we call the \Arikan\ martingales. We will define these objects later, but first describe the notion of polarization for general $[0,1]$-bounded martingales. The middle left (green) column in \cref{fig:overview} shows the various notions of polarization that we define in this section.

Recall that a sequence of random variables $X_0,\ldots,X_t,\ldots$ is said to be a {\em martingale} if for every $t$ and $a_0,\ldots,a_t$ it is the case that $\E[X_{t+1}| X_0 = a_0,\ldots,X_t = a_t] = a_t$. We say that that a martingale is {\em $[0,1]$-bounded} (or simply a $[0,1]$-martingale) if $X_t \in [0,1]$ for all $t \geq 0$.

\cstate{Weak Polarization}{definition}{defweakpolarization}{
    \label{def:weakpolarization}
	A $[0,1]$-martingale sequence $X_0,X_1,\ldots,X_t,\ldots$ is defined to be  {\em weakly polarizing} if $\lim_{t\to\infty}\{X_t\}$ exists with probability $1$, and this limit is either $0$ or $1$.
}
Note that  the limit of the martingale sequence $X_0,X_1,\ldots,X_t,\ldots$ is a Bernoulli random variable with expectation $X_0$.\footnote{The claim on expectation follows since by definition, $\E\brackets{X_{t+1}}=\E\brackets{X_t}$.}

Thus a polarizing martingale does not converge to a single value with probability $1$, but rather converges to one of its extreme values. For the applications to constructions of polar codes, we need more explicit bounds on the rates of convergence leading to the notions of (regular) polarization and strong polarization defined below in \cref{def:regpolarization,def:strongpolarization} respectively.

\cstate{$(\tau_\ell, \tau_h, \varepsilon)$-Polarization}{definition}{deftauepspolarization}{
	For functions $\tau_\ell, \tau_h,\varepsilon:\Z^+ \to \R^{\geq 0}$, a $[0, 1]$-martingale sequence $X_0, X_1, \ldots X_t, \ldots$ is defined to be {\em $(\tau_\ell, \tau_h, \varepsilon)$-polarizing} if for all $t$ we have
	\begin{equation*}
		\P(X_t \in (\tau_\ell(t), 1-\tau_h(t))) < \varepsilon(t).
	\end{equation*}
}

\cstate{Regular Polarization}{definition}{defregpolarization}{
	\label{def:regpolarization}
	A $[0,1]$-martingale sequence $X_0,X_1,\ldots,X_t,\ldots$ is defined to be {\em regular polarizing} if for all constant $\gamma > 0$,  there exist $\varepsilon(t) = o(1)$, such that the martingale $\{X_t\}_{t\ge 0}$ is $(\gamma^t, \gamma^t, \varepsilon(t))$-polarizing.
}

We refer to the above as being ``sub-exponentially'' close to the limit (since it holds for every $\gamma > 0$). While weak polarization by itself is an interesting phenomenon, regular polarization (of \Arikan\ martingales) leads to capacity-achieving codes (though without explicit bounds on the length of the code as a function of the gap to capacity) and thus regular polarization is well-explored in the literature and tight necessary and sufficient conditions are known for regular polarization of \Arikan\ martingales~\cite{arikan-telatar,KSU10}.

To get codes of block length polynomially small in the gap to capacity, an even stronger notion of polarization is needed, where we require that the sub-exponential closeness to the limit happens with \emph{all but exponentially small probability}. We define this formally next.

\cstate{Strong Polarization}{definition}{defstrongpolarization}{
	\label{def:strongpolarization}
	A $[0,1]$-martingale sequence $X_0,X_1,\ldots,X_t,\ldots$ is defined to be {\em strongly polarizing} if for all $\gamma > 0$ there exist $0<\eta < 1$ and $\beta < \infty$ such that the martingale $\{X_t\}_{t\ge 0}$ is $(\gamma^t, \gamma^t, \beta \cdot \eta^t)$-polarizing.
}

Finally to get codes where the decoding error probability is exponentially small in the block length, the codes need to polarize even more strongly. We abstract this notion as follows:

\begin{definition}[Exponentially Strong Polarization]
	\label{def:exp-strong-polarization}
	We say that $X_t$ has $\Lambda$-exponentially strong polarization if for every $0 < \gamma < 1$ there exist constants $0< \eta < 1$ and $\beta < \infty$ such that the martingale $\{X_t\}_{t \geq 0}$ is $(2^{-2^{\Lambda t}}, \gamma^t, \beta \eta^t)$-polarizing. 
\end{definition}

Note that this definition is asymmetric with respect to the two boundaries and expects tighter polarization when $X_t \to 0$ than when $X_t \to 1$. The reasons for this un-aesthetic choice are the following: (1) For the strong decoding results, the tighter polarization when $X_t \to 0$ suffices. (2) Several of the martingales we consider do not achieve sufficiently tight polarization when $X_t \to 1$ (the $\Lambda$ they achieve as $X_t \to 1$ is much smaller than what is needed in the decoding results). (3) The analysis of the best polarizations when $X_t \to 0$ is completely different than the analysis when  $X_t \to 1$. Due to these reasons we work with this asymmetric definition of exponentially strong polarization.

In contrast to the rich literature on regular polarization, results on strong polarization and exponentially strong polarization are quite rare, reflecting a general lack of understanding of this phenomenon. Indeed, while (roughly) an \Arikan\ martingale can be associated with every invertible matrix over any finite field $\F_q$, the only concrete matrix for which exponentially strong polarization was known prior to this work was for  $G_2 = \left( \begin{smallmatrix}
	1 & 0  \\
	1 & 1
\end{smallmatrix} \right)$ \cite{GX15,HAU14,GV15}.
\footnote{An exception is the work by Pfister and Urbanke~\cite{PfisterU19} who showed that for the \emph{$q$-ary erasure channel} for large enough $q$, the martingale associated with a $q \times q$ Reed-Solomon based matrix proposed in \cite{mori-tanaka} polarizes strongly, and the resulting polar codes achieve scaling exponent tending to $2$.}

Part of the reason behind the lack of understanding of strong polarization is that polarization is a ``limiting phenomenon'' in that one tries to understand $\lim_{t\to\infty} X_t$, whereas most stochastic processes, and the \Arikan\ martingales in particular, are defined by local evolution, i.e., one that relates $X_{t+1}$ to $X_t$. The main contribution of this work is to give a local definitions of polarization (\cref{defn:polar-local,defn:polar-local-exp}) and then showing that these definitions imply strong and exponentially strong polarization (\cref{thm:local-global,thm:local-to-global-exp}). Later we show that \Arikan\ martingales polarize locally whenever they satisfy a simple condition that is necessary even for weak polarization. And while the \Arikan\ martingale itself is not locally exponentially polarizing, we show that the ``two-step'' \Arikan\ martingale is exponentially locally polarizing under the same simple condition. (The ``two step'' version of a martingale $X_0,X_1,X_2,\ldots,$ is just the martingale $X_0,X_2,X_4,\ldots$.) As a consequence we get exponentially strong polarization for all \Arikan\ martingales for which previously only regular polarization was known.

\subsection{Results I: Local to strong global polarization of martingales}

Before giving the definition of local polarization, we motivate our definition using some simple examples. Consider the  martingale $Z_0,Z_1,\ldots$ where  $Z_0 = 1/2$, and $Z_{t+1} = Z_t +Y_{t+1} 2^{-(t+2)}$ where $Y_1,\ldots,Y_t,\ldots$ are chosen uniformly and independently from $\{-1,+1\}$. Clearly this sequence is not polarizing (the limit of $Z_t$ is uniform in $[0,1]$). One reason why this happens is that as time progresses, the martingale slows down and stops varying much. We would like to prevent this, but this is also inevitable if a martingale is polarizing and bounded. In particular, a polarizing martingale would be slowed at the boundaries (i.e., when $X_t$ is close to $0$ or close to $1$) and cannot vary much. The first condition in our definition of local polarization insists that this be the only reason a martingale slows down (we refer to this as {\em variance in the middle}).

Next we consider what happens when a martingale is close to the boundary. For this part consider a martingale $Z_0 = 1/2$ and $Z_{t+1} = Z_t + \frac12 Y_{t+1} \min\{Z_t,1-Z_t\}$ where again $Y_1,\ldots,Y_t,\ldots$ are chosen uniformly and independently from $\{-1,+1\}$. This martingale does polarize and even shows regular polarization, but it can also be easily seen that the probability that $Z_t < \frac12 \cdot 2^{-t}$ is zero (whereas we would like probability of being less than say $10^{-t}$ to go to $1$). So this martingale definitely does not show strong polarization. This is so since even in the best case the martingale is approaching the boundary at a fixed exponential rate, and not a sub-exponential one. To overcome this obstacle we require that when the martingale is close to the boundary, with a fixed constant probability it should get much closer in a single step (a notion we refer to as {\em suction at the ends}).

The middle right (orange) column in \cref{fig:overview} shows the notions of local polarization we define in this section (the arrows from the orange column to the middle left (green) columns show the main theorems in this section).

The definition below makes the above requirements precise.

\cstate{Local Polarization}{definition}{defnpolarlocal}{
	\label{defn:polar-local}
	A $[0,1]$-martingale sequence $X_0,\ldots,X_j,\ldots,$ is {\em locally polarizing} if the following conditions hold:
	\begin{enumerate}
		\item {\bf (Variance in the middle):} For every $\tau > 0$, there is a $\theta = \theta(\tau) > 0$ such that for all $j$, we have: If $X_j \in (\tau, 1-\tau)$ then
		$\E [(X_{j+1} - X_j)^2 | X_j] \geq \theta$.
		\item {\bf (Suction at the ends):} There exists an $\alpha > 0$, such that for all  $c < \infty$, there exists a $\tau =\tau(c) > 0, $
		such that:
		\begin{enumerate}
			\item
			If $X_j \leq \tau$ then $\Pr[X_{j+1} \leq X_j/c | X_j]\geq \alpha$.
			\item
			Similarly, if $1 - X_j \leq \tau$ then $\Pr[(1 - X_{j+1} \leq (1 - X_j)/c | X_j] \geq \alpha$.
		\end{enumerate}
		We refer to condition (a) above as {\em Suction at the low end} and condition (b) as {\em Suction at the high end}.
	\end{enumerate}
	When we wish to be more explicit, we refer to the sequence as $(\alpha,\tau(\cdot),\theta(\cdot))$-locally polarizing.
}

As such, it is not clear that this definition is of any use. E.g. it (1) neither obviously implies strong polarization, nor (2) is it obviously satisfiable by any interesting martingale.  In this paper, we address both these issues. First, we establish general theorems connecting local polarization to strong polarization, as described in Theorems~\ref{thm:local-global} and \ref{thm:local-to-global-exp} below. Then, we leverage this to prove quantitatively strong capacity-approaching properties of polar codes, via the strong polarization of \Arikan\ martingales associated with polar codes (Section~\ref{ssec:intro-arikan}). By our local-to-strong conversion, this in turn follows from the local polarization of \Arikan\ martingales which we establish in Theorems~\ref{thm:triangle-local} and \ref{thm:triangle-local-exp}.

\cstate{Local vs. Strong Polarization}{theorem}{thmlocalglobal}{
	\label{thm:local-global}
	If a $[0,1]$-martingale sequence $X_0,\ldots,X_t,\ldots,$ is locally polarizing, then it is also strongly polarizing.
}

If the suction at the ends shows by the martingale is even stronger, then we can get even stronger polarization. The following
definition captures the stronger suction property.

\begin{definition}[Exponential Local Polarization]\label{defn:polar-local-exp}
	We say that $X_t$ has $(\eta, b)$-exponential local polarization 
	if it satisfies local polarization (\cref{defn:polar-local}) and the following additional property
	\begin{enumerate}
		\item {\bf(Strong suction at the low end):} There exists $\tau > 0$ such that if $X_j \leq \tau$ then
		$\Pr[X_{j+1} \leq X_j^{b} | X_j] \geq \eta$.
	\end{enumerate}
\end{definition}

Note that the interesting range for the parameter $b$ is $b > 1$ and that is the range most of our results will focus on.

In the same way that local polarization implies strong global polarization of a martingale, this new stronger local condition implies a stronger global polarization behavior.

\begin{theorem}[Local to Global Exponential Polarization]
	\label{thm:local-to-global-exp}
	Let $\Lambda,b, \eta >0$ be such that $\Lambda < \eta \log_2 b$.  
	Then if a $[0,1]$-bounded martingale $X_0,X_1,X_2,\ldots$ satisfies $(\eta, b)$-exponential local polarization then it also satisfies $\Lambda$-exponentially strong polarization.\footnote{Note that to get $\eta \log_2 b > \Lambda > 0$ we need $\log_2 b > 0$ and so $b > 1$.}
\end{theorem}

\cref{thm:local-global,thm:local-to-global-exp} are proved in \cref{app:loc_to_glob_full}. 
In the rest of this section we turn to showing that 
the notions of local polarization are not vacuous. Indeed, in later sections we show that the \Arikan\ martingales polarize locally (under simple necessary conditions). First we give some background on Polar codes.

\subsection{The \Arikan\ martingale and capacity-achieving polar codes}
\label{ssec:intro-arikan}

The setting of polar codes considers an arbitrary {\em symmetric memoryless
	channel} and yields codes that aim to achieve the {\em capacity} of this
channel. These notions are reviewed in \cref{sec:channel}.
Given any $q$-ary memoryless channel $\C_{Y|Z}$ and invertible matrix $M \in \F_q^{k \times k}$, the theory of polar codes implicitly defines a martingale, which we call the \Arikan\ martingale associated with $(M,\C_{Y|Z})$ and studies its polarization. (An additional contribution of this work is that we give an explicit compact definition of this martingale, see \cref{def:arikan-martingale}. Since we do not need this definition for the purposes of this section, we defer it to \cref{sec:martingale}.) The consequences of regular polarization are described by the following remarkable theorem. (Below we use $M\otimes N$ to denote the tensor product of the matrix $M$ and $N$. Further, we use $M^{\otimes t}$ to denote the tensor of a matrix $M$ with itself $t$ times.)

\cstate{Asymptotic convergence to capacity; Implied by \Arikan~\cite{arikan-polar}}{theorem}{thmarikan}{
	\label{thm:exp-code}
	Let $\C$ be a $q$-ary symmetric memoryless channel and let $M \in \F_q^{k \times k}$ be an
	invertible matrix.
	If the \Arikan\ martingale associated with $(M,\C)$ polarizes regularly, then \iffalse\textcolor{red}{for every sufficiently large $t$, the rows of $(M^{-1})^{\otimes t}$ contain a basis of a capacity achieving code for the channel $\C$. I.e.,}\fi given $\epsilon > 0$ and $c < \infty$ there is a $t_0$ such that for every $t \geq t_0$ there is a code $C \subseteq \F_q^n$ for $n=k^t$ of dimension at least $(\mathrm{Capacity}(\C)-\epsilon)\cdot n$ such that $C$ is an affine code generated by the restriction of $(M^{-1})^{\otimes t}$ to a subset of its rows and an affine shift. Moreover there is a polynomial time decoding algorithm for these codes that has failure probability bounded by $n^{-c}$.\footnote{We remark that the encoding and decoding are not completely uniform as described above, since the subset of rows and the affine shift that are needed to specify the code are only guaranteed to exist. In the case of additive channels, where the shift can be assumed to be zero, the work of Tal and Vardy~\cite{tal-vardy} (or \cite[Sec. V]{GX15}) removes this non-uniformity by giving a polynomial time algorithm to find the subset.
	}
}

In order to obtain codes with faster convergence to capacity,
we will need stronger forms of polarization, and a
more quantitative version of this theorem, with effective upper bounds on $t_0$ as a function of the gap $\epsilon$ to capacity.
The following version relates parameters of polarization with the quality of the
associated code.

\cstate{Quantitative convergence to capacity~\cite{arikan-polar,GX15,HAU14}}{theorem}{thmarikanpolar-2}{
	\label{thm:main-quant}
	Let $\C$ be a $q$-ary symmetric memoryless channel and let $M \in \F_q^{k \times k}$ be an
	invertible matrix.
	If the \Arikan\ martingale associated with $(M,\C)$ satisfies $(\tau_\ell, \tau_h,
	\varepsilon)$-polarization, then for every $t$, there is an affine code $C$,
	that is generated by the rows of $(M^{-1})^{\otimes t}$ and an affine shift,
	such that the rate of $C$ is at least 
	\[ \mathrm{Capacity}(\C) - \varepsilon(t) -
	\two \tau_h(t) \ , \]
	and $C$ can be encoded and decoded\,\footnote{The running times count the number of floating point operations where real numbers are maintained with $\Oh(\log n)$ bits of precision.} in time $\Oh(n \log n)$ where $n
	= k^t$ and failure probability of the decoder is at most $\Oh(n \cdot \log q \cdot \tau_\ell(t) ) $.}

\begin{remark}
	\label{rem:using-channel-coding-result}
	So in particular if $\tau_h(t),\varepsilon(t) = \Oh(\rho^t)$ then we get $\epsilon$ close to capacity at block lengths roughly
	$(1/\epsilon)^{\log k/\log(1/\rho)}$ which is a polynomial in $\epsilon$ provided $\rho < 1$. 
	Of course for the code to be useful, we also need $\tau_\ell(t) \ll k^{-t}$. Both conditions are guaranteed by strong polarization.
	$\Lambda$-exponentially strong polarization guarantees decoding failure probability at most $\Oh(n \cdot\log q \cdot \exp(-\Omega(n^{\Lambda/\log_2 k})))$. 
\end{remark}

This theorem is implicit in the works above, but for completeness we include a proof in \cref{sec:fast-decoder} and \cref{sec:correspondence}.

For any binary input symmetric channel, 
\Arikan\ and Telatar~\cite{arikan-telatar} proved that the martingale associated with the matrix $G_2 = \left( \begin{smallmatrix}
	1 & 0  \\
	1 & 1
\end{smallmatrix} \right)$, polarizes regularly (\Arikan's original paper~\cite{arikan-polar} proved a weaker form of regular polarization with $\tau(t) < 2^{-5t/4}$ which also sufficed for decoding error going to $0$). Subsequent work generalized this to other matrices with the work of Korada, \c{S}a\c{s}o\u{g}lu, and Urbanke~\cite{KSU10} giving a precise characterization of matrices $M$ for which the \Arikan\ martingale polarizes (again over binary input channels). We will refer to such matrices as \emph{mixing}, formally defined below for all finite fields.
\cstate{Mixing Matrix}{definition}{defmix}{
	\label{def:mixing-matrix}
	A matrix $M \in \F_q^{k \times k}$ is said to be {\em mixing}, if it is invertible and none of the permutations of the rows of $M$ yields an upper triangular matrix, i.e., for every permutation $\pi:[k]\to [k]$ there exists $i,j \in [k]$ with $j < \pi(i)$ such that $M_{i,j} \ne 0$.\footnote{We use $1$-indexing in this paper.}
}
It is not too hard to show that the \Arikan\ martingale associated with non-mixing matrices do not polarize (even weakly). In contrast, \cite{KSU10} shows that every mixing matrix over $\F_2$ polarizes regularly. Mori and Tanaka~\cite{mori-tanaka} show that the same result holds for all prime fields, and give a slightly more complicated criterion that characterizes (regular) polarization for general fields. (These works show that the decoding failure probability of the resulting polar codes is at most $2^{-n^\beta}$ for some positive $\beta$ determined by the structure of the mixing matrix --- this follows from an even stronger decay in the first of the two parameters in the definition of polarization. However, they do \emph{not} show strong polarization, which is what we achieve.)

As alluded to earlier, strong polarization is defined such that it yields codes
with polynomial gap to capacity, via \cref{thm:main-quant}.

\cstate{\cite{arikan-polar,GX15,HAU14}}{theorem}{thmarikanpolar}{
	\label{thm:poly-code}
	Let $\C$ be a $q$-ary symmetric memoryless channel and let $M \in \F_q^{k \times k}$ be an
	invertible matrix. Suppose that the \Arikan\ martingale associated with $(M,\C)$ polarizes strongly.
	
	Then, for every $c$ there exists $t_0(x) = O_c(\log x)$\footnote{The notation $O_c(\cdot)$ hides a constant factor that only depends on $c$.} such that for every $\epsilon > 0$ and every $t \geq t_0(1/\epsilon)$ there is an affine code $C$, that is generated by the rows of $(M^{-1})^{(\otimes t)}$ and an affine shift, with the property that the rate of $C$ is at least $\mathrm{Capacity}(\C)-\epsilon$, and $C$ can be encoded and decoded in time $\Oh(n \log n)$ where $n=k^t$ and failure probability of the decoder is at most $n^{-c}$.
	
	If we assume that the \Arikan\ martingale associated with $(M,\C)$ has exponentially strong polarization, then the failure probability of the decoder is at most $\exp(-n^\beta)$ for some $\beta > 0$.\footnote{Throughout this paper we use the notation $\exp(x)$ to denote a function of the form $c^x$ for some constant $c > 1$. The exact value of $c$ may be different in each usage, but will always be bounded away from $1$.}
}

The proof of this theorem, as a direct corollary from~\cref{thm:main-quant} is included in \cref{sec:fast-decoder} for completeness. 

As alluded to earlier, the only \Arikan\ martingales that were known to polarize strongly were those where the underlying matrix was
$G_2 = \left( \begin{smallmatrix}
	1 & 0  \\
	1 & 1
\end{smallmatrix} \right)$. Specifically Guruswami and Xia~\cite{GX15} and Hassani et al.~\cite{HAU14} show strong polarization of the \Arikan\ martingale associated with this matrix over any binary input symmetric channel, and Guruswami and Velingker \cite{GV15} extended to the case of $q$-ary input channels for prime $q$. By using the concept of local polarization we are able to extend these results to \emph{all} mixing matrices.

\subsection{Results II: Local polarization of \Arikan\ martingales}

The results in this subsection appear as  the pink arrows (from top and bottom box on the right to the middle right (orange) boxes) in \cref{fig:overview}.

In our second main result, we show that every mixing matrix gives rise to an \Arikan\ martingale that is locally polarizing:
\cstate{Local polarization of \Arikan\ martingales}{theorem}{thmtrianglelocal}{
	\label{thm:triangle-local}
	For every prime $q$, for every mixing matrix $M \in \F_q^{k \times k}$, and for every symmetric memoryless channel
	$\C_{Y|Z}$ over $\F_q$,
	the associated \Arikan\ martingale is locally polarizing.
}

\cref{thm:triangle-local} is proved in \cref{subsec:thm-1.14}.

We also show that the ``two-step martingale,'' or equivalently the martingale associated with $M^{\otimes 2}$ for mixing matrices $M$ are exponentially locally polarizing. 

\cstate{Exponential local polarization of \Arikan\ martingales}{theorem}{thmtrianglelocalexp}{
	\label{thm:triangle-local-exp}
	For every prime $q$, $\varepsilon > 0$, every mixing matrix $M \in \F_q^{k \times k}$, and for every symmetric memoryless channel
	$\C_{Y|Z}$ over $\F_q$,
	the \Arikan\ martingale sequence associated with $M^{\otimes 2}$ and $\C_{Y|Z}$ is $(\frac{1}{k^2}, 2-\varepsilon)$-exponentially locally polarizing.
}

\cref{thm:triangle-local-exp} is proved in \cref{sec:exp-local-arikan}. 

\subsection{Implications for polar codes with polynomial convergence to capacity}

Results in this section are the two bottom green arrows (from the middle left (green) boxes to the left most (blue) boxes) in \cref{fig:overview}.

As a consequence of \cref{thm:poly-code,thm:local-global,thm:triangle-local}, we have the following theorem.

\cstate{Polynomially fast convergence to capacity \& inverse polynomial error probability}{theorem}{thmcombo}{
	\label{thm:combo}
	~\\ For every prime $q$, every mixing matrix $M\in\F_q^{k\times k}$, every symmetric memoryless channel $\C$ over $\F_q$, and every $c < \infty$, there is a polynomial $p$ such that for every $\epsilon > 0$, and every  $n = k^t > p(1/\epsilon)$, there is an affine code $C$, that is generated by the rows of $(M^{-1})^{(\otimes t)}$ and an affine shift, with the property that the rate of $C$ is at least $\mathrm{Capacity}(\C)-\epsilon$, and $C$ can be encoded and decoded in time $\Oh(n \log n)$ and failure probability of the decoder is at most $n^{-c}$.
}

Again, as a consequence of \cref{thm:main-quant,thm:local-to-global-exp,thm:triangle-local-exp}, we have the following theorem which achieves decoding failure probability that is $\exp(-n^\beta)$ for some $\beta > 0$. We refer to such a function as \emph{root-exponentially small}, and when $\beta \to 1$, we call it \emph{near-exponentially small}.

\begin{theorem}[Polynomial convergence to capacity \& root-exponentially small error probability]
	\label{thm:thm1}
	~\\ For every prime $q$, every mixing matrix $M\in\F_q^{k\times k}$, every symmetric memoryless channel $\C$ over $\F_q$, there is a polynomial $p$ and $\beta > 0$ such that for every $\epsilon > 0$ and every $n = k^t \geq p(1/\epsilon)$, 
	there is an affine code $C$, that is generated by the rows of $(M^{-1})^{(\otimes t)}$ and an affine shift, with the property that the rate of $C$ is at least $\mathrm{Capacity}(\C)-\epsilon$, and $C$ can be encoded and decoded in time $\Oh(n \log n)$ and failure probability at most 
	$\exp(-n^\beta)$.
\end{theorem}

\subsection{Additional results optimizing decoding error probability}
\label{sec:extra-results}

The above theorems shows that all polar codes associated with every mixing matrix achieves the Shannon capacity of a symmetric memoryless channel efficiently, thus, vastly expanding on the class of polar codes known to satisfy this condition.
By choosing the mixing matrix carefully, we can even achieve decoding error probability close to $2^{-\Omega(n)}$, specifically we can get near-exponentially small decoding error 
probability, i.e., falling as $\exp(-n^\beta)$ for any desired $\beta < 1$.

\begin{theorem}[Near-exponentially small error probability and polynomial convergence to capacity]
	\label{thm:thm2}
	For every prime $q$, every symmetric memoryless channel $\C$ over $\F_q$, and every $\beta < 1$, there exists $k$, a  mixing matrix $M\in\F_q^{k\times k}$, and a polynomial $p$ such that for every $\epsilon > 0$ and every $n = k^t \geq p(1/\epsilon)$, 
	there is an affine code $C$, that is generated by the rows of $(M^{-1})^{(\otimes t)}$ and an affine shift, with the property that the rate of $C$ is at least $\mathrm{Capacity}(\C)-\epsilon$, and $C$ can be encoded and decoded in time $\Oh(n \log n)$ and failure probability at most 
	$\exp(-n^\beta)$.
\end{theorem}

\cref{thm:thm2} is proved in \cref{sec:strong-exp-failure}. 

Finally, for a broad class of channels, we show that we achieve nearly the best possible error exponent for any given mixing matrix $M$, while achieving polynomial gap to capacity, using the proofs of this paper.

\begin{theorem}[Polynomial convergence to capacity at no price in decoding error probability]
	\label{thm:thm3}
	~\\ Suppose $M\in\F_q^{k \times k}$ and $\beta > 0$ satisfy the condition that for every $q$-ary symmetric channel\footnote{A $q$-ary symmetric channel is one where the symbol is unaltered with probability $1-\theta$, and flipped to a uniform value with probability $\theta$, for a channel parameter $\theta \in [0,1]$.} 
	$\C$ and for every $\epsilon > 0$, for sufficiently large $n=k^s$, there is an affine code $C$ of length $n$ generated by the rows of $(M^{-1})^{(\otimes s)}$ of rate at least $\mathrm{Capacity}(\C)-\epsilon$ such that $C$ can be decoded with failure probability at most $\exp(-n^\beta)$. 
	
	Then, for every $\beta' < \beta$ and every symmetric channel $\C'$ with inputs from $\F_q$, there is a polynomial $p$ such that for every $\epsilon > 0$ and every $n = k^t \geq p(1/\epsilon)$
	there is an affine code $C$, that is generated by the rows of $(M^{-1})^{(\otimes t)}$ and an affine shift, with the property that the rate of $C$ is at least $\mathrm{Capacity}(\C')-\epsilon$, and $C$ can be encoded and decoded in time $\Oh(n \log n)$ and failure probability at most $\exp(-n^{\beta'})$.
\end{theorem}

\cref{thm:thm3} is proved in \cref{sec:univ-local-polarization}. It is worth emphasizing two desirable aspects about \cref{thm:thm3}:
\begin{enumerate}
    \item We only need to assume that polar codes based on $M$ achieve capacity for the $q$-ary symmetric channel, but get a conclusion for every symmetric channel (with $\F_q$ inputs).
    \item Further, we assume nothing about the speed of convergence to capacity for the $q$-ary symmetric channel, and conclude polynomial convergence to capacity (positive scaling exponent) for arbitrary symmetric channels. We do assume root-exponential decoding error probability for the $q$-ary symmetric channel but this has been established for all mixing matrices in the limit of $n\to\infty$~\cite{KSU10,mori-tanaka}. Moreover in this limit \cite{KSU10} gives a characterization of the best possible exponent $\beta$ for any given matrix $M$. \cref{thm:thm3} asserts that essentially the same characterization applies with polynomial convergence to capacity.
\end{enumerate}

\subsection{Comparison with previous analyses of (strong) polarization}

While most of the ingredients going into our eventual analysis of strong polarization are familiar in the literature on polar codes, our proofs end up being much simpler and modular. We describe some of the key steps in our proofs and contrast them with those in previous works.

\medskip\noindent \textbf{Definition of Local Polarization.}
While we are not aware of a definition similar to local polarization being explicit in the literature before, such notions have been considered implicitly before.
For instance, for the variation in the middle (where we require that $\E[(X_{t+1}-X_t)^2] \geq \theta$ if $X_t \in (\tau,1-\tau)$)
some of the previous analyses (e.g., in \cite{GX15,GV15}) required  $\theta$ be quadratic in $\tau$.
In contrast, our requirement on the variation is very weak and qualitative, allowing any function $\theta(\tau) > 0$. Similarly, our requirement in the {\em suction at the ends} case is relative mild and qualitative. In previous analyses the requirements were of the form ``if $X_t \leq \tau$ then $X_{t+1} \leq X_t^2$ with positive probability.'' This high demand on the suction case prevented the analyses from relying only on the local behavior of the martingale $X_0,\ldots,X_t,\ldots$ and instead had to look at other parameters associated with it which essentially depend on the entire sequence. (For the reader familiar with previous analyses, this is where the Bhattacharyya parameters enter the picture.)
Our approach, in contrast, only requires arbitrarily large constant factor drop, and thereby works entirely with the local properties of $X_t$.

\medskip\noindent\textbf{Local Polarization implies Strong Polarization.}
Our proof that local polarization implies strong polarization is short (about 3 pages) and comes in two parts. The first part uses a simple variance argument to shows that $X_t$ is exponentially close (in $t$) to the limit except with probability exponentially small in $t$. The second part then amplifies $X_t$'s proximity to $\{0,1\}$ to sub-exponentially small values using the suction at the end guarantee of each local step, coupled with Doob's martingale inequality and standard concentration inequalities.
Such a two-part breakdown of the analysis is not new; however, our technical implementation is more abstract, more general and more compact all at the same time.

\medskip\noindent\textbf{Local Polarization of \Arikan\ martingales.}
We will elaborate further on the approach for this after defining the \Arikan\ martingales, but we can say a little bit already now: First we essentially reduce the analysis of the polarization of \Arikan\ martingale associated with an arbitrary mixing matrix $M$
to the analysis when $M = G_2$. This reduction loses in the parameters
$(\alpha,\tau(\cdot),\theta(\cdot))$ specifying the level of local polarization, but since our strong polarization theorem works for any function, such loss in performance does not hurt the eventual result.
Finally, local polarization for the case where the matrix is $G_2$ is of course standard, but even here our proofs (which we include for completeness)
are simpler since they follow from known entropic inequalities on sums of two independent random variables. We stress that even quantitatively weak forms of these inequalities meet our requirements of local polarization, and we do not need strong forms of such inequalities (like Mrs. Gerber's lemma for the binary case~\cite{sasoglu-book,GX15} and an ad hoc one for the prime case~\cite{GV15}).

\medskip\noindent\textbf{General vs. Prime Fields.}
One weaknesses in our analysis that, in contrast to the result of Mori and Tanaka~\cite{mori-tanaka} who characterize the set of matrices that lead to regular polarization over genertal fields, we only get a characterization (for strong polarization) over prime fields. We feel that this limitation is not inherent to our approach. The only (but crucial) place where the prime field plays a role is in the ``variance in the middle'' lemma (\cref{lem:2x2-middle}) for \Arikan's basic $2 \times 2$ kernel $G_2$, which in fact does not polarize regularly over general fields due to the existence of subfields. There might be a way around this by reduction to a different $2 \times 2$ kernel that actually polarizes regularly.

\medskip \noindent \textbf{Concrete polynomial upper bounds on block length.}
A second weakness in our analysis is that, while we develop a general framework to prove strong polarization and polynomial convergence to capacity, the constants are not
optimized and will lead to poor upper bounds on the exponent $\mu$ of
the polynomial in the block length as a function of the gap to capacity. This quantity is called the scaling exponent, and our main goal in this work is to prove that for every mixing matrix $M$ has a finite scaling exponent $\mu =\mu(M)$. 

For the case of
$M=G_2$ and binary alphabet (the original \Arikan\ setting), an upper bound
of $\mu \le 6$ was shown in \cite{HAU14}, and improved to 5.702
in \cite{GoldinB14}, and to $4.714$ in \cite{MHU16}. For the case of the binary erasure channel (BEC), \cite{MHU16} showed an upper bound of $\mu \le 3.639$, which is close to the heuristic value of $\approx 3.627$ reported in \cite{KMTU}. This latter value is also argued as a \emph{lower bound} on $\mu$ for the binary-erasure channel in \cite{HAU14} (for the proof technique of bounding decoding error probability by the sum of Bhattacharyya parameters of the channels seen by the successive cancellation decoder). For kernels besides $G_2$, we were unaware of any concrete (or even finite) upper bounds on $\mu$ besides our work (except for large random kernels discussed next).

\medskip\noindent\textbf{Subsequent work.}
Quantitative versions of Shannon's noisy coding theorem theorem show that one can achieve a scaling exponent of $2$ for any discrete memoryless channel, and converse theorems show that this is optimal~\cite{wolfowitz,strassen}.
For erasure channels over large alphabets, it was shown in \cite{PfisterU19} that random $\ell \times \ell$ kernels for larger $\ell$ achieve a scaling exponent approaching $2$. Such a result was then shown for the binary erasure channel (BEC) in~\cite{FHMV21}.

While these results hinted at the potential of polar codes to achieve near-optimal scaling exponents, they only applied to erasure channels. Analyzing polar codes for more general channels, including the basic binary symmetric channel (BSC), is significantly more complex.\footnote{For erasure channels, all intermediate channels seen by the decoder of the recursive polar code construction are also erasure channels, with varying erasure probabilities. Even for the BSC on the other hand, the intermediate channels become incredibly complex with huge alphabet sizes. So one must effectively argue about and find a construction that is able to handle a plethora of channels that don't admit analytically simple descriptions.}
Variants of polar codes were shown to achieve a scaling exponent approaching $2$ for all binary-input symmetric channels in \cite{GRY20}, together with polynomial time constructions and quasi-linear encoding/decoding complexity. A similar result was shown for all discrete memoryless channels over any finite alphabet in \cite{WD21}, albeit the efficient construction of such codes remains to be worked out (but once constructed the codes admit efficient encoding/decoding). These results also use large random kernels. For concrete kernels, this work remains the only general approach to show strong polarization and finite scaling exponent.

\subsection{Organization of the rest of this paper}
We first introduce some of
the notation and probabilistic preliminaries used to define and analyze the
\Arikan\ martingale in \cref{app:prelims}.
We then prove \cref{thm:local-global} showing that  local polarization implies strong
polarization in \cref{app:loc_to_glob_full}.
This is followed by the formal
definition of the \Arikan\ martingale in \cref{sec:martingale}.
\cref{sec:local-k-by-k} then asserts conditions on the entropy of the sum of two independent variables and uses these to prove \cref{thm:triangle-local} asserting the local polarization of the \Arikan\ martingale.
\cref{sec:entropic-proofs} proves these entropic conditions. 
\cref{sec:exp-local-arikan} proves the exponential local polarization of the two-step \Arikan\ martingale (\cref{thm:triangle-local-exp}). 
\cref{sec:lift} we prove \cref{thm:thm2,thm:thm3} which strengthen the error analysis for codes to nearly optimal. 
Finally in \cref{sec:codes-from-polarization} we show for completeness how the \Arikan\ martingale (and its convergence) can be used to construct capacity achieving codes.

\section{Preliminaries and Notation}
\label{app:prelims}

In this section we introduce the notation needed to define the \Arikan\ martingale (which will be introduced in Section~\ref{sec:martingale}).
We also include information-theoretic and probabilistic inequalities that will
be necessary for the subsequent analysis.

\subsection{Notation}

The \Arikan\ martingale is based on a recursive construction of a vector valued random variable. To cleanly describe this construction it is useful to specify our notational conventions for vectors, tensors and how to view the tensor products of matrices. These notations will be used extensively in the following sections.

\subsubsection{General Notation}

For a prime power $q$, we use $\F_q$ to denote the finite field with $q$ elements and use $\F_q^*$ to denote the non-zero elements in $\F_q$.

We will use $\Oh(\cdot)$ for ``Big-Oh" notation.

\subsubsection{Probability Notation}

Throughout this work, all random variables involved will be discrete.
For a probability distribution $D$ and random variable $X$, we write $X \sim D$ to mean that
$X$ is distributed according to $D$, and independent of all other variables.
Similarly, for a set $S$, we write $X \sim S$ to mean that $X$ is
independent and uniform over $S$.
For a set $S$, let $\Delta(S)$ denote the set of probability distributions over $S$.

We occasionally abuse notation by treating distributions as random variables.
That is, for $\bvec{D} \in \Delta(\F_q^k)$ and a matrix $M \in \F_q^{k \x k}$,
we write $\bvec{D}M$ to denote the distribution of the random variable
$\{\bvec{X}M\}_{\bvec{X} \sim \bvec{D}}$.
For a distribution $D$ and an event $E$, we write $D | E$ to denote the
conditional distribution of $D$ conditioned on $E$.

\subsubsection{Tensor Notation}
\label{sec:notation}
Here we introduce useful notation for dealing with scalars, vectors, tensors, and tensor-products.
All scalars will be non-boldfaced, for example: $X \in \F_q$.
All our vectors will be row vectors (except when explicitly noted) and will be boldfaced.
Any tensors of order $\geq 1$ (including vectors) will be boldfaced, for example: $\bvec{Y} \in \F_q^{k}$.
One exception to this is the matrix $M$ used in the polarization transforms,
which we do not boldface.

Subscripts are used to index tensors, with indices starting from $1$.
For example, for $\bvec{Y}$ as above, $\bvec{Y}_i \in \F_q$.
Matrices and higher-order tensors are indexed with multiple subscripts:
For $\bvec{Z} \in (\F_q^k)^{\otimes 3}$,
we may write $\bvec{Z}_{1, 2, 1} \in \F_q$.
We often index tensors by tuples (\emph{multiindices}), which will be boldfaced:
For $\bvec{i} = (1, 2, 1) \in [k]^3$, we write $\bvec{Z}_{\bvec{i}} = \bvec{Z}_{1, 2, 1}$.
Let $\prec$ be the lexicographic order on these indexing tuples.

When an index into a tensor is the concatenation of multiple tuples, we emphasize this by using brackets in the subscript. For example:
for tensor $\bvec{Z}$ as above, and
$\bvec{i} = (1, 2)$ and $j = 1$, we may write
$\bvec{Z}_{[\bvec i, j]} = \bvec{Z}_{1,2,1}$.

For a given tensor $\bvec{Z}$, we can consider fixing some subset of its indices, yielding a \emph{slice} of $\bvec{Z}$ (a tensor of lower order).
We denote this with brackets, using $\cdot$ to denote unspecified indices.
For example for tensor
$\bvec{Z} \in (\F_q^k)^{\otimes 3}$
as above, we have
$\bvec{Z}_{[1, 2, \cdot]} \in \F_q^k$
and
$\bvec{Z}_{[\cdot, 1]} \in (\F_q^k)^{\otimes 2}$.

We somewhat abuse the indexing notation, using
$\bvec{Z}_{\prec \bvec{i}}$ to mean the set of variables
$\{\bvec{Z}_{\bvec{j}} : \bvec j \prec \bvec{i}\}$.
Similarly,
$\bvec{Z}_{[\bvec i, < j]}
:= \{\bvec{Z}_{[\bvec i, k]} : k < j\}$.

We occasionally unwrap tensors into vectors, using the correspondence between $(\F_q^k)^{\otimes t}$ and $\F_q^{k^t}$.
Here, we unwrap according to the lexicographic order $\prec$ on tuples.

Finally, for matrices specifically, $M_{i, j}$ specifies the entry in the $i$-th row and $j$-th column of matrix $M$.
Throughout, all vectors will be row-vectors by default.

\subsubsection{Tensor Product Recursion}
The construction of polar codes and analysis of the \Arikan\ martingale
rely crucially on the recursive structure of the tensor product.
Here we review the definition of the tensor product, and state its recursive
structure.

For a linear transform $M: \F_q^k \to \F_q^k$,
let
$M^{\otimes t} :
(\F_q^k)^{\otimes t}
\to
(\F_q^k)^{\otimes t}$
denote the $t$-fold tensor power of $M$.
Explicitly (fixing basis for all the spaces involved),
this operator acts on tensors $\bvec{X} \in
(\F_q^k)^{\otimes t}$
as:
$$
[M^{\otimes t}(\bvec{X})]_{\bvec j}
=
\sum_{\bvec i \in [k]^t}X_{\bvec i}
M_{i_1, j_1}
M_{i_2, j_2}
\cdots
M_{i_t, j_t}.
$$

The tensor product has the following recursive structure:
$M^{\otimes t} = (M^{\otimes t-1}) \otimes M$,
which corresponds explicitly to:
\begin{equation}
	[M^{\otimes t}(\bvec{X})]_{[\bvec{a}, j_t]}
	=
	\sum_{i_t\in [k]} M_{i_t, j_t}
	[M^{\otimes t-1}(\bvec{X}_{[\cdot, i_t]})]_{\bvec{a}}.
\end{equation}

In the above, if we define tensor
$$\bvec{Y}^{(i_t)}:= M^{\otimes t-1}(\bvec{X}_{[\cdot, i_t]})$$
then this becomes
\begin{equation}
	\label{eqn:tensor1}
	[M^{\otimes t}(\bvec{X})]_{[\bvec{a}, \cdot]}
	= M(
	(
	\bvec{Y}_{\bvec{a}}^{(1)},
	\bvec{Y}_{\bvec{a}}^{(2)},
	\dots,
	\bvec{Y}_{\bvec{a}}^{(k)}
	)
	)
\end{equation}
where the vector
$(
\bvec{Y}_{\bvec{a}}^{(1)},
\bvec{Y}_{\bvec{a}}^{(2)},
\dots,
\bvec{Y}_{\bvec{a}}^{(k)}
) \in \F_q^k$.

Finally, we use that $(M^{\otimes t})^{-1} = (M^{-1})^{\otimes t}$.

\subsection{Information Theory Preliminaries}

For the sake of completeness we include the information-theoretic concepts and tools we use in this paper.

For a discrete random variable $X$,
let $H(X)$ denote its binary entropy:
$$H(X) := \sum_{a \in Support(X)} p_X(a) \log\parens{\frac{1}{p_X(a)}}$$
where $p_X(a) := \Pr\parens{X = a}$ is the probability mass function of $X$.
Throughout, $\log(\cdot)$ by default denotes $\log_2(\cdot)$.

For $p \in [0, 1]$, we overload this notation, letting $H(p)$ denote
the entropy $H(X)$ for $X \sim Bernoulli(p)$.

For arbitrary random variables $X, Y$, let $H(X | Y)$ denote
the conditional entropy:
$$H(X | Y) = \E_{Y}[H(X | Y = y)].$$

For a $q$-ary random variable $X \in \F_q$, let
$\bH(X) \in [0, 1]$ denote its (normalized) $q$-ary entropy:
\begin{equation}
	\label{eq:normalized-entropy}
	\bH(X) := \frac{H(X)}{\log(q)} \ . 
\end{equation}

Finally, the \emph{mutual information} between jointly distributed random
variables $X, Y$ is:
$$I(X; Y) := H(X) - H(X | Y) = H(Y) - H(Y | X)$$

We will use the following standard properties of entropy (see, for instance, \cite{CoverThomas}):

\begin{enumerate}
	\item {\bf (Adding independent variables increases entropy):}
	For any random variables $X, Y, Z$ such that $X, Y$ are conditionally independent given $Z$, we have
	\begin{equation}
		\label{eqn:adding-indep}
		H(X + Y | Z) \geq H(X | Z)
	\end{equation}
	
	\item {\bf (Transforming Conditioning):}
	For any random variables $X, Y$, any function $f$, and any bijection $\sigma$,
	we have
	\begin{equation}
		\label{eqn:transform-conditioning}
		H(X | Y) = H(X + f(Y) | Y) = H(X + f(Y) | \sigma(Y))
	\end{equation}
	
	\item {\bf (Chain rule):}
	For arbitrary random variables $X, Y$:
	$H(X, Y) = H(X) + H(Y | X)$.
	\item {\bf (Conditioning does not increase entropy):}
	For $X, Y, Z$ arbitrary random variables, $H(X | Y, Z) \leq H(X | Y)$.
	\item {\bf (Monotonicity):} For $p \in [0, 1/2)$, the binary entropy $H(p)$ is non-decreasing with $p$. And for $p \in (1/2, 1]$, the binary entropy $H(p)$
	is non-increasing with $p$.
	\item {\bf (Deterministic postprocessing does not increase entropy):} For arbitrary random variables $X, Y$ and function $f$ we have $H(X | Y) \geq H(f(X) |Y )$.
	\item {\bf (Conditioning on independent variables):}
	For random variables $X, Y, Z$ where $Z$ is independent from $(X, Y)$, we
	have $H(X | Y) = H(X | Y, Z)$.
\end{enumerate}

\subsubsection{Channels}
\label{sec:channel}
Given a finite field $\F_q$, and output alphabet $\mathcal{Y}$, a \emph{$q$-ary channel} $\C_{Y|Z}$
is a probabilistic function from $\F_q$ to $\mathcal{Y}$.
Equivalently, it is given by $q$ probability distributions $\{\C_{Y|\alpha}\}_{\alpha \in \F_q}$ supported on $\mathcal{Y}$.
We use notation $\mathcal{C}(Z)$ to denote the channel operating on inputs
$Z$.
A \emph{memoryless channel} maps $\F_q^n$ to $\mathcal{Y}^n$ by acting independently (and identically) on each coordinate.
A \emph{symmetric channel} is a memoryless channel where for every $\alpha,\beta \in \F_q$ there is a bijection $\sigma:\mathcal{Y}\to \mathcal{Y}$ such that for every $y \in \mathcal{Y}$ it is the case that $\C_{Y=y|\alpha} = \C_{Y = \sigma(y)|\beta}$, and moreover for any pair $y_1, y_2 \in \mathcal{Y}$, we have $\sum_{x \in \F_q} C_{Y=y_1 | x} = \sum_{x \in \F_q} C_{Y=y_2 | x}$ (see, for example,~\cite[Section 7.2]{CoverThomas}).
As shown by Shannon every memoryless channel has a finite capacity, denoted
$\mathrm{Capacity}(\C_{Y|Z})$.
For symmetric channels, this is the mutual
information $I(Y; Z)$ between the input $Z$ and output $Y$ where $Z$ is drawn uniformly from $\F_q$ and $Y$ is drawn from $\C_{Y|Z}$ given $Z$.

\subsection{Basic Probabilistic Inequalities}
\label{app:basic-prob}

In this section, we collect a few useful probabilistic and information-theoretic inequalities, all of which are standard. The proofs are included for convenience.

We first show that a random variable with small-enough entropy will usually take its most-likely value  and thus maximum likelihood recovery is successful with high probability. In fact we show that even if the likelihoods are known only very approximately  maximum likelihood decoding will still be quite successful.

\begin{lemma}
	\label{lem:ML-basic}
	Let $X$ be an arbitrary discrete random variable with range $\mathcal{X}$.
	Then there exist $\hat x\in\mathcal{X}$ such that
	$$\Pr\parens{X \neq \hat x} \leq H(X).$$
	In particular, one can take $\hat x = \argmax_\alpha \{\Pr\parens{X=\alpha}\}$.
	
	Furthermore, given $\tilde{p}_\alpha$'s satisfying $|\tilde{p}_\alpha - \Pr\parens{X = \alpha}| \leq 1/4$ for every $\alpha \in \mathcal{X}$, if we let $\tilde{x} = \argmax_\alpha \{\tilde{p}_\alpha \}$ then we have $\Pr\parens{X \neq \tilde{x}} \leq 3H(X)$.
\end{lemma}
\begin{proof}
	Let $\alpha := H(X)$ and let $p_i := \Pr_X \parens{ X = i}$.
	Let $\hat x = \argmax_i \{p_i\}$ be the
	value maximizing this probability.
	Let $p_{\hat x} = 1 - \gamma$. We wish to show that
	$\gamma \leq \alpha$.
	If $\gamma \leq 1/2$ we have
	\begin{align*}
		\alpha
		&=  H(X) = \sum_i p_i \log \frac1{p_i} \\
		& \geq  \sum_{i \ne \hat x} p_i \log \frac1{p_i} \tag{Since all summands are non-negative}\\
		& \geq  \sum_{i \ne \hat x} p_i \log \frac1{\sum_{j\ne \hat x} p_j} \tag{Since $p_i \leq \sum_{j\ne \hat x}
			p_j$.}\\
		& =  \left( \sum_{i \ne \hat x} p_i \right) \cdot \log \left( \frac1{\sum_{j\ne \hat x} p_j} \right)\\
		& =  \gamma \cdot \log 1/\gamma \\
		& \geq  \gamma \tag{Since $\gamma \leq 1/2$ and so $\log 1/\gamma \geq 1$}
	\end{align*}
	as desired. Now if $\gamma > 1/2$ we have a much simpler case since now we have
	\begin{align*}
		\alpha
		&=  H(X) =  \sum_i p_i \log \frac1{p_i} \\
		& \geq  \sum_i p_i \log \frac1{p_{\hat x}} \tag{Since $p_i \leq p_x$} \\
		& =  \log \frac1{p_{\hat x}} \tag{Since $\sum_i p_i =1$} \\
		& =  \log \frac1{1-\gamma}\\ 
		& \geq  1. \tag{Since $\gamma \geq 1/2$}
	\end{align*}
	But $\gamma$ is always at most $1$ so in this case also we have $\alpha \geq 1 \geq \gamma$ as desired.
	
	For the furthermore part note that if $\gamma < 1/4$ then, by the condition $|\tilde{p}_\alpha - p_\alpha|\leq 1/4$, we have $\tilde{p}_{\hat{x}} > 1/2$ while $\tilde{p}_{x'} < 1/2$ for every $x' \ne \hat{x}$. Thus in this case we have $\tilde{x} = \hat{x}$ and so by the first part above we have $\Pr\parens{X \neq \tilde{x}} = \Pr\parens{X \neq \hat{x}}\leq H(X)$.
		Now if $\gamma > 1/4$ as in the second part above, we have $H(X) \geq \log\frac1{1-\gamma} \geq .415 \geq 1/3$ and so we get 
		$\Pr\parens{X \neq \hat{x}}\leq 1 \leq 3 H(X)$.
	\end{proof}

For the decoder, we will need a conditional version of
\cref{lem:ML-basic}, 
saying that if a variable $X$
has low conditional entropy conditioned on $Y$,
then $X$ can be predicted well given the instantiation of variable $Y$.
\begin{lemma}
	\label{lem:ML-conditional}
	Let $X, Y$ be arbitrary discrete random variables with range $\mathcal{X}, \mathcal{Y}$ respectively.
	Then there exists a function $\hat X: \mathcal{Y}\to \mathcal{X}$ 
	such that
	$$\Pr_{X, Y}\parens{X \neq \hat X(Y)} \leq H(X|Y)$$
	In particular, the following estimator satisfies this:
	$$\hat X(y) := \argmax_x\set{\Pr\parens{X = x | Y = y}}.$$
	Furthermore, given $\tilde{p}_{x,y}$'s satisfying $|\tilde{p}_{x,y} - \Pr\parens{X = x | Y = y}| \leq 1/4$ for every $x \in \mathcal{X}, y \in \mathcal{Y}$, if we let $\tilde{X}(y) = \argmax_x \{\tilde{p}_{x,y} \}$ then we have $\Pr\parens{X \neq \tilde{X}(y)} \leq 3H(X|Y)$.
\end{lemma}
\begin{proof}
	For every setting of $Y = y$, we can bound the error probability
	of this estimator using
	\cref{lem:ML-basic}
	applied to the conditional distribution $X | Y = y$:
	\begin{align*}
		\Pr_{X, Y}\parens{X \neq \hat X(Y)}
		&= \E_{Y}\brackets{\Pr_{X | Y}\parens{\hat X(Y) \neq X}}\\
		&\leq \E_{Y}\brackets{H(X | Y = y)} \tag{\cref{lem:ML-basic}}\\
		&= H(X | Y) \ . \qedhere
	\end{align*}
	The furthermore part follows similarly by using the furthermore part of \cref{lem:ML-basic}.
\end{proof}

We also use the well-known Fano's inequality which works as a weak converse to the above lemma, asserting that if a random variable $X$ is predictable given $Y$ then its conditional entropy is small.

\begin{lemma}[Fano's inequality]
	\label{lem:prediction-gives-entropy}
	For a pair of random variables $(X,Y) \in \mathcal{X}\times \mathcal{Y}$, if there exists a function $\hat X: \mathcal{Y}\to \mathcal{X}$ such that $\P(\hat X(Y) \not= X) \leq \delta$ with $\delta < \frac{1}{2}$, then $H(X | Y) \leq 2 \delta (\log \delta^{-1} + \log |\mathcal{X}|)$.
\end{lemma}

We will need an inverse to the usual Chebychev inequality. Recall that Chebychev shows that variables with small variance are concentrated close to their expectation:
\[\Pr\parens{|Z-\E[Z]|\ge \lambda}\le \frac{\Var(Z)}{\lambda^2}.\] 
The Paley-Zygmund inequality below can be used to invert it (somewhat) --- for a random variable $W$ with comparable fourth and second central moment, by applying the lemma below to $Z = (W - \E[W])^2$ we can deduce that $W$ has positive probability of deviating noticeably from the mean.
\begin{lemma}[Paley-Zygmund]
	\label{lem:paley-zygmund}
	If $Z \geq 0$ is a random variable with finite variance, then
	\begin{equation*}
		\P(Z > \lambda \E[Z]) \geq (1 - \lambda)^2 \frac{\E[Z]^2}{\E[Z^2]}.
	\end{equation*}
\end{lemma}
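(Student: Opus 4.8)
The plan is to give the standard short proof of the Paley--Zygmund inequality, based on splitting $\E[Z]$ according to whether $Z$ exceeds the threshold $\lambda\E[Z]$, and then controlling the contribution of the ``large'' part by the Cauchy--Schwarz inequality. First I would write
\[
\E[Z] = \E\big[Z \cdot \mathbf{1}_{Z \le \lambda \E[Z]}\big] + \E\big[Z \cdot \mathbf{1}_{Z > \lambda \E[Z]}\big],
\]
which is valid because $Z \ge 0$ and the two indicators partition the sample space. On the event $\{Z \le \lambda\E[Z]\}$ we have $Z \le \lambda\E[Z]$ pointwise, so the first summand is at most $\lambda\E[Z]$; rearranging gives $\E\big[Z \cdot \mathbf{1}_{Z > \lambda \E[Z]}\big] \ge (1-\lambda)\E[Z]$.

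Next I would upper bound the same ``large'' term by Cauchy--Schwarz applied to the pair $Z$ and $\mathbf{1}_{Z > \lambda\E[Z]}$:
\[
\E\big[Z \cdot \mathbf{1}_{Z > \lambda \E[Z]}\big] \le \sqrt{\E[Z^2]}\cdot\sqrt{\E\big[\mathbf{1}_{Z > \lambda \E[Z]}^2\big]} = \sqrt{\E[Z^2]}\cdot\sqrt{\P(Z > \lambda \E[Z])},
\]
where the last equality uses that a $\{0,1\}$-valued random variable equals its own square. Combining the two displays yields $(1-\lambda)\E[Z] \le \sqrt{\E[Z^2]}\cdot\sqrt{\P(Z > \lambda\E[Z])}$; squaring both sides (legitimate since both are nonnegative once we assume $\lambda < 1$) and dividing by $\E[Z^2]$ gives exactly $\P(Z > \lambda\E[Z]) \ge (1-\lambda)^2\,\E[Z]^2/\E[Z^2]$.

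Finally I would dispose of the degenerate cases. If $\lambda \ge 1$ the right-hand side is nonpositive (since $\E[Z]^2 \le \E[Z^2]$ by Cauchy--Schwarz, the fraction lies in $[0,1]$), so the claim is vacuous; and if $\E[Z]=0$ then $Z=0$ almost surely and the bound is trivial, so we may assume $0 \le \lambda < 1$ and $\E[Z]>0$. Finiteness of the variance ensures $\E[Z^2] < \infty$, so all the expectations appearing above are well-defined and the division is legitimate. There is no real obstacle here: the only thing to be careful about is that the indicator-splitting and the Cauchy--Schwarz step are applied to nonnegative quantities so that no sign issues arise and squaring preserves the inequality; this is a textbook computation included only for completeness.
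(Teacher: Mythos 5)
Your proof is the standard, correct derivation of the Paley--Zygmund inequality (indicator splitting plus Cauchy--Schwarz), and the degenerate cases are handled appropriately. The paper states this lemma as a classical fact without proof, so there is nothing to compare against; your argument fills that in correctly.
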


Next, we define the notion of a sequence of random variables being adapted to another sequence of variables, which will be useful in our later proofs.

\begin{definition}
	\label{def:adapted-sequence}
	We say that a sequence $Y_1, Y_2 \ldots$ of random variables is \emph{adapted} to the sequence $X_1, X_2 \ldots$ if and only if for every $t$,  $Y_t$ is completely determined given $X_1, \ldots X_t$. We will use $\E[Z | X_{[1:t]}]$ as a shorthand $\E[Z | X_1, \ldots X_t]$, and $\P\parens{ E | X_{[1:t]}}$ as a shorthand for $\E[\1_E | X_1, \ldots X_t]$. If the  underlying sequence $X$ is clear from context, we will skip it and write just $\E[Z | \cF_t]$.
\end{definition}

\begin{lemma}
	\label{lem:sum-of-subexp}
	Consider a sequence of non-negative random variables $Y_1, Y_2, \ldots, Y_t, \ldots$ adapted to the sequence $X_1, X_2,\dots$. If for every $t$ we have $\Pr\parens{Y_{t+1} > \lambda\,|\, X_{[1:t]}} \le \exp(-\lambda)$, then for every $T>0$:
	\begin{equation*}
		\P\parens{\sum_{i \leq T} Y_i > CT} \leq \exp(-\Omega(T))
	\end{equation*}
	for some universal constant $C$.
\end{lemma}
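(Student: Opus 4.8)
The plan is to show that $\sum_{i \le T} Y_i$ is concentrated by comparing it against a sum of i.i.d. Exponential-like random variables. The key observation is that the hypothesis $\Pr(Y_{t+1} > \lambda \mid X_{[1:t]}) < \exp(-\lambda)$ controls the conditional moment generating function of each $Y_{t+1}$, uniformly over the past. So the natural approach is a Chernoff/Bernstein argument adapted to this martingale-difference-like filtration setup.

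First I would fix a small constant $s > 0$ (say $s = 1/2$) and bound $\E[\exp(s Y_{t+1}) \mid X_{[1:t]}]$. Since $Y_{t+1} \ge 0$, we can write $\E[\exp(sY_{t+1}) \mid \cF_t] = 1 + s\int_0^\infty e^{s\lambda}\Pr(Y_{t+1} > \lambda \mid \cF_t)\,\d\lambda \le 1 + s\int_0^\infty e^{s\lambda}e^{-\lambda}\,\d\lambda = 1 + \frac{s}{1-s}$, which for $s = 1/2$ is the absolute constant $2$. (One has to be slightly careful near $\lambda$ small where $e^{-\lambda}$ exceeds $1$, but replacing the bound there by $1$ only changes the integral by a constant, so $\E[\exp(sY_{t+1})\mid\cF_t] \le K$ for some universal $K$.) Then I would set $M_0 = 1$ and $M_T = \exp(s\sum_{i\le T} Y_i)\,/\,K^{\,T}$ and check that $\E[M_{T} \mid \cF_{T-1}] = M_{T-1}\cdot \E[\exp(sY_T)\mid\cF_{T-1}]/K \le M_{T-1}$, i.e., $(M_T)$ is a supermartingale with $\E[M_T] \le 1$.

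Given this, Markov's inequality finishes the job: for $C$ chosen so that $sC - \log K > 0$, say $C = (1 + \log K)/s$, we get
\begin{align*}
\Pr\Big(\sum_{i\le T} Y_i > CT\Big)
&= \Pr\big(\exp(s\textstyle\sum_i Y_i) > e^{sCT}\big)\\
&= \Pr\big(M_T > e^{sCT}/K^T\big)\\
&\le e^{-sCT} K^{T}\,\E[M_T]\\
&\le \exp\big(-(sC - \log K)T\big) = \exp(-\Omega(T)),
\end{align*}
using $\E[M_T] \le 1$. This gives the claimed bound with the universal constant $C$ independent of $T$ and of the particular sequences involved.

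The main obstacle — really the only subtle point — is justifying the moment generating function bound rigorously from a tail bound that is only stated for $\lambda$ large (the bound $e^{-\lambda} \ge 1$ is vacuous for $\lambda \le 0$, and it is weaker than a true exponential tail near $\lambda$ small). The fix is routine: split the integral $\int_0^\infty e^{s\lambda}\Pr(Y_{t+1}>\lambda\mid\cF_t)\,\d\lambda$ at $\lambda = 1$, bound the probability by $1$ on $[0,1]$ (contributing at most $\int_0^1 e^{s\lambda}\,\d\lambda \le e^s$) and by $e^{-\lambda}$ on $[1,\infty)$ (contributing $\int_1^\infty e^{(s-1)\lambda}\,\d\lambda = \frac{e^{s-1}}{1-s}$ for $s<1$), so the total conditional MGF is bounded by a universal constant $K$ as needed. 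Everything else is a standard supermartingale Chernoff bound, and no properties of the $Y_i$ beyond the stated conditional tail and non-negativity are used.
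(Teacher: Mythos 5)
Your proof is correct and takes essentially the same route as the paper's: both bound the conditional moment generating function $\E[\exp(Y_{t+1}/2)\mid\cF_t]$ by a universal constant via the tail hypothesis, then iterate (your supermartingale $M_T$ is exactly the paper's telescoping of conditional expectations), and finish with Markov's inequality. The only differences are cosmetic, e.g., you use the identity $\E[e^{sY}]=1+s\int_0^\infty e^{s\lambda}\Pr(Y>\lambda)\,\d\lambda$ while the paper integrates $\Pr(\exp(Y/2)>\lambda)\,\d\lambda$ directly; also note the split at $\lambda=1$ is not actually needed since $e^{-\lambda}\le 1$ already holds on $[0,1]$.
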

\begin{proof}
	First, observe that
	\begin{align}
		\E[\exp(Y_{t+1}/2) | \cF_t] & = \int_{0}^\infty \P(\exp(Y_{t+1}/2) > \lambda | \cF_t) \d \lambda \notag\\
		& \leq 1 + \int_1^{\infty} \exp(-2 \log \lambda) \d \lambda  \notag\\
		& = 1 + \int_1^{\infty} \lambda^{-2} \d \lambda \notag\\
		& \leq \exp(C_0) \label{eq:mgf_bound}
	\end{align}
	for some constant $C_0$. On the other hand, we have decomposition (where we apply \cref{eq:mgf_bound} in the first inequality):
	\begin{align*}
		\E\brackets{\exp\parens{\sum_{i \leq T} \frac{Y_i}{2}}} & = \E\brackets{ \E\brackets{\exp \parens{\sum_{i \leq T} \frac{Y_i}{2}} | \cF_{T-1}}} \\
		& = \E \brackets{ \exp\parens{\sum_{i\leq T-1} Y_i/2} \E\brackets{\exp\parens{Y_T/2} | \cF_{T-1}}} \\
		& \leq \E \brackets{ \exp\parens{\sum_{i \leq T-1} Y_i/2} }\cdot \exp(C_0) \\
		& \leq \cdots \\
		& \leq \exp(C_0 T).
	\end{align*}
	
	\noindent In the above, the second equality follows from the fact that the sequence $Y_1,Y_2,\dots$ is adapted to $X_1,X_2,\dots$. 
	We can now apply Markov inequality to obtain the desired tail bound:
	\begin{align*}
		\P\parens{ \sum_{i \leq T} Y_i > 4 C_0 T} & = \P\parens{ \exp\parens{\frac{1}{2} \sum_{i \leq T} Y_i } > \exp(2 C_0 T)} \\
		& \leq \E\brackets{ \exp\parens{\frac{1}{2} \sum_{i \leq T} Y_i}} \cdot \exp(-2 C_0 T)\\
		& \leq \exp\parens{- C_0 T} \ . \qedhere
	\end{align*}
\end{proof}

The following bound for a moment generating function of a bounded random variable is standard and is commonly used in the proof of Bernstein inequality.
\begin{lemma}
\label{lem:bounded-subgamma}
For any random variable $X$ such that $|X| < 1$ with probability $1$, and every $0 < \lambda < \frac{1}{4}$, we have
\begin{equation*}
    \log \E[\exp(\lambda X)] \leq \lambda \E[X] + C \lambda^2 \E[X^2],
\end{equation*}
where $C$ is some universal constant.
\end{lemma}
\begin{proof}
    Since $|X| < 1$, we have $\E |X|^k \leq \E X^2$, and therefore
    \begin{align*}
        \E \exp(\lambda X) & = \sum_k \frac{\lambda^k}{k!} \E[X^k] \\
        & \leq 1 + \lambda \E[X] + (\lambda^2 + \Oh(\lambda^3)) \E [X^2].
    \end{align*}
    Moreover for some constant $C$, and every $|x| < \frac{1}{2}$, we have $\log(1+x) \leq x + C x^2$, therefore 
    \begin{align*}
     \log \E[\exp(\lambda X)] & \leq \lambda \E[X] + C \lambda^2(\E[X^2] + \E[X]^2) + \Oh(\lambda^3) \E[X^2] \\
     & \leq \lambda \E[X] + C' \lambda^2 \E[X^2]. \qedhere
    \end{align*}
\end{proof}

\begin{lemma}
	\label{lem:sum-of-events}
	Consider a sequence of random variables $Y_1, Y_2, \ldots$ with $Y_i \in \{0, 1\}$, adapted to the sequence $X_t$. If $\P(Y_{t+1} = 1 | X_{[1:t]}) > \mu_{t+1}$ for some deterministic value $\mu_t$,  then for $\mu := \sum_{t\leq T} \mu_t$ and any $\epsilon > 0$ we have
	\begin{equation*}
		\P\parens{\sum_{t\leq T} Y_t < (1 - \epsilon)\mu} \leq \exp\parens{-\Omega(\epsilon^2 \mu)}
	\end{equation*}
\end{lemma}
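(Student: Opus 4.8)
The plan is to compare the sum $S_T := \sum_{t \le T} Y_t$ to the ``predictable'' sum $\mu = \sum_{t \le T} \mu_t$ via a standard martingale/exponential-moment argument, exactly in the spirit of the proof of Lemma~\ref{lem:sum-of-subexp} immediately above it. The key point is that although the $Y_t$ are not independent, each $Y_t$ dominates (conditionally on the past) a Bernoulli variable with parameter $\mu_t$, so the lower-tail behavior of $S_T$ is at least as good as that of a sum of independent Bernoullis with those parameters, and for that we have a Chernoff bound.

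Concretely, first I would fix a parameter $\lambda > 0$ and consider the process $W_t := \exp(-\lambda S_t) \prod_{i \le t} g(\mu_i)$ for a suitable normalizing factor $g(\mu_i)$. Using the conditional estimate $\P(Y_{t+1} = 1 \mid \cF_t) > \mu_{t+1}$, I would bound
\begin{equation*}
\E[\exp(-\lambda Y_{t+1}) \mid \cF_t] = 1 - (1 - e^{-\lambda})\,\P(Y_{t+1}=1\mid \cF_t) \le 1 - (1 - e^{-\lambda})\mu_{t+1} \le \exp\bigl(-(1-e^{-\lambda})\mu_{t+1}\bigr),
\end{equation*}
so choosing $g(\mu_i) = \exp((1-e^{-\lambda})\mu_i)$ makes $W_t$ a supermartingale with $W_0 = 1$. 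Then $\E[\exp(-\lambda S_T)] \le \exp\bigl(-(1-e^{-\lambda})\mu\bigr)$, and Markov's inequality applied to $\exp(-\lambda S_T)$ gives, for the event $\{S_T < \mu/2\}$,
\begin{equation*}
\P(S_T < \mu/2) = \P\bigl(\exp(-\lambda S_T) > \exp(-\lambda \mu/2)\bigr) \le \exp\bigl(\lambda \mu/2 - (1 - e^{-\lambda})\mu\bigr).
\end{equation*}
Finally I would pick a fixed $\lambda$ (e.g. $\lambda = 1$, giving exponent $\mu(1/2 - (1 - 1/e)) = -\mu(1/e - 1/2) < 0$, since $1/e \approx 0.368 < 1/2$ — hmm, that exponent is positive) — so more carefully I would choose $\lambda$ small, say optimize: with $\lambda \to 0$, $1 - e^{-\lambda} \approx \lambda - \lambda^2/2$, giving exponent $\approx \mu(\lambda/2 - \lambda + \lambda^2/2) = -\mu(\lambda/2 - \lambda^2/2) < 0$ for small $\lambda$; fixing e.g. $\lambda = 1/2$ yields a concrete negative constant times $\mu$, establishing $\P(S_T < \mu/2) \le \exp(-\Omega(\mu))$.

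The argument is essentially routine once set up this way; the only mild subtlety — and the step I would be most careful about — is making the supermartingale bookkeeping rigorous when the $\mu_t$ are arbitrary deterministic constants (possibly zero), and ensuring that the inequality $\P(Y_{t+1}=1\mid\cF_t) > \mu_{t+1}$ is used in the correct direction inside the conditional expectation (since $\exp(-\lambda Y)$ is decreasing in $Y$, a larger success probability makes the conditional MGF smaller, which is what we want). No convergence or measurability issues arise since $T$ is finite and all variables are discrete.
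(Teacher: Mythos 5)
Your proposal is correct and is essentially the paper's own argument: both are the exponential-moment/Chernoff method applied via the tower property, peeling off one conditional MGF bound $\E[\exp(-\lambda Y_{t+1})\mid\cF_t]\leq\exp(-(1-e^{-\lambda})\mu_{t+1})$ per step and finishing with Markov; the paper simply hardwires $\lambda=1$ and phrases the normalization with the random $M_t:=\E[Y_{t+1}\mid\cF_t]$ instead of the deterministic $\mu_t$. Incidentally, your intermediate arithmetic worry is a sign slip: with $\lambda=1$ the exponent is $\mu(1/2-(1-1/e))=\mu(1/e-1/2)\approx-0.13\mu<0$, so $\lambda=1$ already works (as the paper's choice confirms).
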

\begin{proof}
	Consider a random variable $M_{t+1} := \E[Y_{t+1} | X_{[1:t]}]$ (depending on $X_{[1:t]}$), we know that $M_t > \mu_t$ with probability 1, and let us take $Z_t := (1 - \epsilon) M_t - Y_t$. 

	Standard calculation involving Markov inequality yields following bound for any $\lambda > 0$
	\begin{align}
		\P\parens{\sum_{t \leq T} Y_t < \sum_{t \leq T} (1 - \epsilon) \mu_t} & \leq \P\parens{\sum_{t \leq T} Y_t < \sum_{t \leq T} (1 - \epsilon) M_t} \nonumber  \\
		& = \P\parens{\sum_{t \leq T} \lambda Z_t > 0} \nonumber \\
		& = \P\parens{\exp\parens{\sum_{t\leq T} \lambda Z_t} > 1} \nonumber \\
		& \leq \E\brackets{\exp\parens{\sum_{t \leq T} \lambda Z_t}}. \label{eq:first}
	\end{align}
	
	To bound this latter quantity, we introduce conditioning on $X_{[1:T-1]}$
	\begin{align}
		\E\brackets{\exp\parens{\sum_{t \leq T} \lambda Z_t}} & = \E\brackets{\E\brackets{\exp\parens{\sum_{t \leq T} \lambda Z_t} | X_{[1:T-1]}}} \nonumber \\
		& = \E\brackets{\exp\parens{\sum_{t \leq T-1} \lambda Z_t} \E[\exp(\lambda Z_T) | X_{[1:T-1]}]}, \label{eq:soe-mgf}
	\end{align}
	where the second equality follows from the fact that $Z_t$ is adapted to $X_t$.
	
	By \cref{lem:bounded-subgamma} for any $0 < \lambda < \frac{1}{4}$, we have 
	\begin{equation*}
	     \E \brackets{ \exp(\lambda Z_T) | X_{[1:T-1]}} \leq \exp(- \lambda \varepsilon M_T  + C_1 \lambda^2 M_T) 
	\end{equation*}
	for some constant $C_1$. 
	Now if we chose $\lambda = \frac{1}{2 C_1} \varepsilon$, we get
	\begin{align}
	    \E\brackets{\exp\parens{\lambda Z_T} | X_{[1:T-1]}} & \leq \E\brackets{\exp\parens{-C \epsilon^2 M_T}} \nonumber \\
	    &  \leq \exp(- C \epsilon^2 \mu_T))
	\end{align}
	where $C=\frac{1}{8 C_1}$, since $\mu_T \leq M_T$ deterministically.

	Together with \eqref{eq:soe-mgf} this yields
	\begin{align}
	    \E\brackets{\exp\parens{\sum_{t \leq T} \lambda Z_t}}  & \leq \E\brackets{\exp\parens{\sum_{t \leq T-1} \lambda Z_t}} \exp(- C \epsilon^2 \mu_T) \nonumber \\
	    & \leq \cdots \nonumber \\
	    & \leq \E\brackets{\exp\parens{\sum_{t \leq T} - C \epsilon^2 \mu_t}} = \exp(-\Omega(\epsilon^2 \mu))  \label{eq:second}.
	\end{align}
Finally, combining \eqref{eq:first} and \eqref{eq:second} we have $\P\parens{\sum_{t \leq T} Y_t < (1 - \epsilon) \mu} \leq \exp\parens{-\Omega(\epsilon^2 \mu)}$ as desired.
\end{proof}

Finally, we will use the well-known Doob's martingale inequality:
\begin{lemma}[Doob's martingale inequality {\cite[Theorem 5.4.2]{Durrett11probability:theory}}] \label{lem:doobs-ineq}
	If a sequence $X_0, X_1, \ldots $ is a martingale, then for every $T$ we have
	\begin{equation*}
		\P\parens{\sup_{t \leq T} X_t > \lambda} \leq \frac{\E [|X_T|]}{\lambda}
	\end{equation*}
\end{lemma}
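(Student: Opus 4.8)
The plan is to deduce this maximal inequality from two elementary facts: that $|X_t|$ is a nonnegative submartingale, and that the event $\{\sup_{t\le T} X_t > \lambda\}$ splits into disjoint ``first-passage'' pieces on which $|X_T|$ can be compared to $\lambda$. (Implicitly $\lambda > 0$, and all expectations below are finite since the $X_t$ are integrable.)

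First I would reduce to the submartingale case. Since $x \mapsto |x|$ is convex, the conditional Jensen inequality gives $\E[|X_{t+1}| \mid X_0,\dots,X_t] \ge |\E[X_{t+1} \mid X_0,\dots,X_t]| = |X_t|$, so $(|X_t|)_{t\ge 0}$ is a nonnegative submartingale. Also $\sup_{t\le T} X_t \le \sup_{t\le T} |X_t|$, so it suffices to bound $\P(\sup_{t\le T} |X_t| > \lambda)$; renaming $|X_t|$ as $X_t$, we may assume $X_t \ge 0$ is a submartingale and bound $\P(\max_{t\le T} X_t > \lambda)$.

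Next, set $A := \{\max_{t\le T} X_t > \lambda\}$ and decompose it as the disjoint union $A = \bigsqcup_{k=0}^{T} A_k$, where $A_k := \{X_0 \le \lambda,\ \dots,\ X_{k-1} \le \lambda,\ X_k > \lambda\}$ records the first time the process exceeds $\lambda$; each $A_k$ is determined by $X_0,\dots,X_k$. Using the submartingale property together with the tower rule, $\E[X_T \mathbf{1}_{A_k}] = \E\big[\mathbf{1}_{A_k}\,\E[X_T \mid X_0,\dots,X_k]\big] \ge \E[\mathbf{1}_{A_k} X_k] \ge \lambda\, \P(A_k)$, where the last inequality uses $X_k > \lambda$ on $A_k$. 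Summing over $k$ and using disjointness, $\E[X_T] \ge \sum_{k=0}^{T} \E[X_T \mathbf{1}_{A_k}] \ge \lambda \sum_{k=0}^{T} \P(A_k) = \lambda\, \P(A)$, and dividing by $\lambda$ (and translating back, so that $\E[X_T]$ becomes $\E[|X_T|]$ for the original martingale) gives the claim.

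The ``hard part'' here is not really difficulty but measurability bookkeeping: one must justify the partition of the supremum event into $\sigma(X_0,\dots,X_k)$-measurable first-passage events and then apply the submartingale inequality inside the conditional expectation on each piece. Everything else is routine. An equivalent route is to apply the optional stopping theorem to the stopping time $\tau := \min\{t \le T : X_t > \lambda\} \wedge T$, but the explicit partition above avoids invoking optional stopping as a black box; since the statement is quoted verbatim from Durrett, in the paper itself this is simply cited rather than reproved.
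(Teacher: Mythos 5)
Your proof is correct and is the standard textbook derivation (conditional Jensen to pass to the nonnegative submartingale $|X_t|$, then the first-passage decomposition of the maximal event); the paper does not prove this lemma at all but simply cites it from Durrett, and your argument matches the cited source's approach. No issues.
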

\begin{corollary}
	\label{cor:doobs}
	If $X_0,  X_1, \ldots$ is a  \emph{nonnegative}  martingale, then for every $T$ we have
	\begin{equation*}
		\P\parens{\sup_{t \leq T} X_t > \lambda} \leq \frac{\E[X_0]}{\lambda}
	\end{equation*}
\end{corollary}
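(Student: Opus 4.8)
The plan is to derive the corollary immediately from the preceding general form of Doob's inequality. First I would apply that lemma verbatim to the martingale $X_0, X_1, \ldots$, which yields $\P(\sup_{t \leq T} X_t > \lambda) \leq \E[|X_T|]/\lambda$ for every $\lambda > 0$. It then remains only to simplify the right-hand side to $\E[X_0]/\lambda$, and this is where the two hypotheses (nonnegativity and the martingale property) get used.

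The first simplification uses nonnegativity: since $X_T \geq 0$ almost surely, we have $|X_T| = X_T$, hence $\E[|X_T|] = \E[X_T]$. The second simplification uses the martingale property to show that the expectation is constant in time. Indeed, by the tower rule and the defining identity $\E[X_{t+1} \mid X_0, \ldots, X_t] = X_t$, we get $\E[X_{t+1}] = \E[\E[X_{t+1} \mid X_0, \ldots, X_t]] = \E[X_t]$, so by induction $\E[X_T] = \E[X_0]$. Combining the two observations gives $\E[|X_T|] = \E[X_0]$, and substituting this into the bound from Doob's inequality yields the claimed estimate.

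There is essentially no obstacle here: the only point needing a (one-line) argument rather than a direct quotation is the invariance of the expectation, which is the standard consequence of the martingale property recalled above. (Implicitly one also relies on $\E[|X_T|] < \infty$, which is required for Doob's inequality to be meaningful and holds trivially for the $[0,1]$-bounded martingales to which the corollary is applied in this paper.)
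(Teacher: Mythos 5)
Your proof is correct and matches the intended derivation: the paper states this as an immediate corollary of the preceding Doob inequality without spelling out the argument, and the two observations you supply (nonnegativity gives $|X_T|=X_T$, the martingale property gives $\E[X_T]=\E[X_0]$) are exactly what closes the gap.
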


\section{Local to global polarization}
\label{app:loc_to_glob_full}

In this section we prove \cref{thm:local-global,thm:local-to-global-exp}, which assert that every (exponentially) locally polarizing $[0,1]$-martingale is also (exponentially) strongly polarizing. The proofs in this section depend on some basic probabilistic concepts and inequalities mentioned in \cref{app:basic-prob}.

The proof of both statements are implemented in two main steps. In the first step, common to both,  we show that any locally polarizing martingale, is mildly polarizing, namely that it is $\parens{\parens{1 - \frac{\nu}{2}}^t, \parens{1 - \frac{\nu}{2}}^t, \parens{1 - \frac{\nu}{4}}^t}$-polarizing for \emph{some} constant $\nu$ depending only on the parameters $\alpha, \tau, \theta$ of local polarization. This means that, except with exponentially small probability, $\min\{X_{t/2}, 1- X_{t/2}\}$ is exponentially small in $t$, which we can use to ensure that $X_{s}$ for all $\frac{t}{2} \leq s \leq {t}$ stays in the range where the conditions of \emph{(strong) suction at the ends} apply (again, except with exponentially small failure probability). In the second step, we show that if the martingale stays in the \emph{suction at the ends} regime, it will polarize strongly --- i.e. if we have a $[0,1]$-martingale, such that in each step it has probability at least $\alpha$ to decrease by a factor of $c$, we can deduce that at the end we have $\P(X_T  > c^{-\alpha T/4}) \leq \exp(-\Omega(\alpha T))$.\footnote{This is enough since we pick $c$ to be large enough (given $\gamma$) so that $c^{-\alpha T/4}\le \gamma^T$ and we pick $\beta$ and $\eta$ such that $\beta\eta^T\ge \exp(-\Omega(\alpha T))$.} A completely similar argument shows that when the martingale shows strong suction at the low end we have  $\P(X_T  > \exp(-\Delta^T)) \leq \exp(-\Omega(\alpha T))$, for some $\Delta > 1$, thus yielding exponentially strong polarization.

\subsection{Mild Polarization}\label{ssec:local-to-mild}

We start by showing that in the first $t/2$ steps we do get exponentially small polarization, with all but exponentially small failure probability. This is proved using a simple potential function $\min\{\sqrt{X_t},\sqrt{1-X_t}\}$ which we show shrinks by a constant factor, $1- \nu$ for some $\nu > 0$, in expectation at each step. Previous analyses in \cite{GX15,GV15} tracked
$\sqrt{X_t(1-X_t)}$ (or some tailormade algebraic functions~\cite{HAU14,MHU16}) as potential functions, and relied on quantitatively strong forms of variance in the middle to demonstrate that the potential diminishes by a constant factor in each step.
While such analyses can lead to sharper bounds on the parameter $\nu$, which in turn translate to better \emph{scaling exponents} in the polynomial convergence to capacity, e.g. see \cite[Thm. 18]{HAU14} or \cite[Thm. 1]{MHU16}, these analyses are more complex, and less general.

\begin{restatable}{lemma}{lempotentialfunction}
	\label{lem:potential-function}
	If a $[0,1]$-martingale sequence $X_0, \ldots X_t, \ldots,$ is $(\alpha,\tau(\cdot),\theta(\cdot))$-locally polarizing, then there exist $\nu > 0$, depending only on $\alpha, \tau, \theta$, such that 
	\[\E [\min(\sqrt{X_t}, \sqrt{1-X_t}) ] \leq (1 - \nu)^{t}.\]
\end{restatable}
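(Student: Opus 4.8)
\textbf{Proof plan for Lemma~\ref{lem:potential-function}.}

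The plan is to define the potential $\Phi_t := \min(\sqrt{X_t},\sqrt{1-X_t})$ and show that $\E[\Phi_{t+1} \mid X_t] \leq (1-\nu)\Phi_t$ for some constant $\nu > 0$ depending only on $\alpha,\tau(\cdot),\theta(\cdot)$; the claim then follows by taking expectations and iterating, since $\E[\Phi_t] \leq (1-\nu)\E[\Phi_{t-1}] \leq \cdots \leq (1-\nu)^t\Phi_0 \leq (1-\nu)^t$. The key point is that a single-step multiplicative contraction of the potential is a local statement, so I can case on the value of $X_j$ and invoke exactly one of the two local polarization conditions in each regime.

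I would split into three regimes governed by a threshold $\tau_0 := \tau(c_0)$ for a suitably chosen absolute constant $c_0$ (say $c_0 = 4$, so that a factor-$c_0$ drop gives a factor-$2$ drop in the square root). \emph{Regime 1 (middle):} $X_j \in (\tau_0, 1-\tau_0)$. Here I use \emph{variance in the middle}: $\E[(X_{j+1}-X_j)^2 \mid X_j] \geq \theta(\tau_0) =: \theta_0 > 0$. Since $X_{j+1}\in[0,1]$, the concavity of $\sqrt{\cdot}$ (and of $\sqrt{1-\cdot}$) together with a Taylor/second-order estimate shows that a martingale step with variance bounded below by $\theta_0$, while the current point is bounded away from $\{0,1\}$ by $\tau_0$, must decrease $\E[\sqrt{X_{j+1}}]$ (resp. $\E[\sqrt{1-X_{j+1}}]$, whichever branch defines $\Phi_j$) by an additive constant $\delta_0 = \delta_0(\tau_0,\theta_0) > 0$; since $\Phi_j \leq 1$ this additive decrease is also a multiplicative decrease by $1-\delta_0$. \emph{Regime 2 (near low end):} $X_j \leq \tau_0$, so $\Phi_j = \sqrt{X_j}$. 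With probability $\geq \alpha$ we have $X_{j+1} \leq X_j/c_0$, hence $\sqrt{X_{j+1}} \leq \Phi_j/2$; on the complementary event the martingale property still controls things — here I use that $\sqrt{\cdot}$ is concave so $\E[\sqrt{X_{j+1}}\mid X_j, \text{bad event}] \leq \sqrt{\E[X_{j+1}\mid X_j,\text{bad event}]}$ and the conditional expectation of $X_{j+1}$ over the bad event is at most $X_j/(1-\alpha)$ (since the total expectation is $X_j$ and the good event contributes $\geq 0$). A short computation then gives $\E[\Phi_{j+1}\mid X_j] \leq (\alpha/2 + (1-\alpha)\cdot\tfrac{1}{\sqrt{1-\alpha}} \cdot \text{something})\Phi_j$; I need to be a little careful and instead bound the good-event contribution by $\alpha \cdot \Phi_j/2$ and the bad-event contribution by $\E[\sqrt{X_{j+1}}\mathbf{1}_{\text{bad}}] \leq \sqrt{\E[X_{j+1}\mathbf{1}_{\text{bad}}]}\sqrt{\P(\text{bad})} \leq \sqrt{X_j}\sqrt{1-\alpha} = \Phi_j\sqrt{1-\alpha}$ by Cauchy--Schwarz, yielding the factor $\alpha/2 + \sqrt{1-\alpha}$, which is $< 1$ for small enough... wait, this needs $\alpha/2 + \sqrt{1-\alpha} < 1$, true for all $\alpha\in(0,1)$ since $\sqrt{1-\alpha} < 1-\alpha/2$. \emph{Regime 3 (near high end):} symmetric to Regime 2, using \emph{suction at the high end}. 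Setting $\nu := \min(\delta_0,\ 1 - \alpha/2 - \sqrt{1-\alpha}) > 0$ finishes the per-step bound.

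The main obstacle is Regime 1: making precise the intuitively clear fact that ``bounded-below variance while bounded away from the endpoints forces a definite drop in $\E[\sqrt{X}]$.'' I expect to handle this by noting that for $x \in (\tau_0, 1-\tau_0)$ and any distribution of $X_{j+1} \in [0,1]$ with mean $x$, there are two sub-cases: either a constant fraction of the variance comes from $X_{j+1}$ landing at distance $\geq \tau_0/2$ from $x$ on the side toward the nearer endpoint (then concavity of the square root on that macroscopic interval gives a constant additive drop directly, via Jensen restricted to a two-point coarsening), or most of the mass stays within a small neighborhood of $x$ where $\sqrt{\cdot}$ has bounded second derivative (bounded because $x > \tau_0$), and then the local quadratic behavior $\sqrt{x+h} \leq \sqrt{x} + h/(2\sqrt{x}) - c(\tau_0)h^2$ combined with $\E[h] = 0$ and $\E[h^2] \geq \theta_0/2$ gives the additive drop. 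A clean way to package this: use Paley--Zygmund or simply a truncation argument to extract an event of constant probability on which $|X_{j+1} - X_j|$ is at least a constant $\kappa = \kappa(\theta_0) > 0$, then apply strict concavity of $\min(\sqrt{\cdot},\sqrt{1-\cdot})$ on $[0,1]$, which is uniformly strictly concave away from being linear precisely because the argument is pinned in $(\tau_0,1-\tau_0)$. I would also double-check the edge case where $X_j$ is near $\tau_0$ itself (boundary between regimes) — choosing the threshold in Regimes 2/3 to be, say, $\tau_0$ while Regime 1 uses the open interval $(\tau_0,1-\tau_0)$ makes the three cases exhaustive with no gap.
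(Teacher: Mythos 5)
Your plan is correct, and it shares the paper's overall skeleton (potential $\Phi_t = \min(\sqrt{X_t},\sqrt{1-X_t})$, a per-step multiplicative contraction, three regimes keyed to $\tau_0 = \tau(4)$), but the internal estimates you propose are genuinely different and in some ways cleaner. The paper runs a single piece of machinery through both the middle and the end regimes: it sets $T := \sqrt{X_{t+1}/X_t}$ and $\delta := 1-\E[T]$, uses $\E[T^2]=1$ plus Jensen to get $\Var(T)\le 2\delta$, and then plays a Chebyshev upper bound on $\Pr(|T-1|\ge\lambda)$ against a lower bound obtained from Paley--Zygmund (middle regime) or directly from suction (end regimes); comparing the two forces $\delta = \Omega(\theta_0^3\tau_0^2)$ resp.\ $\delta = \Omega(\alpha)$. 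Your Regime 2 argument sidesteps all of that: splitting on the suction event and applying Cauchy--Schwarz on the bad event gives $\E[\sqrt{X_{j+1}}] \le (\alpha/2 + \sqrt{1-\alpha})\sqrt{X_j}$ in three lines, with the elementary check that $\alpha/2+\sqrt{1-\alpha}<1$ for $\alpha\in(0,1)$. That is a more transparent route than the paper's $T$-Chebyshev comparison.

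For Regime 1 your proposal is more tentative than it needs to be, and one of your worries points in the wrong direction. You can avoid both the two-subcase split and the Paley--Zygmund/truncation fallback: the Taylor bound with Lagrange remainder gives, uniformly for $x, x+h \in [0,1]$,
\[
\sqrt{x+h} \;\le\; \sqrt{x} + \frac{h}{2\sqrt{x}} - \frac{h^2}{8},
\]
because $\bigl|\tfrac{d^2}{du^2}\sqrt{u}\bigr| = \tfrac14 u^{-3/2} \ge \tfrac14$ on $(0,1]$. (Your parenthetical ``bounded because $x>\tau_0$'' has the dependence backwards: the magnitude of the second derivative is \emph{smallest} near $u=1$, not near $u=\tau_0$, so the uniform lower bound $1/4$ comes for free from $u\le 1$ and the blow-up near $0$ only helps you.) Taking conditional expectations, the linear term vanishes by the martingale property, so $\E[\sqrt{X_{j+1}}\mid X_j] \le \sqrt{X_j} - \theta_0/8 \le (1-\theta_0/8)\sqrt{X_j}$, and symmetrically for $\sqrt{1-X_{j+1}}$. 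With that observation your Regime 1 becomes a two-line calculation and the whole proof is simpler and more elementary than the paper's.

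Two small points to tidy when you write it out. First, in each regime you bound $\E[\Phi_{j+1}]$ by whichever of $\E[\sqrt{X_{j+1}}]$ or $\E[\sqrt{1-X_{j+1}}]$ matches $\Phi_j$; this is fine since $\Phi_{j+1}\le\sqrt{X_{j+1}}$ and $\Phi_{j+1}\le\sqrt{1-X_{j+1}}$ pointwise, but it's worth stating. Second, in your Cauchy--Schwarz step you want $\E[X_{j+1}\mathbf{1}_{\text{bad}}]\le \E[X_{j+1}] = X_j$ (nonnegativity), not the slightly off $X_j/(1-\alpha)$ bound you first wrote and then corrected; your corrected version is right.
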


\begin{proof}
	Set $\tau_0 = \tau(4), \theta_0 = \theta(\tau_0)$.
	We will show that $\E [\min(\sqrt{X_{t+1}}, \sqrt{1-X_{t + 1}}) | X_{t}] \leq (1 - \nu) \min(\sqrt{X_{t}}, \sqrt{1 - X_{t}})$,
	for some $\nu>0$ depending on $\tau_0,\theta_0$ and $\alpha$. The statement of the lemma will follow by induction. The base case of $t=0$ follows since $X_0\in [0,1]$.

	Let us condition on $X_t$, and first consider the case $X_t \in (\tau_0, 1 - \tau_0)$. We know that
	\[\E [\min(\sqrt{X_{t+1}}, \sqrt{1 - X_{t+1}})] \leq \min(\E[\sqrt{X_{t + 1}}], \E[\sqrt{1 - X_{t + 1}}]),\]
	we will show that $\E[\sqrt{X_{t+1}}] \leq (1 - \nu) \sqrt{X_t}$. The proof of $\E[\sqrt{1 - X_{t+1}}] \leq (1 - \nu)\sqrt{1-X_t}$ is symmetric. 
	
	Indeed, let us take $R := \sqrt{\frac{X_{t+1}}{X_t}}$.
	Because $(X_t)_t$ is a martingale, we have $\E [R^2] = 1$, and by Jensen's inequality, we have that $\E[R] \leq \sqrt{\E[R^2]} \leq 1$, where all the expectations above are conditioned on $X_t$. Take $\delta$ such that $\E[R] = 1 -\delta$. We will show a lower bound on $\delta$ in terms of $\theta_0, \tau_0$ and $\alpha_0$.
	
	We note that
	\begin{equation}
		\label{eq:var:T_ub}
		\Var(R)=\E[R^2]-\parens{\E[R]}^2 = 1-(1-\delta)^2= 2\delta-\delta^2 
		\le 2\delta.
	\end{equation}
	
	The high-level idea of the proof is that we can show that local polarization criteria implies that $T$ is relatively far from $1$ with noticeable probability, but if $\E[R]$ were close to one, by Chebyshev inequality we would be able to deduce that $R$ is far from its mean with much smaller probability. This implies that mean of $R$ has to be bounded away from $1$.
	
	More concretely, observe first that by Chebyshev inequality, we have $\P(|R - \E [R]| > \lambda) < \frac{\Var(R)}{\lambda^2}  
	\leq \frac{2\delta}{\lambda^2}$, where the inequality follows from~\eqref{eq:var:T_ub}.
	Hence, for $C_0=4$, we have: 
	\begin{equation}
		\P\left(|R - 1| \ge \delta + C_0 \sqrt{\delta} \theta_0^{-1} \tau_0^{-2}\right) \leq \frac{1}{8} \theta_0^2 \tau_0^4.
		\label{eq:t-close-to-one}
	\end{equation}
	
	On the other hand, because of the \emph{Variation in the middle condition} of local polarization, we have
	\begin{equation*}
		\Var(R^2) = \frac{\E[X_{t+1}^2]}{X_t^2}-\frac{\E[X_{t+1}]^2}{X_t^2}=\frac{\E[X_{t+1}^2] - X_t^2}{X_t^2} \geq \frac{\theta_0}{X_t^2} \geq \theta_0,
	\end{equation*}
	where the second equality follows from the fact that $\E[X_{t+1}]=X_t$ and the last inequality follows since $X_t \leq 1$.
	Moreover $R<\frac{1}{\sqrt{\tau_0}}$, because $\sqrt{X_{t+1}} < 1$ and $\sqrt{X_t} > \sqrt{\tau_0}$.
	
	Let us now consider $Z = (R^2 - 1)^2$. We have $\E[Z] = \Var(R^2) \geq \theta_0$, and moreover $\E [Z^2] < \tau_0^{-4}$
	(because $R$ is bounded and $\tau_0\le 1$), hence by \cref{lem:paley-zygmund} (for $C_1=1/2$)
	\begin{equation*}
		\P\left((1 - R^2)^2 > C_1 \theta_0\right) \geq \frac{1}{4} \theta_0^2 \tau_0^4.
	\end{equation*}
	And also $1 - R^2 = - (1-R)^2 + 2 (1-R) < 2 (1-R)$, hence if $(1 - R^2)^2 > C_1 \theta_0$ then $|1-R| > \frac{\sqrt{C_1}}{2} \sqrt{\theta_0}$, which implies (for the choice of $C_2=\sqrt{C_1}/2$): 
	\begin{equation}
		\P\parens{|R-1| > C_2 \sqrt{\theta_0}} \geq \frac{1}{4} \theta_0^2 \tau_0^4.
		\label{eq:t-far-from-one}
	\end{equation}
	By comparing \cref{eq:t-close-to-one,eq:t-far-from-one}, we deduce  that $C_2 \sqrt{\theta_0}<\delta + C_0 \sqrt{\delta} \theta_0^{-1} \tau_0^{-2}$, which in turn implies that $\delta \geq C_4 \theta_0^3\tau_0^4$,
	(for $C_4=C_2^2/(4C_0^2)$-- note that with our choice of parameters, we have $ C_0 \sqrt{\delta} \theta_0^{-1} \tau_0^{-2}\ge \delta$)
	and by the definition of $\delta$ we have $\E[\sqrt{X_{t+1}} | X_t] \le(1 - \delta) \sqrt{X_t}]$.
	The same argument applies to show that $\E[\sqrt{1 - X_{t+1}}|X_t] \leq (1 - C_4 \theta_0^3 \tau_0^4) \sqrt{1 - X_t}$.
	
	Consider now the case when $X_t < \tau_0$. For $T, \delta$ as above (and again after conditioning on $X_t$), we have $\Var(R) < 2\delta$ (note that the argument for this inequality from the previous case also holds here), and hence by Chebyshev inequality (for the choice of $C_5=2$): 
	\begin{equation}
		\P\left(|R - 1| \ge \delta +  C_5 \sqrt{\frac{\delta}{\alpha}}\right) \leq \frac{\alpha}{2}.
		\label{eq:t-close-to-one-2}
	\end{equation}
	On the other hand, because of the \emph{suction at the end} condition of local polarization, we know that with probability $\alpha$, we have $R \le \frac{1}{2}$,
	which means $|R - 1| \ge \frac{1}{2}$ and by comparing this with~\cref{eq:t-close-to-one-2},
	we deduce that $\delta +  C_5 \sqrt{\frac{\delta}{\alpha}}\ge \frac 12$, which in turn implies that $\delta \geq C_6 \alpha$ (for $C_6=\frac{1}{16 C_5^2}$-- note that by our parameter choices we have $C_5 \sqrt{\frac{\delta}{\alpha}}\ge \delta$). 
	Therefore, in the case $X_t < \tau_0$, we have $\E [\sqrt{X_{t+1}} | X_t] \leq (1- C_6 \alpha) \sqrt{X_{t}} = (1 - C_6 \alpha) \min(\sqrt{X_{t}}, \sqrt{1 - X_{t}})$. The case $X_t > 1 - \tau_0$ is symmetric and is omitted.
	
	This implies the statement of the lemma with $\nu = \min(C_6\alpha, C_4 \theta_0^3\tau_0^2)$.
\end{proof}

\begin{restatable}{corollary}{cormodestpolarization}
	\label{cor:modest-polarization}
	If a $[0,1]$-martingale sequence $X_0, \ldots X_t, \ldots,$ is $(\alpha,\tau(\cdot),\theta(\cdot))$-locally polarizing, then there exist $\nu > 0$, depending only on $\alpha, \tau, \theta$, such that 
	\[\P\left(\min(X_{t/2}, 1-X_{t/2}) > \lambda (1 - \frac{\nu}{2})^{t} \right) \leq (1 - \frac{\nu}{4})^{t} \frac{1}{\sqrt{\lambda}}.\]
\end{restatable}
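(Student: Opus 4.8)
The plan is to deduce this directly from Lemma~\ref{lem:potential-function} via a Markov-type inequality. First I would apply Lemma~\ref{lem:potential-function} at time $s = t/2$ (I will not fuss over whether $t/2$ is an integer; take $\lfloor t/2 \rfloor$ if needed, which only changes constants in the exponent), obtaining
\[
\E\bigl[\min(\sqrt{X_{t/2}},\sqrt{1-X_{t/2}})\bigr] \leq (1-\nu)^{t/2}.
\]
Now observe that $\min(\sqrt{X_{t/2}},\sqrt{1-X_{t/2}}) = \sqrt{\min(X_{t/2},1-X_{t/2})}$, since $\sqrt{\cdot}$ is monotone. Write $W := \min(X_{t/2}, 1-X_{t/2}) \geq 0$, so that $\E[\sqrt{W}] \leq (1-\nu)^{t/2}$.

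Next I would apply Markov's inequality to the nonnegative random variable $\sqrt{W}$: for the threshold $a := \sqrt{\lambda}\,(1-\nu/2)^{t/2}$, we have
\[
\P\bigl[W > \lambda (1-\nu/2)^t\bigr] = \P\bigl[\sqrt{W} > \sqrt{\lambda}\,(1-\nu/2)^{t/2}\bigr] \leq \frac{\E[\sqrt{W}]}{\sqrt{\lambda}\,(1-\nu/2)^{t/2}} \leq \frac{(1-\nu)^{t/2}}{\sqrt{\lambda}\,(1-\nu/2)^{t/2}}.
\]
It then remains to check the elementary inequality $\left(\frac{1-\nu}{1-\nu/2}\right)^{1/2} \leq (1-\nu/4)$, i.e. $\frac{1-\nu}{1-\nu/2} \leq (1-\nu/4)^2$. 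Expanding, $(1-\nu/4)^2(1-\nu/2) = (1 - \nu/2 + \nu^2/16)(1-\nu/2) = 1 - \nu + \nu^2/4 + \nu^2/16 - \nu^3/32 = 1 - \nu + \tfrac{5}{16}\nu^2 - \tfrac{1}{32}\nu^3$, which is $\geq 1-\nu$ for all $\nu \in (0,1)$ since $\tfrac{5}{16}\nu^2 \geq \tfrac{1}{32}\nu^3$. Hence $(1-\nu)^{t/2}(1-\nu/2)^{-t/2} \leq (1-\nu/4)^t$, giving the claimed bound
\[
\P\bigl[\min(X_{t/2}, 1-X_{t/2}) > \lambda (1-\nu/2)^t\bigr] \leq (1-\nu/4)^t \frac{1}{\sqrt{\lambda}}.
\]

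There is essentially no obstacle here — the corollary is a routine packaging of Lemma~\ref{lem:potential-function} through Markov's inequality, with the only mild care needed being the choice of the right power of $(1-\nu/2)$ in the threshold so that the surviving ratio stays exponentially small, and handling the parity of $t/2$ (absorbed into constants). The same $\nu$ from the lemma works; we simply relax $\nu$ to $\nu/2$ and $\nu/4$ in the exponents to make the arithmetic clean.
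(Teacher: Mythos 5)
Your proposal is correct and follows the same route as the paper: apply Lemma~\ref{lem:potential-function} at time $t/2$, pass to $\sqrt{\min(X_{t/2},1-X_{t/2})}$, use Markov's inequality with threshold $\sqrt{\lambda}(1-\nu/2)^{t/2}$, and absorb the ratio $(1-\nu)^{1/2}/(1-\nu/2)^{1/2}$ into $(1-\nu/4)$. Your explicit verification of the final elementary inequality is a small addition that the paper leaves implicit.
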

\begin{proof}
	By applying Markov Inequality to the bound from \cref{lem:potential-function} (with $t/2$ instead of $t$), we get 
	\begin{align*}
		\P\left(\min\left(X_{t/2}, 1- X_{t/2}\right) > \lambda (1 - \frac{\nu}{2})^{t}\right) &= \P\left(\min\left(\sqrt{X_{t/2}}, \sqrt{1 - X_{t/2}}\right) > \sqrt{\lambda} (1 - \frac{\nu}{2})^{t/2}\right)\\
		& \leq (1 - \nu)^{t/2}(1-\frac{\nu}{2})^{-t/2}\frac{1}{\sqrt{\lambda}}\\
		& \leq ( 1 -\frac{\nu}{4})^{t}\frac{1}{\sqrt{\lambda}}.\qedhere
	\end{align*}
\end{proof}

\subsection{Strong Polarization}\label{ssec:local-to-strong-global}

Next we show that if a [0,1]-martingale indeed stays in the \emph{suction at the ends} range for all steps $s\geq \frac{t}{2}$, i.e. in each step it has constant probability $\alpha$ of dropping by some large constant factor $C$, then at the end we may expect it to be $\left(C^{-\alpha t / 8}, C^{-\alpha t / 8}, \exp(-\Omega(\alpha t))\right)$-polarizing.

\begin{restatable}{lemma}{lemstrongpolarization}
	There exists $c < \infty$, such that for all $K, \alpha$ with $K \alpha \geq c$ the following holds.
	Let $X_t$ be a martingale satisfying $\P\left(X_{t+1} < e^{-K} X_t | X_t\right) \geq \alpha$, where $X_0 \in (0, 1)$. Then $\P(X_T > \exp(- \alpha K T/4)) \leq \exp(-\Omega(\alpha T))$.
	\label{lem:strong-polarization}
\end{restatable}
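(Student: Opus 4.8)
The plan is to reduce the statement to two ingredients: a concentration estimate for the number of ``large-drop'' steps, and Markov's inequality applied to a carefully weighted super-martingale. First note that since $\alpha \le 1$, the hypothesis $K\alpha \ge c$ already forces $K \ge c$, so by taking $c$ to be a large enough absolute constant we may freely assume $K$ exceeds any fixed constant we need below; this is what makes the constant bookkeeping work out. Write $\mathcal{F}_t$ for the information in $X_0,\dots,X_t$, let $Y_t := \mathbf{1}[X_t < e^{-K}X_{t-1}]$ be the indicator of a large drop at step $t$, and let $N_t := \sum_{s\le t}Y_s$; by hypothesis $\P(Y_{t+1}=1\mid\mathcal{F}_t)\ge\alpha$. (Here $X_t$ is the underlying nonnegative martingale, so $X_t\ge 0$ and $X_0\le 1$; if $X_t$ hits $0$ it stays there, so no division by $X_t$ occurs below.)

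The key object is the process
\[
M_t := X_t^{1/2}\, e^{\rho N_t}, \qquad \rho := K/2 - \ln 2 .
\]
I would first verify that $M_t$ is a nonnegative super-martingale. Condition on $\mathcal{F}_t$ and set $p := \P(Y_{t+1}=1\mid\mathcal{F}_t)$; split $\E[X_{t+1}^{1/2}e^{\rho Y_{t+1}}\mid\mathcal{F}_t]$ according to the value of $Y_{t+1}$. On $\{Y_{t+1}=1\}$ we have $X_{t+1}^{1/2} < e^{-K/2}X_t^{1/2}$, contributing at most $e^{\rho-K/2}p\,X_t^{1/2}$; on $\{Y_{t+1}=0\}$, Jensen's inequality (concavity of $x\mapsto\sqrt{x}$) together with $\E[X_{t+1}\mathbf{1}_{Y_{t+1}=0}\mid\mathcal{F}_t]\le\E[X_{t+1}\mid\mathcal{F}_t]=X_t$ bounds the contribution by $(1-p)(X_t/(1-p))^{1/2}=(1-p)^{1/2}X_t^{1/2}$. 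This reduces matters to checking $e^{\rho-K/2}p+(1-p)^{1/2}\le 1$ for all $p\in[0,1]$: the left side is a concave function of $p$, equals $1$ at $p=0$, and (because $e^{\rho-K/2}=\tfrac12$) has nonpositive derivative at $p=0$, hence is nonincreasing and $\le 1$ on $[0,1]$. Thus $\E[M_{t+1}\mid\mathcal{F}_t]\le M_t$, and since $M_0=X_0^{1/2}\le 1$ we get $\E[M_T]\le 1$.

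Next I would invoke Lemma~\ref{lem:sum-of-events}, applied to $(Y_t)$ with thresholds $\mu_t$ taken just below $\alpha$ (say $\mu_t=\tfrac9{10}\alpha$, so that $\P(Y_{t+1}=1\mid\mathcal{F}_t)\ge\alpha>\mu_{t+1}$); with $\mu=\tfrac9{10}\alpha T$ this gives $\P\big(N_T<\beta\alpha T\big)\le\exp(-\Omega(\alpha T))$ for $\beta:=\tfrac9{20}$. Finally, combine the two estimates: on the event $\{N_T\ge\beta\alpha T\}$, the bound $X_T>\exp(-\alpha K T/4)$ forces $M_T=X_T^{1/2}e^{\rho N_T}>\exp\big(\alpha T(\beta\rho-K/8)\big)$, where for $K$ large the exponent $\beta\rho-K/8=\tfrac9{20}(K/2-\ln2)-K/8=\tfrac1{10}K-\tfrac9{20}\ln2$ is a positive constant, in fact $\Omega(K)$; Markov applied to $M_T$ (using $\E[M_T]\le 1$) bounds this event by $\exp(-\Omega(\alpha T))$, and a union bound with $\P(N_T<\beta\alpha T)$ finishes the proof.

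The one genuinely delicate point is the interplay of the constants $\rho$ and $\beta$: the super-martingale property caps $\rho$ at roughly $K/2$, while the final Markov step needs $\beta\rho>K/8$, forcing $\beta$ strictly above $\tfrac14$. A naive use of Lemma~\ref{lem:sum-of-events} (with $\mu_t=\alpha/2$) only yields $\beta=\tfrac14$, which is exactly insufficient, so it is important to extract a constant arbitrarily close to $\tfrac12$ from the lemma by choosing $\mu_t$ just below $\alpha$. (One can check that shrinking the exponent $\lambda=1/2$ used in $M_t$ does not circumvent this, since the constraint $\beta(\lambda K+\ln(1-\lambda))>\lambda K/4$ still forces $\beta>\tfrac14$.) Everything else — the concavity verification and the two concentration/Markov applications — is routine.
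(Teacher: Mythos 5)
Your proof is correct, and it takes a genuinely different route from the paper's. The paper works with $Y_{t+1}:=\log(X_{t+1}/X_t)$, splits into positive and negative parts, bounds $\sum(Y_t)_+$ via a sub-exponential tail argument (Lemma~\ref{lem:sum-of-subexp}) and bounds the count of large negative drops via Lemma~\ref{lem:sum-of-events}, then adds the two. You instead fold the control of upward fluctuations directly into a weighted super-martingale $M_t := X_t^{1/2}e^{\rho N_t}$ where $N_t$ counts the large drops: the Jensen step
\[
\E\!\left[X_{t+1}^{1/2}\mathbf 1_{Y_{t+1}=0}\,\middle|\,\mathcal F_t\right]\le(1-p)^{1/2}X_t^{1/2}
\]
is exactly where the martingale identity $\E[X_{t+1}\mid\mathcal F_t]=X_t$ gets used, and the verification $e^{\rho-K/2}p+(1-p)^{1/2}\le 1$ for $\rho=K/2-\ln 2$ is clean (the function is concave in $p$, equals $1$ at $p=0$, and has vanishing derivative there, so is nonincreasing). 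You then only need Lemma~\ref{lem:sum-of-events}, applied with thresholds slightly below $\alpha$ to extract a fraction $\beta$ of drops close to $1/2$ rather than $1/4$, and Markov on $M_T$ finishes the argument. This dispenses with Lemma~\ref{lem:sum-of-subexp} entirely, at the mild cost of the delicate constant negotiation you correctly flag (any $\beta>1/4$ works, which Lemma~\ref{lem:sum-of-events} readily supplies). Conceptually, your super-martingale exploits the martingale property of $X_t$ more directly, whereas the paper's log-ratio decomposition treats the martingale property only through the Markov bound $\P(X_{t+1}>e^\lambda X_t\mid\mathcal F_t)\le e^{-\lambda}$; both give the same conclusion, and the difference is one of taste and modularity rather than power.
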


\begin{proof}
	Consider $Y_{t+1} := \log \frac{X_{t+1}}{X_t}$, and note that sequence $Y_{t}$ is adapted to sequence $X_t$ in the sense of \cref{def:adapted-sequence}. We have the following bounds on the upper tails of $Y_{t+1}$, conditioned on $X_{[1: t]}$, given by Markov inequality (and recalling that $\E[X_{t+1}|X_t]=X_t$):
	\begin{equation*}
		\P(Y_{t+1} > \lambda \,|\, \mathcal{F}_t) = \P\left(\frac{X_{t+1}}{X_t} > \exp(\lambda) \,\middle|\, X_{[1: t]}\right) = \P\left(X_{t+1} > \exp(\lambda) X_{t} \,|\, X_{[1: t]}\right) \leq \exp(-\lambda).
	\end{equation*}
	
	Let us decompose $Y_{t+1} =: (Y_{t+1})_+ + (Y_{t+1})_-$, where $(Y_{t+1})_+ := \max(Y_{t+1}, 0)$. By \cref{lem:sum-of-subexp} and the fact that $(Y_{t+1})_+\ge Y_{t+1}$,
	\begin{equation*}
		\P\left(\sum_{t \leq T} (Y_{t+1})_+ > C T\right) \le \exp(-\Omega(T)).
	\end{equation*}
	
	On the other hand, let $E_{t+1}$ be the indicator of $Y_{t+1} \leq - K$. It is again adapted to the sequence $X_t$, and we know that $\P(E_{t+1} | X_{[1: t]}) \geq \alpha$, hence by \cref{lem:sum-of-events} with probability at most $\exp(-\Omega(\alpha T))$ at most $\alpha T/2$ of those events holds. Note that $(Y_t)_{-}\le 0$, which implies that if at least $\alpha T/2$ of the events $E_t$ hold then we have $\sum_{t \leq T} (Y_t)_{-} \le - \alpha K T/2$. Thus, we have $\P(\sum_{t \leq T} (Y_t)_{-} > - \alpha K T/2) \le \exp(-\Omega(\alpha T))$. Therefore, as long as $\alpha K/4 > C$ (which is true if we set $c=4C$), we can conclude
	\begin{equation*}
		\P\left(\sum_{t\leq T} Y_t > - \alpha KT/4\right) \leq \exp(-\Omega(T)) + \exp(-\Omega(\alpha T )) \le \exp(-\Omega(\alpha T)).
	\end{equation*}
	The proof is complete by noting that $\sum_{t\le T} Y_t=\log(X_T/X_0)$ and recalling that $X_0\le 1$.

\end{proof}

We are now ready to show that local polarization leads to strong polarization:
\begin{proof}[Proof of \cref{thm:local-global}]
	For given $\gamma$, we take $K$ to be large enough so that $\exp(- \alpha K/8) \le \gamma$, and moreover $\alpha K$ to be large enough to satisfy assumptions of \cref{lem:strong-polarization}. Let us also take $\tau_0 = \tau(e^K)$.
	We consider $\nu$ as in \cref{cor:modest-polarization}. We have
	\begin{align*}
		\P\left(\min(X_{t/2}, 1 - X_{t/2}) > \left(1 - \frac{\nu}{2}\right)^t \tau_0\right) &\leq (1 - \frac{\nu}{4})^{-t} \frac{1}{\sqrt{\tau_0}}.
	\end{align*}
	
	Now Doob's martingale inequality~(\cref{cor:doobs}) implies that, conditioned on $X_{t/2} < (1 - \frac{\nu}{4})^t \tau_0$, we have $\P(\sup_{i \in (t/2, t)} X_i > \tau_0) \leq (1 - \frac{\nu}{4})^t$.
	
	Finally, after conditioning on $X_i\le \tau_0$, $\forall\, {t/2 \leq i \leq t}$, process $X_i$ for $i \in (t/2, t)$ satisfies conditions
	of \cref{lem:strong-polarization}, because $X_i$ always stays below $\tau_0$ and as such \emph{suction at the end} condition of local polarization corresponds exactly to the assumption in this lemma. Therefore we can conclude that except with probability $\exp(-\Omega(\alpha t))+(1 - \frac{\nu}{4})^{-t} \frac{1}{\sqrt{\tau_0}}$ (which is $\exp\parens{-\Omega_{\alpha,\nu}(t)}$), we have $X_t < \exp(-\alpha K t / 8) = \gamma^t$.
	The other case ($1 - X_{t/2} < (1 - \frac{\nu}{2})^t \tau_0$) is symmetric, and in this case we get $1 - X_t < \exp(-\alpha K t / 8)$ except with probability $\exp\parens{- \Omega_{\alpha,\nu}( t)}$.
\end{proof}

\subsection{Exponentially strong polarization \label{sec:exp-local-to-global}}

In this section, we prove the analog of \cref{thm:local-global}-- \cref{thm:local-to-global-exp}. We first prove a helper lemma.
\begin{lemma}
	\label{lem:single-piece-suction}
	There exist $C < \infty$ such that for all $0<\eta<1, b\ge 1, 0<\varepsilon<1$
	following holds. Let $X_t$ be a martingale satisfying $\Pr(X_{t+1} < X_t^{b} | X_{t}) \geq \eta$, where $X_0 \in (0, 1)$. Then
	\begin{equation*}
		\P(\log X_T > (\log X_0 + CT) b^{(1-\varepsilon)\eta T}) < \exp(-\Omega( \varepsilon^2 \eta T))
	\end{equation*}
\end{lemma}
\begin{proof}
	As in the proof of \cref{lem:strong-polarization}, let us consider random variables $Y_{t+1} := \log (X_{t+1}/X_t)$. This sequence of random variables is adapted to the sequence $X_t$ in the sense of \cref{def:adapted-sequence}. Let us decompose $Y_t = Y_t^+ + Y_t^-$, where $Y_t^+ = \max(Y_t,0)$.
	Note that by Markov inequality 
	\begin{equation*}
		\P\parens{Y_{t+1} > \lambda | X_{[1:t]}} = \P\parens{X_{t+1} > X_t \exp(\lambda) | X_{[1:t]}} \leq \exp(-\lambda)\frac{\E[X_{t+1} | X_{[1:t]}]}{X_t} = \exp(-\lambda).
	\end{equation*}
	
	By \cref{lem:sum-of-subexp} we deduce that for some $C$, we have
	\begin{equation*}
		\P\parens{\sum_{i \leq T} Y_i^{+} > C T} \leq \exp\parens{-\Omega(T)}.
	\end{equation*}
	On the other hand, if we take $Z_t$ to be the indicator variable for an event $X_t < X_{t-1}^{b}$. Note that the sequence $z_t$ is adapted to the sequence $X_t$. By \cref{lem:sum-of-events} we have
	\begin{equation*}
		\P\parens{\sum_{i \leq T} Z_i \leq (1-\varepsilon) \eta T} \leq \exp\parens{-\Omega(T\varepsilon^2\eta)}.
	\end{equation*}
	If neither of these unlikely events hold, that is we simultaneously have $\sum_{i \leq T} Y_i^{+} \le CT$ and $\sum_{i \leq T} Z_i > (1-\varepsilon) \eta T$, we can deduce that $\log X_T \leq (\log X_0 + CT) b^{(1 - \varepsilon) \eta T}$ --- i.e. the largest possible value of $X_T$ is obtained if all the initial $Y_i$ were positive and added up to $CT$ (at which point value of the martingale would satisfy $\log X_{T'} \leq \log X_0 + CT$), followed by $(1-\varepsilon) \eta T$ steps indicated by variables $Z_i$ --- for each of those steps, $\log X_{t+1} \leq b \log X_t$.
\end{proof}

We are now ready to prove the analog of \cref{lem:strong-polarization} for exponentially strong polarization:
\begin{lemma}
	\label{lem:second-phase}
	For all $0<\eta<1, b\ge 1, 0<\varepsilon<1$ 
	the following holds. Let $X_t$ be a martingale with values in $(0, 1)$ satisfying $\Pr(X_{t+1} < X_t^{b} | X_t) \geq \eta$, where $X_0 < \exp(- \gamma T)$ for some $\gamma>0$, then
	\begin{equation*}
		\P(\log X_T \ge -b^{(1 - \varepsilon) \eta T}) < \exp(-\Omega_{\varepsilon,\eta,\gamma}(T)))
	\end{equation*}
\end{lemma}
\begin{proof}
	Consider sequence $t_0, t_1, \ldots t_m \in [T]$, where $t_0 = 0, t_m = T$, and $\frac{\gamma T}{C} \leq |t_i - t_{i-1}| \leq \frac{\gamma T }{2C}$, and therefore $m = \Oh(C \gamma^{-1})$, where $C$ is a constant appearing in the statement of \cref{lem:single-piece-suction}. For each index $s \in [m]$ we consider a martingale $X^{(s)}_i := X_{t_s + i}$, and we will apply \cref{lem:single-piece-suction} 
	to this martingale $X^{(s)}$, with $T = t_{s+1} - t_s$. We can union bound total failure probability by $m \exp(-\Omega(\gamma \varepsilon^2 \eta T))$, which is upper bounded by the claim bound of $\exp(-\Omega_{\varepsilon,\eta,\gamma}(T)))$. 
	
	In case we succeed, we can deduce that for each $i$ we have
	\begin{equation}
		\label{eq:lucky-event}
		\log X_{t_i} < (\log X_{t_{i-1}} + C(t_i - t_{i-1})) b^{(1-\varepsilon)\eta (t_{i} - t_{i-1})}. 
	\end{equation}
	We will show that by our choice of parameters, we can bound $C(t_i - t_{i-1}) \leq -\frac{1}{2} \log X_{t_i}$. Let us first discuss how this is enough to complete the proof. Indeed, in such a case we have
	\begin{equation}
		\log X_{t_i} < \frac{1}{2} (\log X_{t_{i-1}}) b^{(1 - \varepsilon) \eta (t_{i} - t_{i-1})}, 
		\label{eq:inductive-step}
	\end{equation}
	and by induction
	\begin{equation*}
		\log X_{t_m} < \frac{1}{2^m} (\log X_{0}) b^{(1-\varepsilon) \eta t_m}.
	\end{equation*}
	For fixed $\eta, m$ and $T$ large enough (depending on $\eta, m, \varepsilon$), this yields $\log X_T < - b^{(1 - 2\varepsilon) \eta T}$, and the result follows up by changing $\varepsilon$ by a factor of $2$.
	
	All we need to do is to show is that for every $i$ we have 
	\begin{equation}
		C(t_{i+1} - t_i) \leq -\frac{1}{2} \log X_{t_i},
		\label{eq:inductive-condition}
	\end{equation}
	assuming that \cref{eq:lucky-event} holds for every $i$.
	We will show this inductively, together with $\log X_{t_i} \leq - \gamma T$. Note that we assumed this inequality to be true for $X_{t_0} = X_0$. By our choice of parameters we have $C(t_{i+1} - t_i) \leq \frac{\gamma T}{2}$, therefore for $t_{i+1}$ the inequality~(\ref{eq:inductive-condition}) is satisfied. 
	
	We will now show that $\log X_{t_{i+1}} \leq \log X_{t_i} \leq - \gamma T$ to finish the proof by induction. We can apply \cref{eq:inductive-step} to $X_{t_i}$, to deduce that $\log X_{t_{i+1}} \leq \frac{1}{2} (\log X_{t_i}) b^{\frac{1}{2} \frac{\gamma}{C} T}$ (which is true since $b\ge 1, \eta\le 1, \epsilon\ge 0$). This for large values of $T$ (given parameters $b, \gamma$ and $C$) yields $\log X_{t_{i+1}} < \log X_{t_i}$ --- indeed this inequality will be true as soon as $b^{\frac{\gamma}{2C} T} > 2$, because both $\log X_{t_{i+1}}$ and $\log X_{t_i}$ are negative, which completes the proof.
\end{proof}

We are now ready to prove local polarization to global polarization theorem for exponential polarization.

\begin{proof}[Proof of \cref{thm:local-to-global-exp}]
	Consider exponentially locally polarizing martingale, and let us fix some $\varepsilon > 0$. By \cref{cor:modest-polarization} with $t = 2 \varepsilon T$ and $\lambda = 1$ we deduce that for some $\nu>0$ we have
	\begin{equation*}
		\P\parens{\max\parens{X_{\varepsilon T}, 1 - X_{\varepsilon T}} \geq (1 - \frac{\nu}{2})^{2\varepsilon T}} < \exp\parens{-\Omega_{\varepsilon,\nu}(T)}.
	\end{equation*}
	
	We condition on $\max(X_{\varepsilon T}, 1 - X_{\varepsilon T}) < (1 - \frac{\nu}{2})^{2\varepsilon T}$. 
	Now let $K$ be a large enough constant depending on $\alpha$ and $\gamma$, the target rate of polarization in the high end.
	Now let 
	$\tau>0$  be such that $\tau\le \min\parens{\tau\parens{e^K},\tau_0}$, where $\tau_0$ is given by the definition of suction at the low end and $\tau(\cdot)$ is from the suction at the high end. Note that this implies that (1)  if $X_t < \tau$, we have 
	\begin{equation}
		\label{eq:exp-low-suction-end-def-in-proof}
		\P\parens{X_{t+1} < X_t^b|X_t} \geq \eta,
	\end{equation} 
	which holds since $\tau \leq \tau_0$ and (2) if $1 - X_t < \tau$, we have  \begin{equation}
		\label{eq:high-suction-end-def-in-proof}
		\P((1 - X_{t+1}) < \exp(-K) (1 - X_t)|X_t) \geq \alpha,
	\end{equation} 
	which follows from the condition on suction at the high end.
	By Doob's martingale inequality (specifically \cref{cor:doobs}), we deduce that $\P(\max_{t \in [\varepsilon T, T]} \max(X_{t}, 1 - X_t) > \tau) \leq \tau^{-1} (1 - \frac{\nu}{2})^{-2\varepsilon T} \leq \exp(-\Omega_{\tau,\nu,\varepsilon}(T))$. Let us now condition in turn on this event not happening.
	
	We will consider first the case when $X_{\varepsilon T} < (1 - \frac{\nu}{2})^{2\varepsilon T}$, and let us put $\gamma_0 := - 2 \varepsilon \log (1 - \frac{\nu}{2})$ (note that $\gamma_0>0$), so that $X_{\varepsilon T} < \exp(- \gamma_0 T)$.
	We can now apply \cref{lem:second-phase} to the martingale sequence starting with $X_{\varepsilon T}$. (Note that the assumptions of \cref{lem:second-phase}  are satisfied as long as $X_t$ for $t\in [\varepsilon T,T]$ stays bounded by $\tau$ due to \cref{eq:exp-low-suction-end-def-in-proof}.)
	Hence we deduce that in this case, except with probability $\exp(-\Omega_{\gamma_0, \varepsilon,\eta}(T))\le \exp(-\Omega_{\nu, \varepsilon,\eta}(T))$, we have
	\begin{equation*}
		\log X_{T} < -b^{(1-\varepsilon)^2 \eta T},
	\end{equation*}
	and therefore $X_T < 2^{-b^{(1-\varepsilon)^2 \eta T}}$. Note that this implies that $\Lambda=\log_2\parens{b^{(1-\varepsilon)^2 \eta}}=(1-\varepsilon)^2 \eta\log_2{b}$ (hence for any $\Lambda< \eta\log_2{b}$, we pick $\varepsilon$ appropriately). We also pick $\beta$ and $\eta$ such that $\beta\eta^T\ge \exp\parens{-\Omega_{\varepsilon,\eta,\mu,K}(T)}$.
	
	On the other hand, if $1 - X_{t} < \tau$ for all $\varepsilon T \leq t \leq T$, the suction at the high end condition of local polarization applies (i.e. \cref{eq:high-suction-end-def-in-proof} holds), and we can apply \cref{lem:strong-polarization} (we pick $K$ large enough so that $K\alpha >c$) to martingale $\tilde{X}_t \triangleq 1 - X_{\varepsilon T + t}$ to deduce that except with probability $\exp(-\Omega_{\alpha}(T))$, we have $1 - X_T < \exp(-\alpha K (1-\varepsilon T) / 4) < \gamma^T$ for suitable choice of $K$ depending on $\gamma$ and $\alpha$. Finally, we pick $\beta$ and $\eta$ such that $\beta\eta^T\ge \exp\parens{-\Omega_{\varepsilon,\eta,\mu,K,\alpha}(T)}$.
\end{proof}

\section{\Arikan~Martingale and its local polarization}
\label{sec:martingale}

In this section we formally describe the \Arikan~martingale associated with an invertible matrix $M \in \F_q^{k \times k}$
and a channel $\C_{Y|Z}$.

Before we proceed with the formal definition, to provide overview of the goals of this construction, we shall briefly point out its main features for the special case of \Arikan~martingale $\{X_t\}_{t=0}^{\infty}$ associated with an additive channel $\mathcal{C}$ --- where channel output $Y = Z + U$, with $U$ being some random variable in $\F_q$ not depending on $Z$.
\begin{enumerate}
\item
For given $t$, marginal distribution $X_t$ is distributed identically as $\bH( (\bvec{U} M^{\otimes t})_i | (\bvec{U} M^{\otimes t})_{<i})$ for uniformly random index $i$, where $\bvec{U}$ is a vector of $k^t$ i.i.d. random variables distributed as the error $U$.
\item
 Sequence $X_t$ is a martingale --- in particular we provide coupling of the distributions above over different $t$ in a non-trivial way.
\item
 Definition of the martingale $X_t$ is ``local'' in some sense, which makes it manageable to analyze how $X_t$ and $X_{t+1}$ are related and eventually show local polarization.
\end{enumerate}

In \cref{sec:correspondence} we elaborate on the connection of the \Arikan~martingale with polar codes --- specifically the main link is a more general version of the first property for all symmetric channels, and is proved as~\cref{lem:mart-code}.

Briefly, the \Arikan~martingale measures at time $t$, the distribution of conditional entropy of a random variable $\bvec{A'}_{\bvec{i}}$, conditioned on the values of a vector of variables $\bvec{B'}$ and on the values of $\bvec{A'}_{\bvec{j}}$ for $\bvec{j}$ smaller (according to $\prec$) than $\bvec{i}$ for a random choice of the index $\bvec{i}$. Here $\bvec{A'}$ is a vector of $k^t$ random variables taking values in $\F_q$ while
$\bvec{B'} \in \mathcal{Y}^{k^t}$.
The exact construction of the joint distribution of these $2k^t$ variables is the essence of the \Arikan\ construction of codes, and we describe it shortly. The hope with this construction is  that eventually (for large values of $t$) the conditional entropies are either very close to $0$, or very close to $\log q$ for most choices of $\bvec{i}$.

When $t=1$, the process starts with $k$ independent and identical pairs of variables
$\{(A_i,B_i)\}_{i\in [k]}$ where $A_i \sim \F_q$ and $B_i \sim \C_{Y|Z=A_i}$. (So each pair corresponds to an independent input/output pair from transmission of a uniformly random input over the channel $\C_{Y|Z}$.) Let $\bvec{A} = (A_1,\ldots,A_k)$ and $\bvec{B'} = (B_1,\ldots,B_k)$, and note that the conditional entropies $H(A_i | \bvec{A}_{\prec i},\bvec{B'})$ are all equal, and this entropy, divided by $\log_2 q$, will be our value of $X_0$. On the other hand, if we now let
$\bvec{A'} = \bvec{A}\cdot M$ then the conditional entropies $H(\bvec{A'}_i | \bvec{A'}_{\prec i},\bvec{B'})$ are no longer equal (for most, and in particular for all mixing, matrices $M$). On the other hand, conservation of conditional entropy on application of an invertible transformation tells us that $\E_{i \sim [k]}[ H(\bvec{A'}_i | \bvec{A'}_{\prec i},\bvec{B'})/\log_2 q ] = X_0$. Thus letting $X_1 =
H(\bvec{A'}_i | \bvec{A'}_{\prec i},\bvec{B'})/\log_2 q$ (for random $i$) gives us the martingale at time $t=1$.

While this one step of multiplication by $M$ {\em differentiates} among the $k$ (previously identical) random variables, it doesn't yet polarize. The hope is by iterating this process one can get polarization\footnote{In the context of Polar coding, {\em differentiation} and {\em polarization} are good events, and hence our `hope.'}. But to get there we need to describe how to iterate this process. This iteration is conceptually simple (though notationally still complex) and illustrated in
Figure~\ref{fig:polarization}.
Roughly the idea is that at the beginning of stage $t$, we have defined a joint distribution of $k^t$ dimensional vectors
$(\bvec{A},\bvec{B})$ along with a multi-index $\bvec{i}\in [k]^t$. We now sample  $k$ independent and identically distributed pairs of these random variables  $\{(\bvec{A}^{(\ell)},\bvec{B}^{(\ell)})\}_{\ell \in [k]}$ and view $(\bvec{A}^{(\ell)})_{\ell\in [k]}$ as a $k^t \times k$ matrix which we multiply by $M$ to get a new $k^t \times k$ matrix. Flattening this matrix into a $k^{t+1}$-dimensional vector gives us a sample from the distribution of $\bvec{A'} \in \F_q^{k^{t+1}}$. $\bvec{B'}$ is simply the concatenation of all the
vectors $(\bvec{B}^{(\ell)})_{\ell \in [k]}$. And finally the new index $\bvec{j}\in [k]^{t+1}$ is simply obtained by extending $\bvec{i} \in [k]^t$ with a $(t+1)$th coordinate distributed uniformly at random in $[k]$. $X_{t+1}$ is now defined to be
$\bH(\bvec{A'}_\bvec{j} | \bvec{A'}_{\prec \bvec{j}},\bvec{B'})$, where $\bH(\cdot)$ is the normalized $q$-ary entropy defined in \eqref{eq:normalized-entropy}. The formal description is below.

\cstate{\Arikan\ martingale}{definition}{defarikanmartingale}{
	\label{def:arikan-martingale}
	Given an invertible matrix $M \in \F_q^{k\times k}$ and a channel description $C_{Y|Z}$ for $Z \in \F_q, Y \in \mathcal{Y}$, the \Arikan-martingale $X_0, \ldots X_t, \ldots$ associated with it is defined as follows.
	For every $t \in \mathbb{N}$,
	let $D_t$ be the distribution on pairs $\F_q^{k^t} \times \mathcal{Y}^{k^t}$ described inductively below:
	
	A sample $(A,B)$ from $D_0$ supported on $\F_q \times \mathcal{Y}$ is obtained by sampling $A \sim \F_q$, and $B \sim C_{Y|Z=A}$.
	For $t \geq 0$, a sample $(\bvec{A}', \bvec{B}') \sim D_{t+1}$ supported on $\F_q^{k^{t+1}}\times \mathcal{Y}^{k^{t+1}}$ is obtained as follows:
	\begin{itemize}
		\item Draw $k$ independent samples $(\bvec{A}^{(1)}, \bvec{B}^{(1)}), \dots, (\bvec{A}^{(k)}, \bvec{B}^{(k)}) \sim D_{t}$.
		\item Let $\bvec{A}'$ be given by
		$\bvec{A}'_{[\bvec{i}, \cdot]}
		= (\bvec{A}^{(1)}_{\bvec{i} } ~ ,\dots, ~ \bvec{A}^{(k)}_{\bvec{i}})\cdot  M$
		for all $\bvec{i}\in [k]^{t}$
		and $\bvec{B}' = (\bvec{B}^{(1)}, \bvec{B}^{(2)}, \ldots \bvec{B}^{(k)})$.
	\end{itemize}
	
	Then, the sequence $X_t$ is defined as follows:
	Sample $i_l \in [k]$ uniformly and independently for $l=1,2,\dots,t$.
	Let $\bvec j = (i_1,\ldots,i_t)$
	and let
	$X_t := \bH(\bvec{A}_{\bvec j} | \bvec{A}_{\prec \bvec j}, \bvec{B})$,
	where the entropies are with respect to the distribution
	$(\bvec{A}, \bvec{B}) \sim D_t$.\footnote{We stress that the only randomness in the evolution of $X_t$ is in the choice of $i_1,\ldots,i_t,\ldots$.
		The process of sampling
		$\bvec{A}$ and $\bvec{B}$ is only used to define
		the distributions
		for which we consider
		the conditional entropies $H(\bvec{A}_{\bvec j} | \bvec{A}_{\prec \bvec j}, \bvec{B})$.}
}
\begin{figure}[h!]
	\begin{center}
		\newcommand{\mev}{\pgfmathsetmacro}
\begin{tikzpicture}
\iffalse
\draw [help lines, color=black!20!white] (-\d, -\d) grid (\d, \d);
\foreach \i in {-\d,...,\d}{
	\ifthenelse{\not \i = 0}{
		\node [scale=0.6, color=black!20!white] at (\i, -0.1) {\i};
		\node [scale=0.6, color=black!20!white] at (-0.1, \i) {\i};
	}
};
\node [scale = 0.6, color=black!20!white] at (-0.1, -0.1) {0};
\fi

\def\k{3};
\def\s{5};
\def\channelh{0.25}
\def\channels{0.15}
\def\blockspace{0.6}
\def\steph{(\channelh + \channels)}

\foreach \i in {1, ..., \k} {
	\mev{\bs}{4 - (\i - 1)*\s*\steph - (\i - 1)*\blockspace};
	\mev{\top}{\bs - \s*\steph + \channels}
	% The $M^{\otimes t}$ boxes
	\draw (-3, \bs) rectangle (-1.8, \top);
	\mev{\mid}{(\bs + \top)/2};
	\node  (M\i) at (-2.4, \mid) {$M^{\otimes t}$};
	\node [scale=0.7]  at (-1.5, \mid) {$A^{(\i)}_{*}$};

	\foreach \j in {1, ..., \s} {
		% Channel boxes
		\mev{\pos}{\bs - (\j - 1) * \steph}
		\draw (-4.8, \pos) rectangle (-4, \pos - \channelh);
		\draw [->] (-4, \pos - \channelh/2) -- (-3, \pos - \channelh/2);
		\draw [->] (-5, \pos - \channelh/2) -> (-4.8, \pos - \channelh/2);
		\node [scale = 0.5] at (-4.4, \pos - \channelh/2) {$C_{Y|Z}$};
		\coordinate (Mt{\i}x{\j}) at (-1.8, \pos - \channelh/2);
	};
};

% M boxes
	
\mev{\totalheight}{\k*\steph*\s + (\k - 1)*\blockspace - \channels};
\def\aratio{0.3}
% gap = box * ratio
% (s-1)*gap + s*box = totalheight 
\mev{\boxh}{\totalheight / ( (\s-1)*\aratio + \s) }
\mev{\gap}{\boxh*\aratio};
\def\ledge{2}
\def\lwid{1}

\foreach \j in {1, ..., \s}{
	\mev{\pos}{4 - (\boxh + \gap)*(\j - 1)};
	\pgfmathtruncatemacro{\smj}{\s - \j};
	\ifthenelse{\j < 3 \OR \smj < 1}{
		\draw (\ledge, \pos) rectangle (\ledge + \lwid, \pos - \boxh);
		\node at (\ledge + \lwid/2, \pos - \boxh/2) {M};
		\foreach \i in {1, ..., \k}{
			\mev{\inpos}{\pos - (\i - 1)*\boxh/\k - \channelh/2};
			\coordinate (M{\i}x{\j}) at (\ledge, \inpos);
			\ifthenelse{\j = 1}{
				\draw [line width=1.5, color=red, ->] (Mt{\i}x{\j}) .. controls +(0.6, 0) and +(-0.6, 0) .. (M{\i}x{\j}) node [pos=0.1, above,scale=0.5] {$A^{(\i)}_{\vec{1}}$};
				
				\node [scale=0.5] at (\ledge + \lwid + 0.45, \inpos - 0.02) {$A'_{[\vec{1}, \i]}$};
			}{
				\draw [dashed, ->] (Mt{\i}x{\j}) .. controls +(1, 0) and +(-1, 0) .. (M{\i}x{\j});
			};
				
			\draw [->] (\ledge + \lwid, \inpos) -> (\ledge + \lwid + 0.2, \inpos);
			
		};
	}{ \draw [fill] (\ledge + \lwid/2, \pos - \boxh/2) circle (0.05);};
};
\end{tikzpicture}
	\end{center}
	\caption{Evolution of \Arikan\ martingale for $3 \times 3$ matrix $M$.}
	\label{fig:polarization}
\end{figure}

Figure~\ref{fig:polarization} illustrates the definition by highlighting the construction of the vector $\bvec{A'}$, and in particular highlights the recursive nature of the construction.

It is easy (and indeed no different than in the case $t=1$) to show that $\E[X_{t+1}|X_t] = X_t$ and so the \Arikan\ martingale is indeed a martingale.
This is shown below.

\cstate{}{proposition}{propactuallymartingale}{
	\label{prop:actually-martingale}
	For every matrix $M$ and channel $\C_{Y|Z}$, the \Arikan\ martingale is a martingale and in particular a $[0,1]$-martingale.
}

\begin{proof}
	The fact that $X_t \in [0,1]$ follows from the fact for $0 \leq H(\bvec{A}_{\bvec{i}} | \bvec{A}_{\prec \bvec{i}}, \bvec{B}) \leq H(\bvec{A}_{\bvec{i}}) \leq \log_2 q$ (the upper bound follows since $\bvec{A}_{\prec \bvec{i}}\in \F_q$)
	and so $0 \leq X_t = H(\bvec A_{\bvec{i}} | \bvec{A}_{\prec \bvec{i}},\bvec{B})/\log_2 q \leq 1$.
	
	We turn to showing that $\E[X_{t + 1}| X_t = a] = a$. To this end, consider a sequence of indices $\bvec{i} = (i_1, \ldots i_{t})$, such that $\bH(\bvec A_{\bvec{i}} ~|~ \bvec{A}_{\prec \bvec{i}}, \bvec{B}) = a$. We wish to show that $\E_{i_{t+1} \sim [k]} [\bH(\bvec A'_{[\bvec{i}, i_{t+1}]} ~|~ \bvec{A}'_{\prec [\bvec{i}, i_{t+1}]}, \bvec{B}')] = a$.
	
	Since the pairs $(\bvec{A}^{(s)},\bvec{B}^{(s)})$ are independent samples from $D_t$, note that for any $s$, we have $\bH(\bvec A_{\bvec{i}}^{(s)} ~|~ \bvec{A}_{\prec \bvec i}^{(s)}, \bvec{B}^{(s)}) = a$. Furthermore, because of the same independence, 
	we have
	$$\bH(\bvec A^{(s)}_{\bvec{i}} ~|~ \bvec{A}^{(s)}_{\prec \bvec{i}}, \bvec{B}^{(s)})
	= \bH(\bvec A^{(s)}_{\bvec{i}} ~|~ \cup_{j \in [k]} \bvec{A}^{(j)}_{\prec \bvec{i}}, \cup_{j \in [k]} \bvec{B}^{(j)})$$
	$$ \mbox{ and }\bH(\bvec A^{(1)}_{\bvec{i}}, \ldots, \bvec A^{(k)}_{\bvec{i}} | \cup_{j \in [k]} \bvec{A}^{(j)}_{\prec \bvec{i}}, \cup_{j \in [k]} \bvec{B}^{(j)}) = k\cdot a.$$
	By the invertibility of $M$ we have
	$$\bH(\bvec A'_{[\bvec{i}, 1]}, \ldots \bvec A'_{[\bvec{i}, k]} ~|~ \cup_{j \in [k]} \bvec{A}^{(j)}_{\prec \bvec{i}}, \cup_{j \in [k]} \bvec{B}^{(j)}) = \bH(\bvec A^{(1)}_{\bvec{i}}, \ldots, \bvec A^{(k)}_{\bvec{i}} | \cup_{j \in [k]} \bvec{A}^{(j)}_{\prec \bvec{i}}, \cup_{j \in [k]} \bvec{B}^{(j)}) = k\cdot a.$$ We can apply again invertibility of the matrix $M$ to deduce that conditioning on $\cup_{j \in [k]} \bvec{A}_{\prec \bvec{i}}^{(j)}$ is the same as conditioning on $\bvec{A'}_{\prec [\bvec{i}, 1]}$ --- i.e. for any multiindex $\bvec{i'} \prec \bvec{i}$ variables $\bvec{A}_{\bvec{i'}}^{(1)}, \ldots \bvec{A}_{\bvec{i'}}^{(k)}$ and
	$\bvec{A'}_{[\bvec i', 1]}, \ldots \bvec{A'}_{[\bvec i', k]}$ are related via invertible transform $M$.
	This yields 
	\[\bH(\bvec{A'}_{[\bvec{i}, 1]}, \ldots \bvec{A'}_{[\bvec{i},k]} ~|~ \bvec{A'}_{\prec [\bvec{i}, 1]}, B') = \bH(\bvec{A'}_{[\bvec{i}, 1]}, \ldots \bvec{A'}_{[\bvec{i}, k]} ~|~ \cup_{j \in [k]} \bvec{A}^{(j)}_{\prec \bvec{i}}, \cup_{j \in [k]} \bvec{B}^{(j)})  = k a.\]
	
	Finally by the Chain rule of entropy we have
	\begin{eqnarray*}
		\bH(\bvec{A'}_{[\bvec{i}, 1]}, \ldots \bvec{A'}_{[\bvec{i},k]} ~|~ \bvec{A'}_{\prec [\bvec{i}, 1]}, \bvec B')
		& = & \sum_{i_{t+1} = 1}^k \bH(\bvec{A'}_{[\bvec{i}, i_{t+1}]} ~|~ \bvec{A}'_{[\bvec{i}, <i_{t+1}]}, \bvec{A'}_{\prec [\bvec{i}, 1]}, \bvec B') \\
		& = &
		\sum_{i_{t+1} = 1}^k \bH(\bvec{A'}_{[\bvec{i}, i_{t+1}]} ~|~ \bvec{A}'_{\prec [\bvec{i}, i_{t+1}]}, \bvec B')
	\end{eqnarray*}
	Putting these together, we have $\E[X_{t + 1} | X_t = a] = \E_{i_{t+1}} [ \bH(\bvec{A'}_{[\bvec{i}, i_{t+1}]} ~|~ \bvec{A'}_{\prec [\bvec{i}, i_{t+1}]}, \bvec B') ] = \frac 1k\cdot ka=a$.
\end{proof}

Finally, we remark that based on the construction it is not too hard to see that if $M$ were an identity matrix, or more generally a non-mixing matrix, then $X_t$ would deterministically equal $X_0$. (There is no differentiation and thus no polarization.) The thrust of this paper is to show that in all other cases we have strong polarization.

\subsection{Matrix Polarization and the \Arikan\ martingale}

Note that the definition of the \Arikan\ martingale is itself complex, and in particular the distribution of $X_t$, the variable at the $t$th step, needs a description whose complexity grows with $t$. The essence of the polarization argument does not depend on this intricacy of the definition, most of which can be abstracted away. Indeed we do so formally by considering a simpler (single step) randomization process associated with a matrix $M$.
We define a matrix $M$ to be {\em polarizing} if this single step process satisfies properties similar to those of local polarization (see \cref{def:matrix-polarization}). Then, in \cref{lem:polarizing-matrix-implies-martingale} we show that if a matrix $M$ satisfies matrix polarization then for every channel $\C$ the \Arikan\ martingale associated with $M$ and $\C$ is locally polarizing. This is a notationally heavy but conceptually light proof, whose essence is to verify that certain variables are independent and so conditioning on such variables does not change entropies. This will allow us focus on a simpler single step process in future sections to prove (exponential) local polarization.

We start with the definition of matrix polarization.

\begin{definition}[Matrix (exponential) polarization]
	\label{def:matrix-polarization}
	We say that a matrix $M \in \F_q^{k\times k}$ satisfies matrix polarization, if and only if for every pair of random variables $(\bvec{U}, W)$, such that  $\bvec{U} = (\bvec{U}_1,\dots,\bvec{U}_k) \in \F_q^k$, $W = (W_1,\ldots,W_k)$ is supported on some finite set, and the pairs  $(\bvec{U}_i,W_i)$ are independently and identically distributed for $i \in [k]$, the vector $\bvec{V} := \bvec{U} \cdot M$ satisfies the following conditions:
	\begin{enumerate}
		\item
		{\bf (Variance in the middle):}
		There is some index $j \in [k]$
		for which the following holds:
		For every $\tau > 0$, there exists $\eps=\eps(\tau) > 0$ such that
		if $\bH(\bvec{U}_1 | W) \in (\tau, 1-\tau)$,
		then
		$$\bH((\bvec{V})_j | \bvec{V}_{< j}, W)  \geq \bH(\bvec{U}_1 | W) + \eps.$$
		
		\item
		{\bf (Suction at the lower end):}
		There is some index $j \in [k]$
		for which the following holds:
		
		For every $c < \infty$, there  exists $\tau > 0$ such that
		if
		$\bH(\bvec{U}_1 | W) < \tau$
		then
		$$\bH(\bvec{V}_j | \bvec{V}_{< j}, W)  \leq \frac{1}{c}~\bH(\bvec{U}_1 | W).$$

		\item
		{\bf (Suction at the high end):}
		Analogously to suction at the low end, there is some index $j \in [k]$
		for which the following holds:
		
		For every $c < \infty$, there  exists $\tau > 0$ such that
		if
		$\bH(\bvec{U}_1 | W) > 1-\tau$
		then
		$$1 - \bH(\bvec{V}_j | \bvec{V}_{< j}, W)
		\leq \frac{1}{c}(1-\bH(\bvec{U}_1 | W)).$$
	\end{enumerate}
	Additionally, we say that $M$ satisfies $(\eta,b)$-exponential matrix polarization if we have the following property:
	\begin{enumerate}
		\item[2'.] {\bf (Strong Suction at the lower end):}
		There  exists $\tau > 0$ such that
		if
		$\bH(\bvec{U}_1 | W) < \tau$ then for at least $\eta$ fraction of the indices $j \in [k]$ we have: 
		then
		$$\bH(\bvec{V}_j | \bvec{V}_{< j}, W)  \leq \bH(\bvec{U}_1 | W)^b .$$
	\end{enumerate}
\end{definition}

Thus the notion of matrix polarization is somewhat more general than polarization of the corresponding \Arikan\ martingale. 
\begin{enumerate}
	\item[(1)] In the latter, the conditioning in the entropy prescribes some specific relations between $\bvec{U}$ and $W$, where in the former $W$ is arbitrary (subject to the condition that the pairs $(\bvec{U}_j,W_j)$ are i.i.d.).
	\item[(2)] Furthermore the definitions also make slight changes to the conditions of Variance in the middle and suction only requiring the existence of $j \in [k]$ having a certain property as opposed requiring that a random choice of $j$ satisfy some condition. 
\end{enumerate} 

The differences in (1) above allows for cleaner proofs, since the specific structure of $W$ is not needed. The class of differences in (2) above changes some probabilities and/or variances by factors depending on $k$, but this difference is negligible. In the following section we formally confirm that matrix polarization is a sufficient condition for martingale polarization.

\subsection{Matrix polarization implies local polarization of \Arikan~martingale}
\label{ssec:proof-local}

In  this section we prove that matrix polarization implies local polarization of the corresponding \Arikan\ martingale.

\begin{theorem}
	\label{lem:polarizing-matrix-implies-martingale}
	For every  matrix $M\in \F_q^{k \times k}$ and every symmetric memoryless channel $\C_{Y|Z}$, if $M$ satisfies matrix polarization then the \Arikan~martingale associated with $M$ and $\C$ is satisfies local polarization. Furthermore if $M$ satisfies $(\eta,b)$-exponential matrix polarization, then the \Arikan-martingale satisfies $(\eta,b)$-exponential local polarization. 
\end{theorem}

We begin with a lemma that will be useful in the proof of \cref{lem:polarizing-matrix-implies-martingale}:
\begin{lemma}
	\label{lem:pushing-back-cond}
	Let $\bvec{A}^{(1)}, \ldots \bvec{A}^{(k)}$, and $\bvec{A}'$ be defined as in \cref{def:arikan-martingale}, and let $V, W$ be arbitrary random variables. Then for any multiindex $\bvec{i} \in [k]^{t}$ and any $i_{t+1} \in [k]$ we have
	\begin{equation*}
		\bH(V ~|~ \bvec{A}'_{\prec [\bvec{i}, i_{t+1}]}, W)
		= \bH(V ~|~ \bvec{A}'_{[\bvec{i}, < i_{t+1}]}, \bvec{A}^{(1)}_{\prec \bvec{i}}, \bvec{A}^{(2)}_{\prec \bvec{i}},\ldots \bvec{A}^{(k)}_{\prec \bvec{i}}, W) \ .
	\end{equation*}
\end{lemma}
\begin{proof}
	Observe first that by definition of the order $\prec$ we have that $\bvec{A}'_{\prec [\bvec{i}, i_{t+1}]} = (\bvec{A}'_{\prec [\bvec{i}, 1]}, \bvec{A}'_{[\bvec{i}, < i_{t+1}]})$, hence
	\begin{equation*}
		\bH(V ~|~ \bvec{A}'_{\prec [\bvec{i}, i_{t+1}]}, W) =
		\bH(V ~|~ \bvec{A}'_{[\bvec{i}, < i_{t+1}]}, \bvec{A}'_{\prec [\bvec{i}, 1]}, W) \ .
	\end{equation*}
	The definition of the sequence $\bvec{A}'$ in terms of $\bvec{A}$ (in \cref{def:arikan-martingale}) reads
	\begin{equation*}
		\bvec{A}'_{[\bvec{j}, \cdot]} = (\bvec{A}_{\bvec{j}}^{(1)}, \cdots, \bvec{A}_{\bvec{j}}^{(k)}) M.
	\end{equation*}
	Note that if random variables $B, B'$ are related by invertible function $B = f(B')$, then $\bH(A | B) = \bH(A | B')$. By definition of mixing matrix, $M$ is invertible, and hence variables $\bvec{A}'_{\prec [\bvec{i},1]}$ and variables $\bvec{A}_{\prec \bvec{i}}^{(1)}, \ldots \bvec{A}_{\prec \bvec{i}}^{(k)}$ are indeed related by invertible (linear) transformation, which yields
	\begin{equation*}
		\bH(V ~|~ \bvec{A}'_{[\bvec{i}, < i_{t+1}]}, \bvec{A}'_{\prec [\bvec{i}, 1]},W)
		= \bH(V ~|~ \bvec{A}'_{[\bvec{i}, < i_{t+1}]}, \bvec{A}^{(1)}_{\prec \bvec{i}}, \bvec{A}^{(2)}_{\prec \bvec{i}},\ldots \bvec{A}^{(k)}_{\prec \bvec{i}}, W)
		\ .
		\qedhere
	\end{equation*}
\end{proof}

We now turn to the proof of \cref{lem:polarizing-matrix-implies-martingale}.

\begin{proof}[Proof of \cref{lem:polarizing-matrix-implies-martingale}]
	Fix a matrix $M$, channel $\C_{Y|Z}$ and time $t$. 
	We start by recalling the definition of the variables $X_t$ and $X_{t+1}$ in the definition of the \Arikan\ martingale, and also recall what local polarization entails for these variables.
	
	Let $(\bvec{A}, \bvec{B}), (\bvec{A}^{(1)}, \bvec{B}^{(1)}), \ldots (\bvec{A}^{(k)}, \bvec{B}^{(k)}) \sim D_t$ denote independent random variables. Let $(\bvec{A}', \bvec{B}')$ constructed from $(\bvec{A}^{(1)}, \bvec{B}^{(1)}), \ldots (\bvec{A}^{(k)}, \bvec{B}^{(k)})$ as in \cref{def:arikan-martingale}, i.e., we have
	$\bvec{A}'_{[\bvec{i'}, \cdot]}
	= (\bvec{A}^{(1)}_{\bvec{i'} } ~ ,\dots, ~ \bvec{A}^{(k)}_{\bvec{i'}})\cdot  M$ for every $\bvec{i'} \in [k]^t$
	and $\bvec{B}' = (\bvec{B}^{(1)}, \bvec{B}^{(2)}, \ldots \bvec{B}^{(k)})$.
	Now 
	let $\bvec{i} \triangleq (i_1, \ldots i_t)$ be sampled uniformly from $[k]^t$ and let $i_{t+1} \in [k]$ be chosen independently and uniformly.
	
	Then by \cref{def:arikan-martingale} we have: 
	\[X_t = \bH(\bvec{A}_{\bvec{i}} ~|~ \bvec{A}_{\prec \bvec{i}}, \bvec{B})\]
	\[\mbox{and~} X_{t+1} = \bH(\bvec{A}'_{[\bvec{i}, i_{t+1}]} | \bvec{A}'_{\prec [\bvec{i}, i_{t+1}]}, \bvec{B}').\]
	
	That is, for $\bvec{U}=\parens{\bvec{A}_{\bvec{i}}^{(1)},\dots,\bvec{A}_{\bvec{i}}^{(k)}}$, we have $\bvec{A'}_{[\bvec{i}, \cdot]} = \bvec{U} \cdot M$, and $\bvec{B'} = (\bvec{B}^{(1)}, \ldots, \bvec{B}^{(k)})$.  
	
	We will use the property of matrix polarization of $M$ with $\bvec{U} = (\bvec{U}_1,\ldots,\bvec{U}_k)$ where $\bvec{U}_s =  \bvec{A}^{(s)}_\bvec{i}$ and $W = (W_1,\ldots,W_k)$ where $W_s = (\bvec{A}^{(s)}_{\prec \bvec{i}}, \bvec{B}^{(s)})$ to deduce local polarization of \Arikan~martingale. 
	Note that the pairs $(\bvec{U}_1,W_1),\ldots,(\bvec{U}_k,W_k)$ are identically distributed and independent as required. 
	We let $\bvec{V} = \bvec{U}\cdot M$.  By the definition of \Arikan~martingale we have $\bvec{A'}_{[\bvec{i}, \cdot]} = \bvec{V} = \bvec{U} \cdot M$.
	Thus the matrix polarization of $M$ implies bounds on the conditional entropy of $\bH(\bvec{V}_j|\bvec{V}_{<j},W)$, where $\bvec{V}_j = \bvec{A}'_{[\bvec{i},j]}$ where $X_{t+1}$ studies conditional entropy of $(\bvec{V}_{i_{t+1}} | \bvec{A}'_{\prec[\bvec{i},i_{t+1}]},\bvec{B}')$. In what follows we verify that despite the difference the latter can be bounded as required for the condition of (exponential) local polarization of the \Arikan\ martingale. We tackle each of the conditions in order
	but first we note that by \cref{lem:pushing-back-cond} we have
	\begin{align}
		\bH(\bvec{A}'_{[\bvec{i}, j]} ~|~ \bvec{A}'_{\prec [\bvec{i}, j]}, \bvec{B}') &  = 
		\bH(\bvec{A}'_{[\bvec{i}, j]}  ~|~ \bvec{A}'_{[\bvec{i}, <j]}, \bvec{A}^{(1)}_{\prec \bvec{i}}, \ldots, \bvec{A}^{(k)}_{\prec \bvec{i}}, \bvec{B}'), \nonumber \\
		& = \bH( (\bvec{U} \cdot M)_{j} | (\bvec{U} \cdot M)_{<j}, \bvec{A}^{(1)}_{\prec \bvec{i}}, \ldots, \bvec{A}^{(k)}_{\prec \bvec{i}}, \bvec{B}' ) \nonumber \\
		& = \bH( \bvec{V}_{j} | \bvec{V}_{<j}, W ).
		\label{eq:entrop-eq}
	\end{align}
	
	Let $h \triangleq X_t = \bH(\bvec{A}_{\bvec{i}} ~|~ \bvec{A}_{\prec \bvec{i}}, \bvec{B})$. Note that for every $s \in [k]$ we have $\bH(\bvec{A}^{(s)}_{\bvec{i}} ~|~ \bvec{A}^{(s)}_{\prec \bvec{i}}, \bvec{B}^{(s)}) = h$, because all the pairs $(\bvec{A}^{(s)}, \bvec{B}^{(s)})$ are distributed independently and identically to $(\bvec{A},\bvec{B})$. Moreover for every $j\in[k]$ we have $\bH(\bvec{U}_j \, | \, W) = \bH(\bvec{A}_{\bvec{i}}^{(s)} \, | \, \bvec{A}^{(s)}_{\prec \bvec{i}}, \bvec{B}^{(s)}) = h$, where the first equality follows from the fact that pairs $(\bvec{A}^{(s)}, \bvec{B}^{(s)})_{s \in [k]}$ are identically and independently distributed (so the index $j$ does not matter).

	We will start with the \emph{Variance in the middle} condition of martingale local polarization (\cref{defn:polar-local}).  As a reminder, what we need to show is that if $h \in (\tau, 1-\tau)$, then 
	\[ \Var_{i_{t+1} \sim [k]} (\bH(\bvec{A}'_{[\bvec{i}, i_{t+1}]} | \bvec{A}'_{\prec [\bvec{i}, i_{t+1}]}, \bvec{B}') - \bH(\bvec{A}_{\bvec{i}} | \bvec{A}_{\prec \bvec{i}}, \bvec{B})) > \theta(\tau) \ . \]
	Note that by the martingale property (\cref{prop:actually-martingale}) we have $$\E_{i_{t+1}\sim [k]} [\bH(\bvec{A}'_{[\bvec{i}, i_{t+1}]} | \bvec{A}'_{\prec [\bvec{i}, i_{t+1}]}, \bvec{B}') - \bH(\bvec{A}_{\bvec{i}} | \bvec{A}_{\prec \bvec{i}}, \bvec{B})]=0.$$
	and as such to obtain the lower bound on the variance it is enough to show that
	\begin{equation}
		\P_{i_{t+1} \sim [k]}\left(\bH(\bvec{A}'_{[\bvec{i}, i_{t+1}]} | \bvec{A}'_{\prec [\bvec{i}, i_{t+1}]}, \bvec{B}') \geq h + \varepsilon(\tau)\right) \geq \frac{1}{k}.
		\label{eq:variation-goal}
	\end{equation}
	This would allow us to deduce that the variance above is lower bounded by $\varepsilon(\tau)^2/k$. (Note that this lower bound is true for every $h$ and hence the actual variance needed in the statement of the \emph{Variance in the middle} condition is also true.)
	
	We now use the \emph{Variance in the middle} condition of the matrix polarization (\cref{def:matrix-polarization}) for $M$ with variables $(\bvec{U}, W)$ . This condition asserts that for some index $j$ we have entropy gain $\bH( (\bvec{U} \cdot M)_j | (\bvec{U} \cdot M)_{<j}, W) \geq h + \varepsilon(\tau)$. 
	Combining this with \cref{eq:entrop-eq} proves inequality \cref{eq:variation-goal}, and therefore shows \emph{variation in the middle}  for the \Arikan\ martingale.
	
	Next we turn to the proof of \emph{suction at the upper end}. Here, we will show that for every $c$ if $1 - h < \tau(c)$, then with probability at least $\frac{1}{k}$ over the choice of $i_{t+1}$,  we will have $1 - \bH(\bvec{A}'_{[\bvec{i}, i_{t+1}]} | \bvec{A}'_{\prec [\bvec{i}, i_{t+1}]}, \bvec{B}') \le \frac{1}{c}(1 - h)$.
	The corresponding \emph{suction at the upper end} condition of matrix polarization asserts the existence of an index $j$, such that $1 - \bH( \bvec{V}_j | \bvec{V}_{<j}, W) \le \frac{1}{c}(1-h)$. With probability at least $\frac{1}{k}$ we have $i_{t+1} = j$, and in this case we have
	\begin{align*}
		1 - \bH(\bvec{A}'_{[\bvec{i}, i_{t+1}]} ~|~ \bvec{A}'_{\prec [\bvec{i}, j]}, \bvec{B}') &  = 1 - \bH( (\bvec{U} \cdot M)_{j} | (\bvec{U} \cdot M)_{<j}, W ) 
		\leq \frac{1}{c} (1 - h),
	\end{align*}
	where the first equality above is by \cref{eq:entrop-eq}.
	
	The proof of \emph{suction at the lower end} is symmetric. We now turn to the proof of \emph{strong suction at the low end} (\cref{defn:polar-local-exp}).
	Let $M$ satisfy $(\eta, b)$-exponential matrix polarization. Recall that we wish to show that $\P_{i_{t+1} \sim [k]}(\bH(\bvec{A'}_{ [\bvec{i}, i_{t+1}]} ~|~ \bvec{A'}_{\prec [\bvec{i}, i_{t+1}]}, \bvec{B'}) < h^b) \geq \eta$.
	Once again by \cref{eq:entrop-eq} we have, for every $j \in [k]$, $\bH(\bvec{A'}_{ [\bvec{i}, j]} ~|~ \bvec{A'}_{\prec [\bvec{i}, j]}, \bvec{B'})= \bH(\bvec{V}_j ~|~ \bvec{V}_{<j},W)$. This is exactly the property given by the \emph{strong suction at the low end} property of matrix polarization (using $h = \bH(\bvec{U}_1|W)$). 
	
	This concludes the proof.
\end{proof}

Thus to prove \cref{thm:triangle-local,thm:triangle-local-exp} we now need to prove that for every mixing matrix $M$, $M$ satisfies matrix polarization and $M^2$ satisfies exponential polarization. 
We argue the former in \cref{sec:matrix-polar} and the latter in \cref{sec:exp-local-arikan}.

\section{Proof of Matrix Polarization}\label{sec:matrix-polar}

In this section we prove that every mixing matrix satisfies matrix polarization, 
modulo some entropic inequalities whose proofs are deferred to
\cref{sec:entropic-proofs}. 
Combined with \cref{lem:polarizing-matrix-implies-martingale} this immediately yields \cref{thm:triangle-local} which asserts the local polarization of the \Arikan\ martingale. 

Informally, this section can be viewed as reducing matrix polarization of a general (mixing) matrix to the matrix polarization of the matrix 
$G_2 = \left(\begin{smallmatrix} 1 & 0 \\ 1 & 1 \end{smallmatrix}\right)$. Formally what we do is state three entropic inequalities
(see \cref{sec:entropic-statements}) 
that arise naturally in the proof of the matrix polarization of $G_2$. These inequalities relate the conditional entropy of a sum of two random variables to the entropy of each of those random variables. Indeed, these inequalities can be used to show immediately that $G_2$ satisfies matrix polarization, and we do so in \cref{lem:g2-polarizes}.
But the bulk of the work, and novelty, in this section is in \cref{sec:local-k-by-k} where we show (via carefully executed ``Gaussian elimination'') that these entropic inequalities suffice to show the matrix polarization of \emph{every} mixing matrix.

\subsection{Entropic Lemmas in the $2 \times 2$ Case}
\label{sec:entropic-statements}

We state here the three entropic lemmas. The proofs of the first two are deferred to \cref{sec:entropic-proofs}. The third lemma is well-known and we providee a reference for its proof.

The first lemma arises from the analysis of the \emph{suction at the upper end} (for $X_t > 1 - \tau$) condition of \cref{def:matrix-polarization}.
\begin{lemma}
	\label{lem:suc-upr-cond}
	For every finite field $\F_q$ and every $\gamma > 0$, there exist $\tau$, such that if $(X_1, A_1)$ and $(X_2, A_2)$ are independent random variables with $X_i \in \F_q$ such that 
	$1 - \bH(X_2 ~|~ A_2) \leq \tau$, then
	\begin{equation*}
		1 - \bH(X_1 + X_2 ~|~ A_1, A_2) \leq \gamma (1 - \bH(X_1 ~|~ A_1)).
	\end{equation*}
\end{lemma}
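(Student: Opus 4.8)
\textbf{Proof plan for Lemma~\ref{lem:suc-upr-cond}.}

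The plan is to prove this via the contrapositive of a standard data-processing/Fano-type argument. First I would observe that since $1-\bH(X_i \mid A_i)$ is small, each $X_i$ is almost determined given $A_i$: by Lemma~\ref{lem:ML-conditional} there is an estimator $\hat{X}_i(A_i)$ that errs with probability at most $H(X_i \mid A_i) = (1-\bH(X_i\mid A_i))\log q$, call this error probability $p_i \le \tau \log q$. The key structural point is that $X_1 + X_2$ together with $A_1$ is ``almost as good'' as $X_1$ together with $A_1$: indeed $X_1 = (X_1 + X_2) - \hat{X}_2(A_2)$ whenever $X_2 = \hat{X}_2(A_2)$, so one can recover $X_1$ from $(X_1+X_2, A_1, A_2)$ except on the event $\{X_2 \ne \hat{X}_2(A_2)\}$ of probability $p_2$. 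Hence I expect to show $H(X_1 \mid X_1+X_2, A_1, A_2)$ is small, which by the chain rule gives a lower bound on $H(X_1 + X_2 \mid A_1, A_2)$: concretely $H(X_1, X_1+X_2 \mid A_1, A_2) = H(X_1+X_2 \mid A_1, A_2) + H(X_1 \mid X_1+X_2, A_1, A_2)$, and the left side equals $H(X_1, X_2 \mid A_1, A_2) = H(X_1\mid A_1) + H(X_2 \mid A_2)$ by independence and the bijection $(X_1,X_2)\leftrightarrow(X_1, X_1+X_2)$.

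Putting this together: $1 - \bH(X_1+X_2 \mid A_1,A_2) = \big(1-\bH(X_1\mid A_1)\big) + \big(1-\bH(X_2\mid A_2)\big) - \frac{1}{\log q}H(X_1 \mid X_1+X_2,A_1,A_2)$. So it suffices to bound $H(X_1 \mid X_1+X_2, A_1, A_2)$ from below by something like $(1-o_\tau(1))\cdot H(X_2\mid A_2)$, which would force $1-\bH(X_1+X_2\mid A_1,A_2) \le (1-\bH(X_1\mid A_1)) + o_\tau(1)\cdot(1-\bH(X_2\mid A_2))$, and by symmetry of the hypotheses (both $1-\bH(X_i\mid A_i)\le\tau$) one gets the claimed bound $\le \gamma(1-\bH(X_1\mid A_1))$ once we also argue the two quantities $1-\bH(X_i\mid A_i)$ are comparable — actually the cleaner route is to bound $1 - \bH(X_1+X_2\mid A_1,A_2)$ symmetrically in terms of $\tau$ and then note we need it relative to $1-\bH(X_1\mid A_1)$; to handle the case where $1-\bH(X_1\mid A_1)$ is much smaller than $\tau$, I would instead directly estimate $H(X_1\mid X_1+X_2,A_1,A_2)$ in terms of $H(X_2\mid A_2)$ alone, using that conditioned on $X_1 = \hat X_1(A_1)$ (prob.\ $\ge 1-p_1$) the pair $(X_1+X_2, A_1, A_2)$ determines $X_2$, i.e.\ $H(X_2\mid X_1+X_2, A_1, A_2) \le$ (small), hence by symmetry-of-roles within this conditional computation $H(X_1\mid X_1+X_2,A_1,A_2) \ge H(X_1,X_2\mid A_1,A_2) - H(X_1+X_2\mid A_1,A_2) - H(X_2 \mid X_1+X_2,A_1,A_2)$. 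The Fano-type estimate I need is: if a variable $U$ over $\F_q$ is predictable from $V$ with error $\le p$, then $H(U\mid V) \le H(p) + p\log q = O(p\log(1/p))$; I would state and use this form.

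The technical care goes into making sure the final bound is of the form $\gamma\cdot(1-\bH(X_1\mid A_1))$ rather than just $\gamma\cdot\tau$. The trick is the asymmetric use of the two variables: recover $X_1$ from $(X_1+X_2,A_1,A_2)$ using $A_2$ to cancel $X_2$, which yields $H(X_1\mid X_1+X_2,A_1,A_2)\le H(X_2\mid A_2) + $ (a correction bounded by Fano applied to the failure of $\hat X_2$). More precisely $H(X_1 \mid X_1+X_2, A_1, A_2) = H(X_1 - \hat X_2(A_2) \mid X_1+X_2, A_1, A_2)$ and on the good event this is $0$; overall $H(X_1\mid X_1+X_2,A_1,A_2) \le H(\1[X_2\ne\hat X_2(A_2)]) + \Pr[X_2\ne \hat X_2(A_2)]\cdot\log q \le H(p_2) + p_2\log q$. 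Feeding this back gives $1-\bH(X_1+X_2\mid A_1,A_2) \ge (1-\bH(X_1\mid A_1)) + (1-\bH(X_2\mid A_2)) - \frac{H(p_2)+p_2\log q}{\log q}$; now $p_2 \le (1-\bH(X_2\mid A_2))\log q$, and $\frac{H(p_2)}{\log q} = O\!\big(p_2\log(1/p_2)/\log q\big) = O\!\big((1-\bH(X_2\mid A_2))\log(1/\tau)\big)$, so for $\tau$ small enough this correction is at most $(1-\bH(X_2\mid A_2))$ up to a factor approaching $1$ — giving $1-\bH(X_1+X_2\mid A_1,A_2)\le (1-\bH(X_1\mid A_1)) + o_\tau(1)(1-\bH(X_2\mid A_2))$. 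Applying the same argument with roles swapped bounds things by $(1-\bH(X_2\mid A_2)) + o_\tau(1)(1-\bH(X_1\mid A_1))$; taking the minimum and using $\min(a,b)\le \sqrt{ab}$-type reasoning, or simply running the asymmetric argument twice and averaging, yields $1-\bH(X_1+X_2\mid A_1,A_2)\le O(o_\tau(1))\cdot\min(1-\bH(X_1\mid A_1), 1-\bH(X_2\mid A_2)) \le \gamma(1-\bH(X_1\mid A_1))$ for $\tau$ chosen small enough depending on $\gamma$ and $q$. The main obstacle, as indicated, is precisely this last bookkeeping step: ensuring the bound is relative to $1-\bH(X_1\mid A_1)$ uniformly even when it is vanishingly small compared to $\tau$, which is what motivates the asymmetric cancellation (using $A_2$ to recover $X_1$) rather than a naive symmetric entropy-splitting.
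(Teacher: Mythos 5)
You write that ``since $1-\bH(X_i \mid A_i)$ is small, each $X_i$ is almost determined given $A_i$,'' and then invoke Lemma~\ref{lem:ML-conditional} with the claim $H(X_i\mid A_i) = (1-\bH(X_i\mid A_i))\log q$. Both assertions are false. By definition $H(X_i\mid A_i)=\bH(X_i\mid A_i)\log q$, not $(1-\bH(X_i\mid A_i))\log q$. The hypothesis $1-\bH(X_i\mid A_i)\le\tau$ means $\bH(X_i\mid A_i)\ge 1-\tau$, so the conditional entropy is \emph{close to its maximum} $\log q$: given $A_i$, the variable $X_i$ is close to uniform, i.e.\ maximally \emph{undetermined}. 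The maximum-likelihood estimator $\hat X_i(A_i)$ then errs with probability close to $1-1/q$, and Lemma~\ref{lem:ML-conditional} yields only the vacuous bound $\Pr[X_i\neq\hat X_i(A_i)]\le H(X_i\mid A_i)\approx\log q$. Consequently every downstream step that relies on a small $p_i$ (the asymmetric cancellation ``$X_1 = (X_1+X_2)-\hat X_2(A_2)$ on a high-probability event,'' the Fano-type bound $H(X_1\mid X_1+X_2,A_1,A_2)\le H(p_2)+p_2\log q$, and the final bookkeeping) collapses.

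In effect you have sketched the argument for the \emph{other} suction lemma (Lemma~\ref{lem:suc-lwr-cond}, suction at the lower end), where indeed $\bH(X_i\mid A_i)\le\tau$ is small, the $X_i$ are nearly deterministic given $A_i$, and a chain-rule plus predict-and-subtract argument via $H(X_1\mid X_1+X_2,A_1,A_2)=2\bH(X_1\mid A_1)-\bH(X_1+X_2\mid A_1,A_2)$ is exactly what the paper does. For the upper end the relevant phenomenon is different: the sum of two nearly-uniform $\F_q$-valued variables is much closer to uniform, and closeness to uniform compounds \emph{multiplicatively}. The paper's proof captures this by converting $1-\bH(X)$ to $L_2$-distance-to-uniform (Lemma~\ref{lem:kl-l2-comparable}, using Pinsker and a Taylor expansion), observing via Fourier/Parseval that non-trivial Fourier coefficients multiply under convolution (Lemma~\ref{lem:entropy-unconditional}), obtaining the multiplicative bound $1-\bH(X_1+X_2\mid A_1,A_2)\le\poly(q)\,(1-\bH(X_1\mid A_1))(1-\bH(X_2\mid A_2))$ after averaging over $A_1,A_2$ (Lemma~\ref{lem:suc-upr-mult}), and then choosing $\tau\le\gamma/\poly(q)$. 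The multiplicative structure is precisely what lets the bound be expressed relative to $1-\bH(X_1\mid A_1)$ rather than relative to $\tau$ --- the very difficulty you flag at the end of your proposal but cannot overcome with the Fano approach, because there is no small error probability to exploit here.
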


The next lemma comes analogously from the \emph{suction at the low end} (for $X_t < \tau$) condition of \cref{def:matrix-polarization}.
\begin{lemma}
	\label{lem:suc-lwr-cond}
	For every finite field $\F_q$ and every $\gamma > 0$, there exist $\tau$ such that the following holds.
	Let $(X_1, A_1)$ and $(X_2, A_2)$  be any pair of independent random variables with $X_i \in \F_q$, and such that $A_1, A_2$ are identically distributed, and moreover for every $a$ we have $\bH(X_1 ~|~ A_1 = a) = \bH(X_2 ~|~ A_2 = a)$. Then if $\bH(X_1 ~|~ A_1) = \bH(X_2 ~|~ A_2) \leq \tau$, we have
	\begin{equation*}
		H(X_1 ~|~ X_1 + X_2, A_1, A_2) \leq \gamma \bH(X_1 ~|~ A_1).
	\end{equation*}
\end{lemma}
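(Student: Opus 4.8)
The plan is to reduce the conditional statement to an unconditional one about a single low-entropy variable, and then exploit the fact that when a variable over $\F_q$ has tiny entropy, it equals its most-likely value with probability $1-o(1)$, so that the sum $X_1+X_2$ reveals a nontrivial amount of information about $X_1$ only in the rare event that at least one summand is off its mode. First I would write, using the chain rule and nonnegativity of entropy, $H(X_1 \mid X_1+X_2, A_1, A_2) = H(X_1, X_1+X_2 \mid A_1, A_2) - H(X_1 + X_2 \mid A_1, A_2) = H(X_1 \mid A_1) + H(X_2 \mid A_2, X_1) - H(X_1+X_2\mid A_1,A_2)$; since $X_2$ is independent of $(X_1,A_1)$ given $A_2$ and $A_1,A_2$ are independent this is $H(X_1\mid A_1) + H(X_2\mid A_2) - H(X_1+X_2\mid A_1,A_2)$, i.e. exactly $I(X_2; X_1+X_2 \mid A_1,A_2)$ flipped around — so the quantity to bound is $H(X_1\mid A_1) - I(X_1; X_1+X_2 \mid A_1, A_2)$, and it suffices to show $I(X_1; X_1+X_2\mid A_1,A_2) \geq (1-\gamma) H(X_1\mid A_1)$, i.e. that almost all of the entropy of $X_1$ survives in the sum.

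Next I would set up the pointwise picture. Condition on $A_1 = a_1, A_2 = a_2$. By Lemma~\ref{lem:ML-basic}, since $\bH(X_i \mid A_i = a_i)\log q = H(X_i \mid A_i=a_i)$ can be made small for ``most'' $a_i$ (in expectation it is $\leq \tau \log q$), there is a most-likely value $\hat x_i(a_i)$ with $\Pr[X_i \neq \hat x_i(a_i)\mid A_i=a_i] \leq H(X_i\mid A_i=a_i)$. The key combinatorial observation is: given $A_1=a_1, A_2=a_2$ and the value $s$ of $X_1+X_2$, the value of $X_1$ is determined unless \emph{both} of the events $\{X_1 \neq \hat x_1(a_1)\}$ and $\{X_2 \neq \hat x_2(a_2)\}$ fail to hold in the ``compatible'' way — more precisely, conditioned on $A_1,A_2$, $X_1$ is a deterministic function of $X_1+X_2$ except possibly on the event $E := \{X_1\neq\hat x_1\} \cup \{X_2 \neq \hat x_2\}$, because off that event $X_1+X_2 = \hat x_1 + \hat x_2$ and also $X_1 = \hat x_1$. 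Hence $H(X_1 \mid X_1+X_2, A_1=a_1,A_2=a_2) \leq H(X_1, \1_E \mid X_1+X_2, A_1,A_2) \leq H(\1_E) + \Pr[E]\cdot\log q \leq H(\Pr[E]) + \Pr[E]\log q$, and $\Pr[E\mid A_1=a_1,A_2=a_2] \leq H(X_1\mid a_1) + H(X_2\mid a_2)$. Averaging over $A_1,A_2$ and using the hypothesis that $\bH(X_1\mid A_1)=\bH(X_2\mid A_2)$, together with concavity of $p\mapsto H(p)$ and $H(p)\leq p\log(1/p) + p$ type bounds, I would get $H(X_1\mid X_1+X_2,A_1,A_2) \leq O\big(\bH(X_1\mid A_1)\log(1/\bH(X_1\mid A_1))\big) = o(\bH(X_1\mid A_1))$ as the total entropy $\bH(X_1\mid A_1)\leq\tau \to 0$; choosing $\tau = \tau(\gamma)$ small enough makes the ratio at most $\gamma$, which is the claim. (I would need to be a little careful moving the per-$a$ bound $H(\Pr[E\mid a_1,a_2]) + \Pr[E\mid a_1,a_2]\log q$ through a concavity/Jensen step; that's routine.)

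The main obstacle I anticipate is the averaging step: the per-symbol bound is of the form $f(p_{a_1,a_2})$ with $f(p)\approx p\log(1/p)$, which is concave, so Jensen gives an upper bound in terms of $f(\E p) = f(\E H(X_1\mid A_1) + \E H(X_2\mid A_2))$, and I must make sure the $\log(1/\cdot)$ factor doesn't spoil the linear-in-$\bH(X_1\mid A_1)$ bound I want — it doesn't, because $x\log(1/x) \leq \gamma x$ once $x \leq \tau(\gamma)$, but keeping all the $\log q$ factors and the identification of $\hat x_i$ clean requires attention. A secondary subtlety is that $\hat x_i(a_i)$ must be a measurable function of $a_i$ and that on the complement of $E$ the map $s \mapsto x_1$ really is well-defined (this uses nothing about $\F_q$ being a field beyond that $x_1 \mapsto x_1 + x_2$ is a bijection for fixed $x_2$, so the argument would in fact work for any abelian group). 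Everything else is bookkeeping with the chain rule and Lemma~\ref{lem:ML-basic}.
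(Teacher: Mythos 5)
There is a genuine gap in your argument, and it is located precisely in the step you flag as ``routine.'' When you reveal $\1_E$ and bound
\[
H\bigl(X_1 \mid X_1+X_2, A_1=a_1, A_2=a_2\bigr) \;\leq\; H(\Pr[E]) + \Pr[E]\cdot \log q,
\]
the inequality is valid, but it is asymptotically too weak to give a factor $\gamma<1$. Because $H(p)\approx p\log(1/p)$ while $\Pr[X_i\neq \hat x_i]\approx H(X_i)/\log\bigl(1/H(X_i)\bigr)$, the two logarithmic effects cancel and you end up with $H(\Pr[E])$ being a \emph{constant} multiple of $H(X_1\mid a_1)+H(X_2\mid a_2)$, not an arbitrarily small one. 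Concretely, for binary $X_1,X_2\sim\text{Bernoulli}(p)$ with small $p$: $\Pr[E]\approx 2p$, so $H(\Pr[E])\approx 2p\log(1/p)\approx 2H(X_1)$, whereas the true value is $H(X_1\mid X_1+X_2)=2H(p)-H(p\circ p)\approx 2p = 2H(X_1)/\log(1/p)=o(H(X_1))$. Relatedly, your asymptotic claim that $O\bigl(\alpha\log(1/\alpha)\bigr)=o(\alpha)$ as $\alpha\to 0$, and the patch ``$x\log(1/x)\leq\gamma x$ once $x\leq\tau(\gamma)$,'' both have the inequality backwards: $\alpha\log(1/\alpha)/\alpha=\log(1/\alpha)\to\infty$, and $x\log(1/x)\leq\gamma x$ forces $x\geq 2^{-\gamma}$, i.e.\ $x$ \emph{large}.

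The decomposition can be rescued, but you must bound $H(\1_E\mid X_1+X_2,A_1,A_2)$, not $H(\1_E)$. Conditioning on $X_1+X_2$ resolves almost all the uncertainty in $\1_E$: whenever $X_1+X_2\neq\hat x_1+\hat x_2$ you know $\1_E=1$, and the residual uncertainty is governed by the event that both summands miss their modes \emph{simultaneously} yet sum to $\hat x_1+\hat x_2$, which has probability $O(p^2)$, giving $H(\1_E\mid X_1+X_2)=O\bigl(p^2\log(1/p)\bigr)=o(H(X_1))$. Even then, you cannot apply the improved ML bound (Lemma~\ref{lem:ML}) pointwise, since $\bH(X_i\mid A_i=a_i)$ need not be small for every $a_i$; you would need a Markov-type restriction to a high-probability set of good $a$'s, exactly the role $G_A$ plays in the paper's Lemma~\ref{lem:suc-lwr-sum}. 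For comparison, the paper's route, after the same chain-rule reformulation you begin with, is to prove $\bH(X_1+X_2\mid A_1,A_2)\geq(1-\gamma)\bigl(\bH(X_1\mid A_1)+\bH(X_2\mid A_2)\bigr)$, reducing the unconditional statement to the binary case via the indicator $\df(X)=\1\{X\neq 0\}$ and a direct bias computation, then lifting to conditional entropies via the good-set argument; that route never has to compare a $p\log(1/p)$ term against a $p$ term, which is exactly where your estimate loses.
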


Finally we use the following lemma due to \cite[Lemma 4.2]{sasoglu-book}, which corresponds to the \emph{Variance in the middle} condition of \cref{def:matrix-polarization}.
This is the only place where we need the field size $q$ to be prime.

\begin{lemma}[{\cite[Lemma 4.2]{sasoglu-book}}]
	\label{lem:2x2-middle}
	For every $\tau > 0$ and prime finite field $\F_q$, there exist $\varepsilon > 0$ such that if $(X_1, A_1)$ and $(X_2, A_2)$ are independent pairs of random variables (but not necessarily identically distributed), with $X_i \in \F_q$ for some prime $q$. Then
	\begin{equation*}
		\bH(X_1 ~|~ A_1), \bH(X_2~|~ A_2) \in (\tau, 1-\tau)
	\end{equation*}
	implies
	\begin{equation*}
		\bH(X_1 + X_2|A_1, A_2) \geq \max\{\bH(X_1 ~|~ A_1), \bH(X_2 ~|~ A_2)\} + \varepsilon.
	\end{equation*}
\end{lemma}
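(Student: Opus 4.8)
The final statement to prove is Lemma~\ref{lem:2x2-middle}, which is attributed to \cite[Lemma 4.2]{sasoglu-book}, and the excerpt explicitly says the proof will not be reproduced. Nonetheless, here is how I would approach proving it from scratch.

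\textbf{Overall approach.} The plan is to establish a quantitative, compactness-flavored statement: adding an independent $\F_q$-valued random variable $X_2$ to $X_1$ strictly increases the conditional entropy unless one of the two summands is essentially deterministic or essentially uniform. The cleanest route is to first reduce to the unconditioned case and then prove a ``robust'' version of the basic fact that $H(X_1 + X_2) > \max\{H(X_1), H(X_2)\}$ whenever both summands are nontrivial and non-uniform. The conditioning is handled by averaging: writing $\bH(X_1 + X_2 \mid Y_1, Y_2) = \E_{y_1, y_2}[\bH(X_1 + X_2 \mid Y_1 = y_1, Y_2 = y_2)]$ and noting that given the conditioning the two summands are independent, we reduce to proving, for \emph{fixed} distributions $p$ of $X_1$ and $r$ of $X_2$, a bound of the form $\bH(p * r) \ge \max\{\bH(p), \bH(r)\} + f(\bH(p), \bH(r))$ where $*$ is convolution over $\F_q$ and $f$ is some explicit nonnegative function that is strictly positive when both arguments lie in a compact interval bounded away from $0$ and $1$. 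Then a careful averaging argument (using that a constant fraction of the $(y_1,y_2)$ mass must have both conditional entropies in a slightly shrunk middle interval, since the overall conditional entropies are in $(\tau, 1-\tau)$) recovers the conditional statement with a smaller but still positive $\varepsilon$.

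\textbf{Key steps, in order.} (1) Reduce to fixed distributions via the averaging/convexity argument above; here one must be slightly careful because $\bH(X_i \mid Y_i) \in (\tau, 1-\tau)$ does not force $\bH(X_i \mid Y_i = y_i)$ to be in the middle for \emph{every} $y_i$ --- but by a Markov-type argument a non-negligible fraction of the conditioning mass has both conditional entropies in, say, $(\tau/2, 1-\tau/4)$, and on that event we apply the unconditioned bound, while on the complement we use the trivial bound $\bH(X_1 + X_2 \mid \cdot) \ge \max\{\bH(X_1\mid\cdot), \bH(X_2\mid\cdot)\}$ (which is just ``adding an independent variable increases entropy,'' \eqref{eqn:adding-indep}). (2) Prove the unconditioned gap: for probability vectors $p, r$ on $\F_q$ with $\bH(p), \bH(r) \in (\tau', 1-\tau')$, show $\bH(p*r) \ge \max\{\bH(p),\bH(r)\} + \varepsilon'$ for some $\varepsilon' = \varepsilon'(\tau', q) > 0$. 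The natural way is a compactness argument: the set of pairs $(p,r)$ with $\bH(p),\bH(r) \in [\tau', 1-\tau']$ is compact, the function $(p,r) \mapsto \bH(p*r) - \max\{\bH(p),\bH(r)\}$ is continuous, so it attains a minimum, and it suffices to show this minimum is strictly positive --- i.e., that the minimum is never attained with value $0$. (3) Rule out a zero minimum: show $\bH(p*r) = \bH(p)$ (assuming WLOG $\bH(p) \ge \bH(r)$) forces $r$ to be a point mass. This is where primality of $q$ is used: the equality condition in the entropy-power/convolution inequality over $\Z/q\Z$ with $q$ prime forces one of the distributions to be a coset of a subgroup, but the only subgroups of $\Z/q\Z$ are trivial, so $r$ is either uniform (excluded since $\bH(r) < 1$) or a point mass (excluded since $\bH(r) > 0$). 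One can make this self-contained by a direct argument: if $r$ is not a point mass, pick two distinct atoms $a \ne b$ of $r$; then $p*r$ is a genuine mixture involving both the shift of $p$ by $a$ and by $b$, and strict concavity of entropy gives a strict gain unless $p$ shifted by $a$ equals $p$ shifted by $b$ as distributions, i.e., $p$ is invariant under translation by $a - b \ne 0$, which (since $a-b$ generates $\Z/q\Z$ for prime $q$) forces $p$ uniform, contradicting $\bH(p) < 1$.

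\textbf{Main obstacle.} The delicate point is step (3) --- making the equality analysis both correct and genuinely quantitative. A bare compactness argument gives existence of $\varepsilon'$ but no control, which is fine for this lemma since Local Polarization (Definition~\ref{defn:polar-local}) permits an arbitrary positive $\theta(\cdot)$; so I would lean on compactness for cleanliness rather than chasing an explicit bound. The subtlety is ensuring the ``strict concavity gain'' argument is airtight: one needs that for distinct shifts $a, b$ the distributions $p(\cdot - a)$ and $p(\cdot - b)$ genuinely differ (handled by the primality reduction above) and that the mixture weight on the ``differing'' part is bounded below on the compact set (handled because $\bH(r) > \tau'$ forces $r$ to have at least two atoms each of mass bounded below on the compact set, by a continuity/compactness argument). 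The other mild annoyance is the bookkeeping in step (1) to track how $\tau$ degrades to $\tau'$ and how $\varepsilon'$ degrades to $\varepsilon$ under averaging; this is routine but needs the observation that if $\bH(X_1 \mid Y_1) > \tau$ then $\Pr_{y_1}[\bH(X_1 \mid Y_1 = y_1) > \tau/2] > \tau/2$ (else the average would be below $\tau/2 + (1)(\tau/2) = \tau$... wait, one should instead bound: average $\le \Pr[\text{good}]\cdot 1 + \tau/2$, so $\Pr[\text{good}] \ge \tau/2$), and symmetrically near the upper end, then union-bound to get both conditional entropies simultaneously in the middle with probability $\ge \Omega(\tau)$.
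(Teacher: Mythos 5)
The paper never actually proves this lemma --- it simply cites Ş\c{a}şo\u{g}lu's book --- so I am evaluating your proposal on its own merits rather than against an in-paper argument.

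Your high-level structure is sound (average over the conditioning, quantify a fixed-distribution gap by compactness, rule out a zero minimum by the strict-concavity/primality argument), and steps (2) and (3) are correct as you outline them. The problem is in step (1). You argue that a $\Omega(\tau)$ fraction of the conditioning mass has \emph{both} $\bH(X_1 \mid Y_1 = y_1)$ and $\bH(X_2 \mid Y_2 = y_2)$ in a shrunk middle interval, getting there by two one-sided Markov estimates of the form $\Pr[\bH > \tau/2] > \tau/2$ and $\Pr[\bH < 1 - \tau/2] > \tau/2$ and then ``union-bounding.'' But you need the \emph{intersection} of these two events for each $i$, and two events each of probability $\tau/2$ can easily be disjoint (their probabilities sum to $\tau < 1$). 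Indeed the claim is simply false: take $Y_1$ uniform on $\{0,1\}$ with $X_1 \mid Y_1=0$ a point mass and $X_1 \mid Y_1=1$ uniform on $\F_q$; then $\bH(X_1\mid Y_1) = 1/2 \in (\tau, 1-\tau)$ but $\bH(X_1 \mid Y_1=y_1) \in \{0,1\}$ for every $y_1$, so \emph{no} $y_1$ lands in the middle and your ``good set'' is empty, and yet the lemma's conclusion does hold in this example (the gain comes from the $y_1$'s with entropy $0$ paired with $y_2$'s with entropy $1$).

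The fix is to weaken both step (1) and step (2) in a coordinated way. WLOG $\bH(X_1 \mid Y_1) \ge \bH(X_2 \mid Y_2)$; since $\bH(X_1 + X_2 \mid y_1, y_2) \ge \bH(X_1 \mid y_1)$ pointwise (adding an independent variable), it suffices to exhibit a set of $(y_1, y_2)$ of mass $\Omega(\tau^2)$ on which the pointwise gain is at least some $\varepsilon'(\tau) > 0$. From $\bH(X_1 \mid Y_1) < 1 - \tau$ you get $\Pr_{y_1}[\bH(X_1 \mid y_1) < 1 - \tau/2] > \tau/2$ (one-sided only), and from $\bH(X_2 \mid Y_2) > \tau$ you get $\Pr_{y_2}[\bH(X_2 \mid y_2) > \tau/2] > \tau/2$; these are events over \emph{independent} $Y_1$ and $Y_2$, so both hold with probability $> \tau^2/4$. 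Correspondingly, the fixed-distribution lemma in step (2) should be: if $\bH(p) \le 1-\delta$ and $\bH(r) \ge \delta$ then $\bH(p*r) - \bH(p) \ge \varepsilon'(\delta, q) > 0$. This is exactly what your compactness-plus-equality-analysis in step (3) delivers (nonuniformity of $p$, non-degeneracy of $r$, and primality of $q$ force strict concavity gain), and you never need $p$ bounded away from $0$ or $r$ bounded away from $1$. With this change the argument goes through.
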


\subsection{Matrix polarization of \Arikan's $2 \times 2$ kernel}
As an illustration of how the lemmas arise in the study of matrix polarization we prove that $G_2$ satisfies matrix polarization. We remark that we do not need this lemma for the rest of this paper --- we present it purely as an example.

\begin{lemma}\label{lem:g2-polarizes}
	Over every prime field $\F_q$, the matrix $G_2 = \left(\begin{smallmatrix} 1 & 0 \\ 1 & 1 \end{smallmatrix}\right)$ satisfies matrix polarization.
\end{lemma}
\begin{proof}
	Note that we have
	\[(V_1,V_2)=(U_1,U_2)\cdot \left(\begin{matrix} 1 & 0 \\ 1 & 1 \end{matrix}\right),\]
	i.e. $V_1=U_1+U_2$ and $V_2=U_2$.
	By \cref{lem:2x2-middle} we have that the choice $j=1$ satisfies the \emph{variance in the middle} condition of matrix polarization for $G_2$. By \cref{lem:suc-lwr-cond} we have the choice $j=2$ yields the \emph{suction at the low end} condition (with $c=\frac 1\gamma$) of matrix polarization for $G_2$. Finally by \cref{lem:suc-upr-cond} we have that the choice $j=1$ satisfies the \emph{Suction at the upper end} condition (with $c=\frac 1\gamma$) for $G_2$.
\end{proof}

\subsection{Polarization of $k \times k$ mixing matrices \label{sec:local-k-by-k}}

In this section we prove the following
\begin{lemma}
	\label{lem:k-by-k-polarize}
	For every prime field $\F_q$ and every positive $k$ every mixing matrix $M \in \F_q^{k \times k}$ satisfies matrix polarization.
	
\end{lemma}

We will apply Gaussian elimination on $M$ to reduce to the entropic inequalities of the $2 \x 2$ case
from \cref{sec:entropic-statements}.
The high-level strategy for showing polarization of $k \x k$ mixing matrix $M$ is as follows. Consider i.i.d. random variables $(\bvec{U}_1, {W}_1), \ldots (\bvec{U}_k, {W}_k)$, and linearly transformed variables $\bvec{V} = \bvec{U} \cdot M$, where $\bvec{U}=\parens{\bvec{U}_1,\dots,\bvec{U}_k}$.

In \cref{sec:reduction} we will show that:
\begin{enumerate}
	\item  There are some indices $j, \ell, s \in [k]$ and some $\alpha \in \F_q^*$
	for which
	$$\bH(\bvec{V}_j | \bvec{V}_{< j}, {W})
	\geq \bH(\bvec{U}_\ell + \alpha \bvec{U}_s | {W}).$$
	\item  There are some indices $j, \ell, s \in [k]$ and some $\alpha \in \F_q^*$
	for which
	$$\bH(\bvec{V}_j | \bvec{V}_{< j}, {W}) \leq \bH(\bvec{U}_\ell | \bvec{U}_\ell + \alpha \bvec{U}_s, {W}).$$
\end{enumerate}

Those two, together with entropic inequalities stated in
Section~\ref{sec:entropic-statements}, are enough to show polarization of a given matrix. 

Before we proceed with the formal proof of those two inequalities, we give an informal exposition of the main idea behind it. For the sake of this exposition, let us focus on the inequality $\bH(\bvec{V}_j | \bvec{V}_{<j}, W) \leq \bH(\bvec{U}_\ell | \bvec{U}_\ell + \alpha \bvec{U}_s, W)$, and let us skip conditioning on $W$. 

The main observation is that if $\bvec{B}_1, \ldots \bvec{B}_m, \bvec{B}_{m+1}$ are all linear combinations of variables $\bvec{V}_1, \ldots \bvec{V}_{j-1}$, we have $\bH(\bvec{V}_j | \bvec{V}_{<j}) = \bH(\bvec{V}_j + \bvec{B}_{m+1} | \bvec{V}_{<j}, \bvec{B}_1, \ldots \bvec{B}_m) \leq \bH(\bvec{V}_j + \bvec{B}_{m+1} | \bvec{B}_1, \ldots \bvec{B}_m)$. Here it is enough to instantiate this observation with $m=1$. Since variables $\bvec{V}_i$ are themselves linear combinations of variables $\bvec{U}$ (with coefficients given by matrix $M$), all we need to do is find an index $j$, and some indices $\ell, s$, such that
\begin{equation}
	\label{eq:U_l-and-U_s}
	\bvec{U}_\ell \in \bvec{V}_j + \mathrm{span}\{\bvec{V}_{<j}\}  \quad \text{and} \quad \bvec{U}_{\ell} + \alpha \bvec{U}_s \in \mathrm{span}\{\bvec{V}_{<j}\}  \ . 
\end{equation}
In particular, we use $\bvec{B}_2\in \mathrm{span}\{\bvec{V}_{<j}\}$ for the first inclusion to set $\bvec{U}_\ell = \bvec{V}_j+\bvec{B}_2$ and use $\bvec{B}_1\in \mathrm{span}\{\bvec{V}_{<j}\}$ for the  second inclusion to set $\bvec{U}_{\ell} + \alpha \bvec{U}_s=\bvec{B}_1$. Note that with the inequalities in the above paragraph would give us the desired inequality.

Turns out that if the matrix $M$ is mixing, this can be achieved by carefully applying Gaussian Elimination on this matrix, as we explain next. 

\subsection{Reduction to the $2\times 2$ case} 
\label{sec:reduction}

This section will be devoted to proving following two lemmas.

\begin{lemma}[Reduction for suction at the upper end and variance]
	\label{lem:red-upper}
	Let $(\bvec{U}, W)$ be a joint distribution where $\bvec{U}=\parens{\bvec{U}_1,\dots,\bvec{U}_k} \in \F_q^{k}$ (with $\bvec{U}_i$ for $i\in [k]$ being independent conditioned on $W$)
	and
	let $M$ be any mixing matrix.
	Then, there exist three indices $j, \ell, s \in [k]$, and $\alpha \in \F_q^*$, such that
	$$\bH((\bvec{U} M)_{j} ~|~ (\bvec{U}M)_{< j}, W) \geq \bH(\bvec{U}_\ell + \alpha \bvec{U}_s ~|~ W).$$
\end{lemma}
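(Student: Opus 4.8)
The plan is to exploit the mixing property of $M$ together with Gaussian elimination on the rows of $M$, while keeping track of how row operations affect the conditional entropies $\bH((\bvec{U}M)_j \mid (\bvec{U}M)_{<j}, W)$. The key structural observation is that for any invertible matrix, and any index $j$, we have the identity $\bH((\bvec{U}M)_j \mid (\bvec{U}M)_{<j}, W) = \bH(\langle \bvec{U}, m_j\rangle \mid \{\langle \bvec{U}, m_i\rangle\}_{i<j}, W)$ where $m_i$ denotes the $i$-th row of $M$; and crucially, conditioning on the linear forms $\{\langle \bvec{U}, m_i\rangle\}_{i<j}$ is the same as conditioning on their span. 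So if we perform row operations on the first $j-1$ rows of $M$ (replacing $m_i$ by $m_i$ plus linear combinations of the other rows among $m_1,\dots,m_{j-1}$), the conditional entropy of $\bvec{V}_j$ given $\bvec{V}_{<j}$ is unchanged. Furthermore, by the inequality~\eqref{eqn:adding-indep} (adding independent variables increases entropy) and~\eqref{eqn:transform-conditioning} (transforming conditioning), conditioning on fewer things, and replacing $m_j$ by a sub-sum of its support, can only decrease the entropy.

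First I would use the mixing hypothesis: since no row permutation of $M$ is upper-triangular, there is a last row $m_j$ (i.e., we pick $j$ to be the largest index for which the relevant obstruction occurs — concretely, choose $j$ so that, after suitable Gaussian elimination, row $m_j$ has a nonzero entry in some column not ``covered'' by the span of $m_1,\dots,m_{j-1}$, yet $m_j$ still has at least two nonzero entries when expressed appropriately). More carefully: perform Gaussian elimination from the bottom up to put $M$ in a canonical form; mixing guarantees that at some step we cannot zero out all-but-one entry of some row $m_j$ using rows above it — equivalently, the row $m_j$, reduced modulo $\mathrm{span}(m_1,\dots,m_{j-1})$, has at least two nonzero coordinates, say in positions $\ell$ and $s$ with coefficients that we can normalize so that the $\ell$-coordinate is $1$ and the $s$-coordinate is some $\alpha \in \F_q^*$. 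Then I would argue:
\begin{align*}
\bH((\bvec{U}M)_j \mid (\bvec{U}M)_{<j}, W)
&= \bH\big(\langle \bvec{U}, m_j\rangle \mid \mathrm{span}(m_1,\dots,m_{j-1})\text{-forms}, W\big)\\
&\geq \bH\big(\langle \bvec{U}, \tilde m_j\rangle \mid W, \{\bvec{U}_r : r \notin \{\ell,s\}\}\big)\\
&= \bH(\bvec{U}_\ell + \alpha \bvec{U}_s \mid W),
\end{align*}
where $\tilde m_j$ is the reduced row supported (after a further reduction) only on $\{\ell,s\}$; the first inequality uses that conditioning on the reduced forms is no stronger than additionally revealing all $\bvec{U}_r$ for $r\notin\{\ell,s\}$ and then only the $\{\ell,s\}$-part of $\langle\bvec{U},m_j\rangle$ survives (here I use~\eqref{eqn:transform-conditioning} to absorb/remove the known coordinates and deterministic-postprocessing to pass to the sub-sum), and the last equality is because $\bvec{U}_\ell,\bvec{U}_s$ are independent of $W$ — wait, they need not be, so more precisely it is just the definition of the two-variable conditional entropy with $\bvec{U}_\ell + \alpha\bvec{U}_s$ still conditioned on $W$.

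The main obstacle I anticipate is making the Gaussian-elimination / mixing argument fully rigorous: one has to verify that the mixing condition genuinely forces the existence of a row $j$ and a reduction producing a \emph{two-term} sum $\bvec{U}_\ell + \alpha\bvec{U}_s$ (not a single term, which would give the trivial bound $\bH(\bvec{U}_\ell\mid W)$ and not help, and not something needing three or more terms, which would require a different entropic inequality). The cleanest way is probably: order the columns and greedily row-reduce so that $m_i$ for $i<j$ have distinct ``pivot'' columns; mixing says we cannot make this work for all $k$ rows with pivots in increasing column order, so there is a first failure giving a row with a pivot column already used — peeling off that pivot via the earlier rows leaves a nonzero residual row, which necessarily has $\geq 2$ nonzero entries (if it had $1$, it would be a new pivot, contradiction; if it had $0$, $M$ would be singular). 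Then restrict that residual row to any two of its nonzero positions, using that dropping terms of the sum is deterministic postprocessing of $\langle\bvec{U},\tilde m_j\rangle$ given the other coordinates, which by property~(6) in the preliminaries does not increase entropy — so the inequality goes the right way. I would also double-check that the indices $\ell,s$ can be taken distinct and that $\alpha\ne 0$, both of which follow from the construction.
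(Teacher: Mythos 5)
Your plan is essentially the paper's proof: use Gaussian elimination to find the first index $j$ whose reduced column has support at least two, then the forward-elimination identity plus entropy monotonicity to isolate a two-term sum $\bvec{U}_\ell + \alpha\bvec{U}_s$. Two slips in the writeup should be fixed. First, rows and columns are swapped throughout: $(\bvec{U}M)_j = \sum_i \bvec{U}_i M_{i,j}$ is the inner product of $\bvec{U}$ with the $j$-th \emph{column} of $M$, so the elimination that leaves the conditioning $(\bvec{U}M)_{<j}$ unchanged is a \emph{column} operation among the first $j-1$ columns, not a row operation. Second, the inequality step $\bH(\langle\bvec{U},m_j\rangle \mid \text{span-forms}, W) \geq \bH(\langle\bvec{U},\tilde m_j\rangle \mid W, \{\bvec{U}_r : r \notin \{\ell,s\}\})$ and the final equality both lean on structure that is guaranteed only by the minimality of $j$. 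If $j$ is the first index at which the forward-reduced $j$-th column has support $\geq 2$, then the first $j-1$ columns after elimination are exactly $e_1, \ldots, e_{j-1}$, so conditioning on the span-forms \emph{is} conditioning on $\bvec{U}_{<j}$, and the residual column has its support inside $\{j, \dots, k\}$, so you may and should take $\ell = j$ and $s > j$. This is precisely what ensures the span-forms are determined by $\{\bvec{U}_r : r \notin \{\ell,s\}\}$ (so your inequality is valid) and that $\bvec{U}_\ell, \bvec{U}_s$ are disjoint from the conditioned coordinates. Your ``first failure'' phrasing suggests you intend this, but the argument must commit to it, and in particular commit to $\ell = j$: with ``any two'' nonzero positions and a non-minimal $j$, the chain can break.

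The last concern you flag (dropping the extra conditioning to get down to $\bH(\bvec{U}_\ell + \alpha\bvec{U}_s \mid W)$) is a real one, but the paper's own proof requires the same thing: one needs $\bvec{U}_\ell + \alpha\bvec{U}_s$ conditionally independent, given $W$, of the remaining coordinates (the paper's chain drops $\bvec{U}_{<j}$; yours would drop the larger set $\{\bvec{U}_r : r \notin\{\ell,s\}\}$). This holds in the \Arikan-martingale application because the tuples $(\bvec{U}_i, W_i)$ are i.i.d.\ and $W = (W_1, \dots, W_k)$, so the $\bvec{U}_i$'s are conditionally independent given $W$. Also, the step you attribute to ``deterministic postprocessing'' (property (6)) is better cited as the transform-conditioning identity~\eqref{eqn:transform-conditioning}: given $\{\bvec{U}_r : r\notin\{\ell,s\}\}$, passing from $\langle\bvec{U},\tilde m_j\rangle$ to the two-term sum is subtracting a known function of the conditioning, which is an equality of entropies, not merely an inequality.
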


\begin{lemma}[Reduction for suction at the lower end]
	\label{lem:red-lower}
	Let $(\bvec{U}, W)$ be a joint distribution, where $\bvec{U}=\parens{\bvec{U}_1,\dots,\bvec{U}_k} \in \F_q^k$, and let $M$ be any mixing matrix.
	Then, there exist three indices $j, \ell,s \in [k]$, and $\alpha \in \F_q^*$, such that
	$$\bH((\bvec{U} M)_j ~|~ (\bvec{U}M)_{< j}, W) \leq \bH(\bvec{U}_\ell ~|~ \bvec{U}_\ell + \alpha \bvec{U}_s , W).$$
\end{lemma}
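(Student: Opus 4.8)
The plan is to control the conditional entropies in the statement through the spans of initial segments of columns of $M$, and to induct on $k$. Write $c_1,\dots,c_k\in\F_q^k$ for the columns of $M$, set $S_j:=\mathrm{span}(c_1,\dots,c_j)$, and let $e_i\in\F_q^k$ be the $i$-th standard basis vector, so $\bvec{U}_i=\langle\bvec{U},e_i\rangle$. The starting point is that $(\bvec{U}M)_1,\dots,(\bvec{U}M)_j$ and $\{\langle\bvec{U},v\rangle:v\in S_j\}$ carry exactly the same information. This gives two elementary consequences, using only the entropy facts of Section~\ref{app:prelims}: (i) for any $v^*$ in the coset $c_j+S_{j-1}$ we have $\bH((\bvec{U}M)_j\mid(\bvec{U}M)_{<j},W)=\bH(\langle\bvec{U},v^*\rangle\mid(\bvec{U}M)_{<j},W)$, by ``Transforming Conditioning'' (we may subtract $\langle\bvec{U},v^*-c_j\rangle$, which is a function of $(\bvec{U}M)_{<j}$), so in particular if $v^*=\beta e_\ell$ with $\beta\in\F_q^*$ this equals $\bH(\bvec{U}_\ell\mid(\bvec{U}M)_{<j},W)$; and (ii) if $S_{j-1}$ contains a vector $e_\ell+\alpha e_s$ with $\alpha\in\F_q^*$, then $\bvec{U}_\ell+\alpha\bvec{U}_s$ is a function of $(\bvec{U}M)_{<j}$, so $\bH(\bvec{U}_\ell\mid(\bvec{U}M)_{<j},W)\le\bH(\bvec{U}_\ell\mid\bvec{U}_\ell+\alpha\bvec{U}_s,W)$ by ``Conditioning does not increase entropy''. (Note independence of the $\bvec{U}_i$ is not even needed for these steps; it is only used later, in Lemma~\ref{lem:suc-lwr-cond}.) Hence it suffices to exhibit $j$ and a coordinate $\ell$ so that the coset $c_j+S_{j-1}$ contains a vector supported on $\{\ell\}$ alone and $S_{j-1}$ contains some $e_\ell+\alpha e_s$ with $\alpha\ne 0$.

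For $j=k$, let $a\in\F_q^k\setminus\{0\}$ be a normal to the hyperplane $S_{k-1}$. If $a$ has at least two nonzero coordinates, pick distinct $\ell,s$ with $a_\ell,a_s\ne 0$: since $c_k\notin S_{k-1}$ we have $\langle a,c_k\rangle\ne 0$, so $\beta e_\ell\in c_k+S_{k-1}$ with $\beta:=\langle a,c_k\rangle/a_\ell\ne 0$, while $a_s e_\ell-a_\ell e_s\in S_{k-1}$ rescales to $e_\ell+\alpha e_s$ with $\alpha:=-a_\ell/a_s\ne 0$. Both requirements hold and the lemma follows for $M$ with these $j,\ell,s,\alpha$.

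It remains to treat the degenerate case where $a$ has exactly one nonzero coordinate $m$, i.e.\ $S_{k-1}=\{x:x_m=0\}$; equivalently, row $m$ of $M$ is supported on column $k$ alone. This is where the mixing hypothesis enters: deleting row $m$ and column $k$ yields a $(k-1)\times(k-1)$ matrix $M'$ which is invertible, and which is again mixing --- if some row-permutation of $M'$ were upper triangular, placing the deleted row $m$ at the bottom would make the corresponding row-permutation of $M$ upper triangular (its last row is supported on column $k\ge k$, hence on/above the diagonal). For every $j\le k-1$, row $m$ contributes $0$ to column $j$, so $(\bvec{U}M)_j=(\bvec{U}_{-m}M')_j$ where $\bvec{U}_{-m}\in\F_q^{k-1}$ is $\bvec{U}$ with coordinate $m$ removed; applying the lemma for dimension $k-1$ to $M'$ and $(\bvec{U}_{-m},W)$ produces indices $j,\ell,s\in[k-1]$ and $\alpha\in\F_q^*$ that translate verbatim back to indices in $[k]\setminus\{m\}$, proving the claim for $M$. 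The induction bottoms out: when $k=2$, mixing forces both entries of $c_1$, hence both entries of the normal $a$, to be nonzero, so the degenerate case cannot occur; and whenever the degenerate case does occur we have $k\ge 3$, so $M'$ is again a mixing matrix of size at least $2$.

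I expect the main obstacle to be conceptual rather than computational: one must recognize that the only situation obstructing the clean $j=k$ argument (the first $k-1$ columns spanning a coordinate hyperplane) is exactly the situation where $M$ has a row supported on a single column, and that the mixing property is precisely what prevents this peeling-off from cascading below $k=2$. Everything else is bookkeeping with the handful of elementary entropy identities recalled in Section~\ref{app:prelims}.
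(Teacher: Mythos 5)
Your proof is correct, and it takes a genuinely different route from the paper's. The paper first row-permutes $M$ to $M'$ so that the pivots of a column-wise Gaussian elimination sit on the diagonal, then works with the eliminated matrices $\Mat{j}$ and chooses $j$ to be the \emph{last} index where $\mathrm{span}(c_1,\dots,c_{j-1})$ is not exactly $\mathrm{span}\{e_1,\dots,e_{j-1}\}$; with that choice the $j$-th column of $\Mat{j}$ is $e_j$ and some earlier column is $e_\ell+\alpha e_j$, and one passes directly to the entropy bound. Your argument fixes $j=k$ from the start, bypasses the $\Mat{j}$ machinery entirely, and instead examines the normal vector $a$ to the hyperplane $S_{k-1}$: in the generic case ($a$ with $\geq 2$ nonzero coordinates) you read off both required vectors from $a$, and in the degenerate case ($a$ a multiple of some $e_m$, equivalently a row of $M$ supported on column $k$ alone) you peel off that row and column and induct, with the mixing hypothesis preventing this from cascading past $k=2$. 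What each buys: the paper's choice of $j$ is engineered so that the degenerate case never arises, avoiding the induction, and its setup (the row permutation and $\Mat{j}$) is already in place because it is shared with the proof of Lemma~\ref{lem:red-upper}; your version is more self-contained, uses only the two elementary observations about when $(\bvec{U}M)_{<j}$ equals or refines the desired conditioning, and isolates the role of mixing crisply (it is exactly what rules out the degenerate case at $k=2$ and keeps $M'$ mixing after a peel). One small remark for the record: while the lemma statement does not require $\ell\neq s$, the later application in the proof of Theorem~\ref{thm:triangle-local} uses that the $\ell$-th and $s$-th blocks are independent samples, so it is worth noting that your construction (like the paper's) always delivers $\ell\neq s$ — you pick them distinct in the generic case, and the inductive case preserves distinctness.
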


As discussed previously, in order to show \cref{lem:red-lower} and \cref{lem:red-upper}, we will apply Gaussian Elimination to prove the following three lemmas about mixing matrices.

We start with a simple equivalent characterization of a mixing matrix: 
\begin{lemma}
	\label{lem:mixing-equiv}
	Invertible matrix $M$ is mixing if and only if there exists $j$ such that the support of the first $j$ columns has size greater than $j$.
\end{lemma}
\begin{proof}
	We need to prove that if we let $S_j = \{i \in [k] | \text{ exists } j' \in [j] \mbox{ s.t. } M_{i,j'}\ne 0\}$ then there exists a $j$ s.t. $|S_j| > j$. To see this note that $|S_j|$ is invariant under permutation of the rows, and for upper triangular matrices $|S_j|\leq j$. So if $M$ is not mixing then for all $j$ we have $|S_j| \leq j$. Conversely, if for every $j$ we have $|S_j|\leq j$, then either we have $|S_j| < j$ for some $j$, and in which case $M$ is not invertible, or $|S_j| = j$ for every $j$, in which case we can find a permutation $\pi:[k]\to [k]$ such that for every $j$ $S_j = \{\pi(1),\ldots,\pi(j)\}$. Permuting the rows so that $\pi(j)$ is the $j$th row makes $M$ upper triangular and so once again we get $M$ is not mixing.
\end{proof}

We will now state the linear-algebraic properties of a mixing matrices that correspond directly to the entropic inequalities in \cref{lem:red-upper} and \cref{lem:red-lower}. Specifically, it will not be too difficult to deduce \cref{lem:red-upper} from \cref{lem:red-upper-mat} as we discussed before --- the crux of the argument is that $\bH((\bvec{U} M)_{j} ~|~ (\bvec{U}M)_{< j}, W) = \bH((\bvec{U} M)_j + \bvec{B} ~|~ (\bvec{U} M)_{<j}, W)$ where $\bvec{B}$ is some linear combination of variables $(\bvec{U} M)_{1}, \ldots (\bvec{U} M)_{j-1}$, and~\cref{lem:red-upper-mat} describes how to find suitable $\bvec{B}$. \cref{lem:red-lower-mat} plays the same role in the proof of \cref{lem:red-lower}.

\begin{lemma}
	\label{lem:red-upper-mat}
	Let $M$ be a mixing matrix, and let $\bvec{a}_1, \ldots \bvec{a}_k \in \F_q^k$ denote columns of $M$. Then there exists index $j$ and a vector $\bvec{v} \in \bvec{a}_j + \mathrm{span}\{\bvec{a}_1, \ldots \bvec{a}_{j-1}\}$, such that $|\supp(\bvec{v})| \geq 2$ and $\supp(\bvec{v}) \cap \supp(\bvec{a}_i) = \emptyset$ for $i < j$, where $\supp(\bvec{v})$ is a set of non-zero coordinates of $\bvec{v}$.
\end{lemma}
\begin{proof}
	Let $S_i = \cup_{t \leq i} \supp(\bvec{a}_t)$. By \cref{lem:mixing-equiv}, this means that there exist a $j$ such that $|S_j|>j$.
	
	Consider the smallest $j$ satisfying $|S_j| > j$.  By a straightforward inductive argument for any $k <j$, we have $\mathrm{span}\{\bvec{a}_1, \ldots \bvec{a}_{k} \} = \mathrm{span} \{ \bvec{e}_\ell : \ell \in S_{k} \}$, where $\bvec{e}_i$ are the standard basis vectors. Now, we can decompose $\bvec{a}_j = \bvec{v} + \bvec{w}$ where $\supp(\bvec{w}) \subseteq S_{j-1}$ and $\supp(\bvec{v}) \cap S_{j-1} = \emptyset$. Since $|S_{j-1}| = j-1$ and $|S_{j}| > j$, we have $|\supp(\bvec{v})| \geq  2$, and by construction $\bvec{w} \in \mathrm{span}\{\bvec{e}_\ell : \ell \in S_{j-1}\} = \mathrm{span}\{\bvec{a}_1, \ldots \bvec{a}_{j-1}\}$.
\end{proof}
\begin{lemma}
	\label{lem:red-lower-mat}
	Let $M$ be a mixing matrix, and let $\bvec{a}_1, \ldots \bvec{a}_k \in \F_q^k$ denote columns of $M$. Then, there exists three indices $j,\ell,s \in [k]$ and $\alpha_1, \alpha_2 \in \F_q^*$, such that $\alpha_1 \bvec{e}_\ell \in \bvec{a}_j + \mathrm{span}\{\bvec{a}_1, \ldots \bvec{a}_{j-1}\}$ and $\bvec{e}_\ell + \alpha_2 \bvec{e}_s \in \mathrm{span}\{\bvec{a}_1, \ldots \bvec{a}_{j-1}\}$, where $\bvec{e}_i \in \F_q^k$ are the standard basis vectors.
\end{lemma}

In our proof of this lemma we will use a process which is essentially the well-known Gaussian elimination applied to a matrix $M$. Specifically the following proposition captures the properties of intermediate matrices in a Gaussian elimination process useful for our argument.

\begin{proposition}
	\label{prop:column-span}
	For any $k \times k$ invertible matrix $M$ there is a permutation $\pi : [k] \to [k]$ and a sequence of matrices $M^{(1)}, \ldots M^{(k)}$ (we call matrix $\Mat{j}$ the $j$-th step matrix) with the following properties.
	
	For any $j$, if we use $\bvec{a}_1, \ldots \bvec{a}_k$ to denote columns of $M$, $\bvec{a}'_1, \ldots \bvec{a}'_k$ to denote columns of $M^{(j)}$, and $\bvec{e}_1, \ldots \bvec{e}_k$ to denote standard basis vectors, we have
	\begin{enumerate}
		\item For every $s \in [k]$ we have $\bvec{a}'_s \in  \bvec{a}_s + \mathrm{span}\{\bvec{a}_1, \ldots \bvec{a}_{j}\}$.
		\item For every $s \in [k]$ and every $\ell \leq j$ we have $\inprod{\bvec{a}'_s, \bvec{e}_{\pi(\ell)}} = \left\{\begin{array}{ll} 1 & \mathrm{ if }\, \ell = s \\ 0 & \mathrm{otherwise.} \end{array}\right.$
		\item $\mathrm{span}\{\bvec{a}_1, \ldots, \bvec{a}_{j}\} = \mathrm{span}\{\bvec{a}'_1, \ldots \bvec{a}'_{j}\}.$
	\end{enumerate}
\end{proposition}
For example, if $j=3$, then $\Mat{3}$ up to some row permutation $\pi$ must have the form:
\begin{equation}
	\label{eq:M-3-example}
	\Mat{3} =
	\begin{bmatrix}
		1 		& 	0	& 	0 & 0 & \hdots\\
		0 		& 	1 	& 	0 & 0 & \hdots\\
		0       &   0   &	1&  0 & \hdots\\
		\star   & 	\star 	&	\star & \star & \hdots\\
		\star 	& 	\star	&	\star & \star & \hdots
	\end{bmatrix},
\end{equation}
where each column of $M^{(3)}$ is a corresponding column of $M$ shifted by some linear combination of the first three columns of $M$.

\begin{proof}[Proof of~\cref{prop:column-span}]
	The proof proceeds by induction. For the base case we consider $M^{(0)}=M$ and it is easy to verify the properties for $M$.
	
	Let $j\ge 1$. For the inductive hypothesis, we assume a matrix $M^{(j-1)}$ satisfying properties above, and a one-to-one map $\pi : [j-1] \to [k]$. For the inductive step, we want to find $M^{(j)}$, and $\pi(j)$ as in the statement of this proposition. Note that at the end of the induction, when $j=k$ the one-to-one map $\pi$ is also onto and hence $\pi:[k]\to [k]$ is a permutation as needed.
	
	Let $\bvec{a}^{(j-1)}_1, \ldots \bvec{a}^{(j-1)}_{k}$ denote columns of $M^{(j-1)}$. Since $M$ is invertible, we have $\bvec{a}_j \not \in \mathrm{span}\{\bvec{a}_1, \ldots \bvec{a}_{j-1}\}$. Using properties 1 and 3 for $\bvec{M}^{(j-1)}$, we conclude that $\bvec{a}_j^{(j-1)} \not\in \mathrm{span}\{\bvec{a}^{(j-1)}_1, \ldots \bvec{a}^{(j-1)}_{j-1}\}$. In particular, this implies that $\bvec{a}_j^{(j-1)} \not= \bvec{0}$. Let $\pi(j)$ be such that $\inprod{\bvec{a}^{(j-1)}_j, \bvec{e}_{\pi(j)}} \not= 0$. Note that $\pi(j) \not= \pi(s)$ for any $s < j$ by property 2 of the matrix $M^{(j-1)}$, and therefore $\pi$ is a one-to-one mapping.
	
	For $s \not= j$ let us take $\bvec{a}'_s := \bvec{a}^{(j-1)}_s - \frac{\inprod{\bvec{a}^{(j-1)}_{s}, \bvec{e}_{\pi(j)}}}{\inprod{\bvec{a}^{(j-1)}_{j}, \bvec{e}_{\pi(j)}}} \bvec{a}^{(j-1)}_j$, and finally $\bvec{a}'_j := \frac{1}{\inprod{\bvec{a}^{(j-1)}_{j}, \bvec{e}_{\pi(j)}}} \bvec{a}^{(j-1)}_j$.
	
	Next, we verify properties 1--3 hold for matrix $M^{(j)}$ given by the columns $\bvec{a}'_1, \ldots \bvec{a}'_k$ defined above.
	
	Indeed, the first property holds since for any $s$, we have $\bvec{a}'_s = \bvec{a}^{(j-1)}_s + \gamma \bvec{a}^{(j-1)}_j$ where $\gamma$ is some scalar. By induction, we have $\bvec{a}^{(j-1)}_s \in \bvec{a}_s + \mathrm{span}\{\bvec{a}_1, \ldots \bvec{a}_{j-1}\}$, and $\bvec{a}^{(j-1)}_j \in \mathrm{span}\{\bvec{a}_1, \ldots, \bvec{a}_j\}$, therefore  $\bvec{a}'_s \in \bvec{a}_s + \mathrm{span}\{\bvec{a}_1, \ldots \bvec{a}_j\}$.
	
	To show the second property, for any $s \not= j$ we have
	\begin{equation*}
		\inprod{\bvec{a}'_s, \bvec{e}_{\pi(\ell)}} = \inprod{\bvec{a}^{(j-1)}_s, \bvec{e}_{\pi(\ell)}} - \frac{\inprod{\bvec{a}^{(j-1)}_{s}, \bvec{e}_{\pi(j)}}}{\inprod{\bvec{a}^{(j-1)}_{j}, \bvec{e}_{\pi(j)}}} \inprod{\bvec{a}^{(j-1)}_j, \bvec{e}_{\pi(\ell)}}.
	\end{equation*}
	If $\ell < j$, by induction we have $\inprod{\bvec{a}^{(j-1)}_j, \bvec{e}_{\pi(\ell)}} = 0$, hence the second term vanish, and we have $\inprod{\bvec{a}'_s, \bvec{e}_{\pi(\ell)}} = \inprod{\bvec{a}^{(j-1)}_s, \bvec{e}_{\pi(\ell)}}$, which again by induction is $1$ if $s = \ell$ and $0$ otherwise.
	For $\ell = j$ we have $\inprod{\bvec{a}'_s, \bvec{e}_{\pi(\ell)}} = \inprod{\bvec{a}^{(j-1)}_s, \bvec{e}_{\pi(\ell)}} - \inprod{\bvec{a}^{(j-1)}_{s}, \bvec{e}_{\pi(j)}} = 0$. Further,  when $s = j$, we have $\inprod{\bvec{a}'_j, \bvec{e}_{\pi(\ell)}} = \frac{\inprod{\bvec{a}_j^{(j-1)}, \bvec{e}_{\pi(\ell)}}}{\inprod{\bvec{a}_j^{(j-1)}, \bvec{e}_{\pi(j)}}}$ is $1$ exactly when $\ell = j$ and $0$ when $\ell < j$ (where the latter claim follows from property 2 for $\Mat{j-1}$.
	
	Finally, for the third property, the inclusion $\mathrm{span}\{ \bvec{a}'_1, \ldots \bvec{a}'_j\} \subset \mathrm{span}\{\bvec{a}_1, \ldots \bvec{a}_j\}$ follows from the property 1, and the $\dim \mathrm{span}\{\bvec{a}'_1, \ldots \bvec{a}'_j\} = j$ by property $2$, which implies $\mathrm{span}\{\bvec{a}'_1, \ldots \bvec{a}'_j\} = \mathrm{span}\{\bvec{a}_1, \ldots \bvec{a}_j\}$.
\end{proof}

\begin{proof}[Proof of Lemma~\ref{lem:red-lower-mat}]
	By~\cref{lem:mixing-equiv} a matrix $M$ is mixing if for some index $i$ the support of the first $i$ columns has size strictly greater than $i$. Let $j-1$ be the largest index with this property, and note that $j \leq k$ (since all the $k$ columns trivially have support size of $k$).
	
	Let $\Mat{j-1}$ be the $(j-1)$-th step matrix of $M$ defined as in \cref{prop:column-span}.

	By definition of $j$, the span of the first $j$ columns of $M$ must exactly equal
	$\text{span}\{\bvec{e}_{\pi(1)}, \dots, \bvec{e}_{\pi(j )}\}$, since the total support of all those columns has size exactly $j$, the columns are linearly independent, and each of $\pi(1), \ldots \pi(j)$ is in this support by property 2 and 3 of matrix $M^{(j)}$.

	Thus, all of the first $j$ columns of $\Mat{j-1}$ can only be supported on coordinates $\{\pi(1), \dots, \pi(j)\}$.
	Further, by the second property in \cref{prop:column-span} of the $(j-1)$-th step matrix, the $j$-th column of $\Mat{j-1}$ has zero on all coordinates $\pi(s)$ for $s < j$. Thus, it must be of form $\alpha_1 \bvec{e}_{\pi(j)}$ for some scalar $\alpha_1 \not= 0$ (since otherwise the $j$-th column of $\Mat{j-1}$ would be $\bvec{0}$, which would contradict the fact that $M$ is invertible). 

	Finally, because total support of the first $(j-1)$ columns is larger than $j-1$,
	there must exist some column $\ell < j$ of $\Mat{j-1}$
	that is supported on the coordinate $\pi(j)$.
	This, along with the second property of $M^{(j-1)}$ in \cref{prop:column-span} implies that  the $\ell$-th column
	of $\Mat{j-1}$ must be exactly $(\bvec{e}_{\pi(\ell)} + \beta_2 \bvec{e}_{\pi(j)})$ for some $\beta_2 \in \F_q^*$.
	
	We can now conclude the statement of the Lemma.  We have shown that the $j$-th column of $\Mat{j-1}$ is of form $\alpha_1 \bvec{e}_{\pi(j)}$, and by the first property of $\Mat{j-1}$ in \cref{prop:column-span} it is contained in $\bvec{a}_j + \mathrm{span}\{\bvec{a}_1, \ldots, \bvec{a}_{j-1}\}$. This proves the first part of the statement of the Lemma (by using $\ell\gets\pi(j)$). The argument for the second part is as follows.
	We have shown that on one hand the $\ell$-th column of $\Mat{j-1}$ is of form $\bvec{e}_{\pi(\ell)} + \beta_2 \bvec{e}_{\pi(j)}$, on the other hand it is contained in the span of first $j-1$ columns of the matrix $M$ by the first property of $\Mat{j-1}$ in \cref{prop:column-span}. In other words, we have $\beta_2^{-1}\cdot \bvec{e}_{\pi(\ell)}+ \bvec{e}_{\pi(j)}$ is in the span of first $j-1$ columns of the matrix $M$. This shows the second part of the statement of the Lemma (by picking $\alpha_2\gets\beta_2^{-1}$ and $s\gets \pi(\ell)$ and recalling that in the first part we have already set $\ell\gets \pi(j)$). 
	\end{proof}
	
With Lemmas~\ref{lem:red-upper-mat} and \ref{lem:red-lower-mat} in hand, we are well equipped to show Lemmas~\ref{lem:red-upper} and \ref{lem:red-lower} accordingly.

\begin{proof}[Proof of~\cref{lem:red-upper}]
	Let $\bvec{a}_1, \ldots \bvec{a}_k$ be columns of matrix $M$. By \cref{lem:red-upper-mat} there is an index $j$, and a vector $\bvec{v} = \bvec{a}_j + \bvec{w}$ where $\bvec{w} \in \mathrm{span}\{\bvec{a}_1, \ldots \bvec{a}_{j-1}\}$ such that $\supp(\bvec{v}) \cap \supp(\bvec{a}_i) = \emptyset$ for each $i < j$. 
	
	This implies 
	\begin{align*}
		\bH( (\bvec{U}M)_j\,|\, (\bvec{U}M)_{<j}, W) & = \bH(\inprod{\bvec{U}, \bvec{a}_j} \,|\, \inprod{\bvec{U}, \bvec{a}_1}, \ldots, \inprod{\bvec{U}, \bvec{a}_{j-1}}, W) \\
		& = \bH( \inprod{\bvec{U}, \bvec{a}_j} + \inprod{\bvec{U}, \bvec{w}} \, | \, \inprod{\bvec{U}, \bvec{a}_1}, \ldots \inprod{\bvec{U}, \bvec{a}_{j-1}}, W) 
		\tag{Since $\bvec{w} \in \mathrm{span}\{\bvec{a}_1, \ldots \bvec{a}_{j-1}\}$}\\
		& = \bH( \inprod{\bvec{U}, \bvec{v}} \, | \,  W).
	\end{align*}
	where the last equality follows from the fact that $\inprod{ \bvec{U}, \bvec{v}}$ is independent from $\inprod{\bvec{U}, \bvec{a}_1}, \ldots \inprod{\bvec{U}, \bvec{a}_{j-1}}$ conditioned on $W$ (since $\bvec{v}$ has disjoint support with all $\bvec{a}_i$ for $i < j$).
	
	Now, since $|\supp(\bvec{v})| > 2$, let us say that $\bvec{v} = \alpha_\ell \bvec{e}_\ell + \alpha_2 \bvec{e}_s + \bvec{r}$, where $\supp(\bvec{r}) \cap \{\ell, s\} = \emptyset$. We have 
	\begin{align*}
		\bH(\inprod{\bvec{U}, \bvec{v}} \, | \, W) & 
		= \bH(\alpha_1 \bvec{U}_\ell + \alpha_2 \bvec{U}_s + \inprod{\bvec{U}, \bvec{r}} \, | \,  W)\\
		& \geq \bH(\alpha_1 \bvec{U}_\ell + \alpha_2 \bvec{U}_s + \inprod{\bvec{U}, \bvec{r}} \, | \, \inprod{\bvec{U}, \bvec{r}}, W)\tag{Since conditioning does not increase entropy}\\
		& = \bH(\bvec{U}_\ell + \alpha_1^{-1} \alpha_2 \bvec{U}_s \, | \, \inprod{\bvec{U}, \bvec{r}}, W)\tag{Since $x\mapsto \alpha_1^{-1}\cdot x$ is a bijection}\\
		& = \bH(\bvec{U}_\ell + \alpha_1^{-1} \alpha_2 \bvec{U}_s \, | \, W), 
	\end{align*}
	where the last equality follows since $\supp(\bvec{r}) \cap \{\ell, s\} = \emptyset$. The proof is complete by setting $\alpha=\alpha_1^{-1}\alpha_2$.
\end{proof}

\begin{proof}[Proof of~\cref{lem:red-lower}]
	Let $\bvec{a}_1, \ldots \bvec{a}_k$ be columns of matrix $M$.
	By \cref{lem:red-lower-mat}, there are indices $j,\ell,s$ and $\alpha_1,\alpha_2 \in \F_q^*$, such that $\alpha_1\cdot\bvec{e}_\ell = \bvec{a}_j + \bvec{w}$ 
	where $\bvec{w} \in \mathrm{span}\set{\bvec{a}_1, \ldots \bvec{a}_{j-1}}$, and $\bvec{e}_\ell + \alpha_2 \bvec{e}_s \in \mathrm{span}\set{\bvec{a}_1, \ldots \bvec{a}_{j-1}}$.
	
	This implies
	\begin{align*}
		\bH((\bvec{U} M)_j \, | \, (\bvec{U} M)_{<j}, W)
		& = \bH(\inprod{\bvec{U}, \bvec{a}_j} \,|\, \inprod{\bvec{U}, \bvec{a}_1}, \ldots, \inprod{\bvec{U}, \bvec{a}_{j-1}}, W) \\
		& = \bH(\inprod{\bvec{U}, \alpha_1\bvec{e}_\ell} - \inprod{\bvec{U}, \bvec{w}} \,|\, \inprod{\bvec{U}, \bvec{a}_1}, \ldots, \inprod{\bvec{U}, \bvec{a}_{j-1}}, \inprod{\bvec{U}, \bvec{e}_\ell + \alpha_2 \bvec{e}_s}, W)
		\tag{Since $\bvec{e}_\ell + \alpha_2 \bvec{e}_s$ is in $\mathrm{span}\{\bvec{a}_1, \ldots \bvec{a}_{j-1}\}$}\\
		& = \bH(\inprod{\bvec{U}, \alpha_1\bvec{e}_\ell} \,|\, \inprod{\bvec{U}, \bvec{a}_1}, \ldots, \inprod{\bvec{U}, \bvec{a}_{j-1}}, \inprod{\bvec{U}, \bvec{e}_\ell + \alpha_2 \bvec{e}_s}, W) 
		\tag{Since $\bvec{w}$ is in $\mathrm{span}\{\bvec{a}_1, \ldots \bvec{a}_{j-1}\}$}\\
		& \leq \bH(\inprod{\bvec{U}, \alpha_1\bvec{e}_\ell} \,|\, \inprod{\bvec{U}, \bvec{e}_\ell + \alpha_2 \bvec{e}_s}, W) 
		\tag{Additional conditioning decreases entropy.}\\
		& = \bH(\alpha_1\bvec{U}_\ell \, | \, \bvec{U}_\ell + \alpha_2 \bvec{U}_s, W)\\
		& = \bH(\bvec{U}_\ell \, | \, \bvec{U}_\ell + \alpha_2 \bvec{U}_s, W),
	\end{align*}
	where the last equality follows since $\alpha_1\in\F_q^*$ and hence the map $x\mapsto \alpha_1\cdot x$ is a bijection.
\end{proof}

We are now ready to prove that every mixing matrix is a polarizing matrix.
\begin{proof}[Proof of \cref{lem:k-by-k-polarize}]
	The proof follows easily by combining \cref{lem:red-upper,lem:2x2-middle,lem:suc-upr-cond,lem:red-lower,lem:suc-lwr-cond} as we elaborate on below.
	
	Let $M \in \F_q^{k\times k}$ be a mixing matrix, and let $(\bvec{U}, W)$ be random variables such that  $\bvec{U} = (\bvec{U}_1,\dots,\bvec{U}_k) \in \F_q^k$, $W = (W_1,\ldots,W_k)$ is supported on some finite set, and the pairs  $(\bvec{U}_i,W_i)$ are independently and identically distributed for $i \in [k]$. Further let the vector $\bvec{V} := \bvec{U} \cdot M$. 
	
	For the {\em Variance in the middle} condition we need to show that there exists $j \in [k]$ such that for every $\tau >0$ there exists $\eps=\eps(\tau) > 0$ such that
	if $\bH(\bvec{U}_1 | W) \in (\tau, 1-\tau)$, then
	$\bH((\bvec{V})_j | \bvec{V}_{< j}, W)  \geq \bH(\bvec{U}_1 | W) + \eps$. By \cref{lem:red-upper} we have that there exist $j,\ell,s\in[k]$ and $\alpha \in \F_q^*$ such that \begin{equation}\label{eqn:polar-verify-one}
	\bH((\bvec{V})_j | \bvec{V}_{< j}, W)  \geq \bH(\bvec{U}_\ell + \alpha\bvec{U}_s | W) =\bH(\bvec{U}_\ell + \alpha\bvec{U}_s | W_\ell,W_s),
	\end{equation}
	where the equality above uses the fact that the $(\bvec{U}_i,W_i)$ pairs are independent.
	By \cref{lem:2x2-middle} applied with $X_1 = \bvec{U}_\ell$, $A_1 = W_\ell$, $X_2 = \alpha \bvec{U}_s$ and $A_2 = W_s$ we conclude that for every $\tau >0$ there exists $\eps>0$ such that 
	$$\bH(\bvec{U}_\ell + \alpha\bvec{U}_s | W_\ell,W_s) \geq \max\{\bH(\bvec{U}_\ell|W_\ell),\bH(\alpha\bvec{U}_s|W_s)\}+\eps =  \bH(\bvec{U}_1|W_1)+\eps = \bH(\bvec{U}_1|W)+\eps,$$
	where the first equality uses the fact that the map $\alpha\bvec{U}_s\mapsto\bvec{U}_s$ is invertible  the fact and that the $(\bvec{U}_i,W_i)$ pairs are identically distributed and the final equality uses the fact that these pairs are independent. 
	The Variance in the middle condition follows by combining the two steps above.
	
	For suction at the high end we need to show there is some index $j \in [k]$ such that for every $c < \infty$, there  exists $\tau > 0$ such that if
	$\bH(\bvec{U}_1 | W) > 1- \tau$ then $1 - \bH(\bvec{V}_j | \bvec{V}_{< j}, W)  \leq \frac{1}{c}(1-\bH(\bvec{U}_1 | W))$. Here again by \cref{lem:red-upper} we have that there exist $j,\ell,s\in[k]$ and $\alpha \in \F_q^*$ such that \cref{eqn:polar-verify-one} holds. Now applying \cref{lem:suc-upr-cond} to $X_1,A_1,X_2,A_2$ as in the previous paragraph and $\gamma = 1/c$ we get that there exists $\tau > 0$ such that the requirement for suction at the higher end holds.

	Finally, for suction at the lower end we need to show there is some index $j \in [k]$ such that for every $c < \infty$, there  exists $\tau > 0$ such that if
	$\bH(\bvec{U}_1 | W) < \tau$ then $\bH(\bvec{V}_j | \bvec{V}_{< j}, W)  \leq \frac{1}{c}~\bH(\bvec{U}_1 | W)$. We first apply \cref{lem:red-lower} to get that there exist
    $j,\ell,s\in[k]$ and $\alpha \in \F_q^*$ such that 
    $$
	\bH((\bvec{V})_j | \bvec{V}_{< j}, W)  \leq \bH(\bvec{U}_\ell | \bvec{U}_\ell + \alpha\bvec{U}_s , W) =\bH(\bvec{U}_\ell | \bvec{U}_\ell + \alpha\bvec{U}_s,  W_\ell,W_s).$$
	Now applying \cref{lem:suc-lwr-cond} with $X_1,A_1,X_2,A_2$ and $\gamma$ as in the previous paragraph we get that 
	$$\bH(\bvec{U}_\ell | \bvec{U}_\ell + \alpha\bvec{U}_s,  W_\ell,W_s) \leq \frac1c \bH(\bvec{U}_\ell | W_\ell) = \frac1c \bH(\bvec{U}_1 | W_1) = \frac1c \bH(\bvec{U}_1 | W).$$
\end{proof}

This concludes our analysis of the reductions.

\subsection{Proof of \cref{thm:triangle-local}}
\label{subsec:thm-1.14}
For completeness and easy reference, we restate \cref{thm:triangle-local} below and include its proof.

\thmtrianglelocal*

\begin{proof}[Proof of \cref{thm:triangle-local}]
	Let $M\in \F_q^{k \times k}$ be a mixing matrix. By \cref{lem:k-by-k-polarize} we have that $M$ satisfies matrix polarization. Now by   \cref{lem:polarizing-matrix-implies-martingale} we have that 
	for every symmetric memoryless channel
	$\C_{Y|Z}$ over $\F_q$, the \Arikan\ martingale associated with $M$ and $\C_{Y|Z}$ is locally polarizing.
\end{proof}

\section{Proofs of Entropic Lemmas}
\label{sec:entropic-proofs}
We now turn to the entropic lemmas stated and used  in \cref{sec:entropic-statements}.

\subsection{Suction at the upper end}

To establish \cref{lem:suc-upr-cond}, we will first show similar kind of statement for unconditional entropies. To this end, we first show that for random variables taking values in \emph{small} set, having entropy close to maximal is essentially the same as being close to uniform with respect to $L_2$ distance. The $L_2$ distance of a probability distribution to uniform is controlled by the sum of squares of non-trivial Fourier coefficients of the distribution, and all the non-trivial Fourier coefficients are significantly reduced after adding two independent variables close to the uniform distribution.

Finally a simple averaging argument is sufficient to lift this result to conditional entropies, establishing \cref{lem:suc-upr-cond}.

\begin{lemma}
	\label{lem:kl-l2-comparable}
	If $X \in \F_q$ is a random variable with a distribution $\cD_X$, then
	\[d_2(\cD_X, U)^2 \frac{1}{2 \log q} \leq 1 - \bH(X) \leq d_2(\cD_X, U)^2 \Oh(q^2),\]
	where $U$ is a uniform distribution over $\F_q$, and $d_p(\cD_1, \cD_2) := \left(\sum_{x \in \F_q} (\cD_1(x) - \cD_2(x))^p\right)^{1/p}$.
\end{lemma}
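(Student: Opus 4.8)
The plan is to translate $1-\bH(X)$ into the Kullback--Leibler divergence of $\cD_X$ from the uniform distribution, and then sandwich that divergence by $d_2(\cD_X,U)^2$. Writing $p_x := \cD_X(x)$ and $u := 1/q$, the base identity is
\[
H(X) = \sum_x p_x\log\tfrac1{p_x} = \log q - \sum_x p_x\log\tfrac{p_x}{u} = \log q - D(\cD_X\,\|\,U),
\]
where $D(\cdot\,\|\,\cdot)$ is the KL divergence measured in bits. Dividing by $\log q$ gives $1-\bH(X) = D(\cD_X\,\|\,U)/\log q$, so it suffices to prove $\tfrac12 d_2(\cD_X,U)^2 \le D(\cD_X\,\|\,U) \le \Oh(q\log q)\,d_2(\cD_X,U)^2$. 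It is convenient to pass to the natural-log divergence $D_e := \ln 2 \cdot D$ and note that $\log q = \ln q/\ln 2 \ge 1$ for every prime $q\ge 2$, so the bits/nats conversion costs only absolute constants.

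For the upper bound I would use the elementary inequality $(1+t)\ln(1+t) \le t + t^2$, valid for all $t \ge -1$; this follows from a short calculus argument (the function $g(t):=t^2+t-(1+t)\ln(1+t)$ satisfies $g(-1)=g(0)=0$, and inspecting $g'$ shows $g\ge0$ on $[-1,\infty)$), and the point is that it holds over the entire admissible range of deviations, not just near $0$. Parametrizing $p_x = u(1+\epsilon_x)$ with $\epsilon_x \ge -1$ and $\sum_x \epsilon_x = 0$, we get
\[
D_e(\cD_X\,\|\,U) = u\sum_x (1+\epsilon_x)\ln(1+\epsilon_x) \le u\sum_x (\epsilon_x + \epsilon_x^2) = u\sum_x \epsilon_x^2,
\]
whereas $d_2(\cD_X,U)^2 = \sum_x (p_x-u)^2 = u^2\sum_x \epsilon_x^2$. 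Hence $D_e(\cD_X\,\|\,U) \le \tfrac1u\,d_2(\cD_X,U)^2 = q\,d_2(\cD_X,U)^2$, and dividing by $\ln 2 \cdot \log q \ge \ln 2$ yields $1-\bH(X) \le \tfrac{q}{\ln 2}\,d_2(\cD_X,U)^2 = \Oh(q)\,d_2(\cD_X,U)^2$.

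For the lower bound I would invoke Pinsker's inequality, $D_e(\cD_X\,\|\,U) \ge \tfrac12\|\cD_X-U\|_1^2$, together with the monotonicity of $\ell_p$-norms, $\|\cD_X-U\|_1 \ge \|\cD_X-U\|_2 = d_2(\cD_X,U)$; this immediately gives $D_e(\cD_X\,\|\,U) \ge \tfrac12 d_2(\cD_X,U)^2$, hence $D(\cD_X\,\|\,U) \ge \tfrac{1}{2\ln 2}d_2(\cD_X,U)^2 \ge \tfrac12 d_2(\cD_X,U)^2$, and therefore $1-\bH(X) = D(\cD_X\,\|\,U)/\log q \ge d_2(\cD_X,U)^2/(2\log q)$. (Alternatively one can avoid citing Pinsker and bound $r\ln r - r + 1$ from below directly, but this only clutters the argument.) The statement has essentially no hard step: the only things requiring care are the bookkeeping of the logarithm base and verifying the elementary inequality $(1+t)\ln(1+t)\le t+t^2$ on all of $[-1,\infty)$ rather than merely for small $t$, since $\epsilon_x$ can be as large as $q-1$.
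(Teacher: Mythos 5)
Your proof is correct and follows essentially the same route as the paper's: both derive the lower bound from Pinsker's inequality combined with $\ell_1\ge \ell_2$ monotonicity, and both establish the upper bound by parametrizing $p_x=(1+\delta_x)/q$, using $\sum_x\delta_x=0$ to kill the linear term, and bounding the logarithm by a linear-plus-quadratic expression. The one place you improve on the paper's execution is in the upper bound: the paper invokes a Taylor remainder bound $|\cE(\delta)|\le 2\delta^2$ that is only valid for $|\delta|<1$ and then handles the case where some $\delta_x\ge 1$ separately (observing that $d_2(\cD_X,U)\ge 1/q$ makes the inequality trivially true); you instead use the global inequality $(1+t)\ln(1+t)\le t+t^2$ for all $t\ge -1$, which covers the full range of admissible deviations in one stroke and eliminates the case split. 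You are also more explicit about the natural-log versus base-$2$ bookkeeping, which the paper glosses over (it writes $\log(1+\delta_i)=\delta_i+\cE(\delta_i)$ with $\log=\log_2$, where the leading coefficient should really be $1/\ln 2$; this doesn't hurt them because the linear term sums to zero, but your treatment is cleaner).
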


\begin{proof}
	Pinskers inequality \cite{pinsker64} yields $d_1(\cD_X, U) \leq \sqrt{2\log q} \cdot \sqrt{1 - \bH(X)}$, and by standard relations between $\ell_p$ norms, we have $d_2(\cD_X, U) \leq d_1(\cD_X, U)$, which after rearranging yields the bound $d_2(\cD_X, U)^2 \leq (2 \log q)(1 - \bH(X))$, which in turn proves the claimed lower bound.
	
	For the upper bound, given $i \in \F_q$ let us take $\delta_i$ such that $\cD_X(i) \stackrel{\text{def}}{=} \P\parens{X=i} = \frac{1 + \delta_i}{q}$. Note that this implies (along with the fact that $\sum_{i\in\F_q} \cD_X(i)=1)$:
	\begin{equation}
		\label{eq:sum-delta-i}
		\sum_{i \in \F_q} \delta_i = 0
	\end{equation} 
	and 
	\begin{equation}
		\label{eq:L-2-dist}
		d_2(\cD_X, U)^2 = \frac{1}{q^2} \sum_i \delta_i^2.
	\end{equation}
	Now
	\begin{equation*}
		1 - \bH(X) = 1+\frac{1}{\log q} \sum_{i \in \F_q}\frac{(1 + \delta_i)}{q} \log \parens{\frac{1 + \delta_i}q}=\frac{1}{\log q} \sum_{i \in \F_q} \frac{(1 + \delta_i)}{q} \log (1 + \delta_i),
	\end{equation*}
	where the second equality follows from \cref{eq:sum-delta-i}.

	By Taylor expansion we have $\log (1 + \delta_i) = \delta_i + \cE(\delta_i)$ with some error term $\cE(\delta_i)$ such that $|\cE(\delta_i)| \leq 2 \delta_i^2$ for $|\delta_i| < 1$. Therefore in the case when all $\delta_i < 1$, we have (for some constant $C$):
	\begin{align*}
		1 - \bH(X) & = \frac{1}{q \log q} \sum_{i \in F_q} (1 + \delta_i)(\delta_i + \cE(\delta_i)) \\
		& \leq \frac{1}{q \log q} \sum_{i \in F_q} [\delta_i + \delta_i^2 + \Oh(\delta_i^2)] \\
		& \leq \frac{1}{q \log q}\left[ \sum_{i \in F_q} \delta_i +  C \sum_{i \in F_q} \delta_i^2\right] \\
		& \leq C q \cdot d_2(\cD_X, U)^2,
	\end{align*}
	where the inequality follows from \cref{eq:sum-delta-i} and \cref{eq:L-2-dist}.
	If some $\delta_i \ge 1$, then the inequality is satisfied trivially: $d_2(\cD_X, U) \ge \frac{1}{q}$, hence $1 - \bH(X) \le q^2 \cdot d_2(\cD_X, U)^2$.
\end{proof}

\begin{lemma}
	\label{lem:entropy-unconditional}
	If $X, Y \in \F_q$ are independent random variables, then $1 - \bH(X + Y) \leq \poly(q) (1 - \bH(X))(1 - \bH(Y))$.
\end{lemma}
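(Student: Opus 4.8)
The plan is to pass through Fourier analysis on $\F_q$ and exploit the fact, established in Lemma~\ref{lem:kl-l2-comparable}, that up to $\poly(q)$ factors the quantity $1 - \bH(X)$ behaves like $d_2(\cD_X, U)^2$, the squared $L_2$ distance of the law of $X$ to uniform. So it suffices to prove the $L_2$ analogue, namely $d_2(\cD_{X+Y}, U)^2 \leq q \cdot d_2(\cD_X, U)^2 \cdot d_2(\cD_Y, U)^2$, and then sandwich this between the two directions of Lemma~\ref{lem:kl-l2-comparable} (the lower bound applied to $X$ and to $Y$, the upper bound applied to $X+Y$), losing only polynomial-in-$q$ factors.

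First I would fix the additive characters $\chi_t(x) = \omega^{tx}$ with $\omega = e^{2\pi i/q}$, $t \in \F_q$, and set $\widehat{\cD}(t) := \E_{x \sim \cD}[\omega^{tx}]$, so that $\widehat\cD(0) = 1$ always while $\widehat U(t) = 0$ for every $t \neq 0$. Plancherel then gives $d_2(\cD, U)^2 = \sum_x (\cD(x) - U(x))^2 = \tfrac1q \sum_{t \neq 0} |\widehat\cD(t)|^2$. Since $X$ and $Y$ are independent, the law of $X+Y$ is the convolution $\cD_X * \cD_Y$, so $\widehat{\cD_{X+Y}}(t) = \widehat{\cD_X}(t)\,\widehat{\cD_Y}(t)$ for all $t$.

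The key estimate is then immediate: for $t \neq 0$ we have $|\widehat{\cD_{X+Y}}(t)|^2 = |\widehat{\cD_X}(t)|^2\,|\widehat{\cD_Y}(t)|^2 \leq \big(\max_{s \neq 0} |\widehat{\cD_X}(s)|^2\big)\,|\widehat{\cD_Y}(t)|^2$, and $\max_{s\neq 0}|\widehat{\cD_X}(s)|^2 \leq \sum_{s\neq 0}|\widehat{\cD_X}(s)|^2 = q\,d_2(\cD_X, U)^2$. Summing over $t \neq 0$ and dividing by $q$ yields $d_2(\cD_{X+Y}, U)^2 \leq q\,d_2(\cD_X, U)^2\,d_2(\cD_Y, U)^2$. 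Finally, Lemma~\ref{lem:kl-l2-comparable} gives $1 - \bH(X+Y) \leq O(q)\,d_2(\cD_{X+Y}, U)^2 \leq O(q^2)\,d_2(\cD_X, U)^2\,d_2(\cD_Y, U)^2 \leq O(q^2)(2\log q)^2 (1 - \bH(X))(1 - \bH(Y))$, which is the claimed bound with $\poly(q) = O(q^2 \log^2 q)$.

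There is no serious obstacle here; the only things to get right are the bookkeeping of the $\poly(q)$ losses across the two directions of Lemma~\ref{lem:kl-l2-comparable} and fixing the Fourier normalization so that convolution becomes pointwise multiplication and Plancherel carries the correct $1/q$. (One could in principle avoid Fourier analysis and argue directly via a smoothing/convexity estimate on the convolution, but the Fourier route is the cleanest and makes transparent why two independent copies shrink the deviation from uniform quadratically.)
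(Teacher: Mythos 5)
Your proposal is correct and follows essentially the same route as the paper: reduce to the $L_2$ statement via Lemma~\ref{lem:kl-l2-comparable}, pass to Fourier coefficients, use the convolution theorem, and bound the resulting sum of products of squared Fourier coefficients by a product of sums (Parseval). The only cosmetic differences are that you go through $\max_{s\neq 0}|\widehat{\cD_X}(s)|^2$ as an intermediate step (the paper bounds the sum of products directly by the product of sums) and that you correctly write $|\widehat\cD(t)|^2$ where the paper's notation elides the modulus.
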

\begin{proof}
	By \cref{lem:kl-l2-comparable} it is enough to show that $d_2(\cD_{X+Y}, U)^2 \leq \poly(q) d_2(\cD_X, U)^2 d_2(\cD_Y, U)^2$.
	For a distribution $\cD_X$, consider a Fourier transform of this distribution given by $\hat{\cD}_X(k) = \E_{j \sim \cD_X} \omega^{jk}$, where $\omega = \exp(-2 \pi i/q)$. As usual, we have $\hat{\cD}_{X+Y}(k) = \hat{\cD}_X(k) \hat{\cD}_Y(k)$.
	
	Moreover, by Parseval's identity we will show that 
	\[d_2(\cD_X, U)^2  = \frac{1}{q}\sum_{k \not= 0} \hat{\cD}_X(k)^2.\] 
	Indeed --- as in the proof of \cref{lem:kl-l2-comparable}, define $\cD_X(i) =:  \frac{1 + \delta_i}{q}$. Then by Parseval's identity we have
	\[\frac{1}{q}\cdot\sum_{k\in\F_q} \hat{\cD}_X(k)^2 = \sum_{i\in\F_q} \frac{(1+\delta_i)^2}{q^2} =\frac 1{q^2}\parens{\sum_{i\in\F_q} 1 +2\sum_{i\in \F_q} \delta_i +\sum_{i\in\F_q} \delta_i^2}\frac{1}{q} +d_2(\cD_X, U)^2,\]
	which implies the claimed bound by noting that $\hat{\cD}_X(0)=1$. (In the above the last equality follows from \cref{eq:sum-delta-i} and \cref{eq:L-2-dist}.
	
	This yields
	\begin{align*}
		d_2(\cD_{X+Y}, U)^2 & = \frac 1q\cdot \sum_{k \not= 0} \hat{\cD}_X(k)^2 \hat{\cD}_Y(k)^2  \\
		& \frac 1q \cdot\leq \left(\sum_{k \not=0} \hat{\cD}_X(k)^2\right)\left(\sum_{k\not=0} \hat{\cD}_Y(k)^2\right) = q d_2(\cD_X, U)^2 d_2(\cD_Y, U)^2. \qedhere
	\end{align*}
\end{proof}

\begin{lemma}
	\label{lem:suc-upr-mult}
	Let $X_1, X_2 \in \F_q$ be a pair of random variables, and let $A_1, A_2$ be pair of discrete random variables, such that $(X_1, A_1)$ and $(X_2, A_2)$ are independent. Then
	\begin{equation*}
		1 - \bH(X_1 + X_2 | A_1, A_2) \leq (1 - \bH(X_1 | A_1))(1 - \bH(X_2 | A_2)) \poly(q).
		\label{}
	\end{equation*}
\end{lemma}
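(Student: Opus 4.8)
The plan is to reduce to the unconditional statement, Lemma~\ref{lem:entropy-unconditional}, by conditioning on the values of $A_1$ and $A_2$ and then averaging. The key observation is that since the pairs $(X_1, A_1)$ and $(X_2, A_2)$ are independent, conditioned on any outcome $A_1 = a_1, A_2 = a_2$ the variables $X_1$ and $X_2$ remain independent random variables over $\F_q$. Hence Lemma~\ref{lem:entropy-unconditional} applies pointwise.

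Concretely, first I would fix $a_1 \in \mathrm{Support}(A_1)$ and $a_2 \in \mathrm{Support}(A_2)$ and apply Lemma~\ref{lem:entropy-unconditional} to the independent pair $(X_1 \mid A_1 = a_1)$, $(X_2 \mid A_2 = a_2)$, obtaining
\[
1 - \bH(X_1 + X_2 \mid A_1 = a_1, A_2 = a_2) \leq \poly(q)\,\bigl(1 - \bH(X_1 \mid A_1 = a_1)\bigr)\bigl(1 - \bH(X_2 \mid A_2 = a_2)\bigr).
\]
Here I am using that the distribution of $X_1 + X_2$ conditioned on $(A_1, A_2) = (a_1, a_2)$ is the convolution of the two conditional distributions, which is exactly the setting of that lemma.

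Next I would take expectations over $(a_1, a_2) \sim (A_1, A_2)$. By definition of conditional entropy, $\bH(X_1 + X_2 \mid A_1, A_2) = \E_{a_1, a_2}\bigl[\bH(X_1 + X_2 \mid A_1 = a_1, A_2 = a_2)\bigr]$, so the left-hand side becomes $1 - \bH(X_1 + X_2 \mid A_1, A_2)$. On the right-hand side, since $A_1$ and $A_2$ are independent (a consequence of $(X_1, A_1) \perp (X_2, A_2)$), the joint expectation of the product factors:
\[
\E_{a_1, a_2}\bigl[(1 - \bH(X_1 \mid A_1 = a_1))(1 - \bH(X_2 \mid A_2 = a_2))\bigr] = \bigl(1 - \bH(X_1 \mid A_1)\bigr)\bigl(1 - \bH(X_2 \mid A_2)\bigr),
\]
again using the definition of conditional entropy termwise. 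Combining the two displays yields the claimed bound.

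There is essentially no substantive obstacle here: the only point requiring a moment's care is the factorization of the expectation, which relies on $A_1$ and $A_2$ being independent — and this is immediate from the hypothesis that $(X_1, A_1)$ and $(X_2, A_2)$ are independent pairs. Everything else is the standard "condition, apply the pointwise inequality, average back" pattern, so the proof should be only a few lines once Lemma~\ref{lem:entropy-unconditional} is in hand.
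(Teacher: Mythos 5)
Your proposal is correct and follows essentially the same approach as the paper: condition on $(A_1, A_2) = (a_1, a_2)$, apply Lemma~\ref{lem:entropy-unconditional} pointwise (valid since $X_1, X_2$ remain independent given the conditioning), and then average back, using the independence of $A_1$ and $A_2$ to factor the resulting expectation into the product of conditional entropies. The paper's proof is the same argument written out with explicit sums over $a_1, a_2$ rather than expectation notation.
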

\begin{proof}
	We have
	\begin{eqnarray*}
		\lefteqn{1 - \bH(X_1 + X_2 | A_1, A_2)}\\
		&=&  \sum_{a_1, a_2} \P(A_1 = a_1) \P(A_2 = a_2) (1 - \bH(X_1 + X_2|A_1 = a_1, A_2= a_2))  \\
		&\leq& \poly(q) \sum_{a_1, a_2} \P(A_1 = a_1) \P(A_1 = a_1) (1 - \bH(X_1 | A_1 = a_1, A_2 = a_2))(1 - \bH(X_2 | A_1 = a_1, A_2 = a_2)) \\
		& =& \poly(q) \sum_{a_1, a_2} \P(A_1 = a_1) (1 - \bH(X_1 | A_1 = a_1)) \P(A_2 = a_2) (1 - \bH(X_2 | A_2 = a_2)) \\
		& =& \poly(q) \left(\sum_{a_1} \P(A_1 = a_1) (1 - \bH(X_1 | A_1 = a_1))\right)\left(\sum_{a_2} \P(A_2 = a_2) (1 - \bH(X_2 | A_2 = a_2))\right) \\
		& =& \poly(q) (1 - \bH(X_1~|~A_1)) (1 - \bH(X_2~|~A_2)),
		\label{}
	\end{eqnarray*}
	where the inequality follows from \cref{lem:entropy-unconditional} and the second equality follows from independence of $(X_1,A_1)$ and $(X_2,A_2)$.
\end{proof}

We are now ready to prove \cref{lem:suc-upr-cond}.

\begin{proof}[Proof of \cref{lem:suc-upr-cond}]
	Given $\gamma$, $q$, take $\tau = \gamma/P(q)$ where $P(q)$ is the polynomial appearing in the statement of \cref{lem:suc-upr-mult}. By applying the conclusion of \cref{lem:suc-upr-mult}, we have
	\begin{eqnarray*}
		1 - \bH(X_1 + X_2 | A_1, A_2)
		& \leq & (1 - \bH(X_1 ~|~ A_1))(1 - \bH(X_2 ~|~ A_2) P(q) \\
		& \leq & (1 - \bH(X_1 ~|~ A_1)) \tau P(q) \\
		& = & \gamma (1 - \bH(X_1 ~|~ A_1)). \qedhere
	\end{eqnarray*}
\end{proof}

\subsection{Suction at the lower end}
In this subsection will show \cref{lem:suc-lwr-cond}. To this end, we want to show that for pairs $(X_1, A_1)$ and $(X_2, A_2)$ with low conditional entropy $\bH(X_1 ~|~ A_1) < \tau, \bH(X_2 ~|~ A_2) < \tau$, the entropy of the sum is almost as big as sum of corresponding entropies, i.e. $\bH(X_1 + X_2 ~|~ A_1, A_2) \geq (1 - \gamma)(\bH(X_1 ~|~ A_1) + \bH(X_2 ~|~ A_2))$ --- and the statement of \cref{lem:suc-lwr-cond} will follow (as we show later) by application of chain rule. To this end, we first show the same type of statement for non-conditional entropies, i.e. if $\bH(X_1) < \tau, \bH(X_2) < \tau$, then $\bH(X_1 + X_2) > (1 - \gamma)(\bH(X_1) + \bH(X_2))$ --- this fact can be deduced  by reduction to the analogous fact for binary random variables, where it becomes just a simple computation. Then we proceed by lifting this statement to the corresponding statement about conditional entropies --- this requires somewhat more effort than in \cref{lem:suc-upr-cond}.

\begin{lemma}
	\label{lem:suc-lwr}
	Let $X, Y$ be independent random variables in $\F_q$.
	For any $\gamma < 1$, there exists $\alpha=\alpha(\gamma)$ such that:
	if $\bH(X) \leq \alpha$ and $\bH(Y) \leq \alpha$, then
	$$\bH(X+Y) \geq (1-\gamma)(\bH(X) + \bH(Y)).$$
\end{lemma}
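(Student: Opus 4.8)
The plan is to reduce this $\F_q$ statement to an elementary inequality between \emph{binary} entropies, exploiting the fact that a low-entropy $\F_q$-valued variable behaves like a biased coin recording whether or not it takes its most likely value. Concretely, let $\hat x,\hat y$ be the most likely values of $X,Y$ and put $a := \Pr[X \neq \hat x]$, $b := \Pr[Y \neq \hat y]$. By Lemma~\ref{lem:ML-basic}, $a \le H(X) = \bH(X)\log q \le \alpha \log q$, and likewise $b \le \alpha\log q$, so $a,b$ are small once $\alpha$ is. Since $\1[X\neq\hat x]$ is a function of $X$, the chain rule gives $H(X) = H(\1[X\neq\hat x]) + H(X \mid \1[X\neq\hat x])$; the first term is $H(a)$, and conditioned on $X \neq \hat x$ the variable $X$ ranges over $q-1$ values, so the second term is at most $a\log(q-1)$. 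Hence $H(X) \le H(a) + a\log(q-1)$, and symmetrically for $Y$.

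\textbf{Step 2 (control the error indicator of the sum).}
Let $E$ be the event $\{X+Y \neq \hat x + \hat y\}$. Since $E$ is determined by $X+Y$, deterministic postprocessing gives $H(X+Y) \ge H(\Pr[E])$. Conditioning on $X$ and using that $\hat x + \hat y - x \neq \hat x$ whenever $x \neq \hat x$ (this uses only the group structure of $\F_q$), one computes $\Pr[X+Y = \hat x+\hat y] \le (1-a)(1-b) + ab$, so, writing $s := a+b$,
\[
\Pr[E] \;\ge\; a + b - 2ab \;\ge\; s(1-2s),
\]
the second inequality using $ab \le s^2/4$. Note that $a+b-2ab$ is exactly the bias of the XOR of two independent coins of biases $a$ and $b$ — this is the point at which the problem becomes ``binary.'' Also $\Pr[E] \le a + b = s \le 1/2$ for $\alpha$ small, so monotonicity of binary entropy gives $H(X+Y) \ge H(\Pr[E]) \ge H\big(s(1-2s)\big)$.

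\textbf{Step 3 (the binary computation).}
Combining Steps 1 and 2, it suffices to show that for $s = a+b$ small enough (the case $s=0$ is trivial),
\[
H\!\big(s(1-2s)\big) \;\ge\; (1-\gamma)\Big[H(a) + H(b) + s\log(q-1)\Big].
\]
For the left side use $H(t) \ge t\log(1/t)$, giving $H(s(1-2s)) \ge s(1-2s)\log\tfrac1s$. For the right side use $H(a)+H(b) \le 2H(s/2)$ (concavity of $H$, equivalently the log-sum inequality) together with $(1-t)\log\tfrac1{1-t} \le t/\ln 2$, giving $H(a)+H(b)+s\log(q-1) \le s\big(\log\tfrac1s + C_q\big)$ with $C_q := 1 + \tfrac1{\ln 2} + \log(q-1)$. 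Dividing by $s>0$, the required inequality becomes $(\gamma - 2s)\log\tfrac1s \ge (1-\gamma)C_q$, which holds as soon as $2s \le \gamma/2$ and $\log\tfrac1s \ge 2(1-\gamma)C_q/\gamma$. Both hold if we take, say, $\alpha := \tfrac{1}{2\log q}\min\big\{\tfrac{\gamma}{4},\; 2^{-2(1-\gamma)C_q/\gamma}\big\}$, which finishes the proof.

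\textbf{Main obstacle.}
There is no deep difficulty here; the only real care is in the bookkeeping of the lower-order terms — the additive $a\log(q-1)$ from Step 1, the multiplicative $(1-2s)$ loss in $\Pr[E]$ from Step 2, and the $(1-t)\log\tfrac1{1-t}$ contributions — and in verifying that all of these are negligible against the dominant $s\log\tfrac1s$ term. This is precisely what forces $\alpha$ to depend on $q$ (and not only on $\gamma$) and to be exponentially small in $1/\gamma$. The unconditional statement above is the crux; it is then lifted to conditional entropies (Lemma~\ref{lem:suc-lwr-cond}) by an averaging argument over the conditioning variable together with the chain rule, in the same spirit as — though somewhat more delicate than — the argument used for \emph{suction at the upper end}.
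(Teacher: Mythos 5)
Your proof is correct, and it takes essentially the same route as the paper's: both reduce the $\F_q$ statement to a binary computation by tracking the indicator of whether each variable hits its most-likely value, lower-bound $H(X+Y)$ by the entropy of the error event for the sum, and finish by comparing $s\log(1/s)$-type expressions in the small-$s$ regime. The only real difference is packaging: the paper factors the argument into Lemmas~\ref{lem:ML},~\ref{lem:delta}, and~\ref{lem:suc-lwr-binary} and compares $H(\df(X+Y))$ with $H(\df(X)\oplus\df(Y))$, whereas you inline the binary calculation directly in terms of $s=a+b$ and absorb the $a\log(q-1)$ gap between $H(X)$ and $H(\df(X))$ into the final inequality rather than appealing to a separate two-sided comparison lemma.
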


First, we will show some preliminary useful lemmas.
\begin{assumption}
	In the following,
	without loss of generality, let $0$ be the most likely value for both random variables $X, Y$. (This shifting does not affect entropies).
\end{assumption}

\begin{lemma}
	\label{lem:ML}
	Let $X$ be a random variable over $\F_q$, such that $0$
	is the most-likely value of $X$.
	Then for any $q$ and any $\gamma < 1$ there exists
	$\alpha_2(q,\gamma) > 0$ such that
	$$\bH(X) \leq \alpha_2(q,\gamma) \implies \Pr[X \neq 0] \leq \gamma \bH(X).$$
\end{lemma}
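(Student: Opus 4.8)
The plan is to prove the contrapositive-style bound directly by estimating $\bH(X)$ from below in terms of $\gamma := \Pr[X \neq 0]$. Write $p_0 = \Pr[X = 0] = 1 - \gamma$ and let $p_i$ for $i \in \F_q \setminus \{0\}$ be the remaining probabilities, so $\sum_{i \neq 0} p_i = \gamma$. We want to show that whenever $\bH(X)$ is sufficiently small (below $\exp(-1/\gamma)$), one has $\gamma \leq \gamma \bH(X)$ reinterpreted correctly — i.e.\ $\Pr[X \neq 0] \leq \gamma \bH(X)$ — so really we want $\bH(X) \geq \Pr[X\neq 0]/\gamma$ under the hypothesis. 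The cleanest route: show that $H(X) \geq \gamma \log(1/\gamma)$ always (in bits), hence $\bH(X) = H(X)/\log q \geq \gamma \log(1/\gamma)/\log q$, and then observe that if $\bH(X) \leq \exp(-1/\gamma)$ this forces $\gamma$ itself to be tiny, which makes $\log(1/\gamma)/\log q \geq 1/\gamma \cdot (\text{something})$...

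Actually the more robust approach is: first establish the elementary inequality $H(X) \geq \gamma \log(1/\gamma)$. This follows because $H(X) = p_0\log(1/p_0) + \sum_{i\neq 0} p_i \log(1/p_i) \geq \sum_{i \neq 0} p_i \log(1/p_i) \geq \sum_{i\neq 0} p_i \log(1/\gamma) = \gamma \log(1/\gamma)$, using $p_i \leq \gamma$ for each $i \neq 0$ and that $x \log(1/x) \geq 0$ on $[0,1]$. (This is exactly the computation already carried out inside the proof of Lemma~\ref{lem:ML-basic}, so I can cite that.) Therefore $\bH(X) \geq \gamma \log(1/\gamma)/\log q$. Now suppose $\bH(X) \leq \alpha_2(\gamma) := \exp(-1/\gamma)$. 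Combining, $\gamma \log(1/\gamma)/\log q \leq \exp(-1/\gamma)$, i.e.\ $\log(1/\gamma) \leq (\log q/\gamma)\exp(-1/\gamma)$. The right side tends to $0$ as $\gamma \to 0$ while the left tends to $\infty$, so this inequality forces $\gamma$ to be below some absolute threshold; in that regime $\log(1/\gamma) \geq 1/\gamma$ fails but $\log(1/\gamma)$ is large, and in particular $\log(1/\gamma) \geq 1$, giving $\bH(X) \geq \gamma/\log q$. Hmm — this gives $\Pr[X\neq0] = \gamma \leq \bH(X)\log q$, which is just Lemma~\ref{lem:ML-basic} again and not the claimed $\gamma \leq \gamma\bH(X)$.

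So the real content must be: $\Pr[X \neq 0] \leq \bH(X) \cdot \bH(X)$-ish, i.e.\ a \emph{quadratic} improvement. Reconsidering: we have $H(X) \geq \gamma\log(1/\gamma)$ (bits), so $\bH(X) \geq \frac{\gamma \log(1/\gamma)}{\log q}$. We want $\gamma \leq \gamma\, \bH(X)$, equivalently $\bH(X) \geq 1$ — false — so the statement as literally written cannot be the goal; rather the intended reading is surely $\Pr[X\neq 0] \leq \gamma \cdot \bH(X)$ is a typo for a bound with the \emph{parameter} $\gamma$ on the right, i.e.\ $\Pr[X\neq 0] \le \gamma' \bH(X)$ where we can make the multiplicative constant $\gamma'$ as small as we like by shrinking the entropy. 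So the correct target is: for the given small $\gamma$, whenever $\bH(X) \le \exp(-1/\gamma)$, we get $\Pr[X\ne 0] \le \gamma\,\bH(X)$. Now the argument works: from $\bH(X) \ge \frac{\gamma\log(1/\gamma)}{\log q}$ we get $\gamma \le \frac{\bH(X)\log q}{\log(1/\gamma)}$, and since $\bH(X)\le \exp(-1/\gamma)$ we have $\log(1/\gamma) \ge$ (solving the transcendental constraint) a quantity that is at least $\log q / \gamma$ once $\gamma$ is small enough — indeed $\bH(X) \le e^{-1/\gamma}$ combined with $\bH(X) \ge \gamma\log(1/\gamma)/\log q$ gives $\log(1/\gamma) \le \frac{\log q}{\gamma} e^{-1/\gamma} \le \frac{\log q}{\gamma}\cdot\frac{\gamma}{\log q}\cdot(\ldots)$... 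The main obstacle is pinning down this transcendental bookkeeping cleanly; I would handle it by: (i) noting $x \le e^{-1/\gamma} \Rightarrow \log(1/x) \ge 1/\gamma$, apply with $x = \bH(X)$; (ii) combining $1/\gamma \le \log(1/\bH(X))$ with $\bH(X) \ge \gamma\log(1/\gamma)/\log q$ to get $\Pr[X\ne0] = \gamma \le \frac{\bH(X)\log q}{\log(1/\gamma)}$; (iii) bounding $\log(1/\gamma)$ from below using $\gamma \le e^{-1/\bH(X) \cdot(\ldots)}$... I expect the cleanest packaging is to prove it by contradiction: assume $\Pr[X\ne 0] > \gamma\bH(X)$ and $\bH(X)\le e^{-1/\gamma}$ simultaneously, derive $\bH(X) \ge \Pr[X\ne0]\log(1/\Pr[X\ne0])/\log q > \gamma\bH(X)\log(1/(\gamma\bH(X)))/\log q$, hence $\log q/\gamma > \log(1/(\gamma\bH(X))) \ge \log(1/\bH(X)) \ge 1/\gamma$ by the entropy hypothesis, i.e.\ $\log q > 1$ — wait, that's consistent, not a contradiction. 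The genuinely delicate point is squeezing out the factor $\gamma$, and I would resolve it by choosing $\alpha_2(\gamma) = \exp(-1/\gamma)$ precisely so that $\log(1/\bH(X)) \ge 1/\gamma \ge \log q/\gamma^{1/2}$-type inequalities close; finalizing those constants is the one routine-but-fiddly step I'd need to grind through with care.
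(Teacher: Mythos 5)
Your core inequality --- $H(X) \geq \beta\log(1/\beta)$ where $\beta := \Pr[X\neq 0]$, hence $\beta \leq \bH(X)\log q / \log(1/\beta)$ --- is exactly the paper's starting point, and it correctly isolates where the extra factor of $\gamma$ must come from: a lower bound on $\log(1/\beta)$. But you never actually supply that lower bound, and the attempts you sketch at it go in circles. Your proof-by-contradiction ends with ``that's consistent, not a contradiction,'' and in step (iii) you reach for $\beta \le e^{-1/\bH(X)\cdot(\ldots)}$, which is a much stronger statement than is available and not the right handle. You then explicitly defer ``finalizing those constants'' as a step you ``would need to grind through with care''; this is precisely where the content of the lemma lives, so the proof is not complete.

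The missing step is short and uses Lemma~\ref{lem:ML-basic} a second time, this time as a \emph{crude} upper bound on $\beta$: since $\beta \leq \bH(X)\log q$, we have $\log(1/\beta) \geq \log(1/\bH(X)) - \log\log q$. Plugging this into $\beta \leq \bH(X)\log q/\log(1/\beta)$ gives $\beta \leq \bH(X)\log q/\bigl(\log(1/\bH(X)) - \log\log q\bigr)$, and this is $\leq \gamma\,\bH(X)$ as soon as $\log(1/\bH(X)) > \tfrac{\log q}{\gamma} + \log\log q$, which is guaranteed by taking the entropy threshold small enough. That closes the argument. A side issue that contributed to your difficulty: you reuse the symbol $\gamma$ both for $\Pr[X\neq 0]$ and for the lemma's multiplicative parameter; keeping them distinct (say $\beta$ and $\gamma$, as the paper does) removes the apparent paradox (``want $\bH(X)\geq 1$'') you ran into early on.
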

\begin{proof}
	Let $\beta := \Pr[X \neq 0]$, and $\alpha := \bH(X)$.
	We have
	$$
	\alpha \log q = H(X) \geq H(\bar\delta(X))
	= H(\beta)
	\geq \beta\log(1/\beta).
	$$
	In the above the inequality follows from the fact that applying a deterministic fuction to a random variable can only decrease its entropy.
	Thus,
	\begin{align*}
		\Pr[X \neq 0] = \beta &\leq \frac{\alpha \log q}{\log(1/\beta)}\\
		&\leq \frac{\alpha \log q}{\log(1/\alpha) - \log \log q}
	\end{align*}
	where we used the fact that $\beta \leq \alpha \log q$
	from \cref{lem:ML-basic}. Hence, as soon as $\log \frac{1}{\alpha} \ge \frac{\log q}{\gamma} + \log \log q$, the statement of the lemma holds.
\end{proof}

\begin{lemma}[Suction-at-lower-end in the Binary Case]
	\label{lem:suc-lwr-binary}
	Let $U, V$ be independent binary random variables.
	There exists a function $\alpha_0(\gamma)$ 
	such that, for any $0 < \gamma < 1$,
	$$H(U), H(V) \leq \alpha_0(\gamma) \implies
	H(U \oplus V) \geq (1-\gamma)(H(U) + H(V)).\footnote{We note that we could have replaced $\oplus$ by just $+$ as those operations are over $\F_2$ but we chose to keep $+$ for addition over reals in this lemma.}$$
\end{lemma}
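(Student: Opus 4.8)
The plan is to reduce everything to a single ``defect'' term and then show that it is a negligible fraction of $H(U)+H(V)$ once $U,V$ are close enough to being constant. First I would normalize: adding a constant in $\F_2$ (a bit-flip) leaves $H(U)$, $H(V)$ and $H(U\oplus V)$ unchanged, and the statement is symmetric in $U,V$, so I may assume $0\le p:=\Pr[U=1]\le q:=\Pr[V=1]\le \tfrac12$. Set $r:=\Pr[U\oplus V=1]=p(1-q)+q(1-p)=p+q-2pq$; then $1-2r=(1-2p)(1-2q)\ge 0$ so $r\le\tfrac12$, while $r-p=q(1-2p)\ge0$ and $r-q=p(1-2q)\ge0$, so $r\ge\max(p,q)$.

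The key step is the exact identity coming from the bijection $(U,V)\mapsto(U\oplus V,\,U)$ on $\F_2^2$: combined with the chain rule and independence of $U,V$ it gives $H(U)+H(V)=H(U,V)=H(U\oplus V)+H(U\mid U\oplus V)$, hence
\[ H(U\oplus V)=H(U)+H(V)-H(U\mid U\oplus V). \]
So it suffices to prove $H(U\mid U\oplus V)\le\gamma\bigl(H(U)+H(V)\bigr)$ whenever $p,q\le\alpha_0=\alpha_0(\gamma)$. Conditioning on the two possible values of $U\oplus V$ and computing the relevant conditional probabilities ($\Pr[U{=}1\mid U{=}V]=\tfrac{pq}{1-r}$ and $\Pr[U{=}1\mid U{\ne}V]=\tfrac{p(1-q)}{r}$) yields
\[ H(U\mid U\oplus V)=(1-r)\,H\!\Bigl(\tfrac{pq}{1-r}\Bigr)+r\,H\!\Bigl(\tfrac{p(1-q)}{r}\Bigr). \]

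Now I would bound the two summands using only the elementary facts $x\log\tfrac1x\le H(x)\le x\log\tfrac1x+\tfrac{x}{\ln2}$ and monotonicity of $H$ on $[0,\tfrac12]$, plus the calculus bound $\max_{\lambda\in(0,1]}\lambda\log\tfrac1\lambda=\tfrac{\log e}{e}<1$. For the first summand, $\tfrac{pq}{1-r}\le 2pq\le\tfrac12$, so it is at most $H(2pq)\le 2pq\log\tfrac1{pq}+O(pq)=2\bigl(q\cdot p\log\tfrac1p+p\cdot q\log\tfrac1q\bigr)+O(pq)\le 2\bigl(qH(p)+pH(q)\bigr)+O(pq)\le 2\alpha_0\bigl(H(p)+H(q)\bigr)+O(pq)$. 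For the second summand, expand $r\,H\!\bigl(\tfrac{p(1-q)}{r}\bigr)=p(1-q)\log\tfrac{r}{p(1-q)}+q(1-p)\log\tfrac{r}{q(1-p)}$ and use $q\le r\le p+q\le 2q$ and $q(1-p)\le r$ to get $\log\tfrac{r}{p(1-q)}\le 2+\log\tfrac qp$ and $0\le\log\tfrac{r}{q(1-p)}\le 2$, so the summand is at most $p\log\tfrac qp+O(p+q)\le q+O(q)$ (the last step being $\lambda\log\tfrac1\lambda<1$ with $\lambda=p/q$). Finally, $q\le\alpha_0$ gives $H(q)\ge q\log\tfrac1q\ge q\log\tfrac1{\alpha_0}$, so every error of the form $O(p+q)$ or $O(pq)$ is at most $\tfrac{O(1)}{\log(1/\alpha_0)}\bigl(H(p)+H(q)\bigr)$. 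Adding up, $H(U\mid U\oplus V)\le\bigl(O(\alpha_0)+\tfrac{O(1)}{\log(1/\alpha_0)}\bigr)\bigl(H(p)+H(q)\bigr)$, and choosing $\alpha_0(\gamma)$ small enough (e.g.\ $\alpha_0\le\min(\gamma,\,2^{-C/\gamma})$ for a suitable absolute constant $C$) makes this $\le\gamma\bigl(H(p)+H(q)\bigr)$, which is the claim.

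The one place to be careful — the ``main obstacle'' — is the second summand: the trivial bound $H(U\mid U\oplus V)\le H(U)=H(p)$ is useless, and even the refined expression above is dangerously close to $H(p)$ unless one keeps the cancellation $\log\tfrac{r}{p(1-q)}=\log\tfrac rp+O(q)$ instead of crudely bounding $\log\tfrac1r$ by $\log\tfrac1p$. In other words the argument must exploit that $r$ is comparable to $\max(p,q)=q$, not merely that $r\ge p$; once that is used, $p\log\tfrac qp$ is $O(q)$ and hence negligible against $H(q)\ge q\log\tfrac1q$ for $q$ small.
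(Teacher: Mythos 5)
Your proof is correct, and it takes a mildly but genuinely different route. The paper works directly with the biases: writing $p_1\circ p_2 := p_1(1-p_2)+p_2(1-p_1)$, it expands $H(p_1)+H(p_2)-H(p_1\circ p_2)$ term by term using $p\log(1/p)\le H(p)\le p\log(1/p)+2p$ and a chain of crude one-line inequalities (assuming WLOG $p_1\le p_2$), eventually arriving at $\le 9H(p_2)/\log(1/\tau)$. You instead first pass through the exact identity $H(U\oplus V)=H(U)+H(V)-H(U\mid U\oplus V)$ and then expand $H(U\mid U\oplus V)=(1-r)H\!\left(\tfrac{pq}{1-r}\right)+rH\!\left(\tfrac{p(1-q)}{r}\right)$ by conditioning on the parity of $U\oplus V$. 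The underlying estimates are the same in both cases (the same approximation $H(p)\approx p\log(1/p)$, the same WLOG $p\le q$, the same crucial bound $p\log(q/p)\le q$ via $\lambda\log(1/\lambda)\le 1$, and the same final step that $O(p+q)$ is negligible against $q\log(1/\alpha_0)\le H(q)$), so quantitatively the two proofs are on a par. What your decomposition buys is conceptual clarity: it isolates the quadratic, easy term (the $U\oplus V=0$ branch, which is $O(pq\log(1/pq))$) from the delicate term (the $U\oplus V=1$ branch, where the cancellation $\log\tfrac{r}{p(1-q)}\approx\log\tfrac{q}{p}$ lives), and it identifies the entire "defect" as a conditional entropy, matching the form $\bar H(X_1\mid X_1+X_2,\ldots)$ that actually appears in Lemma~\ref{lem:suc-lwr-cond} — the very statement the paper later derives from this binary lemma by a chain-rule argument. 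The paper's version is more compact but the cancellation structure is less visible.

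One small presentational caveat: the hypothesis is stated as $H(U),H(V)\le\alpha_0(\gamma)$, so before invoking $p,q\le\alpha_0$ you should note (as you implicitly do) that $p\le H(p)$ for $p\le 1/2$, so a bound on the entropy yields a bound on the bias; this is exactly the use of $p\log(1/p)\le H(p)$ and $\log(1/p)\ge 1$.
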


\begin{proof}
	Let $p_1$ and $p_2$ be the biases of $U, V$ respectively, such that $U \sim \text{Bernoulli}(p_1)$ and $V \sim \text{Bernoulli}(p_2)$.
	Let $p_1 \circ p_2 = p_1(1-p_2) + (1-p_1)p_2$ be the bias of $U \oplus V$,
	that is $U \oplus V \sim \text{Bernoulli}(p_1 \circ p_2)$.
	
	We first describe some useful bounds on $H(p)$. On the one hand we have 
	$H(p) \geq p \log 1/p$. For $p \leq 1/2$ we also have
	\[-(1-p)\log (1-p) \leq (1/\ln 2) (1-p) (p+p^2) \leq (1/\ln 2) p \leq 2p,\]
	where the first inequality follows from the fact that $-\ln(1-x)\le x+x^2$ for $x\le \frac 12$
	And so we have $H(p) \leq p (2 + \log 1/p)$.
	Summarizing, we have 
	\[p\log(1/p) \leq H(p) \leq p\log(1/p) + 2p.\]
	Suppose $H(p_1), H(p_2) \leq \tau$.
	We now consider $H(p_1) + H(p_2) - H(p_1\circ p_2)$. WLOG assume $p_1\le p_2$. Note that this implies
	\[p_1\circ p_2\le p_1+p_2\le 2p_2.\]
	We have
	\begin{align*}
		\lefteqn{H(p_1) + H(p_2) - H(p_1\circ p_2)}\\
		& \leq  p_1(\log(1/p_1) + 2) + p_2(\log(1/p_2)+2)  - (p_1\circ p_2)\log (1/(p_1\circ p_2))\\
		& \leq  p_1(\log(1/p_1) + 2) + p_2(\log(1/p_2)+2)  - (p_1+p_2-2p_1p_2)\log (1/(2p_2))\\
		& =  p_1\log(2p_2/p_1) + p_2 \log(2p_2/p_2)  + 2p_1p_2\log (1/(2p_2))+2(p_1+p_2)\\
		& \leq  p_1\log(p_2/p_1)   + 2p_1p_2\log (1/(p_2))+6p_2\\
		& \leq  2p_1 H(p_2) + 7p_2 \tag{Using $p_1\log (p_2/p_1) \leq p_2$ }\\
		& \leq  2p_1 H(p_2) + 7 H(p_2)/\log (1/p_2)\\
		& \leq  9 H(p_2)/\log(1/\tau).
	\end{align*}
	In the above, the last inequaliy follows from the assumption that $\tau\le 1/8$ (which will be true in our case). Indeed, note that with this assumption $\tau\log(1/\tau)\le 1$ (which along with the fact that $p_1\le \tau$ implies $p_1\le 1/\log(1/\tau)$) and $p_2\le \tau$ (since we have $p_2\log(1/p_2)\le \tau$).
	Thus, we have
	$$H(U), H(V) \leq \tau \implies H(U) + H(V) - H(U \oplus V) \leq 9H(V)/\log(1/\tau)$$
	This implies the desired statement, for $\alpha_0(\gamma) := 2^{-9/\gamma}$.
\end{proof}

Let $\df: \F_q \to \{0, 1\}$ be the complemented Kronecker-delta function, $\df(x) := \1\{x \neq 0\}$. We show that for small enough entropies, the entropy $H(\df(X))$ is comparable to $H(X)$:
\begin{lemma}
	\label{lem:delta}
	There exists a function $\alpha_1(\gamma)$ such that
	for any given $0 < \gamma < 1$, and any arbitrary random variable $X \in \F_q$ such that $0$ is the most likely value of $X$,
	$$\bH(X) \leq \alpha_1(\gamma) \implies \bH(X) \geq \frac{1}{\log q} H(\df(X)) \geq (1-\gamma)\bH(X).$$
\end{lemma}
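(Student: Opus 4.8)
The statement consists of two inequalities, and only the right-hand one requires that $\bH(X)$ be small. The left inequality $\bH(X)\ge\frac{1}{\log q}H(\df(X))$ holds for \emph{every} $X$ with no smallness hypothesis: $\df(X)$ is a deterministic function of $X$, so ``deterministic postprocessing does not increase entropy'' gives $H(\df(X))\le H(X)$, and dividing by $\log q$ finishes it.

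For the right inequality I would invoke the standing convention of this subsection that $0$ is the most likely value of $X$; this convention is genuinely needed, since if the mode of $X$ were nonzero the ratio $H(\df(X))/H(X)$ would not tend to $1$ as $\bH(X)\to 0$. Writing $\beta:=\Pr[X\neq 0]$, we have $H(\df(X))=H(\beta)$, and grouping the support of $X$ according to whether the value is $0$, the chain rule yields
\[
H(X)=H(\df(X))+H(X\mid\df(X))=H(\df(X))+\beta\,H(X\mid X\neq 0)\le H(\df(X))+\beta\log q,
\]
since $H(X\mid X=0)=0$ and $H(X\mid X\neq 0)\le\log(q-1)\le\log q$. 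Hence it suffices to guarantee $\beta\log q\le\gamma H(X)$, equivalently $\beta\le\gamma\,\bH(X)$.

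This last bound is exactly the content of Lemma~\ref{lem:ML}: setting $\alpha_1(\gamma):=\alpha_2(\gamma)=\exp(-1/\gamma)$, the hypothesis $\bH(X)\le\alpha_1(\gamma)$ together with Lemma~\ref{lem:ML} (applicable precisely because $0$ is the most likely value) gives $\Pr[X\neq 0]\le\gamma\,\bH(X)$, so $\beta\log q\le\gamma H(X)$. Substituting back, $H(X)\le H(\df(X))+\gamma H(X)$, that is, $H(\df(X))\ge(1-\gamma)H(X)$, which is the desired inequality after dividing by $\log q$. I expect no real obstacle here: the quantitative heart of the argument is already packaged in Lemma~\ref{lem:ML}, and what remains is the one-line entropy decomposition above; the only point deserving care is faithfully using the ``$0$ is the most likely value'' convention, without which the statement is simply false.
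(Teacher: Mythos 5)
Your proof is correct and takes essentially the same route as the paper: both establish the first inequality via the data-processing inequality, and both prove the second by decomposing $H(X)=H(\df(X))+H(X\mid\df(X))$ via the chain rule, bounding $H(X\mid\df(X))\le\log q\cdot\Pr[X\neq 0]$, and then invoking Lemma~\ref{lem:ML} with $\alpha_1(\gamma):=\alpha_2(\gamma)$ to get $\Pr[X\neq 0]\le\gamma\bH(X)$. You are also right to flag that the lemma implicitly relies on the standing Assumption that $0$ is the mode of $X$ — without it the claim is false — a point the paper leaves tacit.
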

\begin{proof}
	The first inequality $\bH(X) \log q = H(X) \geq H(\bar\delta(X))$ always holds, by the fact that deterministic postprocessing does not increase entropy.
	Thus, we will now show the second bound: that for small enough entropies, $\frac{1}{\log q} H(\bar\delta(X)) \geq (1-\gamma)\bH(X)$. This is equivalent with showing that $H(\df(X)) \geq (1 - \gamma)H(X)$.
	Given $\gamma$, let $\alpha_1 := \alpha_2(q,\gamma)$ be the entropy guaranteed by \cref{lem:ML}, so that
	if $\bH(X) \leq \alpha_1$ then $\Pr[\bar\delta(X)=1]=\Pr[X \neq 0] \leq \gamma \bH(X)$.
	Now, for $H(X) \leq \alpha_1$, we have 
	\begin{align*}
		H(X) &= H(X, \df(X)) - H(\df(X)|X) \tag{Chain rule}\\
		&= H(X, \df(X)) \tag{as $\df(X)|X$ is deterministic}\\
		&= H(\df(X)) + H(X | \df(X)) \tag{Chain rule}\\
		&= H(\df(X))  + H(X | \df(X) = 1)\Pr[\df(X) = 1] \tag{as $H(X|\df(X)=0) = 0$ since $X|\df(X)=0$ is deterministic}\\
		&\leq H(\df(X))  + \log(q)\Pr[\df(X) = 1] \tag{as $X \in \F_q$, so $H(X | \df(X) = 1)\le H(X) \leq \log(q)$}\\
		&\leq H(\df(X))  + \log(q) \gamma \bH(X) \tag{by \cref{lem:ML}} \\
		&= H(\df(X)) + \gamma H(X).
	\end{align*}
	Thus, if $H(X) \leq \alpha_1$, then
	$(1-\gamma)H(X) \leq H(\bar\delta(X))$ as desired.
\end{proof}

Now, by combining these, we can reduce suction-at-the-lower-end from $\F_q$ to the binary case.
\begin{proof}[Proof of \cref{lem:suc-lwr}]
	Given $\gamma$, we will set $\alpha \leq 1/4$, to be determined later.
	Notice that we have
	\begin{equation}
		\label{eqn:entrop-bar}
		\bH(X+Y) = \frac{1}{\log q} H(X + Y) \geq \frac{1}{\log 	q} H(\df(X+Y)).
	\end{equation}
	We will proceed to show first that
	\begin{equation}
		\label{eqn:delta-entropies} H(\df(X+Y)) \geq H(\df(X) \oplus \df(Y)).
	\end{equation}
	This inequality is justified by comparing the distributions of $\df(X+Y)$ and $\df(X) \oplus \df(Y)$, both binary random variables, and noticing that
	$$\Pr[\df(X+Y) = 0] = \Pr[X + Y = 0] \leq
	\Pr[\{X = 0, Y = 0\} \cup \{X \neq 0, Y \neq 0\}]
	=\Pr[\df(X) \oplus \df(Y) = 0].$$
	Moreover, let us observe that
	$\Pr[\df(X + Y) = 0]  = \Pr[X + Y = 0] \geq 1/2$. Indeed,
	\begin{equation*}
		\Pr[X + Y \neq 0] \leq H(X+Y) \leq H(X, Y) \leq H(X)+H(Y) \le 2\alpha \leq 1/2.
	\end{equation*}
	In the above, the second inequality follows since $X+Y$ is a deterministic function of $X,Y$ and the third inequality follows from the chain rule and the fact that conditioning can only decrease entropy.
	Therefore, by monotonicity of the binary entropy function $H(p)$ for $1/2 \leq p \leq 1$,
	and since $\P[\df(X + Y) = 0] \leq \P[\df(X) \oplus \df(Y) = 0]$
	we have
	\begin{align*}
		H(\df(X+Y) ) &\geq H(\df(X) \oplus \df(Y)).
	\end{align*}
	This justifies \cref{eqn:delta-entropies}.
	
	Now we conclude by using the suction-lemma in the binary case, applied to $\df(X) \oplus \df(Y)$.
	
	Let $\gamma'$ be a small enough constant, such that
	$(1-\gamma')^2 \geq (1-\gamma)$.
	Let $\alpha_0 := \alpha_0(\gamma')$ be the entropy bound provided by \cref{lem:suc-lwr-binary},
	and let $\alpha_1 := \alpha_1(\gamma')$ be the entropy bound provided by \cref{lem:delta}.
	Set $\alpha := \min\{\alpha_0, \alpha_1, 1/4\}$.
	
	Then, for $\bH(X), \bH(Y) \leq \alpha$, we have
	\begin{align*}
		\bH(X+Y) \log q & \geq H(\df(X + Y)) \tag{\cref{eqn:entrop-bar}} \\
		& \geq H(\df(X) \oplus \df(Y)) \tag{\cref{eqn:delta-entropies}} \\
		&\geq (1-\gamma')(H(\df(X)) + H(\df(Y)))
		\tag{\cref{lem:suc-lwr-binary} and $\bH(\df(Z))\le \bH(Z)$ for r.v. $Z$ }\\
		&\geq (1-\gamma')^2(\bH(X) + \bH(Y))\log q.
		\tag{\cref{lem:delta}}
	\end{align*}
	With our setting of $\gamma'$, this concludes the proof.
\end{proof}

We will now see how \cref{lem:suc-lwr} implies its strengthening for conditional entropies.
\begin{lemma}
	\label{lem:suc-lwr-sum}
	Let $(X_1, A_1)$ and $(X_2, A_2)$ be independent random variables with $X_i \in \F_q$, and such that $A_1, A_2$ are identically distributed, and moreover for every $a$ we have $\bH(X_1 | A_1 = a) = \bH(X_2 | A_2 = a)$. Then for every $\gamma > 0$, there exist $\tau$ such that if $\bH(X_1 |A_1) \leq \tau$, then
	\begin{equation}
		\bH(X_1 + X_2 | A_1, A_2) \geq (1 - \gamma)(\bH(X_1|A_1) + \bH(X_2|A_2)).
		\label{eq:suc-lwr-cond}
	\end{equation}
\end{lemma}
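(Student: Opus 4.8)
The plan is to condition on the outcome $(A_1,A_2)=(a_1,a_2)$ and reduce the claim to a per-outcome estimate. Writing $h_1(a):=\bH(X_1|A_1=a)$ and $h_2(a):=\bH(X_2|A_2=a)$, the hypotheses give a common value $h:=\bH(X_1|A_1)=\bH(X_2|A_2)\le\tau$ with $h=\E_{a\sim A_1}[h_1(a)]=\E_{a\sim A_2}[h_2(a)]$, so it suffices to prove $\E_{(a_1,a_2)}\bigl[\bH(X_1+X_2|A_1=a_1,A_2=a_2)\bigr]\ge(1-\gamma)\cdot 2h$. Because $(X_1,A_1)$ and $(X_2,A_2)$ are independent, after fixing $(a_1,a_2)$ the variables $X_1$ and $X_2$ are still independent, which gives two per-outcome facts to work with: (1) by~\eqref{eqn:adding-indep}, $\bH(X_1+X_2|a_1,a_2)\ge\max\{h_1(a_1),h_2(a_2)\}$ with no smallness hypothesis; and (2) by Lemma~\ref{lem:suc-lwr} with a small parameter $\gamma'$, \emph{provided} $h_1(a_1),h_2(a_2)\le\alpha(\gamma')$, one has $\bH(X_1+X_2|a_1,a_2)\ge(1-\gamma')(h_1(a_1)+h_2(a_2))$. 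The main obstacle is precisely that the individual conditional entropies $h_i(a_i)$ need not be small even though their average $h$ is; this is why the argument requires more care than the companion ``suction at the upper end'' statement (Lemma~\ref{lem:suc-upr-mult}), where the entropy \emph{defect} from the maximum is controlled multiplicatively and survives averaging with no case analysis.

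To get around this I would fix $\gamma':=\gamma/2$ (assuming $\gamma\le 1$, else the claim is vacuous), set $\alpha_0:=\alpha(\gamma')$ and $c_0:=\gamma'\alpha_0$, and split the outcomes $(a_1,a_2)$ into three classes. On class \emph{(i)}, where $\max\{h_1(a_1),h_2(a_2)\}\le\alpha_0$, fact (2) applies directly. On class \emph{(ii)}, where $\max\{h_1(a_1),h_2(a_2)\}>\alpha_0$ but $\min\{h_1(a_1),h_2(a_2)\}\le\gamma'\max\{h_1(a_1),h_2(a_2)\}$, fact (1) already suffices since $h_1(a_1)+h_2(a_2)\le(1+\gamma')\max\{h_1(a_1),h_2(a_2)\}$ implies $\max\{h_1(a_1),h_2(a_2)\}\ge(1-\gamma')(h_1(a_1)+h_2(a_2))$. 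On class \emph{(iii)}, where $\max>\alpha_0$ and $\min>\gamma'\max$ — which forces both $h_1(a_1)>c_0$ and $h_2(a_2)>c_0$ — I only use the crude bound $\bH(X_1+X_2|a_1,a_2)\ge\max\ge\frac{1}{2}(h_1(a_1)+h_2(a_2))$, accepting a factor-$2$ loss there.

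Averaging the three bounds and writing $\frac{1}{2}=(1-\gamma')-\bigl(\frac{1}{2}-\gamma'\bigr)$ on class (iii),
\begin{align*}
\bH(X_1+X_2|A_1,A_2)
&\ge (1-\gamma')\,\E[h_1(a_1)+h_2(a_2)]-\bigl(\tfrac{1}{2}-\gamma'\bigr)\,\E\bigl[\1_{(\mathrm{iii})}(h_1(a_1)+h_2(a_2))\bigr]\\
&= 2h(1-\gamma')-\bigl(\tfrac{1}{2}-\gamma'\bigr)\,\E\bigl[\1_{(\mathrm{iii})}(h_1(a_1)+h_2(a_2))\bigr].
\end{align*}
The last expectation is small because class (iii) requires that \emph{both} entropies exceed the absolute constant $c_0$: by independence of the two draws and two applications of Markov's inequality, $\E[\1_{(\mathrm{iii})}h_1(a_1)]\le\E_{a_1}[\1\{h_1(a_1)>c_0\}\,h_1(a_1)]\cdot\Pr_{a_2}[h_2(a_2)>c_0]\le h\cdot(h/c_0)\le h\tau/c_0$, and symmetrically for $h_2(a_2)$, so $\E[\1_{(\mathrm{iii})}(h_1(a_1)+h_2(a_2))]\le 2h\tau/c_0$. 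Substituting gives $\bH(X_1+X_2|A_1,A_2)\ge 2h\bigl(1-\gamma'-\tau/(2c_0)\bigr)$; since $c_0=\gamma'\alpha(\gamma')$ depends only on $\gamma$, choosing $\tau$ small enough that $\tau/(2c_0)\le\gamma'$ yields $\bH(X_1+X_2|A_1,A_2)\ge 2h(1-2\gamma')=(1-\gamma)(\bH(X_1|A_1)+\bH(X_2|A_2))$, as required.

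The one genuinely delicate step is this last estimate: everything hinges on the observation that the lossy case — where we can only say the entropy of the sum is at least the maximum of the two, rather than close to their sum — can occur solely when both conditional entropies are already bounded below by a fixed constant, so that independence plus Markov makes its total weight $O(\tau)$ relative to $h$. The rest is parameter bookkeeping, and Lemma~\ref{lem:suc-lwr-cond} then follows from Lemma~\ref{lem:suc-lwr-sum} by the chain rule, as indicated in the text.
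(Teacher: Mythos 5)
Your proof is correct, and it takes a genuinely different route from the paper's. The paper first singles out a \emph{single-variable} good set $G_A := \{a : \bH(X_1|A_1=a) < \alpha/\gamma\}$ via a Markov threshold, then decomposes the sum over pairs $(a_1,a_2)$ into quadrants: it applies Lemma~\ref{lem:suc-lwr} on (good, good), uses only $\bH(X_1+X_2|a_1,a_2)\ge\bH(X_1|a_1)$ on (bad, good) and symmetrically on (good, bad), and drops (bad, bad) entirely. The two cross-quadrant contributions are then merged with the on-diagonal contribution into a single term $2(1-\gamma)\Pr(A\in G_A)\bH(X_1|A_1)$ --- a step which leans heavily on the \emph{pointwise} equality $\bH(X_1|A_1=a)=\bH(X_2|A_2=a)$ and on $A_1\sim A_2$. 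You instead classify \emph{pairs} $(a_1,a_2)$ by $\max$ and $\min$ of $(h_1(a_1),h_2(a_2))$: the clean case (both small) uses Lemma~\ref{lem:suc-lwr}, the lopsided case is handled exactly by $\bH(\text{sum})\ge\max$, and the genuinely lossy case (both at least a fixed constant $c_0$) is shown to carry total weight $O(\tau h/c_0)$ by independence of $A_1,A_2$ plus two applications of Markov. What this buys you is a cleaner interplay between the per-pair bound and the averaging, and --- notably --- you never invoke the pointwise equality hypothesis or the identical distribution of $A_1,A_2$: your argument only uses $\bH(X_1|A_1)=\bH(X_2|A_2)$, so it proves a slightly stronger statement. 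The cost is a three-way rather than quadrant decomposition and the small bookkeeping of $\gamma'$, $\alpha_0$, $c_0$. Both arguments are short and both land on the same $(1-\gamma)^2\approx 1-2\gamma$ slack before the final rescaling, so neither is quantitatively sharper; yours is arguably the more robust of the two.
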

\begin{proof}
	Let us take $\alpha := \bH(X_1|A_1) = \bH(X_2|A_2)$. For given $\gamma$ we shall find $\tau$ such that if $\alpha < \tau$ then \cref{eq:suc-lwr-cond} is satisfied.  Let us now consider $G_A := \{ a : \bH(X_1 | A_1 = a) < \alpha_1 \}$, for $\alpha_1 = \frac{\alpha}{\gamma}$.
	(In the remainder of the proof when we want to talk about a random variable from the identical distribution from which $A_1$ and $A_2$ are drawn, we will denote it by $A$.) By Markov inequality
	\begin{equation*}
		\P(A \not\in G_A) \leq \frac{\alpha}{\alpha_1} = \gamma.
	\end{equation*}
	
	Let us fix now $\tau$ which appears in the statement of this lemma to be smaller than $\gamma$
	and moreover small enough so that when $\alpha < \tau$ for every $a_1, a_2 \in G_A$ we can apply \cref{lem:suc-lwr} to distributions $(X_1 | A_1 = a_1)$ and $(X_2 | A_2 = a_2)$ to ensure
	that $H(X_1 + X_2 | A_1 = a_1, A_2 = a_2) \geq (1 - \gamma)(H(X_1 | A_1 = a_1) + H(X_2 | A_2 = a_2))$.
	
	Let us use shorthand $S(a_1, a_2) = \bH(X_1 + X_2 | A_1 = a_1, A_2 = a_2) \P(A_1 = a_1, A_2 = a_2)$. We have
	\begin{align}
		\bH(X_1 + X_2 | A_1, A_2) & = \sum_{a_1, a_2} S(a_1, a_2) \nonumber \\
		& \geq \sum_{\substack{a_1 \in G_A\\ a_2 \in G_A}} S(a_1, a_2) + \sum_{\substack{a_1 \not\in G_A \\ a_2\in G_A}} S(a_1, a_2) + \sum_{\substack{a_1\in G_A\\ a_2 \not\in G_A}} S(a_1, a_2).
		\label{eq:three-terms}
	\end{align}
	
	If both $a_1$ and $a_2$ are in $G_A$, then by \cref{lem:suc-lwr} we have
	\begin{equation*}
		S(a_1, a_2) \geq (1 - \gamma) (\bH(X_1 | A_1 = a_1) + H(X_2 | A_2 = a_2)) \P(A_1 = a_1, A_2 = a_2),
	\end{equation*}
	therefore
	\begin{equation}
		\sum_{a_1 \in G_A, a_2 \in G_A} S(a_1, a_2) \geq 2 (1 - \gamma) \P(A \in G_A) \sum_{a_1 \in G_A} H(X_1 | A_1 = a_1) \P(A_1 = a_1),
		\label{eq:first-term}
	\end{equation}
	where in the above we have used the fact that $A_1$ and $A_2$ are identically distributed.
	
	On the other hand, for $a_1 \not\in G_A, a_2 \in G_A$ let us  bound
	\begin{align*}
		S(a_1, a_2) & = \bH(X_1 + X_2 | A_1 = a_1, A_2 = a_2) \P(A_1 = a_1, A_2 = a_2) \\
		& \geq \bH(X_1 + X_2 | A_1 = a_1, A_2 = a_2, X_2) \P(A_1 = a_1, A_2 = a_2) \\
		& = \bH(X_1 | A_1 = a_1) \P(A_1 = a_1, A_2 = a_2)
	\end{align*}
	where the inequality follows from the fact that additional conditioning decreases entropy and for the second equality we  used the fact that since $X_1$ and $X_2$ are independent, $\bH(X_1 + X_2 | A_1 = a_1, A_2 = a_2, X_2)= \bH(X_1 | A_1 = a_1, A_2 = a_2, X_2)= \bH(X_1 | A_1 = a_1, A_2 = a_2)=\bH(X_1 | A_1 = a_1)$.
	Summing this bound over all such pairs yields
	\begin{equation}
		\sum_{a_1 \not\in G_A, a_2 \in G_A} S(a_1, a_2) \geq \P(A \in G_A) \sum_{a_1\not\in G_A} \bH(X_1 | A_1 = a_1)\P(A_1 = a_1)
		\label{eq:second-term}
	\end{equation}
	and symmetrically for the third summand, we get
	\begin{equation}
		\sum_{a_1 \in G_A, a_2 \not\in G_A} S(a_1, a_2) \geq \P(A \in G_A) \sum_{a_2\not\in G_A} \bH(X_2 | A_2 = a_2)\P(A_2 = a_2).
		\label{eq:third-term}
	\end{equation}
	
	Plugging in~\cref{eq:first-term,eq:second-term,eq:third-term}  into \cref{eq:three-terms} (and using the fact that $A_1$ and $A_2$ are identically distributed) we find
	\begin{align*}
		\bH(X_1 + X_2 | A_1, A_2) & \geq 2(1 - \gamma) \P(A_1 \in G_A)\sum_{a_1} \bH(X_1 | A_1 = a_1)\P(A_1 = a_1) \\
		& = 2 (1 - \gamma)\P(A \in G_A) \bH(X_1 | A_1).
	\end{align*}
	We have $\P(A \in G_A) \geq (1 - \gamma)$, which yields
	\begin{equation*}
		\bH(X_1 + X_2 | A_1, A_2) \geq 2(1 - \gamma)^2 \alpha  \geq 2(1 - 2\gamma) \alpha
	\end{equation*}
	and the statement of the lemma follows, after rescaling $\gamma$ by half.
\end{proof}

Finally, we are ready to prove \cref{lem:suc-lwr-cond}:
\begin{proof}[Proof of \cref{lem:suc-lwr-cond}]
	By chain rule we have
	\begin{align*}
		\bH(X_1 ~|~ X_1 + X_2, A_1, A_2) & = \bH(X_1, X_1 + X_2 ~|~ A_1, A_2) - \bH(X_1 + X_2 ~|~ A_1, A_2) \\
		& = \bH(X_1, X_2 ~|~ A_1, A_2) - \bH(X_1 + X_2 ~|~ A_1, A_2) \\
		& = 2 \bH(X_1 ~|~ A_1) - \bH(X_1 + X_2 ~|~ A_1, A_2),
	\end{align*}
	where the last equality follows from the independence of $(X_1,A_1)$ and $(X_2,A_2)$.
	Now we can apply \cref{lem:suc-lwr-sum} to get
	\begin{align*}
		\bH(X_1 ~|~ X_1 + X_2, A_1, A_2) & \leq 2 \bH(X_1 ~|~ A_1) - (1 - \gamma)(2 \bH(X_1 ~|~ A_1) = 2 \gamma \bH(X_1 ~|~ A_1)
	\end{align*}
	and the statement follows directly from \cref{lem:suc-lwr-sum} and rescaling $\gamma$ by half.
\end{proof}

\section{Exponential matrix  polarization}\label{sec:exp-local-arikan}

The main result of this section shows the exponential matrix polarization of $M^{\otimes 2}$ for every mixing matrix.

\begin{lemma}
	\label{lem:every-matrix-works}
	For every prime $p$, every mixing matrix $M \in \F_q^{k \times k}$ and every $\varepsilon > 0$, the matrix $M^{\otimes 2}$ satisfies $(\frac{1}{k^2}, 2-\varepsilon)$-exponential matrix polarization.
\end{lemma}

Before turning to the proof we first note that this immediately yields \cref{thm:triangle-local-exp}. 

\begin{proof}[Proof of \cref{thm:triangle-local-exp}]
	By \cref{lem:every-matrix-works} we have that for every prime $q$ and mixing matrix $M\in\F_q^{k \times k}$, the matrix $M^{\otimes 2}$ satisfies $(\frac{1}{k^2}, 2-\varepsilon)$-exponential matrix polarization. By \cref{lem:polarizing-matrix-implies-martingale} we then have that for every symmetric memoryless channel $\C_{Y|Z}$, the Arikan martingale associated with $M^{\otimes 2}$ and $\C_{Y|Z}$ is $(\frac{1}{k^2}, 2-\varepsilon)$-exponentially locally polarizing.
\end{proof}

The rest of the section is devoted to the proof of \cref{lem:every-matrix-works}. 
We start with a simple proposition.

\begin{proposition}\label{prop:m2mixes}
	For every field $\F_q$ and every matrix $M \in \F_q^{k\times k}$, its tensor $M^{\otimes 2}$ is mixing if $M$ is mixing.
\end{proposition}

\begin{proof}
	Let $S_j = \{i \in [k] | \exists j' \in [j] \mbox{ s.t. } M_{i,j'}\ne 0\}$ then $\exists j$ s.t. $|S_j| > j$.
	By \cref{lem:mixing-equiv}, there exists a $j$ such that $|S_j|>j$.
	With this observation, the proposition follows easily. Given mixing $M$, let $j$ be the index such that $|S_j| > j$. Recall that $M^{\otimes 2}$ is composed of $k^2$ submatrices of dimensions $k\times k$ each, with the $i,j$th submatrix being $M_{ij}\cdot M$. Let $i$ be an index such that $M_{i1}\ne 0$. (Such an index must exist or else we have an all zero column which contradicts the invertibility of $M$.) Then the first $k$ columns of $M^{\otimes 2}$ contain the $k\times k$ submatrix $M_{i1}\cdot M$ and in this submatrix itself we have that the support of the first $j$ columns has size larger than $j$. We conclude the first $j$ columns of $M^{\otimes 2}$ have support size larger than $j$ and so by \cref{lem:mixing-equiv}, $M^{\otimes 2}$ is mixing.
\end{proof}

\subsection{Exponential polarization of a $2 \times 2$ matrix}
We will first prove that a single specific matrix, namely $\HadMat$, after taking second Kronecker power satisfies exponential polarization. Recall that in \cref{sec:local-k-by-k} the local polarization of a mixing matrix was shown essentially by reducing to this case. We will follow a similar plan in this section.

\begin{lemma}
	Let $q$ be a prime and let $H = \HadMat$ for $\alpha \in \F_q^*$. Then, for every $\varepsilon > 0$, the matrix $H^{\otimes 2}$ satisfies $(\frac{1}{4}, 2-\varepsilon)$ exponential matrix polarization.
	\label{lem:4-by-4}
\end{lemma}

\begin{proof}
	Note that since $H$ is mixing, by \cref{prop:m2mixes}, we have that $H^{\otimes 2}$ is also mixing. And so by \cref{lem:k-by-k-polarize}, we have that $H^{\otimes 2}$ satisfies the conditions of matrix polarization (specifically, variance in the middle and suction at the upper and lower ends from \cref{def:matrix-polarization}). It remains only to argue exponential matrix polarization, i.e., strong suction at the ends. 
	
	Given $\varepsilon > 0$ let $\tau>0$ be such that for every $\delta < \tau$ we have $6 (\log \frac{1}{3 \delta^{2}} + \log q) \leq \delta^{-\varepsilon}$.
	Note that the identity is satisfied for small enough $\delta$ since the LHS is $O\parens{\log\parens{\frac 1\delta}}$ while the RHS is $\Omega\parens{\parens{\frac 1\delta}^\eps}$.
	Now let $\delta < \tau$ and now consider arbitrary sequence of i.i.d. random variables $(\bvec{U}_1, W_1), \ldots (\bvec{U}_4, W_4)$ with $H(\bvec{U}_i | W_i) = \delta$. We can explicitly write down matrix $H^{\otimes 2}$ as
	\begin{equation*}
		H^{\otimes 2} = \left[\begin{array}{cccc} 1 & 0 & 0 & 0 \\ \alpha & 1 & 0 & 0 \\ \alpha & 0 & 1 & 0 \\ \alpha^2 & \alpha & \alpha & 1\end{array} \right].
	\end{equation*}
	Matrix $H^{\otimes 2}$ has four rows. So, to achieve exponential polarization with $\eta = \frac{1}{4}$, we need to show that there is at least one index $i$ satisfying the strong suction inequality (with parameter $b = 2 - \varepsilon$). We do so for $i = 4$.
	Let us consider vector $\bvec{U} = (\bvec{U}_1, \ldots, \bvec{U}_4)$ and similarly $W = (W_1, \ldots, W_4)$, and let $\bvec{V} = (\bvec{V}_1, \ldots, \bvec{V}_4) = \bvec{U}\cdot H^{\otimes 2}$. We want to bound
	\begin{align*}
		\H( \bvec{V}_4 | \bvec{V}_{<4}, W) & = \H(\bvec{U}_4 | \bvec{U}_1 + \alpha \bvec{U}_2 + \alpha \bvec{U}_3 + \alpha^2 \bvec{U}_4, \bvec{U}_2 + \alpha \bvec{U}_4, \bvec{U}_3 + \alpha \bvec{U}_4, W) \\
		& = \H(\bvec{U}_4 | \bvec{U}_1 - \alpha^2 \bvec{U}_4, \bvec{U}_2 + \alpha \bvec{U}_4, \bvec{U}_3 + \alpha \bvec{U}_4, W), 
	\end{align*}
	where the equality follows since $\bvec{U}_1 - \alpha^2 \bvec{U}_4=\bvec{U}_1 + \alpha \bvec{U}_2 + \alpha \bvec{U}_3 + \alpha^2 \bvec{U}_4 - \alpha\parens{\bvec{U}_2 + \alpha \bvec{U}_4} -\alpha\parens{\bvec{U}_3 + \alpha \bvec{U}_4}$ and hence the map 
	\[\parens{\bvec{U}_1 + \alpha \bvec{U}_2 + \alpha \bvec{U}_3 + \alpha^2 \bvec{U}_4, \bvec{U}_2 + \alpha \bvec{U}_4, \bvec{U}_3 + \alpha \bvec{U}_4}\mapsto \parens{\bvec{U}_1 - \alpha^2 \bvec{U}_4, \bvec{U}_2 + \alpha \bvec{U}_4, \bvec{U}_3 + \alpha \bvec{U}_4} \] is a bijection.
	
	The main idea to bound the conditional entropy of $\bvec{U}_4$ above is that if any of $\bvec{U}_i$ is `known' for $i \in \{1,2,3\}$, then given the variables being conditioned on, $\bvec{U}_4$ is also `known'. Of course, none of the $\bvec{U}_i$'s are known, but each is predictable given $W_i$ and we use this predictability to bound the conditional entropy. Details follow. 
	
	Let $\Sigma$ denote the domain of $W_i$'s. Using $\H(\bvec{U}_i|W_i) = \delta$, by \cref{lem:ML-conditional}, we have that there exists some function $f: \Sigma \to \F_q$, such that $\P(f(W_i) \not= \bvec{U}_i) \leq \delta$. Let $\bvec{V}'_1 := -\alpha^2 \bvec{U}_4 + \bvec{U}_1$. We now give a predictor
	$g(\bvec{V}'_1,\bvec{V}_2,\bvec{V_3},W)$ for $\bvec{U}_1$. Let 
	$X_1 = -\alpha^{-2} (\bvec{V}'_1 - f(W_1))$, $X_2 = \alpha^{-1} (\bvec{V}_2 - f(W_2))$, and $X_3 = \alpha^{-1} (\bvec{V}_3 - f(W_3))$. Note that if for some $i$ we have $f(W_i) = \bvec{U}_i$ then we have $X_i = \bvec{U}_4$. Using this we set $g$ as follows: If  two of $X_1, X_2, X_3$ have the same value, we define $g(\bvec{V}'_1,\bvec{V}_2,\bvec{V_3},W)$  to be this value, otherwise we set $g(\bvec{V}'_1,\bvec{V}_2,\bvec{V_3},W)$ arbitrarily.
	
	By construction of $g$ we have that  if there exist two choices of $i \in \{1,2,3\}$ satisfying $f(W_i) = \bvec{U}_i$, then $g(\bvec{V}'_1,\bvec{V}_2,\bvec{V_3},W)=\bvec{U}_4$.
	In turn this implies $\P(g(\bvec{V}'_1,\bvec{V}_2,\bvec{V_3},W) \not= \bvec{U}_4) \leq 3 \delta^2$ since  by symmetry, we have \[\P(g(\bvec{V}'_1,\bvec{V}_2,\bvec{V_3},W) \not= \bvec{U}_4) \leq 3 \P(f(W_1) \not= \bvec{U}_1 \land f(W_2) \not= \bvec{U}_2) = 3 \P(f(W_1) \not= \bvec{U}_1))^2 \leq 3 \delta^2,\]
	where the equality follows since $(\bvec{U}_i,\bvec{W_i})$ are independent.
	
	Converting the predictability of $\bvec{U}_1$ by $g(\cdots)$ into an entropy bound by Fano's inequality \cref{lem:prediction-gives-entropy}, we have $\H(\bvec{U}_4 | \bvec{U}_1 - \alpha^2 \bvec{U}_4, \bvec{U}_2 + \alpha \bvec{U}_4, \bvec{U}_3 + \alpha \bvec{U}_4, W) \leq 6 \delta^2 (\log \frac{1}{3 \delta^{2}} + \log q)$. By the choice of $\tau$ and $\delta < \tau$ we have $6 (\log \frac{1}{3\delta^{2}} + \log q) \leq \delta^{-\varepsilon}$ 
	and so  
	\begin{equation}
		\label{eq:lem-7.3-proof-bound}
		\H( \bvec{V}_4 | \bvec{V}_{<4}, W) \leq \delta^{2 - \varepsilon} = \parens{\H(\bvec{U}_1|W_1)}^{2-\eps},
	\end{equation}
	as desired.
\end{proof}

\subsection{Exponential polarization of any mixing matrix via useful containment}

We will now proceed to show that exponential polarization of $M^{\otimes 2}$ for any mixing matrix $M$ can be reduced to the lemma above. 
We first provide  an intuitive explanation of the reasoning below.

In order to show that a matrix $M'$ satisfies an exponential polarization (or just \textit{suction at the lower end} condition of local polarization), one needs to show that for any i.i.d. variables $U_i$ with entropy $H(U_i) = \delta$ and some index $j$, we can upper bound $\H( (\bvec{U}M')_j | (\bvec{U}M')_{<j})$ (for the sake of the clarity of this exposition, we skip conditioning on $W_i$). If we write $\bvec{V}_i = (\bvec{U}M)_i$, we wish to upper bound $\H(\bvec{V}_j | \bvec{V}_{1}, \ldots \bvec{V}_{j-1})$, (where all $\bvec{V}_i$ are  linear forms in $\{\bvec{U}_i\}_{i \in [k]}$). Now, for any $\bvec{B}_1, \ldots,\bvec{B}_m$
that all can be expressed as linear combinations of $\bvec{V}_1, \ldots \bvec{V}_{j-1}$, we have
\[ \H(\bvec{V}_j | \bvec{V}_{1} \ldots \bvec{V}_{j-1}) = \H(\bvec{V}_j + \bvec{B}_m | \bvec{V}_1, \ldots \bvec{V}_{j-1}, \bvec{B}_1, \ldots \bvec{B}_{m-1}) \leq \H(\bvec{V}_j + \bvec{B}_m | \bvec{B}_1, \ldots, \bvec{B}_{m-1}) \ . \]

In \cref{sec:reduction} we showed using Gaussian elimination that for any mixing matrix $M$, one can find $j,\ell,s$, and linear forms $W_1, W_2$ s.t. $V_j + W_2 = \bvec{U}_{\ell}$ and $W_1 = \alpha \bvec{U}_{\ell} + \bvec{U}_{s}$, which implied $\H(\bvec{V}_j + W_2 | W_1) = \H(\bvec{U}_{\ell} | \alpha \bvec{U}_{\ell} + \bvec{U}_{s})$. This can be thought of as showing that in some sense any mixing matrix $M$ \textit{contains} a matrix $H = \HadMat$, and reduces the problem of showing the local polarization of the former, to understanding local polarization of the latter. 

Here we introduce a technical notion of \textit{useful containment} that is tailored to extend this reasoning in a way that has a convenient property expressed by \cref{lem:useful-square} --- i.e. since matrix $M$ contains $H$ in this specific sense, the matrix $M^{\otimes 2}$ contains $H^{\otimes 2}$ and by the reasoning outlined in the previous paragraph, we can deduce exponential local polarization of $M^{\otimes 2}$ from this containment and the entropy upper bound proved in \cref{lem:4-by-4}.

We wish to note here that the subsequent definition and lemmas are tailored to the specific statement we are proving. In particular \textit{useful containment} is not a transitive relation. More importantly, and unfortunately, it is not true that for any exponentially polarizing matrix $R$, if $R$ is usefully contained in $M$, than $M$ is exponentially polarizing. \cref{lem:useful-usefulness} asserts this property only for $R = H^{\otimes 2}$.

The following definition of containment relation for matrices will be used to implement the ideas outlined above. 

\begin{definition}[Matrix (useful) containment]
	For any finite field $\F_q$ and integers $k\ge m\ge 1$,
	we say that a matrix $M \in \F_q^{k\times k}$ contains a matrix $R \in \F_q^{m\times m}$, if there exist some $T \in \F_q^{k\times m}$ and a permutation matrix $P \in \F_q^{k \times k}$, such that $PMT = \left[\begin{array}{c} R \\ 0 \end{array}\right]$. We say that $P$ and $T$ {\em witness} the containment of $R$ in $M$. If moreover the last non-zero row of $T$ is scaling of the standard basis vector, i.e. $T_j = \alpha e_m$ for some $\alpha\in\F_q^*$, we say that containment is $R$ in $M$ is useful and we denote it by $R \usef M$. 
	
\end{definition}
We emphasize that useful containment is \emph{not} a partial order.

Comparing this definition to the exposition above, the permutation $P$ is used to express the fact that we can freely permute labels of variables $\bvec{U}_1, \ldots, \bvec{U}_k$, whereas the matrix $T$ encodes coefficients for linear forms $\bvec{B}_1, \ldots \bvec{B}_{m-1}, \bvec{B}_m + \alpha \bvec{V}_j$. Finally, the condition on the last non-zero row of $T$ being of form $\alpha e_m$ is here to express the idea that $\bvec{V}_j$ is not allowed to appear in any of the forms $\bvec{B}_1, \ldots \bvec{B}_m$.

The following fact about useful containment will be helpful.
\begin{proposition}
	If $R \usef M$, then for every upper triangular matrix $U$ with non-zero diagonal elements $U_{i,i}$, we also have $R \usef MU^{-1}$. 
	\label{claim:useful-upper}
\end{proposition}
\begin{proof}
	Consider matrix $T$ and permutation $P$ as in the definition of useful containment for $R \usef M$. We can pick the very same permutation $P$ and matrix $T' = UT$ to witness $R \usef MU^{-1}$. All we have to show is that last non-zero row of $T'$ is the (scaled) standard basis vector $\alpha\bvec{e}_m$. Indeed, if $j_0$ is the last non-zero row of $T$, and $j > j_0$, rows $(U)_{j}$ are supported exclusively on elements with indices larger than $j_0$, hence $(UT)_{j} = (U)_j T = 0$. On the other hand $(U T)_{j_0} = \sum_{i} U_{j_0, i} T_i$. Since for $i < j_0$ the entry $U_{j_0, i} = 0$, and for $i > j_0$ we have $T_i=0$, this implies $(U T)_{j_0} = U_{j_0, j_0} T_{j_0} = U_{j_0, j_0} \alpha \bvec{e}_m$, where the last equality follows from the fact that $T$ was useful --- that is $T_{j_0} = \alpha \bvec{e}_m$ and $T_i = 0$ for $i > j_0$. Since both $U_{j_0, j_0}\ne 0$ and $\alpha\ne 0$, we have $U_{j_0, j_0} \alpha\neq 0$, as desired.
\end{proof}

\cref{lem:red-lower} can now be reinterpreted as the following lemma. We give a full new proof here, as we describe it now in the language of useful containment. 

\begin{lemma}\label{lem:everyMcontainsH}
	Every mixing matrix $M \in \F_q^{k\times k}$ usefully contains matrix $H = \HadMat$, for some $\alpha\in\F_q^*$.
\end{lemma}

\begin{proof}
	For every matrix $M$, there is some permutation matrix $P'$ and pair $L, U$, such that $P'M = L U$ where $L$ is lower triangular (such that its diagonal is all $1$s), and $U$ is upper triangular.\footnote{This e.g. follows from Gaussian Elimination and the corresponding ``$LU$ decomposition'' of any matrix. Also note that the the assumption on the diagonal elements of $L$ holds without loss of generality.} 
	Matrix $M$ being mixing is equivalent to the statement that $L$ and $U$ are invertible, and moreover $L$ is not diagonal. (In particular $M$ is invertible if and only if $L$ and $U$ are and $M = (P')^{-1}LU$ is the permutation of an upper triangular matrix if and only if $L$ is diagonal.) 
	Thus by \cref{claim:useful-upper} it suffices to show that every lower-triangular $L$, which is not diagonal, contains $H
	\usef L$. 
	Indeed, let $s$ be the last column of $L$ that contains more than a single non-zero entry, and let $r$ to be the last row of non-zero entry in column $L_{\cdot, s}$. Note that column $L_{\cdot, r}$ has single non-zero entry $L_{r,r} = 1$. We will show a matrix $T \in \F_q^{k \times 2}$ as in the definition of useful containment. Let us specify a second column of $T_{\cdot, 2} := \bvec{e}_r$-- note that in this case $(LT)_{\cdot,2}=\bvec{e}_r$. To specify the first column of $T$ we wish to find a linear combination of columns of $L_{1, \cdot}, \ldots, L_{r-1, \cdot}$ such that $\sum_{i\leq r-1} t_{i} L_{i, \cdot} = \bvec{e}_s + \alpha \bvec{e}_r$, where $\alpha=L_{r,s}\ne 0$. Then coefficients $t_i$ can be used as the first column of matrix $T$, which would imply that $(LT)_{\cdot,1}=\bvec{e}_s + \alpha \bvec{e}_r$. We can set those coefficients to $t_{i} = - L_{s, i}$ for $i \in [s+1, r-1]$, $t_s = 1$ and $t_i=0$ for $i<s$ --- this setting is correct, because columns $L_{i,\cdot}$ for $i \in [s+1, r-1]$ has only one non-zero entry $L_{i,i}$. As already observed the first column of $LT$ is $\bvec{e}_s + \alpha \bvec{e}_r$ while the second column is $\bvec{e}_r$. Thus, if $P$ is any matrix corresponding to a permutation which maps $s \mapsto 1$ and $r \mapsto 2$, the containment $H \usef L$ is witnessed by pair $P$ and $T$, as desired.
\end{proof}

\begin{lemma}
	\label{lem:useful-square}
	If matrix $R \usef M$ where $R\in \F_q^{s\times s}$ and $M \in \F_q^{k \times k}$, then $R^{\otimes 2} \usef M^{\otimes 2}$.
\end{lemma}
\begin{proof}
	Consider matrix $T$ and permutation $P$ that witness the useful containment for $R\usef M$. Note that by the mixed product property of tensors, $P^{\otimes 2}M^{\otimes 2} T^{\otimes 2} = (PMT)^{\otimes 2}$. 
	As such, restriction of a matrix $P^{\otimes 2} M^{\otimes 2} T^{\otimes 2}$ to rows corresponding to $[k]\times [k]$ is exactly $R^{\otimes 2}$, and all remaining rows are zero. We can apply additional permutation matrix $\tilde{P}$ so that those are exactly first $k^2$ rows of the matrix $\tilde{P} P^{\otimes 2} M^{\otimes 2} T^{\otimes 2}$ give matrix $R^{\otimes 2}$, and the remaining rows are zero.  Finally, since the last non-zero row of $T$ was a scaling of the standard basis vector, the same is true for $T^{\otimes 2}$.
\end{proof}

\begin{lemma}
	\label{lem:useful-usefulness}
	If matrix $M\in\F_q^{k \times k}$ usefully contains matrix $R = \HadMat^{\otimes 2}$, then matrix $M$ satisfies the strong suction condition of $(\frac{1}{k}, 2 - \varepsilon)$ exponential polarization.
\end{lemma}
\begin{proof}
	By the definition of exponential matrix polarization it suffices to prove that there exists an index $j \in [k]$ such that $\H( (\bvec{U}M)_{j} | (\bvec{U}M)_{<j}, W) \leq \H( (\bvec{U}R)_4 | (\bvec{U}R)_{< 4}, W)$. Once we have this, the proof of \cref{lem:4-by-4} (specifically \cref{eq:lem-7.3-proof-bound}) asserts that the conditional entropy is bounded as required.
	So we turn to proving this.
	
	Take $P \in \F_q^{k\times k}$ and  $T \in \F_q^{k\times 4}$ witness the containment $R \usef M$. Let moreover $j$ be the last non-zero row of $T$. We have 
	\begin{align*}
		\H( (\bvec{U}M)_{j} | (\bvec{U}M)_{<j}, W) & = \H( (\bvec{U}M)_{j} T_{j, 4} + (\bvec{U}M)_{<j} T_{<j, 4} | (\bvec{U}M)_{<j}, W) \\
		& = \H( (\bvec{U}MT)_4 | (\bvec{U}M)_{<j}, W) \\
		& = \H( (\bvec{U}MT)_4 | (\bvec{U}M)_{<j}, (\bvec{U}M)_{<j} T_{<j, <4}, W)\\
		& \leq \H( (\bvec{U}MT)_4 | (\bvec{U}M)_{<j} T_{<j, <4}, W).
	\end{align*}
	In the above the first equality follows since $T_{j, 4}\ne 0$ (and hence the map $(\bvec{U}M)_{j} \mapsto (\bvec{U}M)_{j} T_{j, 4}$ is a bijection) and the fact that $(\bvec{U}M)_{<j} T_{<j, 4}$ is deterministic function of $(\bvec{U}M)_{<j}$. The second equality follows since $M_{>j,\cdot}=\bvec{0}$, the third one introduces conditioning on $(\bvec{U}M)_{<j} T_{<j, <4}$ which is deterministic given $(\bvec{U}M)_{<j}$, and the inequality follows because entropy is decreasing under additional conditioning. 
	Observe now that $(\bvec{U}M)_{<j} T_{<j, <4} = (\bvec{U}MT)_{<4}$. Indeed --- according to the definition of useful containment and because $j$ is last non-zero row of $T$, we have $T_{j, <4} = 0$ ($j$-th row has only one non-zero entry $T_{j, 4}$), as well as $T_{>j, <4} = 0$. Therefore
	\begin{align*}
		\H( (\bvec{U}M)_{j} | (\bvec{U}M)_{<j}, W) & \leq \H( (\bvec{U}MT)_4 | (\bvec{U}MT)_{<4}, W) \\
		& = \H( (\bvec{U} P^{-1} R)_4 | (\bvec{U}P^{-1} R)_{<4}, W) \\
		& = \H( (\bvec{U}R)_4 | (\bvec{U}R)_{< 4}, W),
	\end{align*}
	where the last inequality follows from the fact that variables $\bvec{U}_i$ are i.i.d. hence for the permutation matrix $P$, $\bvec{U} P^{-1}$ has the same distribution as $\bvec{U}$. 
\end{proof}

With the above ingredients in place we are ready to prove \cref{lem:every-matrix-works}.

\begin{proof}[Proof of \cref{lem:every-matrix-works}]
	Since $M$ is mixing we have that $M^{\otimes 2}$ is also mixing (\cref{prop:m2mixes}) and so by \cref{lem:k-by-k-polarize} we have that $M^{\otimes 2}$ satisfies the conditions of matrix polarization. So it suffices to prove $M^{\otimes 2}$ satisfies the conditions of $(\frac{1}{k^2},2-\epsilon)$ exponential matrix polarization. 
	
	By \cref{lem:everyMcontainsH} we have that $M$ usefully contains $H=\HadMat$. Then, by \cref{lem:useful-square} we have that $M^{\otimes 2}$ usefully contains $H^{\otimes 2}$. Finally by \cref{lem:useful-usefulness} applied to $M^{\otimes 2}$ (which is a $k^2 \times k^2$ matrix) we have that $M^{\otimes 2}$ satisfies $(1/k^2,2-\epsilon)$ exponential matrix polarization. 
\end{proof}

\section{Nearly optimal decoding error probabilities \label{sec:lift}}

Finally we turn to the proofs of \cref{thm:thm2,thm:thm3}. Recall that the former yields codes achieving decoding error probability $\exp(-N^{\beta})$ for any $\beta<1$ while doing so at block lengths polynomial in the gap to capacity. The latter result shows that the techniques in this paper are essentially optimal (for a broad class of channels) by showing that any analysis that bounds the decoding error probability can be used as a black box to achieve a similar decoding error probability in our analysis framework while additionally guaranteeing convergence at polynomial lengths in the gap to capacity. 
We first present the former, though before doing so, we make a small digression to recollect some known definitions of linear codes that we will use in this section (for more details see e.g.~\cite[Chap. 2]{essential-coding}).

\subsection{Basics of linear error-correcting codes}

A linear $q$-ary error correcting code $C$ of block length $n_0$ and dimension $k_0$ is a linear subspace of $\F_q^{n_0}$ of dimension $k_0$. Equivalently, there exists a full rank $G\in\F_q^{k_0\times n_0}$ such that $C=\set{\bvec{v}\cdot G|\bvec{v}\in\F_q^{k_0}}$-- $G$ is called the generator matrix of $C$. The kernel/null-space/dual of $C$, denoted by $C^{\perp}$ or $\ker G$, is given by $\set{\bvec{w}| \inprod{\bvec{w},\bvec{c}}=0\text{ for all }\bvec{c}\in C}$. A generator matrix of $C^{\perp}$ is called a parity-check matrix of $C$. The distance of a code $C$ is the minimum number of positions any two codewords in $C$ differ in. For linear code $C$, its distance is exactly $\min_{\bvec{c}\in C\setminus \set{\bvec{0}}} \wt(\bvec{c})$, where $\wt(\bvec{x})$ is the number of non-zero elements in $\bvec{x}$.

\subsection{Polar codes with decoding failure probability approaching $2^{-N^{1-o(1)}}$}
\label{sec:strong-exp-failure}

\cref{thm:thm2} is proved by giving a sufficient structural condition on matrices for very strong exponential polarization. The following lemma states this condition.

\begin{lemma}
	\label{lem:code-exp-suction} Let $q$ be prime.
	If a mixing matrix $M\in \F_q^{k\times k}$ is decomposed as $M = \left[ M_0 | M_1 \right]$, where $M_0 \in \F_q^{k \times (1 - \eta) k}$ is such that $\ker M_0^T$ is a linear code of distance larger than $2 b$, then matrix $M$ satisfies $(\eta, b - \varepsilon)$-exponential matrix polarization for every $\varepsilon > 0$.
\end{lemma}

\begin{proof}
	By \cref{lem:k-by-k-polarize}, we have that $M$ satisfies the conditions of matrix polarization (specifically, variance in the middle and suction at the upper and lower ends from \cref{def:matrix-polarization}). It remains only to argue exponential matrix polarization, i.e., strong suction at the lower end. 
	
	Let us again consider a sequence of i.i.d. pairs $(\bvec{U}_i, W_i)$ for $i \in [k]$, such that $H(\bvec{U}_i | W_i) = \delta$. By \cref{lem:ML-conditional}, there is some $f : \Sigma \to \F_q$ such that $\P(f(W_i) \not= \bvec{U}_i) \leq \delta$ (for every $i\in [k]$). Let us define $\tilde{\bvec{U}}_i := \bvec{U}_i - f(W_i)$.
	
	We will bound $\H( (\bvec{U}M)_j | (\bvec{U}M)_{<j}, W)$, for all $j > (1 - \eta) k$. We have 
	\begin{equation*}
		\H( (\bvec{U}M)_j | (\bvec{U}M)_{<j}, W) \leq \H(\bvec{U} |(\bvec{U}M)_{<j}, W) \leq \H(\bvec{U} | \bvec{U}M_0, W) = H(\tilde{\bvec{U}} | \tilde{\bvec{U}}M_0, W) \leq \H(\tilde{\bvec{U}} | \tilde{\bvec{U}} M_0),
	\end{equation*}
	where the first two inequalities follow from the fact that for random variables $(X, Y, S, T)$ it is always the case that $\H( X | S, T) \leq \H(X, Y | S, T) \leq \H(X, Y | S)$ (the second inequality also uses the fact that $(\bvec{U}M)_{<j}$ is a sub-matrix of $\bvec{U}M_0$). The equality follows from the definition of $\tilde{\bvec{U}}_i$ and the fact that $f(\cdot)$ is  deterministic function. The final inequality follows from the fact that conditioning can only decrease the entropy.
	
	Given $\tilde{\bvec{U}} M_0$ we can produce estimate $\hat{\bvec{U}} := \argmin_{\bvec{V}} \{ \wt(\bvec{V}) : \bvec{V} M_0 = \tilde{\bvec{U}} M_0\}$, where $\wt(\bvec{V}) = |\{j : \bvec{V}_j \not=0 \}|$.
	
	We note that if $\wt(\tilde{\bvec{U}}) \leq b$ then $\hat{\bvec{U}} = \tilde{\bvec{U}}$. Indeed, we have $\wt(\hat{\bvec{U}}) \leq \wt(\tilde{\bvec{U}})$, therefore $\wt(\hat{\bvec{U}} - \tilde{\bvec{U}}) \leq 2 \wt(\tilde{\bvec{U}}) \leq 2 b$, but on the other hand $(\hat{\bvec{U}} - \tilde{\bvec{U}}) M_0 = 0$, and by the assumption on distance of $\ker M_0^T$ we deduce that $\hat{\bvec{U}} - \tilde{\bvec{U}} = 0$. Therefore $\P(\tilde{\bvec{U}} \not= \hat{\bvec{U}}) \leq \P(\wt(\tilde{\bvec{U}}) > b)$. All coordinates of $\tilde{\bvec{U}}$ are independent, and each $\tilde{\bvec{U}}_i$ is nonzero with probability at most $\delta$, therefore
	\begin{equation*}
		\P(\wt(\tilde{\bvec{U}}) > \beta_1) \leq \binom{k}{b} \delta^{b}.
	\end{equation*}
	Further, by Fano inequality (\cref{lem:prediction-gives-entropy}), we have
	\begin{equation*}
		H(\tilde{\bvec{U}} | \tilde{\bvec{U}} M_0) \leq 2 C \delta^{b} (b \log \delta^{-1} +  \log C + \log q )
	\end{equation*}
	where $C = \binom{k}{b}$. Again, for any $\varepsilon$, and small enough $\delta$ (with respect to $\varepsilon, b, k, q$), we have $H(\tilde{\bvec{U}} | \tilde{\bvec{U}} M_0) \leq \delta^{b - \varepsilon}$.
	
	This shows that for any $j > (1 - \eta) k$ (note that there are at least $\eta k$ such values of $j$) and small enough $\delta$ we have
	\begin{equation*}
		\H( (\bvec{U}M)_j | (\bvec{U}M)_{<j}, W) \leq \delta^{b - \varepsilon},
	\end{equation*}
	which completes the proof of a exponential matrix polarization for $M$.
\end{proof}

We are now almost ready to prove \cref{thm:thm2}. We start with a corollary which uses standard results on existence of codes with good distance.

\begin{corollary}
	\label{cor:beta-close-to-one}
	For every $\nu>0$ and every prime field $\F_q$, there exist $k$, and matrix $M \in \F_q^{k\times k}$, such that matrix $M$ satisfies $(1 - \nu, k^{1-\nu})$ exponential matrix polarization.
\end{corollary}
\begin{proof}
	Consider a parity check matrix $M_0$ of a BCH code with distance $2 k^{1 - \nu}$. We can achieve this with a matrix $M_0 \in \F_q^{k \times k_0}$, where $k_0 = \Oh(k^{1 - \nu} \log k)$ (see e.g.~\cite[Exercise 5.10]{essential-coding}).  
	Hence, as soon as $k > \Omega (2^{\nu^{-1} \log \nu^{-1})})$, we have $k_0 < \nu k$. Note that if $k_0=\nu_0 k$, then by \cref{lem:code-exp-suction} we can hope for $(1-\nu_0,k^{1-\nu_0}-\eps)$ exponential matrix polarization. We can now complete $M_0$ to a mixing matrix 
	to get overall $(1 - \nu, k^{1-\nu})$ exponential matrix polarization (since $\nu_0<\nu$). In order to complete matrix $M_0$ to a mixinig matrix, by \cref{lem:mixing-equiv} it is enough to complete it in arbitrary way to an invertible matrix, since already the first column of $M_0$ has support larger than $1$.
\end{proof}

\begin{remark}[Exponential polarization of random kernels]
	It is worth noting, that by the same argument and standard results on the distance of random linear codes, a random matrix $M \in \F_q^{k\times k}$ with high probability satisfies a $(1-\nu, k^{1-\nu})$ local polarization, with $\nu \to 0$ as $k \to \infty$.  Thus polar codes arising from a large random matrix will usually have this property.
\end{remark}

We now complete the proof of \cref{thm:thm2}.

\begin{proof}[Proof of \cref{thm:thm2}]
	Given $\beta < 1$ and $q$, let $\nu = (1-\beta)/3$. Now let $k$ and $M$ be as given by \cref{cor:beta-close-to-one}. 
	By \cref{lem:polarizing-matrix-implies-martingale}  we have that for every channel $\C_{Y|Z}$, $M$ satisfies $(1-\nu, k^{1-\nu})$-exponential local polarization. By \cref{thm:local-to-global-exp} we have that the same martingale satisfies $\Lambda$-exponentially strong polarization for $\Lambda = (1-\nu)^2 \log_2 k \geq (1-2\nu)\log_2 k$. By \cref{thm:main-quant} (in particular, \cref{rem:using-channel-coding-result}) we then get that the resulting codes have failure probability $O\parens{N \cdot \log q \cdot \exp\parens{-N^{1-2\nu}}} \le \exp\parens{-N^{1-3\nu}} =  \exp\parens{-N^{\beta}}$, where the first inequality holds for sufficiently large $N$ (as a function of $\nu$). 
\end{proof}

\subsection{Universality of Local Polarization}
\label{sec:univ-local-polarization}

Suppose we know that polar codes associated with a matrix $M \in \F_q^{k \times k}$ achieve capacity with error probability $\exp(-n^{\beta})$ in the limit of block lengths $n \to \infty$ (which may happen at lengths growing super polynomially in $\epsilon$ the gap to capacity). In this section, we prove a general result (previously stated as \cref{thm:thm3}) that `lifts' (in a black box manner) such a statement to the claim that, for every $\beta' < \beta$, polar codes associated with $M$ achieve polynomially fast convergence to capacity (i.e., the block length $n$ can be as small as $\mathrm{poly}(1/\epsilon)$ for rates within $\epsilon$ of capacity), and $\exp(-n^{\beta'})$ decoding error probability \emph{simultaneously}. Thus convergence to capacity at finite block length comes with almost no price in the (exponent of) decoding failure probability. 

Put differently, the result states that one can get polynomial convergence to capacity for free once one has a proof of convergence to capacity in the limit of $n \to \infty$ with root-exponential decoding error probability. Such proofs of convergence to capacity has been shown in \cite{KSU10} for the binary alphabet and \cite{mori-tanaka} for general alphabets. 
Yet another way of viewing the results of this section are that every proof of convergence to capacity has a proof of local polarization embedded in it. 

We get our result by proving a structural result that is roughly a converse to \cref{lem:code-exp-suction}.
Specifically in \cref{lem:compression-implies-distance} we show that if a matrix $M$ leads to a polar code with exponentially small failure probability then some high (but constant sized) tensor power $M^{\otimes t}$ of $M$ contains the parity check matrix of a high distance code. In fact more generally if a matrix in $\F_q^{k \times s}$ is the parity check matrix of a code which has a decoding algorithm that corrects errors from a $q$-symmetric channel with failure probability $\exp(-k^\beta)$ then this code has high distance.

Combining \cref{lem:compression-implies-distance} with \cref{lem:code-exp-suction} we get that every matrix that leads to a polar code with low error probability has a constant sized tensor that is a exponentially polarizing matrix. This immediately leads to a proof of \cref{thm:thm3}.

To derive our results we focus on a simple $q$-ary symmetric channel defined next.

\begin{definition}
	\label{defn:q-ary-bernoulli}
	For any finite field $\F_q$ and $\gamma \in [0,1]$, we will denote by $B_{q}(\gamma)$ 
	the distribution on $\F_q$ such that for $Z \sim B_{q}(\gamma)$ we have $\P(Z = 0) = 1 - \gamma$, and $\P(Z = k) = \frac{\gamma}{q-1}$ for any $k\not=0$.
\end{definition}

\begin{lemma}
	\label{lem:compression-implies-distance}
	Consider a matrix $H \in \F_q^{k \times s}$ and arbitrary decoding algorithm $\Dec : \F_q^s \to \F_q^k$, such that for independent random variables $\bvec{U}_1, \ldots \bvec{U}_i \sim B_{q}(\gamma)$ with $\gamma < \frac{1}{2}$, we have $\P(\Dec(\bvec{U}H) \not= \bvec{U}) < \exp(-k^{\beta})$. Then $\ker H$ is a code of distance at least $\frac{k^{\beta}} {\ln^{-1}(q/\gamma)}$.
\end{lemma}

\begin{proof}
	Consider maximum likelihood decoder $\Dec'(y) := \argmax_{x \in \F_q^k} \P(\bvec{U} = x | \bvec{U}H = y)$. By definition, we have $\P(\Dec'(\bvec{U}H) \not= \bvec{U}) < \P(\Dec(\bvec{U}H) \not= \bvec{U}) < \exp(-k^{\beta})$.
	
	Note that for $\bvec{U}$ distributed according to $B_{q}(\gamma)$, we have $\Dec'(\bvec{y}) = \argmin_{\bvec{x} : \bvec{x}H = \bvec{y}} \wt(\bvec{x})$, where $\wt(\bvec{x})$ is number of non-zero elements of $\bvec{x}$.
	
	Consider set $E = \{\bvec{x} \in \F_q^k \mid \text{ there exists } \bvec{h}\in \ker M, \wt(\bvec{x}+\bvec{h}) < \wt(\bvec{x})\}$, and observe that $\P(\Dec'(\bvec{U}H) \not = \bvec{U}) \geq \P(\bvec{U} \in E)$. We say that vector $\bvec{u} \in \F_q^k$ is \emph{dominated} by $\bvec{v} \in \F_q^k$ (denoted by $\bvec{u} \dom \bvec{v}$) if and only if $\forall i \in \supp(\bvec{u}),\, \bvec{u}_i = \bvec{v}_i$. We will argue that for any $\bvec{w_1} \in E$ and any $\bvec{w_2} \succeq \bvec{w_1}$, we have $\bvec{w_2} \in E$. Indeed, if $\bvec{w_1} \in E$, then there is some $\bvec{h} \in \ker H$ such that $\wt(\bvec{w_1} + \bvec{h}) < \wt(\bvec{w_1})$. We will show that $\wt(\bvec{w_2} + \bvec{h}) < \wt(\bvec{w_2})$, which implies that $\bvec{w_2} \in E$. Given that $\bvec{w_1} \preceq \bvec{w_2}$, we can equivalently say that there is a vector $\bvec{d}$ with $\bvec{w_1} + \bvec{d} = \bvec{w_2}$ and $\wt(\bvec{w_2}) = \wt(\bvec{w_1}) + \wt(\bvec{d})$. Hence
	\begin{equation*}
		\wt(\bvec{w_2} + \bvec{h}) = \wt(\bvec{w_1} + \bvec{d} + \bvec{h}) \leq \wt(\bvec{w_1} + \bvec{h}) + \wt(\bvec{d}) < \wt(\bvec{w_1}) + \wt(\bvec{d}) = \wt(\bvec{w_2})
	\end{equation*}
	
	Consider now $\bvec{w_0} \in \ker H$ to be minimum weight non-zero vector, and let us denote $A = \wt(\bvec{w_0})$. We wish to show a lower bound for $A$. By definition of the set $E$ we have $\bvec{w_0} \in E$, and by upward closure of $E$ with respect to domination we have 
	\[\P(\bvec{U} \in E) \geq \P(\bvec{w_0} \preceq \bvec{U}) = \parens{\frac{\gamma}{q - 1}}^A\ge \parens{\frac \gamma{q}}^A.\]
	
	On the other hand we have 
	\[\P(\bvec{U} \in E) \leq \P(\Dec'(\bvec{U}H) \not= \bvec{U}) \leq \P(\Dec(\bvec{U}H) \not= \bvec{U}) \leq \exp(-k^{\beta}).\]
	By comparing these two inequalities we get
	\begin{equation*}
		A \geq \frac{k^\beta}{\ln (q/\gamma)} \ . \qedhere
	\end{equation*}
\end{proof}

\begin{proof}[Proof of \cref{thm:thm3}]
	Consider the channel that outputs $\bvec{X} + \bvec{Z}$ on input $\bvec{X}$, where $\bvec{Z} \sim B_q(\gamma)$ for some $\gamma > 0$ (depending on $\beta, \beta'$). 
	The hypothesis on $M$ implies that for sufficiently large $n$ the polar code of block length $n$ corresponding to $M$ will have failure probability at most $\exp(-n^{\beta})$ on this channel.  Using the well-known equivalence between correcting errors for this additive channel and linear compression schemes (see e.g.~\cite[Prop. 11.2.1]{essential-coding}), we obtain that for all large enough $t$ there is some subset $S$ of $(h_q(\gamma)+\epsilon) k^t$ columns of $M^{\otimes t}$ that defines a linear compression scheme (for $k^t$ i.i.d copies of $B_q(\gamma)$), along with an accompanying decompression scheme with error probability (over the randomness of the source) at most $\exp(-k^{\beta t})$. 
	
	We now claim that  for all $\beta' < \beta$, there exists $t_0 = t_0(\beta',\beta)$ such that the Arikan martingale associated with some column permuted version of $M^{\otimes t_0}$, is $\beta' t_0 \log_2 k$-exponentially strongly polarizing.
	
	The proof of this claim  is in fact immediate, given the ingredients developed in previous sections. Apply the hypothesis about $M$ in the theorem with the choice $\epsilon  = (\beta-\beta')/4$ and $\gamma$ chosen small enough as a function $\beta,\beta'$ so that $h_q(\gamma)  \le (\beta-\beta')/4$ and let $t_0$ be a larger than promised value of $t$ in the statement, and large enough so that $3 \ln(q/\gamma) < m^{(\beta - \beta')/2}$ with $m := k^{t_0}$. Take moreover $\ell =  (h_q(\gamma)+\epsilon) m$ and $L = M^{\otimes t_0}$. 
	Using \cref{lem:compression-implies-distance} and the equivalence between linear coding for source and channel coding (mentioned above), we know there is submatrix $L' \in \F_q^{m \times \ell}$ of $L$ such that $\mathrm{ker}((L')^T)$ defines a code of distance $\Delta \ge m^{\beta}/\ln(q/\gamma)$. Define $M_0 = [ L' \mid \cdot] \in \F_q^{m \times m}$ to be any matrix obtained by permuting the columns of $L$ such that the columns in $L'$ occur first. By \cref{lem:code-exp-suction},  the matrix $M_0$ is $(1 -\ell/m,\Lambda)$-exponential matrix polarizing with $\Lambda = \Delta/2 - o(1) > \Delta/3$.

	For our choice of $\gamma,\epsilon$, we have $\ell/m \le \frac{\beta -\beta'}{2}$ and for our choice of $t_0$ (and therefore $m$) we have  $\Lambda \ge m^{(\beta+\beta')/2}$. Using \cref{lem:polarizing-matrix-implies-martingale} and \cref{thm:local-global}, it follows that the Arikan martingale associated with $M_0$ exhibits 
	$(\beta+\beta')/2 \times \left(1 - \frac{\beta-\beta'}{2} \right) \log_2 m$-exponentially strong polarization. 
	Since 
	\[\beta^"\stackrel{\text{def}}{=} \frac{\beta+\beta'}2 \cdot \left(1 - \frac{\beta-\beta'}{2} \right)= \beta'+ \frac{\beta-\beta'}2\cdot \parens{1- \frac{\beta+\beta'}2}> \beta',\] 
	the claim follows (in the above we used the fact that $0<\beta'<\beta<1$).

	Applying \cref{thm:main-quant} (and \cref{rem:using-channel-coding-result}) to the matrix $M_0 = M^{\otimes t_0}$ we conclude that there is a polynomial $p$ such that  given the gap to capacity $\varepsilon > 0$, and for every $s$ satisfying $N = k^{t_0 s} \geq p(\frac{1}{\varepsilon})$ there is an affine code generated by a subset of rows of $(M_0^{-1})^{\otimes s}$ which achieves $\varepsilon$-gap to capacity and has failure probability $\exp\parens{-N^{\beta"}}\cdot N\cdot \log{q}<\exp\parens{-N^{\beta'}}$ for large enough $N$. But this resulting code is simply an affine code generated by a subset of the rows of $(M^{-1})^{\otimes t}$, for $t = s t_0$, which concludes the proof. 
\end{proof}

\bibliographystyle{plain}
\bibliography{polar-refs}
\newpage
\appendix

\renewcommand{\t}{\widetilde}
\newcommand{\USC}{\bvec{\hat U}^{\textsc{SC}}}
\newcommand{\UF}{\bvec{\hat U}^{\textsc{F}}}
\newcommand{\UpF}{\bvec{\hat U}^{\prime\textsc{F}}}
\newcommand{\Zpj}{\bvec{Z}^{\prime (j)}}

\section{Codes from Polarization}
\label{sec:codes-from-polarization}

In this section, we describe the construction of polar codes,
and analyze the failure probability of decoders by corresponding
them to the \Arikan\ martingale.
This proves \cref{thm:main-quant,thm:poly-code}.

Specifically, 
we first describe the polar encoder
along with a fast $\mathcal{O}(n \log n)$-time implementation,
where $n$ is the blocklength.
Then, in \cref{sec:sc-decoder}
we define the (inefficient) successive-cancellation decoder,
and analyze its failure probability assuming a
correspondence between polar coding and the \Arikan\ martingale.
In \cref{sec:fast-decoder}, we describe a fast $\mathcal{O}(n \log n)$-time decoder
that is functionally equivalent to the successive-cancellation decoder.
Finally, in \cref{sec:correspondence},
we prove the required correspondence between polar coding and the \Arikan\
martingale.

Throughout this section, fix parameters
$k \in \N$ as the dimension of the mixing matrix $M \in \F_q^{k \x k}$, $\F_q$ as a finite field, and
$n = k^t$ as the codeword length.

\subsection{Polar Encoder}
\label{sec:polar-encoder}
Given a
set $S \subseteq [n]$ and a fixing
$\bvec{\alpha} \in \F_q^{|S^c|}$,\footnote{We use the notation $S^c=[n]\setminus S$.}
we define the polar code of dimension $|S|$ by giving the encoder mapping
$\F_q^S \to \F_q^n$
as follows:

\begin{algorithm}[H]
	\caption{Polar Encoder}
	\label{algo:slow-encoder}
	\begin{algorithmic}[1]
		\Const{$M \in \F_q^{k \x k}, S \subseteq [n], \bvec{\alpha} \in \F_q^{S^c}$}
		\Require{$\bvec{U} \in \F_q^S$}
		\Ensure{$\bvec{Z} \in \F_q^n$}
		\Procedure{Polar-Encoder}{$\bvec{U}; \bvec{\alpha}$}
		\State{Extend $\bvec{U}$ to $\bar{\bvec{U}} \in \F_q^n$
			by letting $(\bar{\bvec{U}}_i)_{i \not \in S} = \bvec{\alpha}$
			for coordinates not in $S$}
		\State \Return{$\bvec{Z} = \bar{\bvec{U}}\cdot (M^{-1})^{\otimes t}$}
		\EndProcedure
	\end{algorithmic}
\end{algorithm}

The above gives a polynomial time algorithm for encoding. An $\Oh_q(n\log n)$ algorithm can also be obtained by using the recursive structure imposed by the tensor powers.

Below, we switch to considering vectors in $\F_q^{k^t}$ as tensors in $(\F_q^k)^{\otimes t}$, indexed by multiindices $\bvec{i} \in [k]^t$.
The following encoder takes as input the `extended' message $\bvec{\bar U}$, as
defined above.

\begin{algorithm}[H]
	\caption{Fast Polar Encoder}
	\label{algo:fast-encoder}
	\begin{algorithmic}[1]
		\Const{$M \in \F_q^{k \x k}$}
		\Require{$\bvec{\bar U} \in (\F_q^k)^{\otimes t}$}
		\Ensure{$\bvec{Z}  = \bvec{\bar U} \cdot (M^{-1})^{\otimes t}$}
		\Procedure{Fast-Polar-Encoder$_t$}{$\bvec{\bar U}$}
		\If{$t=0$} 
		\State \Return $\bvec{\bar U}$ 
		\EndIf
		\ForAll{$j \in [k]$}
		\State{$\bvec{Z}^{(j)} \gets$ \Call{Fast-Polar-Encoder$_{t-1}$}{$\bar{\bvec{U}}_{[\cdot, j]}$}}
		\EndFor
		\ForAll{$\bvec{i} \in [k]^{t-1}$}
		\State
		$\bvec{Z}_{[\bvec{i}, \cdot]} \gets  (
		\bvec{Z}^{(1)}_{\bvec{i}},
		\bvec{Z}^{(2)}_{\bvec{i}},
		\dots,
		\bvec{Z}^{(k)}_{\bvec{i}}
		)\cdot M^{-1}$
		\EndFor
		\State \Return $\bvec Z$
		\EndProcedure
	\end{algorithmic}
\end{algorithm}

It is not too hard to verify that \cref{algo:fast-encoder} runs in $\mathcal{O}_{k,q}(n\log n)$ time. Indeed if $T(n)$ is the runtime of the algorithm on inputs of size $n=k^t$, then each call results in $k$ recursive calls to inputs of size $\frac nk$. Further, each recursive call solve $\frac nk$ systems of linear equations (each of which can be solved in $\Oh_q(k^3)$ time). Thus we get the recurrence (using the fact that $k$ is a constant) of $T(n)=k\cdot T(n/k) + \Oh_{k,q}(n)$, which results in the desired $\Oh_{k,q}(n\log{n})$ runtime. 

\subsection{The Successive-Cancellation Decoder}
\label{sec:sc-decoder}
Here we describe a successive-cancellation decoder.
Note that this decoder is not efficient,
but the fast decoder described later will nearly have the same error probability
as this decoder.

For given channel outputs $\bvec{Y}$,
let $\bvec{Z}$ be the posterior distribution
on channel inputs given outputs $\bvec{Y}$.
Each $\bvec{Z}_{\bvec{i}} \in \Delta(\F_q)$ is the conditional distribution
$\bvec{Z}_\bvec{i} | \bvec{Y}_\bvec{i}$ defined by the channel
$\mathcal{C}_{Y|Z}$ and the received output $\bvec{Y}_\bvec{i}$.

Now we define the decoder on the distribution vector $\bvec{Z}$ and the fixing 
$\bvec{\alpha} \in (\F_q \cup \{\bot\})^n$ as follows. We implicitly represent the subset $S^c$ of fixed positions by denoting $\bvec{\alpha}_i=\bot$ for those indices.

\begin{algorithm}[H]
	\caption{Successive-Cancellation Decoder}
	\label{algo:sc-decoder}
	\begin{algorithmic}[1]
		\Const{$M \in \F_q^{k \x k}, n = k^s,$}  
		\Require{$\bvec{Z} \in \Delta(\F_q)^n, \bvec{\alpha} \in (\F_q\cup \{\bot\})^{n}$ 
		}
		\Ensure{$\bvec{\hat U} \in \F_q^n$, {\color{brown} $\bvec{P} \in (\Delta(\F_q) \cup \bot)^n$}}
		\Procedure{SC-Decoder}{$\bvec{Z}; \bvec{\alpha}$}
		\State Compute the 
		distribution $\bvec{U} \in \Delta(\F_q^n)$ defined by $ \bvec{U} \leftarrow \bvec{Z} M^{\otimes s}$
		\ForAll{$i \in [n]$} \label{line:for-loop-scd}
		\If{$\bvec{\alpha}_i = \bot$}
		\State For $x \in \F_q$, $\hat{\bvec{U}}_i \gets \argmax_{x \in \F_q} \{\P_{\bvec{U}}(\bvec{U}_i = x)\}$ ;~~{\color{brown} $\bvec{P}_i(x) \gets \P_{\bvec{U}}(\bvec{U}_i = x)$} \label{line:scd-x}
		\Else
		\State $\hat{\bvec{U}}_i \gets \bvec{\alpha}_i$; {\color{brown} $\bvec{P}_i \gets \bot$}
		\EndIf
		\State Update distribution $\bvec{U} \leftarrow (\bvec{U} | \bvec{U}_i = \hat{\bvec{U}}_i)$ \label{line:update-U-scd}
		\EndFor
		\State \Return $\bvec{\hat U}$, {\color{brown} $\bvec{P}$}
		\EndProcedure
	\end{algorithmic}
\end{algorithm}

\begin{remark}
\label{rem:brown-part-sc-decoder}
    We note that parts in {\color{brown} brown} are not needed for the algorithm itself and only used in the analysis. Further, unless explicitly stated otherwise, we will use \textsc{SC-Decoder} to just denote the $\bvec{\hat U}$ part of the output (i.e. we will ignore {\color{brown} $\bvec{P}$} by default).
\end{remark}

Note that several of the above steps, including computing the joint distribution
of $\bvec{U}$ and marginal distributions of $\bvec{U}_i$,
are not computationally efficient
though we will get efficient algorithms effectively approximating these distributions later. Even then, we will only get an algorithm that gets an estimate of the probabilities $\P_{\bvec{U}}(\bvec{U}_i = x)$ to within an additive error of $1/4$ for every $x \in \F_q$. In what follows we will use the following definition:
	\begin{definition}
	\label{def:approx-sc-decoder}
	We will term an algorithm that runs an \textsc{SC-Decoder} where the algorithm gets an estimate of the probabilities $\P_{\bvec{U}}(\bvec{U}_i = x)$ to within an additive error of $1/4$ for every $x \in \F_q$  an {\em Approximate-Successive-Cancellation Decoder}.
	\end{definition}

\subsubsection{Decoding Analysis}

For this section, it will be useful to keep \cref{rem:brown-part-sc-decoder} in mind.

We will first reason about the `genie-aided' case, when the fixing $\bvec{\alpha} \in (\F_q\cup\{\bot\})^n$ of non-message bits is chosen uniformly at random, and revealed to both the encoder and decoder.
Then, we will argue that it is sufficient to use a deterministic fixing $\bvec{\alpha} = \bvec{\alpha}_0$.

We now argue that
over a uniform choice of message $\bvec{U}_S$,
and a uniform fixing $\bvec{\alpha}$ of non-message bits,
the probability of decoding failure is bounded as follows.

\begin{claim}
	\label{clm:small-failure-probability}
	For $S \subseteq [n]$ let $\bvec{V} \in (\F_q \cup\{\bot\})^n$ be given by $\bvec{V}_i \sim \F_q$ if $i \in S$ and $\bot$ otherwise.
	Let $\bvec{\alpha} \in (\F_q \cup\{\bot\})^n$ be given by $\bvec{\alpha}_i \sim \F_q$ if $i \not\in S$ and $\bot$ otherwise.
	Let $\bvec{Z} :=
	\textsc{Polar-Encoder}(\bvec{V};\bvec{\alpha})$ and $\bvec{Y}$ sampled according to the channel $\bvec{Y} := \cC_{Y|Z}(\bvec{Z})$. 
	Let $\bvec{U} \in \F_q^n$ be given by $\bvec{U}_i = \bvec{V}_i$ if $i \in S$ and $\bvec{\alpha_i}$ if $i \not\in S$. 
	With this notation, we have
	\begin{equation*}
		\Pr[
		\text{\sc SC-Decoder}(\bvec{Y}; \bvec{\alpha}) \neq \bvec{U} ]
		\leq  \sum_{i \in S} H( \bvec{U}_{i} ~|~ \bvec{U}_{< i}, \bvec{Y}).
	\end{equation*}
		Furthermore for every approximate-successive-cancellation decoder $D$ we have
		\begin{equation*}
			\Pr[
			D(\bvec{Y}; \bvec{\alpha}) \neq \bvec{U} ]
			\leq  3 \sum_{i \in S} H( \bvec{U}_{i} ~|~ \bvec{U}_{< i}, \bvec{Y}).
		\end{equation*}
\end{claim}

\begin{proof}
	Note that $\bvec{U}$ is uniform over $\F_q^n$.
	Now, we have:
	\begin{align*}
		\P\parens{\textsc{SC-Decoder}(\bvec{Y}; \bvec{\alpha}) \not= \bvec{U}} & =
		\P\parens{\exists i~ \hat{\bvec{U}}_i \not= \bvec{U}_i} \\
		& = \sum_{i \leq n} \P\parens{\hat{\bvec{U}}_i \not= \bvec{U}_i \mbox{ and } \hat{\bvec{U}}_{<i} = \bvec{U}_{<i}} \\
		& \leq \sum_{i \leq n} \P\parens{\hat{\bvec{U}}_i \not= \bvec{U}_i ~|~ \hat{\bvec{U}}_{<i} = \bvec{U}_{<i}}.
	\end{align*}
	Clearly for $i \not\in S$ we have $\P[\hat{\bvec{U}}_i \not= \bvec{U}_i] = 0$, since both are defined to be equal to $\bvec{\alpha}_i$ on those coordinates. It is enough to show that for $i\in S$ we have
	\begin{equation*}
		\P(\hat{\bvec{U}}_i \not= \bvec{U}_i ~|~ \bvec{U}_{<i} = \hat{\bvec{U}}_{<i}) \leq H(\bvec{U}_i ~|~ \bvec{U}_{<i}, \bvec{Y}).
	\end{equation*}
	This follows directly from \cref{lem:ML-conditional}, as $\hat{\bvec{U}}_i$
	is defined exactly as a maximum likelihood estimator of $\bvec{U}_i$ given channel
	outputs $\bvec{Y}$ and conditioning on $\bvec{U}_{<i}$ (note that the conditioning is happening in Line~\ref{line:update-U-scd}).
	
	The furthermore part of the claim follows from using the furthermore part of \cref{lem:ML-conditional} in the final step above.
\end{proof}
\begin{claim}
	\label{clm:exist-S}
	Let $n=k^t$, $\bvec{U} \sim \F_q^n, \bvec{Z} := \bvec{U} (M^{-1})^{\otimes t}, \bvec{Y} := \cC_{Y|Z}(\bvec{Z})$, where $\cC_{Y|Z}$ is a symmetric channel.
	If \Arikan~Martingale associated with $(M, \cC)$  satisfies $(\tau_\ell, \tau_h, \varepsilon)$-polarization, then there
	exists a subset $S \subset [n]$ of size $(\mathrm{Capacity}(\cC_{Y|Z}) - \varepsilon
	- \two\tau_h) n$,
	such that
	\begin{equation*}
		\sum_{i \in S} H( \bvec{U}_{i} ~|~ \bvec{U}_{< i}, \bvec{Y}) \leq \tau_\ell n \log q.
	\end{equation*}
\end{claim}
\begin{proof}

	Applying \cref{lem:mart-code}, we can deduce that for uniformly random index $i \in [n]$, normalized entropies $\bH(\bvec{U}_i | \bvec{U}_{< i}, \bvec{Y})$ are distributed identically as $X_t$ in the \Arikan~Martingale.
	
	Now, for symmetric channels, the uniform distribution achieves
	capacity (see e.g., \cite[Theorem 7.2.1]{CoverThomas}). And since matrix $(M^{(-1)})^{\otimes t}$ is invertible, vector $\bvec{Z}$ also has a uniform distribution.
	Thus, for uniform channel
	input $\bvec{Z}$,
	\begin{equation}
	\label{eq:cap-in-terms-of-Z}
	n\cdot \text{Capacity}(\cC_{Y|Z})
	= \bH(\bvec Z) - \bH(\bvec Z | \bvec Y)
	= n-\bH(\bvec Z |  \bvec Y).	    
	\end{equation}

	Let $S$ be the set of all indices $i$ such that $\bH(\bvec{U}_i ~|~ \bvec{U}_{<i}, \bvec{Y}) < \tau_\ell$.
	By definition, we have
	$$
	\sum_{i \in S}
	\bar H( \bvec{U}_{i} ~|~ \bvec{U}_{< i}, \bvec{Y}) \leq \tau_\ell n
	$$
	as desired.
	
	Now observe that polarization of martingale $X_t$ and \cref{lem:mart-code} directly implies that we have at most $\varepsilon n$ indicies $i$ satisfying $\bH(\bvec{U}_i ~|~ \bvec{U}_{<i}) \in (\tau_\ell, 1-\tau_h)$ (recall that in \cref{lem:mart-code} we pick one such index uniformly at random). Let $S'$ be a set of indices for which $\bH(\bvec{U}_i ~|~ \bvec{U}_{<i}, \bvec{Y}) > 1 - \tau_h$. We have
	\begin{align*}
		n (1 - \text{Capacity}(\cC_{Y|Z})) & = \bH(\bvec{U} (M^{-1})^{\otimes t} ~|~ \bvec{Y}) \tag{\cref{eq:cap-in-terms-of-Z}}\\
		& = \bH(\bvec{U}_{1}, \ldots, \bvec{U}_{n} | \bvec{Y}) \tag{Since
			$(M^{-1})^{\otimes t}$ is full rank}\\
		& = \sum_{i \in [n]} \bH(\bvec{U}_{i} | \bvec{U}_{<i}, \bvec{Y}) \tag{Chain
			rule}\\
		& \geq \sum_{i \in S'} \bH(\bvec{U}_{i} | \bvec{U}_{<i}, \bvec{Y}) \\
		& \geq (1 - \tau_h) |S'| \ge |S'| - \tau_h n , 
	\end{align*}
	which implies that
	\begin{equation*}
		|S'| \leq n(1 - \text{Capacity}(\cC_{Y|Z}) + \two \tau_h),
	\end{equation*}
	and finally 
	\begin{equation*}
		|S| \geq n - |S'| - \varepsilon n \geq
		n (\text{Capacity}(\cC_{Y|Z}) - \varepsilon - \two \tau_h) \ . \qedhere
	\end{equation*}
\end{proof}

We can now combine the above to prove
a version of
\cref{thm:poly-code} for the (inefficient) successive-cancellation decoder:

\begin{theorem}
	\label{thm:sc-decoder-quant}
	Let $\C$ be a $q$-ary symmetric memoryless channel and let $M \in \F_q^{k \times k}$ be an
	invertible matrix.
	If the \Arikan\ martingale associated with $(M,\C)$ satisfies $(\tau_\ell, \tau_h,
	\varepsilon)$-polarization, then for every $t$, there is an affine code $C$,
	that is generated by the rows of $(M^{-1})^{\otimes t}$ and an affine shift,
	such that the rate of $C$ is at least $\mathrm{Capacity}(\C) - \varepsilon(t) -
	\two \tau_h(t)$, and $C$ can be encoded in time $\Oh(n \log n)$ where $n = k^t$.
	Furthermore, the successive-cancellation decoder
	succeeds with probability at least $1-n \log(q) \tau_\ell$
	and every approximate-successive-cancellation decoder
		succeeds with probability at least $1-3n \log(q) \tau_\ell$.
\end{theorem}

\begin{proof}
	Let $\bvec{\bar U} \sim \F_q^n$, $\bvec{\bar Z} := \bvec{U} (M^{-1})^{\otimes t}$,
	and $\bvec{\bar Y} := \cC_{Y|Z}(\bvec{\bar Z})$.
	
	By \cref{clm:exist-S},
	there exist a set $S \subset [n]$ of size $(\text{Capacity}(\cC_{Y|Z}) -
	\varepsilon - \two\tau_h) n$, such that
	\begin{align*}
		\sum_{i \in S} H( \bvec{\bar U}_{i} ~|~ \bvec{\bar U}_{< i}, \bvec{\bar Y}) \leq \tau_\ell n \log q
	\end{align*}

	On the other hand, by \cref{clm:small-failure-probability}, the failure probability of
	the successive-cancellation decoder is bounded by
	\begin{equation}\label{eq:scd-failure}
		\Pr_{U, \alpha, Y}[
		\text{\sc SC-Decoder}(\bvec{Y}; \bvec{\alpha})_S \neq U ]
		\leq  \sum_{i \in S} H( \bvec{U}_{i} ~|~ \bvec{U}_{< i}, \bvec{Y}),
	\end{equation}
	where random variables $\bvec{U}, \bvec{Y}, \bvec{\alpha}$
	are defined as in \cref{clm:small-failure-probability}. Note that, in fact the joint distribution of $(\bvec{U}, \bvec{Y}, \bvec{Z})$ and $(\bvec{\bar{U}}, \bvec{\bar{Y}}, \bvec{\bar{Z}})$, are the same, despite superficially more complicated way in which sampling from distribution $(\bvec{U}, \bvec{Y}, \bvec{Z})$ was defined. Therefore
	\begin{align*}
		\sum_{i \in S} H( \bvec{U}_{i} ~|~ \bvec{U}_{< i}, \bvec{Y}),
		&=  \sum_{i \in S} H( \bvec{\bar U}_{i} ~|~ \bvec{\bar U}_{< i}, \bvec{\bar Y})\\
		&\leq \tau_\ell n \log q.
	\end{align*}
	Note that this failure probability is an average over random choice
	of fixing $\bvec{\alpha}$,
	but this implies there is some deterministic fixing $\bvec{\alpha} = \bvec{\alpha}_0$ with
	failure probability at least as good. Further, by linearity of the encoding
	(\cref{algo:fast-encoder})
	such a deterministic fixing yields an affine code.
	The rate of this code is $|S|/n \geq (\text{Capacity}(\cC_{Y|Z}) - \varepsilon -
	\two\tau_h)$ as desired.
	
	If we replace the successive-cancellation decoder by an approximate successive cancellation decoder, then the theorem follows by using the furthermore part of \cref{clm:small-failure-probability} in \cref{eq:scd-failure} above.
\end{proof}

\subsubsection{Fast Decoder}
\label{sec:fast-decoder}
In this section we will define the recursive \textsc{Fast-Decoder} algorithm.
The observation that polar codes admit a recursive fast-decoder was made in
the original work of \Arikan~\cite{arikan-polar}. Our presentation is somewhat different in that it decodes general product distributions (and does not require the marginals to be identical). 

\textsc{Fast-Decoder}  will take on input descriptions of the posterior
distributions on channel inputs $\{\bvec{Z}_\bvec{i}\}_{\bvec{i} \in [k]^s}$ for some $s$, where each individual $\bvec{Z}_{\bvec{i}} \in \Delta(\F_q)$ is a distribution over $\F_q$, as well as $\bvec{\alpha} \in (\F_q \cup \{\bot\})^{[k]^s}$ where $\bvec{\alpha}_{\bvec{i}} \in \F_q$ 
are the fixed values corresponding to non-message positions. The output of \textsc{Fast-Decoder} is a vector $\bvec{\hat{Z}} \in (\F_q^k)^{\otimes s}$---the guess for the actual channel inputs. To recover the message, it is enough to apply $\bvec{\hat{U}} := \bvec{\hat{Z}} M^{\otimes s}$, and restrict it to the positions where $\bvec{\alpha}_i  =\bot$.

In \cref{algo:fast-decoder}, for $\bvec{W}_{\bvec{i}} \in \Delta(\F_q^k)$---a description of joint probability distribution over $\F_q^k$, we will write $\pi_j(\bvec{W}_{\bvec{i}}) \in \Delta(\F_q)$ as a $j$-th marginal of $\bvec{W}_{\bvec{i}}$ for $j\in [k]$, i.e. projection on the $j$-th coordinate. In addition, we will use $\pi_{\le j}(\bvec{W}_{\bvec{i}}) \in \Delta(\F_q)^j$ to denote the projection of $\bvec{W}$ to the first $j$ marginal coordinates.

\begin{algorithm}[H]
	\caption{Fast Decoder}
	\label{algo:fast-decoder}
	\begin{algorithmic}[1]
		\Const{$M \in \F_q^{k \x k}$}
		\Require{$\bvec Z = \{\bvec{Z}_{\bvec{i}} \in \Delta(\F_q)\}_{\bvec i \in [k]^s}, ~\bvec{\alpha} \in (\F_q \cup \{\perp\})^{[k]^s}$}
		\Ensure{$\bvec{\hat Z} \in (\F_q^k)^{\otimes s}$, {\color{brown} $\bvec{Q} \in(\Delta(\F_q^k) \cup \{\bot\})^{\otimes s}$, $\UF \in(\F_q^k)^{\otimes s}$}}
		\Procedure{Fast-Decoder$_s$}{$\bvec{Z}$; $\bvec{\alpha}$}
		\If{$s = 0$}
		\If{$\bvec{\alpha} = \bot$}
		\State \Return $\hat Z = \argmax_{x \in \F_q} \P\parens{\bvec{Z} = x}$, {\color{brown} $\bvec{Q} = \bvec{Z}$, $\UF = \hat{Z}$}\label{line:base-case}
		\Else
		\State \Return $\hat Z = \bvec{\alpha}$, {\color{brown} $\bvec{Q} = \bot$, $\UF = \bvec{\alpha}$}
		\EndIf
		\Else
		\ForAll{$\bvec{i} \in [k]^{s-1}$}
		\State Compute joint distribution $\bvec{W}_{\bvec{i}} \in \Delta(\F_q^k)$,
		given by $\bvec{W}_{\bvec{i}} \leftarrow \bvec{Z}_{[\cdot, \bvec{i}]} M$ \label{line:update-W_i}
		\EndFor
		
		\ForAll{$j \in [k]$}
		\State $\Zpj \leftarrow \{\pi_j(\bvec{W}_\bvec{i})\}_{\bvec{i} \in [k]^{s-1}}$ \label{line:z}
		\State $\hat{\bvec{V}}_{[j,\cdot]}$, {\color{brown} $\bvec{Q}_{[j,\cdot]}$, $\UF_{[j,\cdot]}$} $\leftarrow \textsc{Fast-Decoder}(\Zpj; \bvec{\alpha}_{[j, \cdot]}, s-1)$ \label{line:v}
		
		\ForAll{$\bvec{i} \in [k]^{s-1}$}
		\State Update distribution
		$\bvec{W}_{\bvec{i}} \leftarrow (\bvec{W}_{\bvec{i}} | \pi_{\leq j}(\bvec{W}_{\bvec{i}})= \bvec{\hat{V}}_{[\leq j,\bvec{i}]})$ 
		\label{line:updateW}
		\EndFor
		
		\EndFor
		
		\ForAll{$\bvec{i} \in [k]^{s-1}$}
		\State $\hat{\bvec{Z}}_{[\cdot, \bvec{i}]} \leftarrow \bvec{V}_{[\cdot,\bvec{i}]}\cdot M^{-1}$ \label{line:setZ}
		\EndFor
		
		\State \Return $\bvec{\hat Z}$, {\color{brown} $\bvec{Q}$, $\UF$}\label{line:ret-fast-dec}
		
		\EndIf
		\EndProcedure
	\end{algorithmic}
\end{algorithm}

We make an remark analogous to \cref{rem:brown-part-sc-decoder} for \textsc{Fast-Decoder}:
\begin{remark}
\label{rem:brown-part-fast-decoder}
    In the code above, the parts in {\color{brown} brown} are not needed for the running of the algorithm but included since they help with the analysis. Further, unless explicitly stated otherwise, we will use \textsc{SC-Decoder} to just denote the $\bvec{\hat Z}$ part of the output (i.e. we will ignore {\color{brown} $\bvec{Q}$, $\UF$} by default).
\end{remark}

Analogous to \cref{def:approx-sc-decoder}, we define a similar approximate version of \textsc{Fast-Decoder}:
	\begin{definition}
	\label{def:approx-fast-decoder}
	We will term an algorithm that runs an \textsc{Fast-Decoder} where the algorithm gets an estimate of the probabilities $\P\parens{\bvec{Z} = x}$ to within an additive error of $1/4$ for every $x \in \F_q$  a {\em precision-bounded \textsc{Fast-Decoder}}.
	\end{definition}

The \textsc{Fast-Decoder} as described above runs in time $\Oh(n \log n)$, where $n = k^s$ is block length if one assumes infinite precision arithmetic. Furthermore even a bounded-precision model only requires $\Oh(n \log n)$ operations in the ``floating point RAM'' model --- the model where a non-negative real number $r \in [0,1]$ is represented with two $\ell = \Oh(\log n)$ bit integers $a, b$ as $a\cdot 2^b$ and two such numbers can be added, multiplied or divided in a single step. 

In bit more detail, the above representation is also known as the {\em Floating point number system}~\cite[Chapter 2]{higham-book}. Before we go into the details of the runtime analysis of \textsc{Fast-Decoder}, we quickly summarize the relevant properties of the floating point number system.

\paragraph{Floating point number system and floating point RAM model.}

We recall the definition of the floating point number system:
\begin{definition}[\cite{higham-book}, Section 2.1]
    \label{def:fl}
    A floating point number system $F\subset \R$ is a subset of real numbers whose elements have the form
    \[ y = \pm a\cdot \beta^{e-\Delta},\]
    where
    \begin{itemize}
        \item The integer $\beta\ge 2$ is the {\em base} or {\em radix}
        \item The natural number $\Delta$ is the {\em precision}
        \item The integer $e$ is the {\em exponent} and has the range $e_{\min}\le e\le e_{\max}$ for integers $e_{\min}\le e_{\max}$
        \item The natural number $a$ is the {\em significand} and it is assumed that
        \[\beta^{\Delta-1}\le a \le \beta^{\Delta}-1.\]
    \end{itemize}
    
    The {\em representation range} of $F$ is given by $\brackets{\beta^{e_{\min}-1},\beta^{e_{\max}}\parens{1-\beta^{-\Delta}} }$.
\end{definition}

Before we proceed, we note the simplications to the above definition that we use in our model:
\begin{definition}
    \label{def:fl-simple}
    We use the floating point number system from \cref{def:fl} with the following simplifications/modifications:
    \begin{itemize}
        \item Set $\beta=2$.
        \item $\Delta=\ell$.\footnote{Since we are using $\ell$ {\em bits} to represent $a$.}
        \item $e_{\min}=-2^{\ell}$ and $e_{\max}=2^\ell$.
    \end{itemize}
\end{definition}

For the rest of this discussion we will assume the parameters that we have set in \cref{def:fl-simple}. Next we recall some properties of the floating point number system that we will use as given in our runtime analysis of \textsc{Fast-Decoder}.

Before we present the results, we fix some more notation. For $x\in \R$ falling within the representation range of the floating point system, we will use $\rnd{x}$ to denote the closest approximation of $x$ in the floating point system. For any vector $\bvec{y}\in\R^k$, we will overload notation and use $\rnd{\bvec{y}}$ to denote the vector obtained by applying $\rnd{\cdot}$ to each component of $\bvec{y}$. This leads to the following definition, which defines a crucial quantity that will turn up in our approximation bounds.

\begin{definition}
    \label{def:roundoff}
    The {\em unit roundoff} is defined as
    \[u = 2^{-\Delta}.\]
\end{definition}

We first recall a bound on the approximation error that the rounding entails:
\begin{lemma}[\cite{higham-book}, Theorem 2.2]
    \label{lem:rounding-error}
    Let $x\in\mathbb{R}$ be in the representation range of the floating point system. Then
    \[\rnd{x}=(1+\delta)\cdot x\text{    where    } \abs{\delta}<u.\]
\end{lemma}

We will also use $\rnd{\cdot}$ applied to a formula, to denote a result of a floating-point evaluation of this formula. We will use the so called {\em standard model}~\cite[Section 2.2]{higham-book}:
\begin{definition}[Standard Model]
\label{def:standard-model}
The standard model  assumes the following precision bounds on binary operations.
Given $x,y\in F$ and $\mathrm{op}\in\set{+,-,\times, \div}$, we 
have

\[\rnd{x~\mathrm{op}~y} = (x~\mathrm{op}~y)\cdot (1+\delta) \text{    where    } \abs{\delta}\le u,\]
as long as $x~\mathrm{op}~y$ is in the representation range.
\end{definition}
In particular, even if $x~\mathrm{op}~y$ happens to have the exact representation in the floating point number system $F$, we do not require the result of this floating point operation to be exact.

For the rest of the section, we will assume the standard model in our floating point RAM model.

Next, we present a technical lemma that will be useful for us:
\begin{lemma}[Simple generalization of Lemma 3.1 in~\cite{higham-book}]
\label{lem:higham-bound}
    Let $\delta_1,\dots,\delta_n$ be such that $\sum_{i=1}^n  \abs{\delta_i}<1$ and let $\rho_i\in \set{-1,1}$ for all $i\in [n]$. Then we have
    \[\prod_{i=1}^n \parens{1+\delta_i}^{\rho_i} = 1 +\theta,\]
    where
    \[\abs{\theta} \le \frac{\sum_{i=1}^n  \abs{\delta_i}}{1-\sum_{i=1}^n  \abs{\delta_i}}.\]
\end{lemma}

Finally, we present approximation error bounds for computing a bounded-degree rational function, which will be crucial in our runtime analysis of \textsc{Fast-Decoder}:
\begin{lemma}
    \label{lem:rational-fn-error}
Let $f(X_1,\dots,X_N)$ and $g(X_1,\dots,X_N)$ be {\em multi-linear} polynomials\footnote{The result can be proven for general polynomials as well. However, since we only need the result for multilinear polynomials and the notation for multi-linear polynomials is slightly cleaner, we stick with the multi-linear case.} such that both satisfy the following properties:
\begin{itemize}
    \item the degree is at most $d$
    \item there are at most $m$ monomials
    \item all the coefficients are non-negative and have exact representation in the floating point number system.
\end{itemize}
Further, let $\bvec{x},\tvec{x}\in\R_{\ge 0}^N$, be such that there exists an $\eps>0$ such that for every $i\in [N]$, we have
\[ \abs{\bvec{x}_i-\tvec{x}_i}\le \eps \bvec{x}_i,\]
and moreover let $e_0$ be such that all $\tvec{x}_i$  and all coefficients of $f, g$ lie in $\brackets{2^{-e_0}, 2^{e_0}}$.
Then, assuming 
\begin{eqnarray}
    \label{eq:accuracy-pre-bound}
    4\parens{d\cdot \eps + (d+\log{m})\cdot u}+1& \le&  \frac 12, \\
\label{eq:range-pre-bound}
    e_1 := 2 (d+1)(e_0 + 1) + 4 \log m + 1  & \leq & 2^\ell,
\end{eqnarray}
we have that
\begin{equation}
    \label{eq:final-rational-fn-bound}
    \abs{\frac{f(\bvec{x})}{g(\bvec{x})} - \rnd{\frac{f(\tvec{x})}{g(\tvec{x})} } } 
\le 8 \cdot \parens{d\cdot \eps + (d+\log{m}+1)\cdot u} \cdot \frac{f(\bvec{x})}{g(\bvec{x})},
\end{equation}
 and moreover
\begin{equation}
\label{eq:range-post-bound}
    \left|\log \rnd{\frac{f(\tvec{x})}{g(\tvec{x})}}\right| \leq e_1,
\end{equation}

where the $\rnd{\frac{f(\tvec{x})}{g(\tvec{x})} }$ is computed by using pair-wise operations (and paying for approximation error for each such operation as in the standard model).

\end{lemma}
\begin{proof}
    We will compute $\rnd{\frac{f(\tvec{x})}{g(\tvec{x})} }$ by first computing each monomial in $f(\tvec{x})$ and $g(\tvec{x})$ and then summing up the at most $m$ values in a depth $\log{m}$ tree fashion. Finally we divide $f(\tvec{x})$ by $g(\tvec{x})$ to obtain our answer.
    
    For notational convenience for each $i\in [N]$, define $\eps_i$ such that $\tvec{x}_i = (1+\eps_i)\cdot \bvec{x}_i$. Note that we have $\abs{\eps_i}\le \eps$.
    
    To see the error bound consider an arbitrary monomial, which we assume WLOG to be $\prod_{i=1}^d X_i$. We compute $\prod_{i=1}^d \tvec{x}_i$ in the obvious way. It is easy to check that $\rnd{\prod_{i=1}^d \tvec{x}_i}=\parens{\prod_{i=1}^d \tvec{x}_i}\cdot \prod_{i=1}^{d-1}(1+\delta_i)$, where $\abs{\delta_i}\le u$. Further, by definition of $\eps_i$, we have
    \[\rnd{\prod_{i=1}^d \tvec{x}_i}=\parens{\prod_{i=1}^d \bvec{x}_i}\cdot \prod_{i=1}^{d}(1+\delta_i)(1+\eps_i),\]
    where for notational simplicity define $\delta_d=1$.
    
    To apply the error bounds for the floating point operations we need to argue that all the results of the multiplications in the computation above are within the representations range. Indeed, since $\tvec{x}_i \geq 2^{-e_0}$, all the intermediate results in the multiplication above are at least $\left(\frac{2^{-e_0}}{1 + u}\right)^{d+1} \geq 2^{-(e_0 + 1)(d+1)}$. Similarly, for the upper bound: since $\tvec{x_i} \leq 2^{e_0}$, all the intermediate results are at most $\left((1+u)2^{e_0}\right)^{(d+1)} \leq 2^{(e_0 + 1)(d+1)}$, which is assumed to be within the representation range \eqref{eq:range-pre-bound}. 
    
    Now let us consider the computation of $\rnd{f(\tvec{x})}$. Let $\mathcal M$ be the collection of all subset of size at most $d$ that correspond to the monomials in $f(X_1,\dots,X_N)$. Then when computing $\rnd{f(\tvec{x})}$, for each $S\in \mathcal{M}$, we first compute $\rnd{\prod_{i\in S} \tvec{x}_i}$, which satisfies by the above discussion,
        \[\hat{m}_S\stackrel{\text{def}}{=}\rnd{\prod_{i\in S} \tvec{x}_i}=\parens{\prod_{i\in S} \bvec{x}_i}\cdot \prod_{i\in S}(1+\delta_i)(1+\eps_i).\]
    Now recall, we need to compute $\sum_{S\in \mathcal{M}} \hat{m}_S$. This in turn adds more error. In particular, if we use the algorithm that computes the sum in a recursive-pairwise manner, we get that
    \[\rnd{f(\tvec{x})} = \sum_{S\in \mathcal{M}} \widetilde{m}_S,\]
    where
    \[\widetilde{m}_S = \hat{m}_S\prod_{j=1}^{\log{\abs{\mathcal{M}}}} \parens{1+\delta^{(S)}_j},\]
    where each $\abs{\delta^{(S)}_j}\le u$. In other words, we have
    \[\widetilde{m}_S = \parens{\prod_{i\in S} \bvec{x}_i}\cdot \parens{\prod_{i\in S}(1+\delta_i)(1+\eps_i)}\cdot \parens{\prod_{j=1}^{\log{\abs{\mathcal{M}}}} \parens{1+\delta^{(S)}_j}}.\]
    The above along with \cref{lem:higham-bound}, shows that for every $S\in\mathcal{M}$,
    \begin{align*}
    \abs{\widetilde{m}_S - \parens{\prod_{i\in S} \bvec{x}_i}} & \le \frac{\abs{S}\parens{\eps+u}+\log{m}\cdot u}{1- \abs{S}\parens{\eps+u}+\log{m}\cdot u} \cdot \parens{\prod_{i\in S} \bvec{x}_i}\\
    & \le 2\cdot\parens{d(\eps+u)+\log{m}\cdot u}\cdot \parens{\prod_{i\in S} \bvec{x}_i},        
    \end{align*}
    here the first inequality follows from the facts that $\abs{\mathcal{M}}\le m$, $\abs{\delta_i}\le \eps$, $\abs{\delta^{(S)}_j}\le u$ and $\abs{\eps_i}\le \eps$ and the second inequality follows from the fact that $\abs{S}\le d$ and $d(\eps+u)+\log{m}\cdot u\le \frac{1}{2}$ (which in turn follows from \cref{eq:accuracy-pre-bound}).
    
    Now, using the fact that all cofficients in $f(x)$ are non-negative and have an exact representation in the floating point number system, the above then implies that
    \[\abs{\rnd{f(\tvec{x})} - f(\bvec{x})} \le 2\cdot\parens{d(\eps+u)+\log{m}\cdot u}\cdot f(\bvec{x}).\]
    By a similar argument we get
        \[\abs{\rnd{g(\tvec{x})} - g(\bvec{x})} \le 2\cdot\parens{d(\eps+u)+\log{m}\cdot u}\cdot g(\bvec{x}).\]
     As earlier, to apply the error bounds on the result of each floating point addition in the calculation, we need to ensure that all results of all the intermediate computations are within the representation range. Since we are adding exactly represented non-negative values, the lower bound of the representation range is trivially smaller  than any of those intermediate values. The largest intermediate value can appear at the end of the calculation, and is upper bounded by $(1+u)^{\log m} m ((1+u) 2^{e_0})^{d+1}) \leq 2^{(d+1)(e_0 + 1) + 2 \log m}$, which is assumed to be in the representation range \eqref{eq:range-pre-bound}.  
    
    Then noting that to compute the final answer we divide $\rnd{f(\tvec{x})}$ by $\rnd{g(\tvec{x})}$, which along with \cref{def:standard-model}, \cref{lem:higham-bound} and \cref{eq:accuracy-pre-bound}, proves the claimed bound in \cref{eq:final-rational-fn-bound}, as desired.
    
    Moreover, since $|\log \rnd{g(\tvec{x})}| \leq (d+1)(e_0 + 1) + 2 \log m$, and similarly for $|\log \rnd{f(\tvec{x})}|$, the quotient satisfy $|\log \rnd{ \frac{f(\tvec{x})}{g(\tvec{x})}}| \leq 2 (d+1)(e_0 + 1) + 4 \log m + 1 = e_1$, proving \cref{eq:range-post-bound}.
\end{proof}

Finally, we state the definition of a floating point RAM:
\begin{definition}[Floating point RAM model]
\label{def:floating-RAM}
A floating point RAM works with numbers in the floating point system as in \cref{def:fl-simple} with $\ell=O(\log{n})$ for inputs of size $n$. Each arithmetic operation in the floating point number  system is assumed to take unit time.
\end{definition}
We note that in the above, each floating point number can be represented with constant many registers of  $O(\log{n})$ bits and that each of the basic floating operations translates to constant many operations over constant many registers of $O(\log{n})$ bits. In other words, each such floating point operation can be done in $O(1)$ time in the standard RAM model and this justifies the assumption on floating point operations taking unit time in the above definition.

\paragraph{Runtime analysis of \textsc{Fast-Decoder}.}

We are now ready to do a runtime analysis of \textsc{Fast-Decoder}:

	\begin{lemma}\label{lem:nlogn}
		For $n = k^s$, \textsc{Fast-Decoder} runs in $\Oh_{q,k}(n\log n)$ time assuming unit cost infinite precision arithmetic. Furthermore it can be implemented in a bounded-precision floating point RAM
		model (of \cref{def:floating-RAM}) to compute every intermediate real number to within an additive error of $1/4$ in $\Oh_{q,k}(n\log n)$ time, as long as the description of the channel $\mathcal{C}_{Y|Z}$ is given in a floating point number system using $\Oh(\log n)$ bits per conditional probability. In other words, bounded-precision \textsc{Fast-Decoder} can also be implemented in $\Oh_{q,k}(n\log n)$ time in the floating point RAM.
	\end{lemma}
	
	\begin{proof}
	    We first remark that we use a ``truth-table'' representation for each probability distribution, i.e. we store tables with $q$ and $q^k$ floating point numbers respectively to represent a distribution in $\Delta(\F_q)$ and $\Delta(\F_q^k)$, respectively. In other words, each $\bvec{Z_j}$ for each $\bvec{j}\in [k]^s$ is a vector length $q$ and $\bvec{W_i}$  for each  $\bvec{i}\in [k]^{s-1}$ is a vector of length $q^k$.
	    
	    Let us separate out the computing on real numbers and the rest. It is easy to see that for a recursive call with $n=k^s$, all the operations that do not involve floating point operations can be done in $\Oh_{q,k}(n)$ time. We also note that Lines~\ref{line:update-W_i}~and~\ref{line:updateW} are the only places where we have to perform floating point operations. Further, it can be checked that there are $\Oh_{q,k}(n)$ such operation.
		Thus, the running time (in both infinite precision setting and floating point RAM model), $T(n)$ of \textsc{Fast-Decoder} satisfies the recurrence $T(n) \leq kT(n/k) + \Oh_{q,k}(n)$ which yields $T(n) = \Oh_{q,k}(n \log n)$.
		
		Finally, we prove the claim on the claimed precision in the floating point RAM model.
		We  note that while our final desired precision is only an additive $1/4$, intermediate precision needs to be high since the precision goes down at each recursive call. More precisely, our goal is to use \cref{lem:rational-fn-error} to bound this error. Before we can apply \cref{lem:rational-fn-error}, we verify that the pre-conditions of the lemma holds. 
		
		As mentioned earlier, Lines~\ref{line:update-W_i}~and~\ref{line:updateW} are the only places where we have to perform floating point operations are the only places to perform floating point operations. In particular, the input are the $N=q\cdot k^s$ probability values in $\bvec{Z}$ (denote these $N$ probability values by $\bvec{p}=\parens{p_1,\dots,p_N}$). Line~\ref{line:update-W_i} computes for each of the $q^k$ values in $\bvec{W_i}$  a degree $k$ multi-linear polynomial in $k$ out of the $N$ variables (in fact this polynomial is actually a monomial). Line~\ref{line:updateW} is where we update the $q^k$ values of $\bvec{W_i}$. In particular, each computed value is a rational function $\frac{f(\bvec{p})}{g(\bvec{p})}$, where $f(X_1,\dots,X_N)$ is still a monomial in $k$ variables and $g(X_1,\dots,X_N)$ is a multilinear polynomial of degree $k$ with at most $q^k$ monomials each with a coefficient of $1$. Note that $f$ and $g$ satisfy the pre-conditions of \cref{lem:rational-fn-error}.
		
		Now, consider a recursive call to \textsc{Fast-Decoder} with $s\gets s-i$. We first note that we do not have access to $\bvec{p}$ but rather an approximation $\tvec{p}$ where each entry has an error bounded by $1\pm \eps_i$, where we define $\eps_i$ soon. Moreover we will maintain the bound $e_i$ on the magnitude of the exponents of the approximations at the $i$-th level of the recursion, namely we shall ensure that on the $i$-th level of recursion for each $j$ we have $|\log \tvec{p}_j| \leq e_i$ --- the $e_i$ will be defined soon as well.
		
		First we note that by \cref{lem:rounding-error}, we have that $\eps_0 \le u$, and $e_0 \leq 2^{\Oh(\log n)}$ since we assumed that the description of the channel is specified using $\Oh(\log n)$ bits.
		Now applying \cref{lem:rational-fn-error} with $d\gets k, m\gets q^k, \eps\gets \eps_i, \bvec{x}\gets \bvec{p}$ and $\tvec{x}\gets \tvec{p}$ from \cref{eq:final-rational-fn-bound} (it can be verified that \cref{eq:accuracy-pre-bound} will be satisfied with our parameter choice), we get
		\[\eps_{i+1}\le 8\parens{k\cdot \eps_i + (k+k\log{q}+1)\cdot u}\le 32\cdot k\log{q} \cdot \eps_i,\]
		where the inequality uses $k\ge 1$ and  the fact that $\eps_i$ is increasing in $i$ and hence $u\le \eps_0\le \eps_i$.
		Thus, we have that
		\[\eps_s \le \parens{32\cdot k\log{q}}^s\cdot u.\]
		
		Similarly, from \cref{eq:range-post-bound}, we get
		\[
		e_{i+1} \leq 2(k+1)(e_i + 1) + 4 k \log q + 1 \leq (13 k \log q) \cdot e_i,
		\]
		and therefore $e_s \leq (13 k \log q)^s \cdot e_0$. Since $e_i \leq e_s$ for each $i \leq s$, to ensure condition \cref{eq:range-pre-bound} in all applications of \cref{lem:rational-fn-error}, it is enough to pick $\ell$ such that $(13 k \log q)^s e_0 \leq 2^{\ell}$, that is 
		\[\ell \geq s \cdot (\log 13 + \log k + \log \log q) + \log e_0.\]
		
		On the other hand, note that at any stage the additive error for any probability value calculated by \textsc{Fast-Decoder} is upper bounded by $\eps_s$. Thus, if we pick
		\[\ell \geq s\cdot \parens{\log{k}+\log\log{q}+5}+2,\]
		then we have $\eps_s\le \frac 14$ (since $u=2^{-\ell}$). The proof is complete by noting that if we chose $\ell$ to be maximum of those two necessary lower bounds bounds, we have $\ell=\Oh_{k,q}(\log{n})$ and hence we indeed are working with a floating point RAM model.
	\end{proof}

\paragraph{Correctness of \textsc{Fast-Decoder}.} With the runtime analysis of \textsc{Fast-Decoder} out of the way, in the next lemma we show that \textsc{Fast-Decoder} is equivalent to the
\textsc{SC-Decoder} on the same input. For this lemma we assume that $[n]$ is equated with $[k]^s$ and elements of $[k]^s$ are enumerated in lex order by \textsc{SC-Decoder}. Also it would be useful to keep \cref{rem:brown-part-fast-decoder} and \cref{rem:brown-part-sc-decoder} in mind.

\begin{lemma}
	\label{lem:fast-equivalence}
	Let $\bvec{Z}$ be a product distribution
	(where each $\bvec{Z}_{\bvec{i}} \in \Delta(\F_q)$ is a distribution over
	$\F_q$) and let 
	$\bvec{\alpha} \in (\F_q \cup \{\bot\})^{[k]^s}$.  
For $\bvec{i} \in [k]^s$, let $\bvec{P}_{\bvec{i}}$ be the quantity defined on Line~\ref{line:scd-x} of \textsc{SC-Decoder} for input $(\bvec{Z}; \bvec{\alpha})$, and let $\bvec{Q}_{\bvec{i}}$ be from the output of 
	$\textsc{Fast-Decoder}(\bvec{Z}; \bvec{\alpha},s)$. Then we have for every $\bvec{i} \in [k]^s$, $\bvec{P}_{\bvec{i}} = \bvec{Q}_{\bvec{i}}$ and 
	$$
	\textsc{Fast-Decoder}(\bvec{Z}; \bvec{\alpha}) \cdot M^{\otimes s}
	=
	\textsc{SC-Decoder}(\bvec{Z}; \bvec{\alpha}).
	$$
	Furthermore, the output of the precision-bounded \textsc{Fast-Decoder} equals the output of an approximate-successive-cancellation decoder on $(\bvec{Z}; \bvec{\alpha})$.
	
\end{lemma}

\begin{proof}
	We prove the lemma by induction on $s$. For $s=0$ the lemma is immediate (from line~\ref{line:scd-x} in \textsc{SC-Decoder} and line~\ref{line:base-case} in \textsc{Fast-Decoder}), so assume the lemma holds for $s'<s$.
	
	Our proof will compare two sets of variables, $\UF$ from the definition of \textsc{Fast-Decoder} and $\USC$ which we define next. 
	Given $\bvec Z, \bvec{\alpha}$ as in the statement of the lemma, 
	let $\bvec U$ be the joint distribution defined by 
	\[\bvec U := \bvec Z M^{\otimes s}.\]
	Now define $\USC$ such that for all $\bvec i \in [k]^s$:
	\begin{equation}
		\label{eqn:u-est}
		\USC_\bvec{i} =
		\begin{cases}
			\argmax_{x \in \F_q}\Pr\parens{\bvec{U}_\bvec{i} = x |
				\bvec{U}_{\prec \bvec{i}}
				= \USC_{\prec \bvec{i}}}  = \argmax_{x \in \F_q} \bvec X_{\bvec{i}}(x) & \text{if $\bvec{\alpha}_{\bvec{i}} =\bot$}\\
			\bvec{\alpha}_\bvec{i} & \text{if $\bvec{\alpha}_{\bvec{i}}\in \F_q$}.
		\end{cases}
	\end{equation}
	We start by noting that $\USC =\textsc{SC-Decoder}(\bvec{Z}; \bvec{\alpha})$ (this can be argued e.g. by induction on $\bvec{i}$). If $\bvec{\alpha}_i \in \F_q$, then it is easy to check that $\UF_{\bvec{i}}=\USC_\bvec{i}$, so for the rest of the proof we will assume this as given and the focus will be on indices $\bvec{i}$ such that $\bvec{\alpha}_{\bvec{i}} =\bot$.
	Next we note that the outputs $\bvec{\hat{Z}}$ and $\UF$ of \textsc{Fast-Decoder} are related by the condition $\bvec{\hat{Z}} = \textsc{Fast-Polar-Encoder}(\UF)$. 
	(In particular Lines~\ref{line:base-case},~\ref{line:v}~and~\ref{line:setZ} correspond exactly to the code of \textsc{Fast-Polar-Encoder}.) 
	Restated this implies 
	\begin{equation}
		\label{eqn:vhat-prelim}
		\bvec{\hat Z} \cdot M^{\otimes s}
		= \UF.
	\end{equation}

	Thus to prove the lemma, it suffices to prove that $\UF = \USC$. To do so we use the recursive structure of \textsc{Fast-Decoder} and prove that for every 
	$j \in [k]$, $\UF_{[j,\cdot]} = \USC_{[j,\cdot]}$. We do so by induction on $j$. 
	
	First recall that $\UF_{[j,\cdot]} = \textsc{Fast-Decoder$_{s-1}$}(\Zpj; \bvec{\alpha}_{[j, \cdot]})$
	with 
	\[\Zpj = \{ (\bvec{Z}\cdot M)_{[j,\cdot]} | (\bvec{Z}\cdot M)_{[<j,\cdot]} = \bvec{\hat V}_{[<j,\cdot]}\},\] 
	where the equality follows from Lines~\ref{line:update-W_i} and~\ref{line:updateW}.
	To compare with
	$\USC_{[j,\cdot]}$ we need a inductive structure on $\USC$ and we use a simple property that we describe informally first and then
	describe in formal notation. Informally, if the input stream to the successive cancellation decoder is split into three parts, the prefix
	$A$, the central part $B$ and the suffix $C$, then the decoding on the central part is independent of the suffix. Furthermore the decoding of the central
	part is the output of the successive cancellation decoder on a modified input which incorporates the conditioning induced by the decoding of the prefix.
	Formally the above can be expressed as the following: 
	Let $A \in (\Delta(\F_q))^a$, $B \in \Delta(\F_q)^b$ and $C \in \Delta(\F_q)^c$, and $\alpha \in (\F_q \cup \{\bot\})^a$,
	$\beta \in (\F_q \cup \{\bot\})^b$ and $\gamma \in (\F_q \cup \{\bot\})^c$. Then if $\hat{A} = \textsc{SC-Decoder}(A,\alpha)$ we have
	$\textsc{SC-Decoder}(A\circ B \circ C, \alpha \circ \beta \circ \gamma)_{[a+1,a+b]} = \textsc{SC-Decoder}(\tilde{B}, \beta)$ where $\tilde{B_i} = \{B_{a+i} | A = \hat{A} \}$ for  $i \in [b]$. 
	Applied in our context with $A = \bvec{U}_{[<j,\cdot]}$ and $B = \bvec{U}_{[j,\cdot]}$ we get 
	$\USC_{[j,\cdot]} = \textsc{SC-Decoder}(\bvec{\tilde{U}}^{(j)},\bvec{\alpha}_{[j,\cdot]})$ where $\bvec{\tilde{U}}^{(j)} = \{\bvec{U}_{[j,\cdot]} | \bvec{U}_{[<j,\cdot]} = \USC_{[<j,\cdot]}\}$ plays the role of $\tilde{B}$. We now use induction to show that the resulting sequences $\UF_{[j,\cdot]}$ and $\USC_{[j,\cdot]}$  are the same.
	
	By the (outer) inductive hypothesis (on $s$), it suffices to show that $\Zpj\cdot M^{\otimes s-1}$ is distributed identically\footnote{Technically, we want $\Zpj$ to be identically distributed to $\bvec{\tilde{U}}^{(j)}$ but this condition is equivalent since $M^{\otimes s-1}$ has full rank.} to $\bvec{\tilde{U}}^{(j)}$. We now simplify the former. We have 
	$$\Zpj\cdot M^{\otimes s-1} = \{ (\bvec{Z}\cdot M^{\otimes s})_{[j,\cdot]} | (\bvec{Z}\cdot M)_{[<j,\cdot]} = \bvec{\hat V}_{[<j,\cdot]}\} = \{ \bvec{U}_{[j,\cdot]} | (\bvec{Z}\cdot M)_{[<j,\cdot]} = \bvec{\hat V}_{[<j,\cdot]}\},$$
	where the first equality uses the fact that $(\bvec{Z}\cdot M)_{[j,\cdot]}\cdot M^{\otimes s-1}=(\bvec{Z}\cdot M^{\otimes s})_{[j,\cdot]}$.

	Comparing with the definition of $\bvec{\tilde{U}}^{(j)} = \{\bvec{U}_{[j,\cdot]} | \bvec{U}_{[<j,\cdot]} = \USC_{[<j,\cdot]}\}$, it thus suffices to show that the conditioning events 
	$(\bvec{Z}\cdot M)_{[<j,\cdot]} = \bvec{\hat V}_{[<j,\cdot]}$
	and $\bvec{U}_{[<j,\cdot]} = \USC_{[<j,\cdot]}$ are identical. 
	For every $\ell < j$, we have, by applying \cref{eqn:vhat-prelim} to the outputs of $\textsc{Fast-Decoder}(\Zpj,\bvec{\alpha}_{[\ell,\cdot]},s-1)$ in Line~\ref{line:v}, we have $\bvec{\hat{V}}_{[\ell,\cdot]}\cdot M^{\otimes s-1} = \UF_{[\ell,\cdot]}$. Now using (inner) inductive hypothesis on $\ell < j$ we have $\bvec{\hat{V}}_{[\ell,\cdot]}\cdot M^{\otimes s-1} = \USC_{[\ell,\cdot]}$. We use this and the invertibility of $M^{\otimes s-1}$ to rephrase the event $(\bvec{Z}\cdot M)_{[<j,\cdot]} = \bvec{\hat V}_{[<j,\cdot]}$
	as 
	$(\bvec{Z}\cdot M)_{[<j,\cdot]}\cdot M^{\otimes s-1} =  \bvec{\hat V}_{[<j,\cdot]} \cdot M^{\otimes s-1} = \USC_{[<j,\cdot]}$. 
	Simplifying the left hand side we get $(\bvec{Z}\cdot M)_{[<j,\cdot]}\cdot M^{\otimes s-1} = (\bvec{Z}\cdot M^{\otimes s})_{[<j,\cdot]} = \bvec{U}_{[<j,\cdot]}$.
	Thus we get that the two events are indeed identical, and thus yield $\UF_{[j,\cdot]} = \USC_{[j,\cdot]}$.
	
	The proof that $\bvec{P}_{[j,\cdot]} = \bvec{Q}_{[j,\cdot]}$ for every $j \in [k]$ is completely similar and we omit the details. For the furthermore part, note that an equivalent view of \textsc{Fast-Decoder} is that it is an efficient algorithm to compute the $\bvec{Q}_{\bvec{i}}$'s which it then uses to run \textsc{SC-Decoder}. Thus if a bounded-precision \textsc{Fast-Decoder} computes every entry of $\bvec{P}_{\bvec{i}}$ to within an additive error of $1/4$ then the bounded precision \textsc{Fast-Decoder} implements an approximate-successive-cancellation decoder.
\end{proof}

\paragraph{Proofs of \cref{thm:main-quant} and \cref{thm:poly-code}.}
Now we can prove \cref{thm:main-quant} (modulo \cref{clm:exist-S}, which we prove in the next sub-section).
\begin{proof}[Proof of \cref{thm:main-quant}]
	In the model of infinite precision arithmetic, \cref{thm:main-quant} follows from 
	\cref{thm:sc-decoder-quant} and the equivalence of \textsc{SC-Decoder} and \textsc{Fast-Decoder} from \cref{lem:fast-equivalence}
	with the running time bound following from \cref{lem:nlogn}. 
	
	In the bounded precision case, by Lemma~\ref{lem:fast-equivalence} we have that the bounded-precision \textsc{Fast-Decoder} implements an approximate-successive-cancellation decoder. Applying \cref{thm:sc-decoder-quant} again in this setting we have that the decoding error probability still remains $\Oh(n \tau \log q)$, and the running time of $\Oh(n \log n)$ from \cref{lem:nlogn} is now in the standard floating point RAM model.
\end{proof}

Finally, \cref{thm:poly-code} is essentially a corollary of \cref{thm:main-quant} and the definition of (exponential) strong polarization.
\begin{proof}[Proof of \cref{thm:poly-code}]
	Fix some constant $c$, and take $\gamma < k^{-c - 1} \log^{-1} q$, with $n=k^t$. Note that this implies that
	\begin{equation}
	\label{eq:gamma-t-bound}
	    \gamma^t = \frac{1}{\parens{k^t}^{c+1}\cdot \log^t{q}}= \frac{1}{\parens{k^t}^{c+1}\cdot \log^t{q}}.
	\end{equation}

	By the definition of strong polarization property, we know that for some constants $\beta, \eta$, martingale $X_t$ is $(\gamma^{t}, \gamma^t, \beta \cdot \eta^t)$-polarizing. Hence, by \cref{thm:main-quant} corresponding polar code has rate at least
	\begin{equation*}
		\text{Capacity}(\cC) - \beta \eta^t - \two\gamma^t 
	\end{equation*}
	for $t = \Theta_{\eta, \beta}(\log(1/\varepsilon))$, we have $\beta \eta^t + \gamma^t \leq \varepsilon$, where the inequality follow from \cref{eq:gamma-t-bound} and our choice of $t$.
	
	The probability of decoding failure is at most
	\begin{equation*}
		n \gamma^t \log q \leq n (n)^{-c-1} \log^{-t + 1}(q) \leq n^{-c},
	\end{equation*}
	where the first inequality follows from \cref{eq:gamma-t-bound}.
	
		By the definition of strong polarization property, we know that for some constants $\beta, \eta,\Lambda$, martingale $X_t$ is $(2^{-2^{\Lambda t}}, \gamma^t, \beta \cdot \eta^t)$-polarizing. We use the same choice of $t$ as in the strong polarizing case and using the same argument as in that case we get that the polar code has the claimed rate. The probability of decoding error is at most
		\[n\log{q}\cdot 2^{-2^{\Lambda t}} = n\log{q}\cdot 2^{-2^{\Lambda \frac{\log{n}}{\log{k}}}} = n\log{q}\cdot 2^{- n^\frac{\Lambda}{\log{k}}} \le 2^{-n^{\beta'}},\]
		for some $\beta'=\Omega_{\Lambda,k,q}\parens{1}$, as desired.
	
\end{proof}

\subsubsection{\Arikan~Martingale and Polar Coding}
\label{sec:correspondence}

Here we build a correspondence between the definition of the \Arikan~Martingale and the process of polar coding, which was used in the proof of \cref{clm:exist-S}.

\begin{lemma}
	\label{lem:mart-code}
	For a matrix $M \in \F_q^{k \x k}$ and symmetric channel $\mathcal{C}_{Y | Z}$,
	let $\{X_t\}$ be the associated \Arikan~Martingale.
	For a given $t$,  let $L = M^{\otimes t}$ be the polarization transform,
	and let $n = k^t$ be the block length.
	Let the channel inputs $\bvec{Z}_i$ be i.i.d. uniform in $\F_q$,
	and channel outputs $\bvec{Y}_i \sim \mathcal{C}_{Y|Z}(\bvec{Z}_i)$.
	
	Then, for a uniformly random index $i \in [n]$,
	the normalized entropy $\bH((\bvec{Z} L)_i ~|~   (\bvec{Z} L)_{< i}, \bvec{Y})$
	is distributed identically as $X_t$.
\end{lemma}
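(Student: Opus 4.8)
The plan is to prove this by induction on $t$, matching the recursive structure of both the tensor product (Section~\ref{sec:notation}) and the \Arikan\ martingale (Definition~\ref{def:arikan-martingale}). The key is to set up the correct bijection between the probabilistic experiment defining polar coding (i.i.d.\ channel uses $Z_i$, outputs $\bvec Y_i$, encoding coordinates $(\bvec Z L)_i$) and the distributions $D_t$ from the martingale definition. Concretely, I would identify $\bvec A := \bvec Z \cdot M^{\otimes t}$ with the vector $\bvec A$ of $D_t$ and $\bvec B := \bvec Y$ with the $\bvec B$ of $D_t$; since $M^{\otimes t}$ is invertible and $\bvec Z$ is uniform i.i.d.\ with $\bvec Y_i \sim \C(\bvec Z_i)$, one checks that $(\bvec A, \bvec B)$ has exactly the distribution $D_t$. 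Under the lexicographic indexing, the quantity $\bH((\bvec Z L)_i \mid \bvec Y, (\bvec Z L)_{<i})$ for uniform $i \in [n]$ is then literally $\bH(\bvec A_{\bvec j} \mid \bvec A_{\prec \bvec j}, \bvec B)$ for uniform multiindex $\bvec j \in [k]^t$, which is the definition of $X_t$.

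The induction would proceed as follows. For $t = 0$ (or $t=1$) the claim is immediate: $\bvec Z$ is a single uniform symbol, $\bvec Y \sim \C(\bvec Z)$, $M^{\otimes 0}$ is trivial, and $X_0 = \bH(A \mid B)$ with $(A,B) \sim D_0$ by definition. For the inductive step, I would use the recursion $M^{\otimes t} = M^{\otimes t-1} \otimes M$ as expressed in~\eqref{eqn:tensor1}: writing $\bvec Z$ as $k$ blocks $\bvec Z^{(1)}, \dots, \bvec Z^{(k)}$ (each of length $k^{t-1}$) and correspondingly $\bvec Y^{(1)}, \dots, \bvec Y^{(k)}$, the coordinates of $\bvec Z \cdot M^{\otimes t}$ with a fixed last-coordinate index $\ell$ are obtained by first applying $M^{\otimes t-1}$ within block $\ell$ and then applying $M$ across the $k$ blocks — which is exactly the two-step construction of $\bvec A'$ from $(\bvec A^{(1)}, \dots, \bvec A^{(k)})$ in Definition~\ref{def:arikan-martingale}. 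Each block $(\bvec Z^{(\ell)}, \bvec Y^{(\ell)})$ is an independent instance of the length-$k^{t-1}$ polar coding experiment, so by the inductive hypothesis the within-block transformed variables are distributed as $D_{t-1}$; drawing the last index $i_t$ uniformly and then a uniform prefix index corresponds exactly to drawing a uniform $\bvec j \in [k]^t$. The conditioning sets match because, by invertibility of $M$ applied coordinatewise across blocks, conditioning on $\bvec A'_{\prec [\bvec i, 1]}$ is the same as conditioning on $\cup_\ell \bvec A^{(\ell)}_{\prec \bvec i}$ (this is exactly the identity already used in the proof of Proposition~\ref{prop:actually-martingale} and in Lemma~\ref{lem:pushing-back-cond}), and the remaining prefix $\bvec A'_{[\bvec i, <i_t]}$ is carried along on both sides.

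The main obstacle — really the only subtle point — is bookkeeping the index correspondence: verifying that the lexicographic unwrapping of $(\F_q^k)^{\otimes t}$ into $\F_q^{k^t}$ makes "uniform $i \in [n]$" coincide with "uniform $\bvec j = (i_1,\dots,i_t) \in [k]^t$", and that under this identification the sets $\{j' : j' < i\}$ and $\{\bvec j' : \bvec j' \prec \bvec j\}$ agree, so that the conditioning $(\bvec Z L)_{<i}$ translates to $\bvec A_{\prec \bvec j}$. Once the indexing is pinned down, the equality of the two random variables (not just equality in distribution of their entropies) follows, and the distributional claim $\bH((\bvec Z L)_i \mid \bvec Y, (\bvec Z L)_{<i}) \sim X_t$ is immediate. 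I do not expect any genuine difficulty beyond this; the invertibility facts needed are all already established earlier in the paper.
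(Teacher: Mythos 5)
Your overall plan --- induction on $t$ using $M^{\otimes t}=M^{\otimes t-1}\otimes M$ --- is the same route the paper takes, and the identification $\bvec A := \bvec Z \cdot M^{\otimes t}$ with the martingale's $\bvec A$ is correct. But the step ``take $\bvec B := \bvec Y$; one checks $(\bvec A,\bvec B)$ has exactly the distribution $D_t$'' is false as stated, and would derail your induction. In the martingale's recursion, $\bvec B'$ is formed by concatenating $(\bvec B^{(1)},\ldots,\bvec B^{(k)})$, so the new index sits in the \emph{first} position of the multiindex; in the tensor recursion the new channel-input index is the \emph{last} coordinate (the blocks are $\bvec Z_{[\cdot,j]}$). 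Unrolling the induction, the martingale's $\bvec B$ equals the channel outputs $\mathcal{C}(\bvec Z)$ only after a coordinate permutation --- in fact a full reversal $(i_1,\dots,i_t)\mapsto(i_t,\dots,i_1)$. The paper's proof carries this permutation $\sigma'$ explicitly (see the footnote in its proof of Lemma~\ref{lem:mart-code}) and then observes it is harmless because conditional entropy given a \emph{set} of variables does not depend on their order.

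You do flag ``index bookkeeping'' as the only subtle point, but you locate it in the wrong place: matching uniform $i\in[n]$ with uniform $\bvec j\in[k]^t$, and $\{j'<i\}$ with $\{\bvec j'\prec\bvec j\}$, is immediate from the lexicographic-unwrapping convention and is not where the work lies. The delicate matching is on the $\bvec B$ (channel-output) side, where the two constructions genuinely disagree as tensors. Your argument is rescued by adding one observation --- that $\bvec B' = \sigma'(\bvec Y)$ for a coordinate permutation $\sigma'$, and that permuting conditioning variables preserves conditional entropy --- after which it coincides with the paper's proof, which also isolates (as Claim~\ref{claim-one}) the deterministic identity $\bvec A' = M^{\otimes t}(\bvec Z)$ that you assert only informally via ``first within blocks, then across.''
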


\begin{proof}
		Throughout this proof, we will switch to
	considering vectors in $\F_q^{k^t}$ as tensors in
	$(\F_q^k)^{\otimes t}$, for convenience---this correspondence is induced by lexicographic ordering $\prec$ on tuples $[k]^t$.
	Also, we will write $H(\bvec Z)$ to mean
	the operator $H$ acting on $\bvec{Z}$. More specifically for a linear map defined by matrix $H$ we use $H(\bvec{Z})=\bvec{Z}H$.
	In this notation, we wish to show that the distribution of $X_t$ is identical to $\bH((M^{\otimes t}(\bvec{Z}))_{\bvec i} ~|~ \bvec{Y}, (M^{\otimes t}(\bvec{Z}))_{\prec \bvec{i}})$
	for a uniformly random multiindex $\bvec{i} \in [k]^t$.
	
	We will show by induction that for all $t$, there is some permutation
	of coordinates\footnote{This is in fact just a reversal of the co-ordinates, i.e. $\sigma'((i_1, i_2,
		\dots i_t)) = (i_t, \dots, i_2, i_1)$.} 
	$\sigma': [k]^t \to [k]^t$
	such that the joint distributions
	\begin{equation}
	\label{eq:arikan-is-polar}
	    \{(\bvec{A'}, \bvec{B'})\}_{(\bvec{A'}, \bvec{B'}) \sim D_t}
	\equiv
	\{(M^{\otimes t}(\bvec{Z}), \sigma'(\bvec{\mathcal{C}(Z)}))\}_{\bvec Z \sim (\F_q^k)^{\otimes t}},
	\end{equation}
	where $(\bvec{A'}, \bvec{B'}) \sim D_t$ are the distributions defined in the $t$-th step of the \Arikan~martingale, and
	$\bvec{Z} \sim (\F_q^k)^{\otimes t}$ is sampled with i.i.d. uniform coordinates.
	This is sufficient, because a permutation of the channel outputs does not affect the relevant entropies. That is,
	$$
	\bH(\bvec{A'}_{\bvec i} ~|~ 
	\bvec{A'}_{\prec\bvec{i}}, \bvec{B'})
	=
	\bH(\bvec{A'}_{\bvec i} ~|~ 
	\bvec{A'}_{\prec\bvec{i}}, \sigma'(\bvec{B'})).
	$$
	
	First, the base case $t=0$ follows by definition of the distribution $D_0$ in the \Arikan~martingale (and the fact that $M(\bvec{Z}_1)\sim\F_q$).
	
	For the inductive step, assume the claim holds for $t-1$.
	Let $\sigma$ be the permutation guaranteed for $t-1$.
	For each $j \in [k]$, sample an independent uniform
	$\bvec{Z}^{(j)} \sim (\F_q^k)^{\otimes t-1}$
	and define
	
	\begin{equation}
		\label{eqn:def-sampling}
		(\bvec{A}^{(j)}, \bvec{B}^{(j)})
		:=
		(  M^{\otimes t-1}(\bvec{Z}^{(j)}) ~,~ \sigma(\mathcal{C}(\bvec{Z}^{(j)}))).
	\end{equation}
	
	By the inductive hypothesis,
	$(\bvec{A}^{(j)}, \bvec{B}^{(j)}) \sim D_{t-1}$, for each $j \in [k]$.
	
	As in the \Arikan~martingale, define
	$(\bvec{A}', \bvec{B'})$
	deriving from $\{(\bvec{A}^{(j)}, \bvec{B}^{(j)})\}_{j \in [k]}$ as
	\begin{equation}
		\label{eqn:tensor-step}
		\bvec{A}'_{[\bvec{i}, \cdot]}
		:=
		M((
		\bvec{A}^{(1)}_{\bvec{i} }
		~
		,\dots,
		~
		\bvec{A}^{(k)}_{\bvec{i}}
		))
		\quad
		\text{and}
		\quad
		\bvec{B}'_{[j, \cdot]} := \bvec{B}^{(j)}.
	\end{equation}
	Note that $\bvec{B'}$ can equivalently be written (unwrapped) as
	$$\bvec{B}' := (\bvec{B}^{(1)}, \bvec{B}^{(2)}, \ldots, \bvec{B}^{(k)})$$
	By definition of the \Arikan~martingale, we have
	$(\bvec{A}', \bvec{B'}) \sim D_t$.
	
	Finally, define $\bvec{Z} \in (\F_q^k)^{\otimes t}$ by
	\begin{equation}
	\label{eq:lsb-first}
\bvec{Z}_{[\cdot, j]} := \bvec{Z}^{(j)}.
	\end{equation}
	To finish the proof, we will show that
	$(\bvec{A}', \bvec{B'}) =
	(M^{\otimes t}(\bvec{Z}) , \sigma'(\bvec{\mathcal{C}(Z)}))$
	for some permutation $\sigma'$.
	
	The main claim is the following. 
	\begin{claim}
		\label{claim-one}
		For every instantiation of the underlying randomness in
		$\bvec{Z}$, we have
		$$\bvec{A'} =  M^{\otimes t}(\bvec{Z}).$$
	\end{claim}
	\begin{proof}[Proof of Claim~\ref{claim-one}]
		
		Expanding the recursive definition of the tensor product, Equation~\eqref{eqn:tensor1}, we have:
		\begin{align*}
			[M^{\otimes t}(\bvec{Z})]_{[\bvec{i}, \cdot]}
			&=  M((
			\bvec{W}_{\bvec i}^{(1)},
			\bvec{W}_{\bvec i}^{(2)},
			\dots
			\bvec{W}_{\bvec i}^{(k)}
			))
		\end{align*}
		where
		$$
		\bvec{W}^{(j)}
		:= M^{\otimes t-1}(\bvec{Z}_{[\cdot, j]})
		= M^{\otimes t-1}(\bvec{Z}^{(j)})
		= \bvec{A}^{(j)}.
		$$
		Here the last equality 
		is by the inductive assumption.
		Thus,
		\begin{align*}
			[M^{\otimes t}(\bvec{Z})]_{[\bvec{i}, \cdot]}
			&=
			M((
			\bvec{A}^{(1)}_{\bvec{i} }
			~
			,\dots,
			~
			\bvec{A}^{(k)}_{\bvec{i}}
			))\\
			&= \bvec{A'}_{[\bvec{i}, \cdot]}. \tag{By definition, given in \cref{eqn:tensor-step}}
		\end{align*}
		And so
		$M^{\otimes t}(\bvec{Z})=
		\bvec{A'}$ as desired.
	\end{proof}
	
	Continuing the proof of \cref{lem:mart-code},
	we now have
	\begin{align*}
		(\bvec{A'} ~,~ \bvec{B'})
		&= (\bvec{A'} ~,~
		(
		\bvec{B}^{(1)},
		\bvec{B}^{(2)},
		\dots,
		\bvec{B}^{(k)})) \\ 
		(
		\sigma(\mathcal{C}(\bvec{Z}^{(1)})),
		\sigma(\mathcal{C}(\bvec{Z}^{(2)})),
		\dots,
		\sigma(\mathcal{C}(\bvec{Z}^{(k)}))
		)
		\tag{Definition of sampling, \cref{eqn:def-sampling}}\\
		&= (\bvec{A'} ~,~ \sigma'(\mathcal{C}(\bvec{Z}))) \tag{$\star$}\\
		&= (M^{\otimes t}(\bvec{Z}) ~,~ \sigma'(\mathcal{C}(\bvec{Z}))) \tag{\cref{claim-one}}.
	\end{align*}
	In the above, the equality in line $(\star)$ follows by taking $\sigma'$ to be the permutation that sorts $[k]^t$ in the order of \emph{least significant symbol} first (based on our definition in \eqref{eq:lsb-first}), and then sorts each group (thought of as $[k]^{t-1}$ in the natural way) recursively according to $\sigma$. Unwinding this recursion, one can see that $\sigma'$ is in fact the \emph{symbol-reversal} permutation on $[k]^t$. 
	
	This establishes the equivalence of the distributions claimed in \eqref{eq:arikan-is-polar} and completes the proof.
\end{proof}

\end{document}